\DeclareMathAlphabet{\mathpzc}{OT1}{pzc}{m}{it}
\newcommand\QEDBox
\def\qed{\QEDBox}
\definecolor{greenyellow}   {cmyk}{0.15, 0   , 0.69, 0   }
\definecolor{yellow}        {cmyk}{0   , 0   , 1   , 0   }
\definecolor{goldenrod}     {cmyk}{0   , 0.10, 0.84, 0   }
\definecolor{dandelion}     {cmyk}{0   , 0.29, 0.84, 0   }
\definecolor{apricot}       {cmyk}{0   , 0.32, 0.52, 0   }
\definecolor{peach}         {cmyk}{0   , 0.50, 0.70, 0   }
\definecolor{melon}         {cmyk}{0   , 0.46, 0.50, 0   }
\definecolor{yelloworange}  {cmyk}{0   , 0.42, 1   , 0   }
\definecolor{orange}        {cmyk}{0   , 0.61, 0.87, 0   }
\definecolor{burntorange}   {cmyk}{0   , 0.51, 1   , 0   }
\definecolor{bittersweet}   {cmyk}{0   , 0.75, 1   , 0.24}
\definecolor{redorange}     {cmyk}{0   , 0.77, 0.87, 0   }
\definecolor{mahogany}      {cmyk}{0   , 0.85, 0.87, 0.35}
\definecolor{maroon}        {cmyk}{0   , 0.87, 0.68, 0.32}
\definecolor{brickred}      {cmyk}{0   , 0.89, 0.94, 0.28}
\definecolor{red}           {cmyk}{0   , 1   , 1   , 0   }
\definecolor{orangered}     {cmyk}{0   , 1   , 0.50, 0   }
\definecolor{rubinered}     {cmyk}{0   , 1   , 0.13, 0   }
\definecolor{wildstrawberry}{cmyk}{0   , 0.96, 0.39, 0   }
\definecolor{salmon}        {cmyk}{0   , 0.53, 0.38, 0   }
\definecolor{carnationpink} {cmyk}{0   , 0.63, 0   , 0   }
\definecolor{magenta}       {cmyk}{0   , 1   , 0   , 0   }
\definecolor{violetred}     {cmyk}{0   , 0.81, 0   , 0   }
\definecolor{rhodamine}     {cmyk}{0   , 0.82, 0   , 0   }
\definecolor{mulberry}      {cmyk}{0.34, 0.90, 0   , 0.02}
\definecolor{redviolet}     {cmyk}{0.07, 0.90, 0   , 0.34}
\definecolor{fuchsia}       {cmyk}{0.47, 0.91, 0   , 0.08}
\definecolor{lavender}      {cmyk}{0   , 0.48, 0   , 0   }
\definecolor{thistle}       {cmyk}{0.12, 0.59, 0   , 0   }
\definecolor{orchid}        {cmyk}{0.32, 0.64, 0   , 0   }
\definecolor{darkorchid}    {cmyk}{0.40, 0.80, 0.20, 0   }
\definecolor{purple}        {cmyk}{0.45, 0.86, 0   , 0   }
\definecolor{plum}          {cmyk}{0.50, 1   , 0   , 0   }
\definecolor{violet}        {cmyk}{0.79, 0.88, 0   , 0   }
\definecolor{royalpurple}   {cmyk}{0.75, 0.90, 0   , 0   }
\definecolor{blueviolet}    {cmyk}{0.86, 0.91, 0   , 0.04}
\definecolor{periwinkle}    {cmyk}{0.57, 0.55, 0   , 0   }
\definecolor{cadetblue}     {cmyk}{0.62, 0.57, 0.23, 0   }
\definecolor{cornflowerblue}{cmyk}{0.65, 0.13, 0   , 0   }
\definecolor{midnightblue}  {cmyk}{0.98, 0.13, 0   , 0.43}
\definecolor{navyblue}      {cmyk}{0.94, 0.54, 0   , 0   }
\definecolor{royalblue}     {cmyk}{1   , 0.50, 0   , 0   }
\definecolor{blue}          {cmyk}{1   , 1   , 0   , 0   }
\definecolor{cerulean}      {cmyk}{0.94, 0.11, 0   , 0   }
\definecolor{cyan}          {cmyk}{1   , 0   , 0   , 0   }
\definecolor{processblue}   {cmyk}{0.96, 0   , 0   , 0   }
\definecolor{skyblue}       {cmyk}{0.62, 0   , 0.12, 0   }
\definecolor{turquoise}     {cmyk}{0.85, 0   , 0.20, 0   }
\definecolor{tealblue}      {cmyk}{0.86, 0   , 0.34, 0.02}
\definecolor{aquamarine}    {cmyk}{0.82, 0   , 0.30, 0   }
\definecolor{bluegreen}     {cmyk}{0.85, 0   , 0.33, 0   }
\definecolor{emerald}       {cmyk}{1   , 0   , 0.50, 0   }
\definecolor{junglegreen}   {cmyk}{0.99, 0   , 0.52, 0   }
\definecolor{seagreen}      {cmyk}{0.69, 0   , 0.50, 0   }
\definecolor{green}         {cmyk}{1   , 0   , 1   , 0   }
\definecolor{forestgreen}   {cmyk}{0.91, 0   , 0.88, 0.12}
\definecolor{pinegreen}     {cmyk}{0.92, 0   , 0.59, 0.25}
\definecolor{limegreen}     {cmyk}{0.50, 0   , 1   , 0   }
\definecolor{yellowgreen}   {cmyk}{0.44, 0   , 0.74, 0   }
\definecolor{springgreen}   {cmyk}{0.26, 0   , 0.76, 0   }
\definecolor{olivegreen}    {cmyk}{0.64, 0   , 0.95, 0.40}
\definecolor{rawsienna}     {cmyk}{0   , 0.72, 1   , 0.45}
\definecolor{sepia}         {cmyk}{0   , 0.83, 1   , 0.70}
\definecolor{brown}         {cmyk}{0   , 0.81, 1   , 0.60}
\definecolor{tan}           {cmyk}{0.14, 0.42, 0.56, 0   }
\definecolor{gray}          {cmyk}{0   , 0   , 0   , 0.50}
\definecolor{black}         {cmyk}{0   , 0   , 0   , 1   }
\definecolor{white}         {cmyk}{0   , 0   , 0   , 0   } 
\newcommand{\myPrintChapterLabel}[1]{
	\ifthenelse{\equal{#1}{A}}
		{\myAppendixLabel} 
		{\ifthenelse{\equal{#1}{B}}
			{\myAppendixLabel} 
			{\ifthenelse{\equal{#1}{C}}
				{\myAppendixLabel} 
				{\ifthenelse{\equal{#1}{D}}
					{\myAppendixLabel} 
					{\ifthenelse{\equal{#1}{E}}
						{\myAppendixLabel} 
						{\ifthenelse{\equal{#1}{F}}
							{\myAppendixLabel} 
							{\ifthenelse{\equal{#1}{G}}
								{\myAppendixLabel} 
								{\myChapterLabel}}}}}}} 
}
\newcommand{\myChapterStyle}[1]{
	\chapterstyle{#1}
}
    \renewcommand{\chaptitlefont}{\fontsize{25pt}{30pt}\Huge\bfseries}%
    \renewcommand{\printchaptertitle}[1]{%
		  \vspace*{-45.3pt}
		  \begin{center}
			  \chaptitlefont 
			  \begin{center}\textcolor{black}{\textsc{{##1}}}\end{center}
			  \vspace{-2mm}
			  \textcolor{black}{\hrule height 2.5pt}%
		  \end{center}
        }%
    \renewcommand{\chaptitlefont}{\Huge\bfseries}%
    \renewcommand{\printchaptertitle}[1]{%
		  \vspace*{-28.3pt}
        \chaptitlefont \hrule height 1.0pt
        \begin{flushright}\textcolor{black}{{##1}}\end{flushright}
		  \hrule height 1.0pt%
        }%
\renewcommand{\printchaptertitle}[1]{%
\raggedright\Large\scshape\MakeLowercase{##1}}
\newlength{\headindent}
\newlength{\rightblock}
\renewcommand{\chaptitlefont}{\huge\rmfamily\raggedright}
\renewcommand{\printchaptertitle}[1]{%
\hfill\begin{minipage}[t]{\rightblock}
{\vspace{0pt}\chaptitlefont ##1\par}\end{minipage}}
\renewcommand*\chaptitlefont{\Large\itshape}
\renewcommand*\printchaptertitle[1]{%
\hrule\vskip\onelineskip\centering\chaptitlefont ##1}
    \renewcommand{\chaptitlefont}{\fontsize{25pt}{30pt}\Huge\bfseries}%
    \renewcommand{\printchaptertitle}[1]{%
		  \vspace*{-45.3pt}
		  \begin{center}
			  \chaptitlefont 
			  \begin{center}\textcolor{black}{\textsc{{##1}}}\end{center}
			  \vspace{-4mm}
			  \textcolor{black}{\hrule height 1.5pt}%
			  \vspace{1mm}
			  \textcolor{black}{\hrule height 4.5pt}
		  \end{center}
        }%
\newcommand{\timesContentFontSize}{\fontsize{13pt}{18pt}}
\newcommand{\FonteSectionI}{\sffamily\bfseries\raggedright\fontsize{16pt}{20.7pt}\selectfont}%
\renewcommand{\section}{%
   \par\vspace{20pt}
   \vspace{1.5mm}
   \renewcommand{\@seccntformat}[1]{\fontsize{16pt}{20.7pt}\thesection.\hspace{0.7em}}
   \@startsection{section}  
   {1}
   {0pt}
   {4pt}
   {6pt}
   {\FonteSectionI}
}
\newcommand{\FonteSectionII}{\sffamily\bfseries\raggedright\fontsize{15pt}{19.4pt}\selectfont}%
\renewcommand{\thesubsection}{\thesection.\arabic{subsection}}%
\renewcommand{\subsection}{%
\vspace{3mm}
  \renewcommand{\@seccntformat}[1]%
               {{\fontsize{15pt}{19.4pt}\thesubsection.\hspace{0.7em}}}%
  \@startsection%
   {subsection}
   {2}
   {0pt}
   {3pt}
   {5pt}
   {\FonteSectionII}}
\newcommand{\FonteSectionIII}{\sffamily\bfseries\fontsize{14pt}{18.2pt}\raggedright\selectfont}%
\renewcommand{\thesubsubsection}{\thesubsection.\arabic{subsubsection}}%
\renewcommand{\subsubsection}{%
\vspace{2mm}
  \renewcommand{\@seccntformat}[1]{%
               {\sffamily\bfseries\fontsize{14pt}{18.2pt}\thesubsubsection.\hspace{0.7em}}}%
  \@startsection%
   {subsubsection}
   {3}
   {0pt}
   {3pt}
   {3pt}
   {\FonteSectionIII}}
\newcommand{\@Myfnmark}{
      \mbox{\fontsize{8}{11}\sffamily\arabic{footnote}. }%
}
\renewcommand{\@makefntext}[1]{%
      \noindent\@Myfnmark#1%
}%
\def\@thefnmark{\arabic{footnote}}
\newcommand{\lof}{false}
\renewcommand{\numberline}[1]{
\ifthenelse{\equal{\lof}{false}}{\hspace{-0.6cm}$\mathrm{{#1}}$ -} {$\mathrm{{#1}}$ -}
}
\let\oldcontentsline=\contentsline
\renewcommand{\contentsline}[4]{
	\vspace{0.8mm}
	\oldcontentsline{#1}{#2}{#3}{#4}
	\vspace{0.8mm}
}
\newcommand{\minitoclevel}{section}
\newcommand{\minitocstyle}{titleontopright}
\newcommand{\myMiniToc}[2]{
	\ifthenelse{\equal{#1}{}}
	{\renewcommand{\minitoclevel}{section}}
	{\renewcommand{\minitoclevel}{#1}}
	\startcontents[chapters]
	\ifthenelse{\equal{\minitocstyle}{fieldset}}
	{\myMiniTocFieldset{#2}}
	{
		\ifthenelse{\equal{\minitocstyle}{titleontopright}}
		{\myMiniTocTitleOnTopRight{#2}}
		{}
	}
}%
\newcommand{\myMiniTocTitleOnTopRight}[1]{
	\vspace{-0.7mm}
	\vspace{20pt}
	\begin{minipage}{\textwidth}
	\begin{flushright}\noindent\textcolor{black}{\textbf{#1}} \vspace{5pt} \hrule height 0.08mm \end{flushright}
	\par
	\printcontents[chapters]{}{1}{}
	\par
	\begin{flushright} \vspace{0pt} \hrule height 0.08mm \vspace{30pt} \end{flushright}
	\end{minipage}
}
\newcommand{\myMiniTocFieldset}[1]{
	\vspace{5pt}
	\begin{minipage}{\textwidth}
		\parbox[c]{.07\textwidth}{
			\centering
			\textcolor{black}{\hrule height 0.4mm}
		}
		\parbox[c]{.2\textwidth}{
			\centering
			\textcolor{black}{\textsc{\textbf{#1}}}
		}
		\parbox[c]{.71\textwidth}{
			\centering
			\textcolor{black}{\hrule height 0.4mm}
		}
		\par
		\printcontents[chapters]{}{1}{}
		\par
		\begin{flushright} 
			\vspace{0pt} 
			\hrule height 0.4mm 
			\vspace{30pt} 
		\end{flushright}
	\end{minipage}
}
\newcommand{\myMiniTocStyle}[1]{
	\renewcommand{\minitocstyle}{#1}
}
\newcommand{\myBibliography}[2]{
	\bibliographystyle{#1}
	\bibliography{#2}
	\myCleanStarChapterEnd
}
\def\bibi[#1]{\item[\@biblabel{#1}\hfill]} 
\newenvironment{myBiblio}{
   \list{}{
         \usecounter{enumiv}%
         \let\p@enumiv\@empty
         \renewcommand\newblock{\hskip .11em \@plus.33em \@minus.07em}%
         \setlength{\leftmargin}{0mm}
         \setlength{\labelsep}{2mm}
         \setlength{\labelwidth}{10mm}
          \setlength{\topsep}{0pt}%
          \setlength{\parskip}{6pt}%
          \setlength{\itemsep}{5pt}%
          \setlength{\partopsep}{0pt}%
          \setlength{\parsep}{3pt}%
         \sloppy\clubpenalty4000\widowpenalty4000%
         \sfcode`\.=\@m
         }%
  }{
      \def\@noitemerr{\@latex@warning{Empty 'thebibliography' environment}}
      \endlist%
}
\renewcommand{\cite}[1]{\Citep{#1}}
\newcommand{\tocsetted}{false}
\let\oldmainmatter=\mainmatter
\renewcommand{\mainmatter}{
	\oldmainmatter
	\renewcommand{\thechapter}{\Roman{chapter}}
	\renewcommand{\thetable}{\Roman{table}}
	\renewcommand{\thefigure}{\arabic{figure}}
	
	\counterwithout*{figure}{chapter}
	\counterwithout*{table}{chapter}
}
\let\oldappendix=\appendix
\renewcommand{\appendix}{
	\oldappendix
	\renewcommand{\thetable}{\Roman{table}}
	\renewcommand{\thefigure}{\arabic{figure}}
	
	\counterwithout*{figure}{chapter}
	\counterwithout*{table}{chapter}
}
\newcommand{\myChapterLabel}{Chapter}
\newcommand{\myAppendixLabel}{Appendix}
\newcommand{\lifa}{Laboratoire d'Informatique Fondamentale et Appliquée (LIFA)}
\newcommand{\myBibliographyTitle}{Bibliography}
\newcommand{\losname}{List of Symbols}
\newcommand{\loaname}{List of Acronyms}
\newcommand{\doctypethesis}{Thèse de Doctorat en}
\newcommand{\doctypemaster}{Mémoire de Master en}
\newcommand{\doctype}{\ifthenelse{\equal{\doclevel}{\master}}{\doctypemaster}{\doctypethesis}}
\newcommand{\phd}{PhD}
\newcommand{\master}{Master}
\newcommand{\doclevel}{\phd}
\newcommand{\level}[1]{
	\renewcommand{\doclevel}{#1}
}
\newcommand{\phdthesislabel}{\doctype $~$ \studentspeciality $~$, Université de Dschang}
\newcommand{\studentspeciality}{\computerScience}
\newcommand{\speciality}[1]{
	\renewcommand{\studentspeciality}{#1}
}
\newcommand{\computerScience}{Informatique}
\newcommand{\mathematics}{Mathématiques}
\newcommand{\studentlab}{LIFA}
\newcommand{\lab}[1]{
	\renewcommand{\studentlab}{#1}
}
\newcommand{\myTableOfContents}[1]{
	\ifthenelse{\equal{\tocsetted}{false}}
	{\clearpage}{}
	\mySaveMarks
	\ifthenelse{\equal{#1}{}}{}
	{\renewcommand{\contentsname}{#1}}
	\addcontentsline{toc}{section}{\myNumberLine{\contentsname}}
	\renewcommand{\leftmark}{\contentsname}
	\renewcommand{\rightmark}{\contentsname}
	\tableofcontents*
	\myCleanStarChapterEnd
	\renewcommand{\tocsetted}{true}
}
\newcommand{\myTableOfContentsStar}[1]{
	\ifthenelse{\equal{\tocsetted}{false}}
	{\clearpage}{}
	\mySaveMarks
	\ifthenelse{\equal{#1}{}}{}
	{\renewcommand{\contentsname}{#1}}
	\renewcommand{\leftmark}{\contentsname}
	\renewcommand{\rightmark}{\contentsname}
	\tableofcontents*
	\myCleanStarChapterEnd
	\renewcommand{\tocsetted}{true}
}
\newcommand{\myListOfSymbols}[1]{
	\ifthenelse{\equal{#1}{}}{}
	{\renewcommand{\losname}{#1}}
	\myChapterStar{\losname}{}{section}
	\begin{center}
	\begin{tabular}[t]{rp{5mm}p{12cm}}
		$EC$ & &  The Editor in Chief in the running example \\
		$AE$ & &  The Associated Editor in the running example \\
		$R1$ & &  The first Referee in the running example \\
		$R2$ & &  The second Referee in the running example \\
		$\mathbb{G}$ & &  A grammatical model of workflow \\
		$t_{i_f}$ & & A global artefact obtained after merging a set of artefacts
	\end{tabular}
\end{center}

	\myCleanStarChapterEnd
	\renewcommand{\tocsetted}{true}
}
\newcommand{\myListOfSymbolsStar}[1]{
	\ifthenelse{\equal{#1}{}}{}
	{\renewcommand{\losname}{#1}}
	\myChapterStar{\losname}{}{false}
	
	\myCleanStarChapterEnd
	\renewcommand{\tocsetted}{true}
}
\newcommand{\myListOfAcronyms}[1]{
	\ifthenelse{\equal{#1}{}}{}
	{\renewcommand{\loaname}{#1}}
	\myChapterStar{\loaname}{}{section}
	\begin{center}
	\begin{tabular}[t]{rp{5mm}p{10cm}}
		AST & & Abstract Syntax Tree \\
		BPM & & Business Process Management \\
		BPMN & & Business Process Model and Notation \\
		CDML & & Component Description Meta Language \\
		CSCW & & Computer-Supported Cooperative Work \\
		P2P & & Peer to Peer \\
		DS(E)L & & Domain Specific (Embeded) Language \\
		DTD & & Document Type Definition \\
		GMAWfP & & a Grammatical Model of Administrative Workflow Process \\
		GMWf & & Grammatical Model of Workflow \\
		LSAWfP & & a Language for the Specification of Administrative Workflow Processes \\
		(L)WfE & & (Local) Workflow Engine \\
		P2PTinyWfMS & & a Peer-to-Peer Tiny Workflow Management System \\
		SOA & & Service Oriented Architecture \\
		SON & & Shared-Overlay Network \\
		TinyCE & & a Tiny Cooperative Editor \\
		WfM(S) & & Workflow Management (System) \\
		WF-Net & & Workflow Net \\
		(WS-)BPEL & & (Web Services) Business Process Execution Language \\
		XML & & eXtensible Markup Language \\
		YAWL & & Yet Another Workflow Language \\
	\end{tabular}
\end{center}

	\myCleanStarChapterEnd
	\renewcommand{\tocsetted}{true}
}
\newcommand{\myListOfAcronymsStar}[1]{
	\ifthenelse{\equal{#1}{}}{}
	{\renewcommand{\loaname}{#1}}
	\myChapterStar{\loaname}{}{false}
	
	\myCleanStarChapterEnd
	\renewcommand{\tocsetted}{true}
}
\newcommand{\myListOfFigures}[1]{
	\ifthenelse{\equal{\tocsetted}{false}}
	{\clearpage}{}
	\mySaveMarks
	\ifthenelse{\equal{#1}{}}{}
	{\renewcommand{\listfigurename}{#1}}
	\addcontentsline{toc}{section}{\myNumberLine{\listfigurename}}
	\renewcommand{\leftmark}{\listfigurename}
	\renewcommand{\rightmark}{\listfigurename}
	\renewcommand{\lof}{true}
	\listoffigures*
	\renewcommand{\lof}{false}
	\myCleanStarChapterEnd
	\renewcommand{\tocsetted}{true}
}
\newcommand{\myListOfFiguresStar}[1]{
	\ifthenelse{\equal{\tocsetted}{false}}
	{\clearpage}{}
	\mySaveMarks
	\ifthenelse{\equal{#1}{}}{}
	{\renewcommand{\listfigurename}{#1}}
	\renewcommand{\leftmark}{\listfigurename}
	\renewcommand{\rightmark}{\listfigurename}
	\renewcommand{\lof}{true}
	\listoffigures*
	\renewcommand{\lof}{false}
	\myCleanStarChapterEnd
	\renewcommand{\tocsetted}{true}
}
\newcommand{\myListOfTables}[1]{
	\ifthenelse{\equal{\tocsetted}{false}}
	{\clearpage}{}
	\mySaveMarks
	\ifthenelse{\equal{#1}{}}{}
	{\renewcommand{\listtablename}{#1}}
	\addcontentsline{toc}{section}{\myNumberLine{\listtablename}}
	\renewcommand{\leftmark}{\listtablename}
	\renewcommand{\rightmark}{\listtablename}
	\renewcommand{\lof}{true}
	\listoftables*
	\renewcommand{\lof}{false}
	\myCleanStarChapterEnd
	\renewcommand{\tocsetted}{true}
}
\newcommand{\myListOfTablesStar}[1]{
	\ifthenelse{\equal{\tocsetted}{false}}
	{\clearpage}{}
	\mySaveMarks
	\ifthenelse{\equal{#1}{}}{}
	{\renewcommand{\listtablename}{#1}}
	\renewcommand{\leftmark}{\listtablename}
	\renewcommand{\rightmark}{\listtablename}
	\renewcommand{\lof}{true}
	\listoftables*
	\renewcommand{\lof}{false}
	\myCleanStarChapterEnd
	\renewcommand{\tocsetted}{true}
}
\newcommand{\myListOfAlgorithms}[1]{
	\ifthenelse{\equal{\tocsetted}{false}}
	{\clearpage}{}
	\mySaveMarks
	\ifthenelse{\equal{#1}{}}{}
	{\renewcommand{\listalgorithmname}{#1}}
	\addcontentsline{toc}{section}{\myNumberLine{\listalgorithmname}}
	\renewcommand{\leftmark}{\listalgorithmname}
	\renewcommand{\rightmark}{\listalgorithmname}
	\listofalgorithms
	\myCleanStarChapterEnd
	\renewcommand{\tocsetted}{true}
}
\newcommand{\myListOfAlgorithmsStar}[1]{
	\ifthenelse{\equal{\tocsetted}{false}}
	{\clearpage}{}
	\mySaveMarks
	\ifthenelse{\equal{#1}{}}{}
	{\renewcommand{\listalgorithmname}{#1}}
	\renewcommand{\leftmark}{\listalgorithmname}
	\renewcommand{\rightmark}{\listalgorithmname}
	\renewcommand{\lof}{true}
	\listofalgorithms
	\renewcommand{\lof}{false}
	\myCleanStarChapterEnd
	\renewcommand{\tocsetted}{true}
}
\newcommand{\myChapter}[2]{
	\chapter[#2]{#1}
}
\newcommand{\shortTitle}{}
\newcommand{\mySaveMarks}{
	\let\oldleftmark=\leftmark
	\let\oldrightmark=\rightmark
}
\newcommand{\myNumberLine}[1]{
	\hspace{-0.55cm}#1
}
\newcommand{\myChapterNumberLine}[1]{
	\hspace{-0.25cm}#1
}
\newcommand{\myChapterStar}[3]{
	\mySaveMarks
	\ifthenelse{\equal{#2}{}}
	{\renewcommand{\shortTitle}{#1}}
	{\renewcommand{\shortTitle}{#2}}
	\renewcommand{\leftmark}{\shortTitle}
	\renewcommand{\rightmark}{\shortTitle}
	\chapter*{#1}
	\ifthenelse{\equal{#3}{false}}
	{}
	{
		\ifthenelse{\equal{#3}{}}
		{\addcontentsline{toc}{chapter}{\myChapterNumberLine{\shortTitle}}}
		{
			\ifthenelse{\equal{#3}{true}}
			{\addcontentsline{toc}{chapter}{\myChapterNumberLine{\shortTitle}}}
			{
				\ifthenelse{\equal{#3}{chapter}}
				{\addcontentsline{toc}{#3}{\myChapterNumberLine{\shortTitle}}}
				{\addcontentsline{toc}{#3}{\myNumberLine{\shortTitle}}}
			}
		}
	}
}
\newcommand{\mySection}[2]{
	\resumecontents[chapters]
	\ifthenelse{\equal{#2}{}}
	{
		\renewcommand{\shortTitle}{#1}
		\Needspace{5\baselineskip}
		\section{#1}
	}
	{
		\renewcommand{\shortTitle}{#2}
		\Needspace{5\baselineskip}
		\section[#2]{#1}
	}
	\hrule height 0.5mm
	\vspace{5mm}
	\stopcontents[chapters]
}
\newcommand{\mySectionStar}[3]{
	\resumecontents[chapters]
	\ifthenelse{\equal{#2}{}}
	{\renewcommand{\shortTitle}{#1}}
	{\renewcommand{\shortTitle}{#2}}
	\renewcommand{\rightmark}{\shortTitle}
	\section*{#1}
	\hrule height 0.5mm
	\vspace{5mm}
	\ifthenelse{\equal{#3}{false}}
	{}
	{
		\ifthenelse{\equal{#3}{}}
		{\addcontentsline{toc}{section}{\myNumberLine{\shortTitle}}}
		{
			\ifthenelse{\equal{#3}{true}}
			{\addcontentsline{toc}{section}{\myNumberLine{\shortTitle}}}
			{
				\ifthenelse{\equal{#3}{chapter}}
				{\addcontentsline{toc}{#3}{\myChapterNumberLine{\shortTitle}}}
				{\addcontentsline{toc}{#3}{\myNumberLine{\shortTitle}}}
			}
		}
	}
	\stopcontents[chapters]
}
\newcommand{\mySubSection}[2]{
	\ifthenelse{\equal{\minitoclevel}{section}}{}
	{\resumecontents[chapters]}
	\ifthenelse{\equal{#2}{}}
	{
		\subsection{#1}
	}
	{
		\subsection[#2]{#1}
	}
	\stopcontents[chapters]
}
\newcommand{\mySubSectionStar}[3]{
	\ifthenelse{\equal{\minitoclevel}{section}}{}
	{\resumecontents[chapters]}
	\ifthenelse{\equal{#2}{}}
	{\renewcommand{\shortTitle}{#1}}
	{\renewcommand{\shortTitle}{#2}}
	\renewcommand{\rightmark}{\shortTitle}
	\subsection*{#1}
	\ifthenelse{\equal{#3}{false}}
	{}
	{
		\ifthenelse{\equal{#3}{}}
		{\addcontentsline{toc}{subsection}{\myNumberLine{\shortTitle}}}
		{
			\ifthenelse{\equal{#3}{true}}
			{\addcontentsline{toc}{subsection}{\myNumberLine{\shortTitle}}}
			{
				\ifthenelse{\equal{#3}{chapter}}
				{\addcontentsline{toc}{#3}{\myChapterNumberLine{\shortTitle}}}
				{\addcontentsline{toc}{#3}{\myNumberLine{\shortTitle}}}
			}
		}
	}
	\stopcontents[chapters]
}
\newcommand{\mySubSubSection}[2]{
	\ifthenelse{\equal{\minitoclevel}{subsubsection}}
	{\resumecontents[chapters]}{}
	\ifthenelse{\equal{#2}{}}
	{
		\renewcommand{\shortTitle}{#1}
		\subsubsection{#1}
	}
	{
		\renewcommand{\shortTitle}{#2}
		\subsubsection[#2]{#1}
	}
	\stopcontents[chapters]
}
\newcommand{\mySubSubSectionStar}[3]{
	\ifthenelse{\equal{\minitoclevel}{subsubsection}}
	{\resumecontents[chapters]}{}
	\ifthenelse{\equal{#2}{}}
	{\renewcommand{\shortTitle}{#1}}
	{\renewcommand{\shortTitle}{#2}}
	\renewcommand{\rightmark}{\shortTitle}
	\subsubsection*{#1}
	\ifthenelse{\equal{#3}{false}}
	{}
	{
		\ifthenelse{\equal{#3}{}}
		{\addcontentsline{toc}{subsubsection}{\myNumberLine{\shortTitle}}}
		{
			\ifthenelse{\equal{#3}{true}}
			{\addcontentsline{toc}{subsubsection}{\myNumberLine{\shortTitle}}}
			{
				\ifthenelse{\equal{#3}{chapter}}
				{\addcontentsline{toc}{#3}{\myChapterNumberLine{\shortTitle}}}
				{\addcontentsline{toc}{#3}{\myNumberLine{\shortTitle}}}
			}
		}
	}
	\stopcontents[chapters]
}
\newcommand{\myRestoreMarks}{
	\let\leftmark=\oldleftmark
	\let\rightmark=\oldrightmark
}
\newcommand{\myCleanStarChapterEnd}{
	\clearpage
	\myRestoreMarks
}
\newcommand{\currentlanguage}{english}
\newcommand{\switchLanguage}[1]{
	\renewcommand{\currentlanguage}{#1}
	\ifthenelse{\equal{#1}{français}}
	{
		\usepackage[frenchb]{babel}
		\renewcommand{\myChapterLabel}{Chapitre}
		\renewcommand{\myAppendixLabel}{Annexe}
		\renewcommand{\lifa}{Laboratoire d'Informatique Fondamentale et Appliquée (LIFA)}
		\renewcommand{\myBibliographyTitle}{Bibliographie}
		\renewcommand{\phdthesislabel}{Thèse de Doctorat en $~$ \studentspeciality $~$, Université de Dschang}
		\renewcommand{\computerScience}{Informatique}
		\renewcommand{\mathematics}{Mathématiques}
		\renewcommand{\studentlab}{URIFIA}
		\renewcommand{\doctypethesis}{Thèse de Doctorat en}
		\renewcommand{\doctypemaster}{Mémoire de Master en}
		\renewcommand{\losname}{Liste des Symboles}
		\renewcommand{\loaname}{Liste des Acronymes}
	}{
		\usepackage[english]{babel}
		\renewcommand{\myChapterLabel}{Chapter}
		\renewcommand{\myAppendixLabel}{Appendix}
		\renewcommand{\lifa}{Laboratoire d'Informatique Fondamentale et Appliquée (LIFA)}
		\renewcommand{\myBibliographyTitle}{Bibliography}
				\renewcommand{\phdthesislabel}{PhD Thesis in $~$ \studentspeciality $~$, University of Dschang}
		\renewcommand{\computerScience}{Computer Science}
		\renewcommand{\mathematics}{Mathematics}
		\renewcommand{\studentlab}{URIFIA}
		\renewcommand{\doctypethesis}{PhD Thesis in}
		\renewcommand{\doctypemaster}{Master Report in}
		\renewcommand{\losname}{List of Symbols}
		\renewcommand{\loaname}{List of Acronyms}
	}
}
\newcommand{\documentType}[1]{
	\ifthenelse{\equal{#1}{numerical}}
	{
		\ifpdf
			\pdfcompresslevel=9
				\usepackage[plainpages=false,pdfpagelabels,bookmarksnumbered,%
				colorlinks=true,%
				linkcolor=blue,%
				citecolor=blue,%
				filecolor=forestgreen,%
				urlcolor=midnightblue,%
				pdftex,%
				unicode]{hyperref}
			\pdfimageresolution=600
			\usepackage{thumbpdf} 
		\else
			\usepackage{hyperref}
		\fi
	}
	{
		\ifpdf
			\pdfcompresslevel=9
				\usepackage[plainpages=false,pdfpagelabels,bookmarksnumbered,%
				colorlinks=true,%
				linkcolor=black,%
				citecolor=black,%
				filecolor=black,%
				urlcolor=black,%
				pdftex,%
				unicode]{hyperref}
			\pdfimageresolution=600
			\usepackage{thumbpdf} 
		\else
			\usepackage{hyperref}
		\fi
	}
}
\renewcommand{\listalgorithmname}{List of Algorithms}
\let\oldComment=\Comment
\renewcommand{\Comment}[1]{\oldComment{{\scriptsize#1}}}
\newtheorem{theorem}{Theorem}
\newtheorem{definition}[theorem]{Definition}
\newtheorem{proposition}[theorem]{Proposition}
\newtheorem{example}[theorem]{Example}
\newtheorem{corollary}[theorem]{Corollary}
\newenvironment{proof}[1][{\textbf{Proof}}]{
	\par
	\normalfont
	\topsep6\p@\@plus6\p@ \trivlist
	\item[\hskip\labelsep\itshape
	#1\@addpunct{.}]\ignorespaces
}{%
	\qed\endtrivlist
}
\title{A Grammatical Approach to Peer-to-Peer Cooperative Editing on a Service-Oriented Architecture}
\author{ZEKENG NDADJI Milliam Maxime}
\date{\today}
\begin{document}
{
	\timesContentFontSize

	\includepdf[pages=-, offset=72 -72]{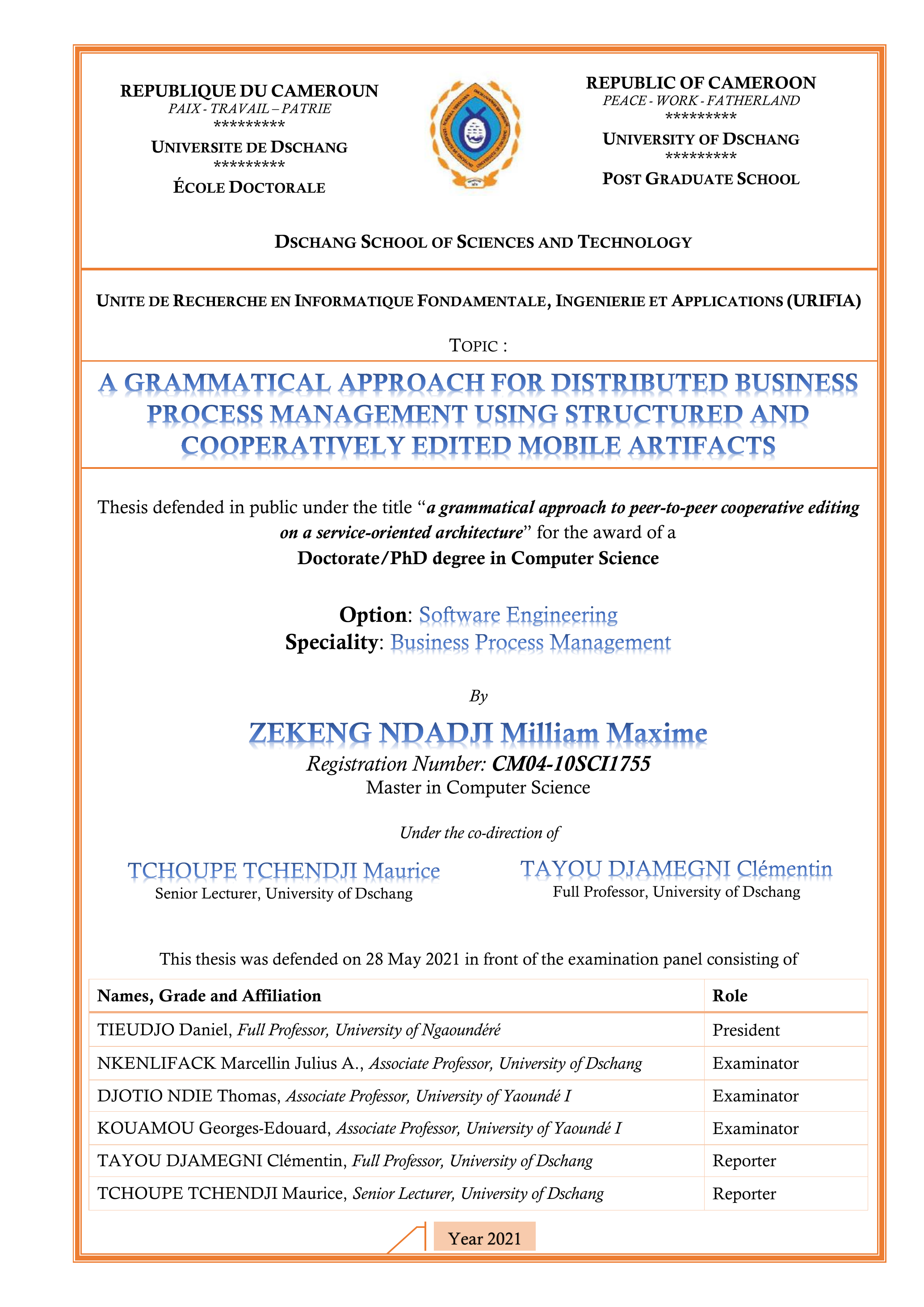} 

	\pagestyle{ruled}
	\nouppercaseheads
	\normalfont

	\frontmatter
	
\myChapterStar{Dedication}{}{section}
\vspace*{5cm}
\begin{flushright}
\textit{I dedicate this work to all those who one day, saw the efforts they had put into a project (especially an intense love relationship) being wiped out without a logical explanation and who took it upon themselves, to ride the waves unleashed by these storms, came out ten times stronger and developed an incredible desire to live rather than die. Like your daily efforts, this work is in large part the result of the knowledge that suffering has brought to me and thus, it contributes to prove Friedrich Nietzsche's aphorism: "\textbf{what doesn't kill you, makes you stronger}".}
\end{flushright}

\myChapterStar{Acknowledgements}{}{section}
Along the paths I followed during this thesis, I have gained a unique experience and, I hope, the necessary maturity to aspire to the title of Doctor/PhD. Throughout these years of work, I have come to realise that producing a thesis is far from being a solitary labour. In order to go through this tunnel, one must be under the benevolent escort of the Almighty Lord and of souls of good faith. I have benefited from a multiform accompaniment of many persons, both physical and moral. I would like them to find in this section, the expression of my deep gratitude. I therefore thank very warmly:

~

\noindent- The \textsc{LORD} \textit{our God, creator of heaven and earth}: in addition to my days, he offers me every day, the necessary grace and strength to continue to glorify him. Glory be to you, Holy Father!

~

\noindent- Dr \textsc{TCHOUP{\'E} TCHENDJI Maurice}, \textit{Senior Lecturer at the Department of Mathematics and Computer Science of the Faculty of Science of the University of Dschang}: he took the time to mentor me throughout this work, inculcating in me his sense of commitment and organisation. Sir, I reiterate that you are my model.

~

\noindent- Pr \textsc{TAYOU DJAMEGNI Cl{\'e}mentin}, \textit{Head of the Computer Engineering Department of the Institute of Technology - Fotso Victor of Bandjoun}: it was under difficult conditions that he agreed to supervise this work and to reward us with his legendary positivism. I'm more than honoured to be one of your students, Sir.

~

\noindent- All the imminent members of my pre-hearing and defence juries: they agreed to objectively evaluate this work. Thank you for the chance you are giving me, gentlemen.

~

\noindent- Dr \textsc{PARIGOT Didier}, \textit{Senior Researcher on Programming Languages at INRIA}: he accompanied me during this thesis by his reviews, his points of view, his recommendations and his know-how in terms of software programming. I learned a lot from your multiple contributions, Sir.

~

\noindent- The \textit{INRIA/LIRIMA FUCHSIA Associate Research Team}: they welcomed me when I needed it most and gave me a chance to express myself. Working with you guys is a lifelong dream.

~

\noindent- The \textit{lecturers at the University of Dschang}: they contributed enormously to my education. As education is priceless, I can only express my gratitude.

~

\noindent- My \textit{parents}, \textsc{NDADJI Emmanuel} and \textsc{MAFFOZEMTSOP Marie}: they gave me everything. I love you infinitely.

~

\noindent- My \textit{brothers, sisters, friends and colleagues (Nani, Chance, Alex, Arnold, Brice, Ari\`ege, Brel, Doris, Audrey, Fabrice, Rodrigue, Virginie, Emeric, Nestor, Preston, Yann, Max, Ange, Lionel, etc.)}: I live only by you and for you (uh, for me too ;-)).

~

\noindent- My dear \textsc{BANGUKET T. Nina}: she proofread and corrected the typos in all my English documents. Now I can write a correct sentence in English, thanks to you.

\myCleanStarChapterEnd

	\myTableOfContents{Table of Contents}
	
	\let\oldprintchaptertitle=\printchaptertitle
\renewcommand{\printchaptertitle}[1]{%
	\vspace*{-75pt}
	\oldprintchaptertitle{#1}
}%
\myChapterStar{Abstract}{}{section}
\let\printchaptertitle=\oldprintchaptertitle
In this thesis, we focus on the proposal of distributed workflow systems dedicated to the automation of administrative business processes. We propose an approach to build such systems by relying on the concepts of multiagent systems, Peer to Peer (P2P) architecture, Service-Oriented Architecture (SOA) and structured documents (artifacts) cooperative edition. 
Indeed, we develop mathematical tools that allow any workflow systems designer, to express each administrative process in the form of an attributed grammar whose symbols represent tasks to be executed, productions specify a scheduling of these tasks, and instances (the derivation trees that conform to it) represent the different execution scenarios leading to business goal states. The obtained grammatical model is then introduced into a proposed P2P system which is in charge of carrying out the completely decentralised execution of the underlying process's instances. 
The said system orchestrates a process's instance execution as a choreography during which, various software agents driven by human agents (actors), coordinate themselves through artifacts that they collectively edit. The exchanged artifacts represent the system's memory: they provide information on already executed tasks, on those ready to be executed and on their executors. The software agents are autonomous and identical: they execute the same unique protocol each time they receive an artifact. This protocol allows them to identify the tasks they must immediately execute, to execute them, to update the artifact and to disseminate it if necessary, for the continuation of the execution. 
Moreover, actors potentially have only a partial perception of processes in which they are involved. In practice, this means that certain tasks can be carried out confidentially: this property makes it possible to offer automatic management of administrative processes that is a little closer to their non-computerised management.

\vspace{1cm}
\noindent\textbf{Keywords:} Administrative Workflows, Artifacts, Peer to Peer, Partial Replication, Business Process Management.

\myCleanStarChapterEnd

	\clearpage
	
	\myListOfAcronyms{}
	\myListOfTables{}
	\myListOfFigures{}
	\myListOfAlgorithms{}

	\mainmatter
	
\myChapterStar{General Introduction}{}{true}
\label{chap0:introduction}
\myMiniToc{section}{Contents}

\mySectionStar{The Emergence of Business Process Management}{}{true}
\label{chap0:sec:bpm-emergence}
Business Process Management (BPM) has received considerable attention in recent years due to its potential for significantly increasing productivity and saving costs. It is defined by Wil M. P. Van Der Aalst \citeyearpar{van2013business} as \textit{"the discipline that combines knowledge from information technology and knowledge from management sciences and applies this to operational business processes"}. BPM aims to improve business processes by focusing on their automation, their analysis, their involvement in decision-making operations (management) and the organisation of work. Hence, BPM is often accompanied by software to manage, control and support business processes: these software systems are called \textit{Workflow Management Systems} (WfMS) \cite{workflow95, ima}.

The BPM discipline emerged in the 1980s in a more restrictive form known as \textit{Workflow Management} (WfM). Before WfM was developed, information systems were built from scratch; it means that, all components of such systems had to be programmed, including data storage and retrieval \cite{van1998application}. Software vendors soon realised that many information systems had similar data management requirements. This generic functionality was therefore outsourced to a data management system. Subsequently, the generic functionality related to user interaction (forms, buttons, graphics, etc.) was outsourced to user interfaces generators. The trend to outsource recurring functionalities to generic tools has continued in different areas. It is in this context that WfM has been introduced. Precisely, a WfMS automatically manages the process-related aspects \cite{workflow95, van2013business} of information systems. The aim of WfM in the design of information systems is to simplify as much as possible, the modelling and management of the business processes they automate: traditionally, this modelling and management are reduced to the specification of processes using simple graphical languages called \textit{workflow languages}.

Across the years, WfM has evolved into BPM. While WfM focused primarily on the automation of business processes, i.e. it was not fundamentally interested in other issues such as the analysis, verification and maintenance of their specifications, BPM made it its foundation \cite{van2016don}. With this evolution, many tools and techniques have been developed and have allowed the BPM field to mature. Today, its relevance is recognised by practitioners (users, managers, analysts, consultants and software developers) and academics. The availability of many BPM systems (WfMS) and a series of BPM-related conferences is proof of this \cite{van2013business}.

With the evolution of WfM and the development of new concepts related to the implementation of collaborative software systems, namely \textit{Peer-to-Peer (P2P) computing}, \textit{multiagent} paradigm and \textit{Service-Oriented Architecture} (SOA), the way of designing and implementing WfMS has also evolved. There has been a shift from centralised systems implemented according to a reference architectural model \cite{workflowModel}, to fully distributed systems offering decentralised workflow execution \cite{meilin1998workflow, junYan06, fakas04}. New paradigms of specification and management of business processes have been developed too. Among these, we find the \textit{document-centric} \cite{krishnan2001xdoc, marchetti2005xflow, badouelTchoupeCmcs}, the \textit{email-based} \cite{burkhart2012context, gazze2012workmail}, the \textit{database-based} \cite{actionWorkflow, miao2011realization}, the \textit{artifact-centric} \cite{nigam2003business, hull2009facilitating, lohmann2010artifact}, the \textit{data-centric} \cite{damaggio2013equivalence, badouel2015active} paradigms, etc. The \textit{artifact-centric} paradigm has been the subject of several studies over the last two decades \cite{abi2016towards, deutsch2014automatic, hull2009facilitating, lohmann2010artifact, assaf2017continuous, assaf2018generating, boaz2013bizartifact, lohmann2011artifact, estanol2012artifact}; it has been very successful because it has enabled the development of much more flexible workflow languages, that treat process data as first-class citizens, as opposed to the existing languages (BPMN - \textit{Business Process Model and Notation} \cite{BPMN} -, YAWL - \textit{Yet Another Workflow Language} \cite{van1998application, van2005yawl} -, etc.), that were only concerned with process tasks scheduling and assignment to actors. All these concepts' evolution and this involvement of many technologies in workflow systems' engineering, has made the BPM field, one of the most attractive for software vendors and software engineering researchers.

\mySectionStar{The Mitigated Use of Business Process Management}{}{true}
\label{chap0:sec:bpm-mitigated-use}
The BPM discipline has quickly established itself as an indispensable solution to the process automation needs of large firms, which often involve production lines \cite{van2016don}. Indeed, the workflows encountered in this context are generally highly structured and their tasks are almost entirely automated (executed by machines); this greatly "simplifies" their management by BPM. Another field of application in which BPM tends to naturally impose itself, is that of science. Indeed, the management (storage, distribution, computation, analysis, etc.) of generally very voluminous scientific data, involves several human and material resources that are often distributed across organisations \cite{bell2009beyond}. With the help of cloud computing, BPM in this context, serves to organise these resources for a more efficient management of scientific data \cite{juveGideon, ludascher2006scientific}. In these two application contexts, the systems' complexity due to the large amount of data to be managed and to the time-consuming and intensive computations requiring computer support, as well as the multiplicity and the distributivity of the involved resources, have somehow "imposed" the use of BPM.

Although in its current state BPM could be applied very successfully in the context of organisations with so-called administrative business processes (i.e. processes whose tasks are often semi-automated and therefore require the expertise of human agents - they are then more complex to automate using generic frameworks) \cite{boukhedouma2015adaptation}, there is less enthusiasm for it \cite{dumas2015models, van2016don}; information systems using classical database management concepts and tailored to the use cases involved, are often preferred. In some cases, such information systems are inspired by some BPM concepts or they embed workflow engines to make limited use of them: this is the case of Enterprise Resource Planning (ERP) systems such as SAP S/4HANA\footnote{Official website of SAP: \url{https://www.sap.com/}, visited the 19/03/2020.} and Oracle Fusion Applications (OFA)\footnote{Official website of Oracle ERP: \url{https://www.oracle.com/applications/erp/}, visited the 19/03/2020.} \cite{van2016don, van2013business}. Therefore, the application of "pure" BPM is still limited to specific industries such as banking and insurance. This mitigated use of BPM for the automation of quite common administrative processes can be explained by the following factors:
\begin{itemize}
	\item \textit{Building process management systems is considerably more "tricky" than building information systems using classical database management}. In database-based information systems, a specific number of processes whose behaviours are known in advance are designed and automated, whereas in process management, systems are designed and implemented to offer a generic management of an arbitrary set of processes with behavioural similarities. To quote Wil M. P. Van Der Aalst \citeyearpar{van2013business}, "\textit{BPM is multifaceted, complex, and difficult to demarcate}"; it is therefore not accessible to everyone and requires from its practitionners : great capacities of reasoning, logical abstraction, generalisation and architectural design \cite{workflow95}.
	\item \textit{Existing BPM solutions are too abstract and generic}. As a result, compared to traditional information systems, they are less suitable for certain applications such as the exclusive management of administrative processes, that have a variety of specifications depending on organisations \cite{borger2012approaches, zur2013much, van2013business}.
	\item \textit{There is no real consensus on BPM implementation techniques and tools}. BPM is composed by a multitude of paradigms and tools. Even if an effort of standardisation has been made in recent years, WfMS vendors seem to prefer the development of multiple proprietary solutions \cite{van2013business}; in our opinion, this has the effect of reducing BPM credibility.
\end{itemize} 

Because they offer enormous benefits in terms of time saving, implementation cost and system complexity management, we believe that there is a need to better adapt workflow solutions for administrative processes' management, which seem to be the most frequently encountered \cite{mcCready, van1998application, dumas2005process}. Naturally, we are not the first to take an interest in this issue; the authors of \cite{dumas2015models} and \cite{van2016don} have established that the potential obstacle to the popularisation of workflow solutions, is the imbalance of research work on the different BPM aspects. Indeed, they believe that research has focused too much on BPM specific artifacts (such as process models) rather than on improving organisations' business processes: this does not meet the real use and needs of BPM practitioners. They therefore propose to address now the issue of business process improvement, in order to give even more reasons to organisations regarding the choice of BPM for their processes management. This perspective is shared by a plurality of researchers; hence their growing interest in the new field of \textit{process mining}\footnote{Process mining is one of the leading techniques conducting workflows' event data automatic analysis, for possible improvement of its corresponding workflow model \cite{van2013business}.} \cite{van2011process}. 

Although the idea of improving business processes by analysing the data produced during their execution is interesting, we believe that it does not answer the question that we are struggling with: \textit{how can we get information system users (organisations and software vendors) to systematically opt for BPM technology to automate their processes} ? Indeed, data analysis is posterior to the users' choice of a technology to produce these data. Moreover, the question of improving business processes seems really interesting for use cases where these processes are very complex; it is not always the case for administrative processes.

\mySectionStar{Our Global Vision}{}{true}
\label{chap0:sec:our-global-vision}
Because the benefits of BPM are now widely recognised and unanimously accepted, we believe that it would be more interesting to tackle the problem of its popularisation for administrative business processes automation, by improving its technology as it stands at present. Therefore, it would be a matter of:
\begin{enumerate}
	\item Making more accessible, the automation of administrative business processes using BPM. We believe that if the implementation of workflow systems for administrative business processes is simplified, then more and more software vendors will choose it to produce organisations' information systems.
	\item Adapt BPM technology so that it is less abstract and evasive, so that it is closer to that of classical information systems, and so that it can respond to specific problems of organisations.
\end{enumerate}

We believe that one way to achieve these goals, is to apply the \textit{domain-specific software engineering} \cite{bryant2010domain} to BPM and thus, to focus on a kind of \textit{Domain-Specific BPM}: that is the application of BPM techniques in specific domains of activity, thus respecting the constraints imposed by them, and offering a framework which best fits the needs of practitioners in these domains. In this thesis therefore, we must use BPM exclusively for the automation of administrative business processes: it is not a matter of reinventing the wheel (even if some new concepts are added) but rather of reproducing and adapting in the more constrained context of administrative business processes management, what is already being done in BPM, while introducing new concepts and making judicious choices to achieve the desired goals.

As far as we know, there are no other scientific studies that have focused exclusively on the automation of administrative processes. On the other hand, there is a growing effort among workflow solution providers to popularise administrative BPM. Precisely, they increasingly offer to organisations, cloud-based and flexible process management solutions such as Metatask\footnote{Official website of Metatask: \url{https://metatask.io/}, visited the 29/03/2020.}, Samepage\footnote{Official website of Samepage: \url{https://www.samepage.io/}, visited the 29/03/2020.}, Digital Business Transformation Suite\footnote{Official website of Digital Business Transformation Suite: \url{https://www.interfacing.com/}, visited the 29/03/2020.}, Favro\footnote{Official website of Favro: \url{https://learn.favro.com/}, visited the 29/03/2020.}, etc. Let us mention that there are studies focused on adapting BPM technologies to scientific data management exclusively (Domain-specific BPM): these gave birth to the field of scientific workflows \cite{bell2009beyond, juveGideon, ludascher2006scientific}.

\mySectionStar{The Challenge Addressed in this Thesis}{}{true}
\label{chap0:sec:thesis-challenge}
In this thesis, we are interested in the automation of administrative business processes (exclusively) using BPM technology. The idea is to use the most up-to-date paradigms already developed in the implementation of workflow systems, to produce a new approach for the specification and execution of administrative business processes. Nowadays, the most common paradigms and concepts used in the implementation of workflow management approaches include:
\begin{itemize}
	\item \textit{Artifact-centric paradigm}: it is the most successful BPM paradigm of the last two decades; it simplifies the specification of a business process, to the instantiation of a data structure called artifact, whose state evolves into a business goal state when executing process instances.
	\item \textit{Cooperative edition of documents}: in the artifact-centric paradigm, an artifact can be seen as a structured document containing the execution state of a process instance at a given time. Because the evolution of the state contained in it is the consequence of actions carried out by the different actors involved in the considered instance's execution, this execution can be assimilated to the cooperative edition of structured documents.
	\item \textit{Peer to Peer computing}: workflow systems increasingly rely on P2P architectures; as opposed to centralised systems, they facilitate scalability and fault tolerance \cite{fakas04, junYan06, theseMounir}.
	\item \textit{Multiagent paradigm}: the multiagent paradigm was developed to facilitate the creation of distributed systems, especially those based on P2P architectures. It is also increasingly used in workflow systems, where process tasks are now executed by agents that communicate through messages; these agents have a high degree of autonomy to allow better decentralisation of process management.
	\item \textit{Service-Oriented Architecture}: in decentralised workflow management systems based on P2P architectures, the concept of SOA is generally used to define communication protocols between agents and to increase their autonomy by implementing a loose coupling between them.
\end{itemize}

We combine these concepts to produce a workflow solution that best suits the automation of administrative business processes. Because this project is much too voluminous to be addressed in the context of a single PhD thesis, we will focus herein, only on the fundamental aspects: administrative process modelling and their distributed execution. One could summarise the main objective of this thesis by saying that it focuses on:
\begin{displayquote}
\textit{The proposal of a new artifact-centric framework, facilitating the modelling of administrative business processes and the completely decentralised execution of the resulting workflows; this completely decentralised execution being provided by a P2P system conceived as a set of agents communicating asynchronously by service invocation so that, the execution of a given workflow instance is technically assimilated to the cooperative edition of (mobile) structured documents called artifacts.}
\end{displayquote}
This justifies the title of this thesis. We should however admit that, a title like "\textit{\textbf{yet another approach to facilitate administrative workflows design and distributed execution using structured and cooperatively edited mobile artifacts}}" would certainly have better reflected the work done\footnote{Although this is the title that best suits our work, academic constraints have "forced" us to keep the one currently in use (A Grammatical Approach to Peer-to-Peer Cooperative Editing on a Service-Oriented Architecture). Indeed, it is with this last one that we applied for a thesis in our doctoral school; at the time when we were certain of the judicious title for our work, the legal texts governing the PhD thesis in our institution, no longer allowed us to make changes to the title of our thesis.}.

\mySectionStar{A Synoptic View of our Methodology and Engineering}{}{true}
\label{chap0:sec:engineering-overview}
\begin{figure}[ht!]
	\noindent
	\makebox[\textwidth]{\includegraphics[scale=0.16]{./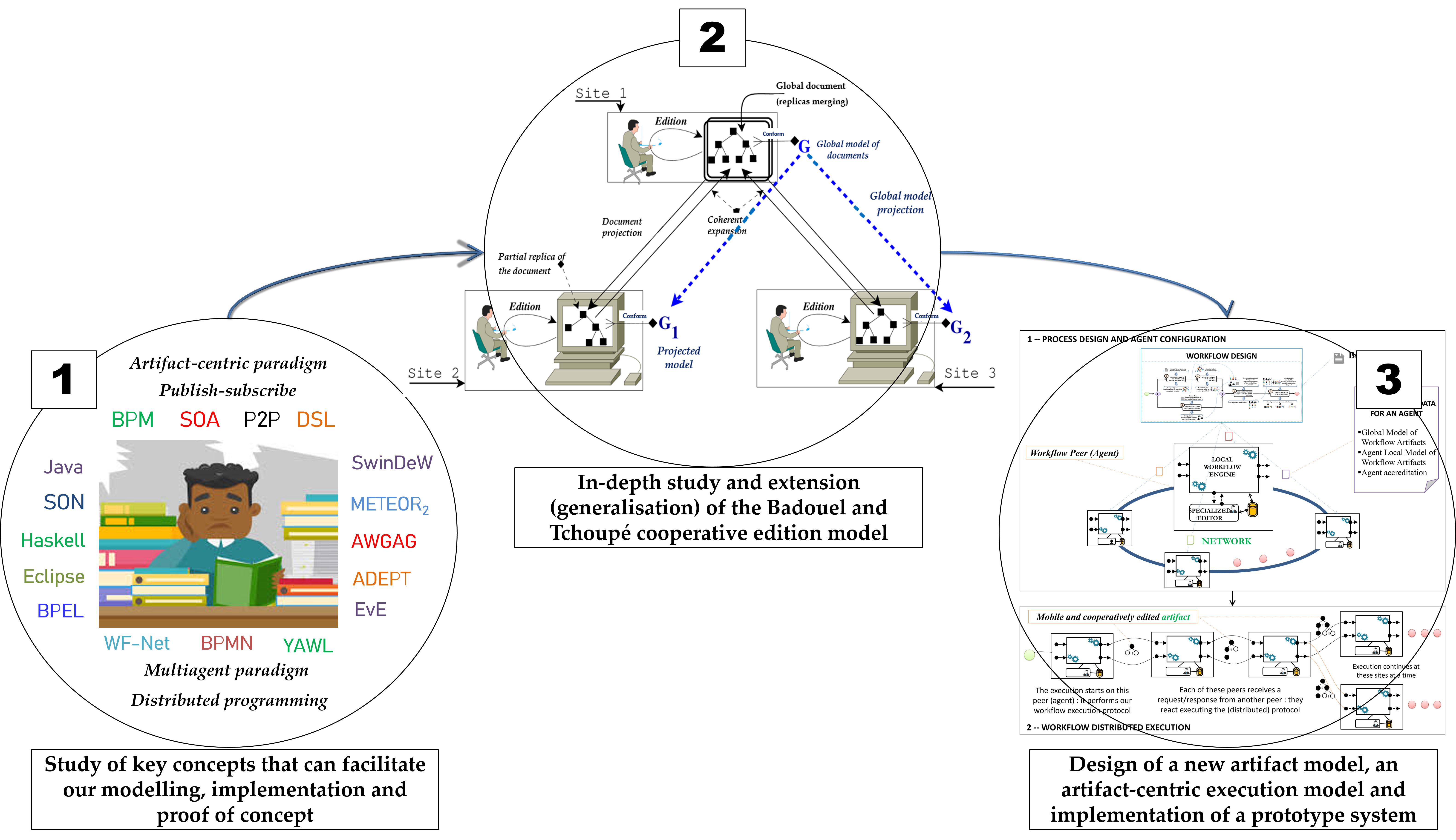}}
	\caption{A synoptic view of our methodology.}
	\label{chap0:fig:overview-methodology}
\end{figure}

Figure \ref{chap0:fig:overview-methodology} summarises the methodology and chronology we used to complete this thesis work. We started by studying a plethora of concepts including, the key concepts presented in chapter \ref{chap1:artifact-centric-bpm}. Then, we reinforced our knowledge on some other concepts, especially those related to the implementation of distributed systems (see fig. \ref{chap0:fig:overview-methodology}(1)). Having a better knowledge on BPM, especially on the artifact-centric paradigm, we undertook the design of a new artifacts' model; the search for such a model led us to an extensive study of structured cooperative editing workflows. In particular, we studied the structured documents asynchronous cooperative editing model proposed by Badouel and Tchoup\'e \citeyearpar{badouelTchoupeCmcs} so well that, we extended it with three new contributions (see chapter \ref{chap2:structured-editing-artifact-type}). Based on the document model manipulated by Badouel and Tchoup\'e, we derived an artifact model (an attributed grammar) that can be used to specify administrative business processes. We then designed a P2P system based on communicating agents using a service-oriented model, capable of executing in a completely decentralised manner, administrative processes specified using the proposed artifact model. To demonstrate the concreteness of our artifact-centric model, we finally implemented a fully functional prototype system allowing to experiment the proposed models on various administrative processes (see chapter \ref{chap3:choreography-workflow-design-execution}). As a result, we count at least six major contributions in this thesis:
\begin{enumerate}
	\item An extension of the merge algorithm proposed by Badouel and Tchoup\'e for the cooperative edition of a structured document, so that it can be applied in the more general case where edition conflicts might appear. It should be noted that this contribution was initiated in our Master's work before being fully matured during our first year of thesis. Some of the elements presented here are therefore part of our Master's dissertation;
	
	\item A generic system architecture that can be used to produce workflow systems for the cooperative editing of structured documents based on the Badouel and Tchoup\'e extended model;
	
	\item A workflow system prototype referred to as TinyCE v2 (a Tiny Cooperative Editor version 2), coded in Java and Haskell following the proposed system architecture and a novel cross-fertilisation protocol. It allowed us to test all the proposed algorithms related to the merge of documents' replicas;
	
	\item Another tree-based model of "business artifact" for administrative processes modelling, which makes it possible to better assimilate them to structured documents edited cooperatively;
	
	\item A choreography-oriented artifact-centric workflow execution model in which geographically dispersed agents execute the same and unique update (editing of artifact upon receipt) and diffusion (dissemination of updates) protocol;
	
	\item A prototype of a distributed system referred to as P2PTinyWfMS (a Peer-to-Peer Tiny Workflow Management System), allowing to fully experiment the artifact-centric approach investigated in this thesis.
\end{enumerate}

More concretely, in the asynchronous cooperative editing workflow as perceived by Badouel and Tchoup\'e, each of the co-authors receives in the different phases of the editing process, a copy of the edited document to insert his contribution. 
Since the collectively edited document is structured, it may in some cases, be preferable for reasons of confidentiality for example, that a co-author has access only to certain information; meaning that, he only has access to certain parts of the document, belonging to certain given types (\textit{sorts}\footnote{A \textit{sort} is a datum used to define the structuring rules (syntax) in a document model. Example: a \textit{non-terminal symbol} in a context free grammar, an \textit{ELEMENT} in a \textit{Document Type Definition} (DTD).}) of the document model. Thus, the replica $t_{\mathcal{V}_i}$ edited by co-author $c_i$ from the site $i$ may be only a \textit{partial replica} of the (global\footnote{We designate by \textit{global document} or simply \textit{document} when there is no ambiguity, the document including all parts.}) document $t$, obtained via a \textit{projection operation}, which conveniently eliminates from the global document $t$, parts which are not accessible to the co-author in question. Badouel and Tchoup\'e call \textit{view} of a co-author, the set of \textit{sorts} that he can access. 

When asynchronous local editions are done on partial replicas, it can be assumed that each co-author has on his site a local document model guiding him in his edition. This local model can help to ensure that for any update $t_{\mathcal{V}_i}^{maj}$ of a partial replica $t_{\mathcal{V}_i}$ (conform to the considered model), there is at least one document $t_c$ conform to the global model so that $t_{\mathcal{V}_i}^{maj}$ is a partial replica of $t_c$: for this purpose, the local model should be coherent towards the global one\footnote{Intuitively, a local model of document is \textit{coherent} towards a global model if any partial document $t_{\mathcal{V}_i}$ that is conform to it, is the partial replica of at least one (global) document $t_c$ conform to the global model.}. Thus, because of the edition's asynchronism, the only inconsistencies that we can have when the synchronisation time arrives are those from the concurrent edition of the same node\footnote{Manipulated documents are structured, they can be intentionally represented by a tree. Intuitively, a node is an identifiable part in the document (a section, a subsection, an image, a table, \ldots): it is the instance of a sort.}  (in the point of view of the global document) by several co-authors: the partial replicas concerned are said to be in conflict. 
In its first contribution, this thesis proposes an approach of detection and resolution of such conflicts by \textit{consensus},
using a tree automaton said of consensus, to represent all the documents that are the consensus of competing editions realised on different partial replicas.

A structured document $t$ is intentionally represented by a tree that possibly contains buds\footnote{A \textit{bud} is a leaf node of a tree indicating that an edition must be done at that level in the tree. Edit a bud consists to replace it by a subtree using the productions of the grammar of the document.}. Intuitively, synchronise or merge consensually the updates $t_{\mathcal{V}_1}^{maj}, \ldots, t_{\mathcal{V}_n}^{maj}$ of $n$ partial replicas of a document $t$, consists in finding a document $t_{c}$ conform to the global model, integrating all the nodes of $t_{\mathcal{V}_i}^{maj}$ not in conflict and in which, all the conflicting nodes are replaced by buds. Consensus documents are therefore the largest prefixes without conflicts in merged documents. 
Technically, the process for obtaining the documents forming part of the consensus is: 
(1) For each update $t_{\mathcal{V}_i}^{maj}$ of a partial replica $t_{\mathcal{V}_i}$, we associate a tree automaton with \textit{exit states} $\mathcal{A}^{(i)}$ recognising the trees (conform to the global model) for which $t_{\mathcal{V}_i}^{maj}$ is a  projection. 
(2) The consensual automaton $\mathcal{A}_{(sc)}$ generating the consensus documents is obtained by performing a \textit{synchronous product} of the automata $\mathcal{A}^{(i)}$ with a commutative and associative operator noted $\otimes$  that we define. It is such that: $\mathcal{A}_{(sc)}=\otimes\mathcal{A}^{(i)}$. 
(3) It only remains to generate the set of trees (or those most representative) accepted by the automaton $\mathcal{A}_{(sc)}$, to obtain the consensus documents.

The concept of structured document as perceived by Badouel and Tchoup\'e can be adapted to correspond to that of artifact in the sense of artifact-centric workflow systems. In this sense, a structured document can be seen as an artifact (annotated tree) that can be exchanged between the different agents (actors) involved in the execution of a given particular business process case; during its life, it is edited appropriately to make the system converge towards the achievement of one of the considered process's business goals. Its buds materialise the tasks to be executed or which are being executed and, an attributed grammar called the \textit{Grammatical Model of Workflow} (GMWf) is used as \textit{artifact type}. The sorts of a given GMWf represent the process tasks and each of its productions represents a scheduling of a subset of these tasks.
When a task is executed on a given site, the corresponding bud in the artifact is closed accordingly; then, one of the GMWf's production having the considered task as left hand side is choosen by the local stakeholder to expand the bud into a subtree highlighting in the form of new buds, the new tasks to be executed. To enrich the notion of access to different parts of artifacts, we add to GMWf, organisational information called \textit{accreditations} (similar to views) that offer a simple mechanism for modelling the generally different perceptions that actors have on processes and their data. The couple (GMWf, accreditations) is then the proposed model of "business artifact".

The execution of an administrative process's instance modelled using the couple (GMWf, accreditations), is a choreography in which the agents are reactive autonomous software components, communicating in peer-to-peer mode and driven by human agents (actors) in charge of executing tasks. An agent's reaction to the reception of a message (an artifact) consists in the execution of a five-step protocol clearly described in this thesis. 
This protocol allows it to: (1) \textit{merge} the received artifact with the one it hosts locally in order to consider all updates, (2) \textit{project} the artifact resulting from the merger in order to hide the parts to which the local actor may not have access and highlight the tasks to be locally executed, (3) make the local actor \textit{execute} the revealed tasks and thus edit the potentially partial replica of the artifact obtained after the projection, (4) integrate the new updates into the artifact through an operation called \textit{expansion-pruning} and finally, (5) \textit{diffuse} the updated artifact to other sites for further execution of the process if necessary. 
The agents' operational capabilities allow that, for the execution of a given process, an artifact created by one of them (initially reduced to an open node), moves from site to site to indicate tasks that are ready to be executed at the appropriate time and to provide necessary data (created by other agents) for that execution; the mobile artifact, cooperatively edited by agents, thus "grows" as it transits through the distributed system so formed.

\mySectionStar{The Organisation of this Manuscript}{}{true}
\label{chap0:sec:manuscript-organisation}
The rest of this thesis manuscript consists of four chapters and two appendices organised as follows:

~

\noindent\textbf{Chapter \ref{chap1:artifact-centric-bpm} - A State of the Art in Business Process Management: the Artifact-Centric Modelling}: we present some basic concepts related to the field of BPM, P2P, SOA, as well as the multiagent and artifact-centric paradigms. We also do a survey of some P2P and artifact-centric workflow management systems.

~

\noindent\textbf{Chapter \ref{chap2:structured-editing-artifact-type} - A Workflow for Structured Documents' Cooperative Editing : Key Principles and Algorithms}: after a brief presentation of Badouel and Tchoup\'e's asynchronous cooperative edition model, we present our algorithm for reconciling partial replicas of a structured document as well as a generic architecture of administrative workflow management systems.

~

\noindent\textbf{Chapter \ref{chap3:choreography-workflow-design-execution} - A Choreography-like Workflow Design and Distributed Execution Framework Based on Structured Mobile Artifacts Cooperative Editing}: we propose a new choreography-like approach to address administrative workflow design and their distributed execution. We also present a fully functional prototype system built according to the proposed approach.

~

\noindent\textbf{General Conclusion}: we summarise our work and present some possible venues for its further development.

~

\noindent\textbf{Appendix \ref{appendice1:algorithms-implementations} - Implementations of Some Important Algorithms Presented in this Thesis}: we present an implementation of all this thesis' key algorithms in the Haskell language.

~

\noindent\textbf{Appendix \ref{appendice2:article-appendice} - List of Scientific Communications Issued from the Work Presented in this Thesis}: we list the various scientific papers we produced during this thesis.

\myCleanStarChapterEnd

\myChapter{A State of the Art in Business Process Management: the Artifact-Centric Modelling}{}
\label{chap1:artifact-centric-bpm}
\myMiniToc{section}{Contents}
\mySection{Introduction}{}
\label{chap1:sec:introduction}
The work we are doing in this thesis falls within the domain of Computer-Supported Cooperative Work (CSCW); it is a sub-domain of Software Engineering. Software engineering can be seen as a field of engineering that enables the design, the implementation and the maintenance of quality software systems \cite{barais2005construire}. 
Research in the field of software engineering has led to the implementation of new design and even programming paradigms (e.g. \textit{object-oriented programming} \cite{meyer2000conception}, \textit{component-based programming} \cite{heineman2001component}, etc.), new methods and processes (\textit{Object Modeling Technique - OMT -} \cite{rumbaugh1991object}, \textit{Unified Modeling Language - UML -} \cite{booch2000guide}, etc.), new technologies (\textit{\textbf{workflow}}, etc.), etc. The importance of software and the ever-increasing complexity of systems keep the field of software engineering among the most important in computer science.

Since the start of the 1980s, workflow technology knows an ever-growing success near companies and researchers in the field of computer-aided production. 
This success can be justified by the fact that, workflow enables firms to reduce their production costs as well as to quickly and easily develop new products and services: their competitiveness is therefore increased. 
Workflow technology offers methods and tools (notations, management systems, etc.) for the specification, optimisation, automation and monitoring of business processes \cite{workflow95, van2015business}.   
Workflow technology tools are logically orchestrated within complex systems called \textit{Workflow Management Systems} (WfMS) \cite{workflowModel, ima}. 
The purpose of WfMS is not only to automate at best workflows, but also to provide an appropriate framework for facilitating collaboration between actors involved in the execution of a given business process. 

The search for Workflow Management (WfM) / Business Process Management\footnote{BPM can be considered as an extension of classical Workflow Management (WfM) systems and approaches \cite{van2015business}.} (BPM) techniques has been densely conducted over the past two decades and a clear interest has been given to the \textit{artifact-centric} paradigm \cite{nigam2003business} proposed by International Business Machines Corporation (IBM). This one, revisited in several works \cite{abi2016towards, deutsch2014automatic, hull2009facilitating, lohmann2010artifact, assaf2017continuous, assaf2018generating, boaz2013bizartifact, lohmann2011artifact, estanol2012artifact}, proposes a new approach to BPM by focusing on both automated processes (tasks and their sequencing) and data manipulated through the concept of "\textit{business artifact}" (\textit{artifact-centric modelling}). 

In this chapter which serves as a state of the art, we present some key notions related to BPM in general and to the artifact-centric paradigm in particular, in order to make it easier to better apprehend the concepts handled in this manuscript. 
In Section \ref{chap1:sec:bpm-def-key-principles}, we define some basic concepts such as \textit{business process}, \textit{workflow}, \textit{workflow management}, etc., then we present some workflow classification approaches as well as an overview of their standardised automation using BPM. 
In section \ref{chap1:sec:p2p-bpm}, we conduct a review of Peer to Peer (P2P) approaches to BPM. 
In section \ref{chap1:sec:data-aware-bpm}, we focus on the artifact-centric paradigm to BPM; we present the two general approaches to its implementation (\textit{orchestration} and \textit{choreography}) as well as some frameworks implementing it from the literature. 
Section \ref{chap1:sec:summary} is dedicated to a summary of the explored concepts and a smooth transition to the next chapter.

\mySection{Key Principles of Business Process Management}{}
\label{chap1:sec:bpm-def-key-principles}

\mySubSection{Some Business Process Management Basic Concepts}{}
\label{chap1:sec:bpm-def-history}
Research in the CSCW field focuses on the role of computers in collaborative work \cite{schimdt1992taking}. These have given rise to numerous softwares called \textit{CSCW systems} or \textit{groupware}. CSCW systems communicate through networks and provide functionalities facilitating exchanges, coordination, collaboration and co-decision between the actors of a given collaborative work; they thus defy the space and time constraints to which collaborative work is subjected. Indeed, with the help of such systems, actors can either operate on the same site and thus manipulate the same objects (\textit{centralised approach}), or they can operate on geographically distant sites (\textit{distributed approach}); in this case, the objects they manipulate are replicated on the different sites and synchronised at the appropriate time \cite{johansen1988groupware, grudin1994computer, penichet2007classification}. In the same vein, they can act at the same time (\textit{synchronous approach}) or at completely different times and sometimes independently of the actions carried out by others (\textit{asynchronous approach}) \cite{johansen1988groupware, grudin1994computer, penichet2007classification}. 
CSCW systems are often referred to as \textit{workflow systems}. However, it should be noted that workflow is an extension and a generalisation of CSCW to business processes' automation.

\mySubSubSection{Some Definitions}{}
\label{chap1:sec:bpm-basic-concepts-def}
A \textit{business process} can be informally defined as a set of tasks ordered following a specific pattern and whose execution produces a service or a particular business goal \cite{workflow95}. When such a process is managed electronically, it is called \textit{workflow}. The purpose of workflow is to streamline, coordinate and control business processes in an organised, distributed and computerised environment. The peer-review validation of an article in a scientific journal is a common example of business process. Descriptions of it can be found in \cite{peerReview02, van2001proclets, badouel14}. As in most literature works, most of the time, we will use the terms "business process" and "workflow" as synonyms in the rest of this manuscript.

The \textit{Workflow Management Coalition}\footnote{The growing reputation of workflow led to the creation, in 1993, of the \textit{Workflow Management Coalition} (WfMC) as the organisation responsible for developing standards in this field. Official website of the WfMC: \url{https://www.wfmc.org/}.} (WfMC) \cite{workflowModel} defines \textit{Workflow Management} (WfM) as the modelling and computer management of all the tasks and different actors involved in executing a business process. WfM is achieved using \textit{Workflow Management Systems} (WfMS): these are complex systems with the aim of automating at best workflows by providing an appropriate framework to facilitate collaboration between actors involved in business processes' execution \cite{workflow95, van2013business, van2015business, dumas2018fundamental}. WfMS are composed of logically orchestrated tools to specificy, to optimise, to automate and to monitor business processes \cite{workflowModel, dumas2018fundamental}. Technically, the management of a process according to WfM is done in two phases \cite{divitini2001inter}:
\begin{enumerate}
	\item the \textit{process modelling phase}: the process is studied and then specified using a language (usually graphical) called \textit{workflow language}. The resulting specification is called \textit{workflow model};
	\item the \textit{process instantiation and execution phase}: the workflow model is introduced into a WfMS which then instantiates and orchestrates the execution of the underlying process.
\end{enumerate}
Since WfMS are pre-engineered standalone systems, WfM simplifies business processes' automation to their specifications in \textit{workflow languages}.

WfM primarily focuses on business processes' automation. It is not fundamentally concerned with other issues such as the analysis, the verification and the management (maintenance) of workflow (models) unlike BPM, which made these its foundation \cite{van2016don}. BPM is the discipline that combines knowledge from information technology and knowledge from management sciences and applies this to operational business processes \cite{van2013business}. BPM can be seen as an extension of WfM as it primarily supports WfM and provide additional tools to improve business processes. For this, we have chosen to use the expression BPM rather than WfM (which tends to disappear) in the context of this work. It should be noted however that our contributions (chapter \ref{chap2:structured-editing-artifact-type} and \ref{chap3:choreography-workflow-design-execution}) could be perfectly presented as part of restricted WfM (we are not interested in the management and improvement of workflow models).

\mySubSubSection{An Introductive Example of Business Process}{}
\label{chap1:sec:running-example}
BPM is an important technology because it simplifies the automation of business processes which are the foundation of how companies and organisations operate. Business processes can be found everywhere. The examples are diverse and include the following:
\begin{itemize}
\item The design and development of a software by a team (especially when members are geographically dispersed) \cite{theseImine};
\item The simultaneous writing of a scientific paper or the documentation of a product by several researchers (cooperative editing) \cite{theseImine};
\item The follow-up of a medical file \cite{workflow07};
\item The student registration process in a faculty;
\item The withdrawal of a large sum of money from a bank teller;
\item The procedure for taking holidays in a government institution;
\item The procedure for claiming damages from an insurance company;
\item The peer-review process \cite{peerReview02, van2001proclets, badouel14}.
\end{itemize}

The peer-review process presents all the characteristics of the type of processes (\textit{administrative processes}) studied in this manuscript. Then, we will use it as an illustrative example along the whole manuscript (running example). Our description of this process is inspired by those made in \cite{peerReview02, van2001proclets, badouel14}: 
\begin{example}
	\textbf{The peer-review process (running example)}:\\
	The process is triggered when the editor in chief receives a paper for validation submitted by one of the authors who participated in its drafting.
	\begin{itemize}
		\item After receipt, the editor in chief performs a pre-validation after which, he can accept or reject the submission for various reasons (subject of minor interest, submission not within the journal scope, non-compliant format, etc.);
		\item If the submission is rejected, he writes a report then notifies the corresponding author and the process ends; in the other case, he chooses an associated editor and sends him the paper for the continuation of the validation; 
		\item The associated editor prepares the manuscript, forms a referees committee (two members in our case) and then triggers the peer-review evaluation process;
		\item Each referee reads, seriously evaluates the paper and sends back a message and a report to the associated editor;
		\item After receiving reports from all the referees, the associated editor takes a decision and informs the editor in chief who sends the final decision to the corresponding author.
	\end{itemize}
\end{example}

From this description, it is easy to identify all the tasks to be executed, their sequencing, actors involved and the tasks assigned to them. For this case, four actors are involved: an editor in chief ($EC$) who is responsible for initiating the process, an associated editor ($AE$) and two referees ($R1$ and $R2$).
A summary of tasks assignment is presented in table \ref{tableau:tachesExecutant}. We have associated symbols with tasks so that we can easily manipulate them in diagrams. 
\begin{table}[ht]
	\caption{Exhaustive tasks list of a paper validation process in a scientific journal and their respective performers.}
	\label{tableau:tachesExecutant}
	\begin{tabular}[t]{|m{8.4cm}|m{2.7cm}|m{2.63cm}|}
		\hline
		\textbf{Tasks} & \textbf{Associated Symbols}  & \textbf{Executors} \\
		\hline
		Receipt, pre-validation of a submitted paper and possible choice of an associated editor to lead peer-review evaluation & $A$  & $EC$\\
		\hline
		Drafting of a pre-validation report informing on the reasons for the immediate rejection of the paper & $B$ & $EC$ \\
		\hline
		Sending the final decision (acceptance or rejection of the paper) to the author & $D$ & $EC$ \\
		\hline
		Study, eventually formatting of the paper for the examination by a committee & $C$ & $AE$ \\
		\hline
		Constitution of the reading committee (selection of referees) and triggering the peer-review evaluation & $E$ & $AE$ \\
		\hline
		Decision making (paper accepted or rejected) from referees evaluations & $F$ & $AE$ \\
		\hline
		Evaluation of the manuscript by the first (resp. second) referee & $G1$ (resp. $G2$) & $R1$ (resp. $R2$) \\
		\hline
		Drafting of the after evaluation report by the first (resp. second) referee & $H1$ (resp. $H2$) & $R1$ (resp. $R2$) \\
		\hline
		Writing the message according to evaluation by the first (resp. second) referee & $I1$ (resp. $I2$) & $R1$ (resp. $R2$) \\
		\hline
	\end{tabular}
\end{table}

\mySubSubSection{Workflow Typology}{}
\label{chap1:sec:workflow-typology}
The authors of \cite{workflow95} conduct a very interesting study on the classification of workflows in which, they report the lack of a commonly accepted approach to categorising workflows. There are therefore several approaches to workflow classification in the literature.

The classification of workflows according to the nature and behaviour of automated processes is one of the most commonly found in the literature. According to it, workflows are divided into three groups: \textit{production} workflows, \textit{administrative} workflows and \textit{ad-hoc} workflows \cite{mcCready, van1998application}. Production workflows are those that automate highly structured processes that undergo very little (or no) change over time: all the scenarios are known in advance and most of the tasks are carried out by systems. This is the case for processes in industrial production lines. 
Administrative workflows apply to variable processes for which all cases are known; that is, tasks are predictable and their sequencing rules are simple and clearly defined. In these, changes are more frequent than with production workflows and human actors are more involved in the execution of tasks. In particular, this type of workflow brings considerable added value to public administration organisations whose business is focused on administrative routines \cite{boukhedouma2015adaptation}. Our running example, the peer-review process, is an administrative process. In the work presented in this manuscript, we are interested in this type of workflows.
These are opposite of ad-hoc workflows, which automate occasional processes for which it is not always possible to define the set of rules in advance. Processes are therefore only partially specified and may undergo many updates over time.

The workflows' classification made in \cite{workflow95} is orthogonal to the above-mentioned one (they can be used together); it is more concerned with tasks' automation degree. The authors classify workflows based on a measurement system, represented by a continuum ranging from \textit{human-oriented} workflows to \textit{system-oriented} workflows as shown in figure \ref{chap1:fig:dimitrios-classification}.
\begin{figure}[ht!]
	\noindent
	\makebox[\textwidth]{\includegraphics[scale=0.94]{./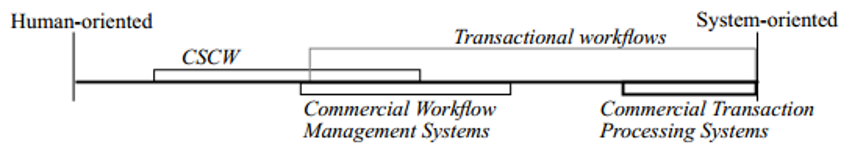}}
	\caption{Classification of workflows according to whether they are human-oriented or system-oriented (source \cite{workflow95}).}
	\label{chap1:fig:dimitrios-classification}
\end{figure}
The first type (human-oriented workflows) includes workflows in which humans collaborate to perform tasks and to coordinate themselves; in these, humans are responsible for ensuring the validity and consistency of the exchanged data and of the workflow's results. The second type of workflows (system-oriented workflows) refers to those in which the use of computer systems to perform tasks is unavoidable because, they involve complex data and computationally-intensive operations. According to this classification system, human-oriented workflows are the ones we are interested in.

In \cite{dumas2005process}, the authors refine the two above classification frameworks. Concerning the refinement of the one that classifies workflows according to the nature and behaviour of automated processes \cite{mcCready}, their classification framework distinguishes \textit{unframed}, \textit{ad hoc framed}, \textit{loosely framed}, and \textit{tightly framed} workflows. 
A workflow is said to be unframed if there is no explicit workflow model associated with it; its execution is strongly conducted by its actors. When actors play a crucial role (no longer limited to the simple execution of tasks, but also including the explicit choice of the control flow, the adjustment of control and data flows, etc.) in the execution of a workflow, it is said to be \textit{user-centric} \cite{badouel2015active}. This is the case for workflows being automated by groupware\footnote{Groupware systems are computer-based systems that support groups of people engaged in a common task (or goal) and that provide an interface to a shared environment \cite{ellis1991groupware}.}. 
In the case of ad hoc framed workflows, workflow models are defined a priori but, they frequently change. 
A workflow is said to be loosely framed when it is defined by a workflow model describing the "right way of doing things", while allowing its actual executions to deviate from this way; this is the preferred type of workflow handled by \textit{Case Management Systems} \cite{van2013business} (see sec. \ref{chap1:sec:gag}).
Finally, a tightly framed process is one which consistently follows a defined process model.

Concerning the classification framework of \cite{workflow95}, authors of \cite{dumas2005process} refine it and consider three types of workflows: \textit{Person-to-Person}, \textit{Person-to-Application}, and \textit{Application-to-Application} workflows. 
Person-to-Person workflows are those for whom all the tasks require human intervention. Application-to-Application workflows are their opposite; in these, all the tasks are executed by software systems. Person-to-Application workflows are in the middle; they involve both human-oriented tasks and system-oriented tasks. Pratically, most of workflows are of this category.

Nowadays, some scientific works require increasingly complex and data-intensive simulations and analysis. Scientific data management is therefore a major challenge \cite{bell2009beyond} with a high level of complexity. Workflow technologies are increasingly used to manage this complexity \cite{juveGideon}. These are responsible for scheduling computational tasks on distributed resources, managing dependencies between tasks and staging data sets in and out of runtime sites. The resulting workflows are called \textit{scientific workflows} and are usually based on a middleware infrastructure (\textit{Grid} or \textit{Cloud}). Ideally, the scientist should be able to integrate almost any scientific data resource into such a workflow during analysis, inspect and visualise the data on-the-fly as it is computed, make parameter changes as needed and re-run only the affected components, and capture sufficient metadata in the final products so that, scientific workflow executions help to explain the results and make them reproducible. Thus, a scientific workflow system becomes a scientific problem-solving environment, adapted to an increasingly distributed and service-oriented infrastructure (Grid or Cloud) \cite{ludascher2006scientific}.

There are many other types of workflows in the literature. We can mention on the fly, \textit{service-oriented} workflows \cite{piccinelli2003service, yongyi2009research}, \textit{structured} workflows \cite{kiepuszewski2000structured, eder2002meta, liu2005analysis}, etc. We do not present them here because they are not of great interest to the work we are doing for this thesis. We invite the interested reader to take a look at the few works mentioned above.

\mySubSection{Business Process Management Lifecycle and Key Activities}{}
\label{chap1:sec:bpm-key-activities-concerns}
A high-level view of the BPM discipline reveals that, its lifecycle consists of three phases on which it is possible to iterate indefinitely: the \textit{(re)design}, \textit{implement/configure}, and \textit{run \& adjust} phases \cite{van2013business} (see fig. \ref{chap1:fig:bpm-lifecycle}). 
During its lifecycle, four key activities namely \textit{model}, \textit{enact}, \textit{analyse}, and \textit{manage} (see fig. \ref{chap1:fig:bpm-key-concerns}) are carried out \cite{van2013business}. 
In this section, we examine what is done during these different activities; we mainly focus on the \textit{"model"} and the \textit{"enact"} activities: they are the only ones common to BPM and WfM and thus, they are of relevant interest for the work presented in this manuscript. 

\mySubSubSection{Business Process Management Lifecycle}{}
\label{chap1:sec:bpm-lifecycle}
\begin{figure}[ht!]
	\noindent
	\makebox[\textwidth]{\includegraphics[scale=0.5]{./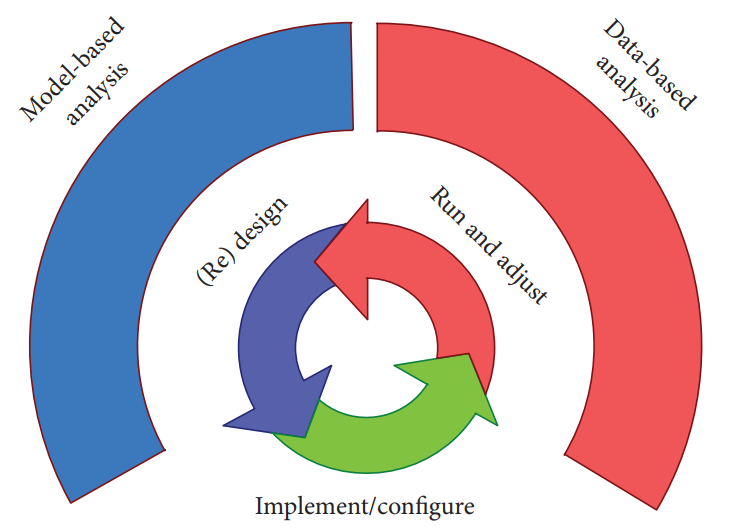}}
	\caption{The three phases of BPM's lifecycle (source \cite{van2013business}).}
	\label{chap1:fig:bpm-lifecycle}
\end{figure}

The automation of a given process using BPM starts with its modelling using one or more workflow languages \cite{dumas2018fundamental}. This \textit{"model" activity} is initiated during the \textit{"(re)design" phase} of the BPM lifecycle. The workflow models obtained during this activity can be analysed (the \textit{"analyse" activity}) either by simulations or by using model checking\footnote{Model checking is an automated technique that, given a finite-state model of a system and a formal property, systematically checks whether this property holds for (a given state in) that model \cite{baier2008principles}.} algorithms (to verify models' soundness): this type of analysis is said to be \textit{model-based}. 
As shown in figure \ref{chap1:fig:bpm-lifecycle}, the (re)design phase is followed by the \textit{"implement/configure" phase} in which, the workflow models obtained in the previous phase are converted, if necessary, into executable workflow models and then, used to configure the process execution environment (the WfMS): this is where the \textit{"model" activity} ends. 
After the \textit{"implement/configure" phase}, comes the \textit{run \& adjust phase}. During this last phase, the workflow is instantiated, executed and managed (adjusted) according to the scenarios foreseen when modelling the underlying process and when designing the host WfMS: these are the purposes of the \textit{"enact"} and \textit{"manage"} activities. Moreover, when a workflow instance is running, produced and logged data can be analysed (to discover possible bottlenecks, waste, and deviations) for possible improvement of its corresponding workflow model: this other type of analysis/monitoring is said to be \textit{data-based}; during the last decade, \textit{process mining} \cite{van2011process} has emerged as one of the leading techniques conducting \textit{data-based analysis}. If enough possible improvements to the workflow model are detected, the cycle can restart to apply them.
\begin{figure}[ht!]
	\noindent
	\makebox[\textwidth]{\includegraphics[scale=0.54]{./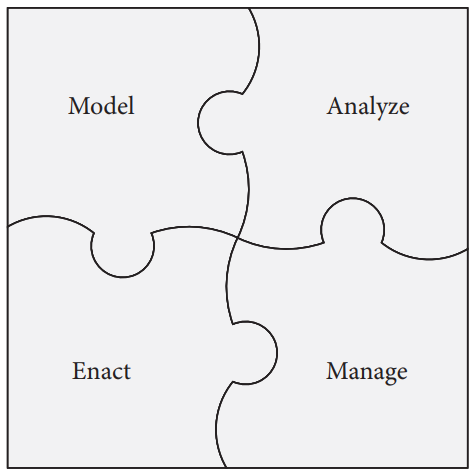}}
	\caption{The four key activities of BPM (source \cite{van2013business}).}
	\label{chap1:fig:bpm-key-concerns}
\end{figure}

\mySubSubSection{The "Model" Activity}{}
\label{chap1:sec:bpm-model-activity}
\noindent\textbf{\textit{Basic concepts}}

Process modelling is a crucial activity in WfM/BPM. As mentioned above (sec. \ref{chap1:sec:bpm-lifecycle}), it is done using dedicated languages called \textit{workflow languages}. Several workflow languages have already been developed. Among the most well-known are the BPMN standard \cite{BPMN} based on statecharts, the UML activity diagrams language \cite{booch2000guide}, the WF-Net (\textit{Workflow Net}) language \cite{wil2003business} which uses a formalism derived from that of Petri nets, the YAWL language \cite{van2005yawl} which is an extension of WF-Net and so forth. 
Some of these languages (BPM, UML activity diagrams) are \textit{informal} (i.e. they do not have a well-defined semantics and do not allow for analysis \cite{zur2013much, van2013business}) while others (WF-Net, YAWL) are based on powerful mathematical (\textit{formal}) tools (Petri nets). Nevertheless, they all allow to express in a diagram (called a \textit{worklow model}), the tasks that make up a given process and the control flow between them. More precisely, workflow languages allow to describe the behaviour of processes through the representation (among others) \cite{grigori2001elements} of :
\begin{itemize}
	\item Tasks that make up the main part of the process;
	\item Information and resources relating to the various tasks;
	\item Sequences or dependencies between those tasks;
	\item Trigger and termination events for the tasks.
\end{itemize}

Tasks are the base of any workflow; a \textit{task} is the smallest unit of hierarchical decomposition of a process. A task represents any work that is performed within a process. It consumes time, one or more resources, requires one or more input objects and produces one or more output objects. You can find examples of tasks in our running example (sec. \ref{chap1:sec:running-example}). 

From a workflow point of view, the term \textit{resource} refers to a system or a human who can execute a task. It is also known as \textit{actor}, \textit{participant}, \textit{stakeholder}, \textit{agent} or \textit{user} depending on the context. Resources can be grouped according to various characteristics, to form either a \textit{role} or an \textit{organisational unit} \cite{grigori2001elements}. A \textit{role} is a group of resources with the same functional capabilities, while an \textit{organisational unit} is a set of resources (or class of resources) that belong to the same structure (department, team, service, cell, etc.).

~

\noindent\textbf{\textit{Routing patterns}}

To achieve its objectives, any workflow language must, for a given process, allow to express at least its tasks and their \textit{routing} (\textit{control flow}). The task control flow is commonly referred to as the \textit{lifecycle (process) model} of the process under study \cite{divitini2001inter, hull2009facilitating}. There are a number of routing patterns identified in the literature as basic ones: these are \textit{sequential}, \textit{parallel}, \textit{alternative} or \textit{conditional} and \textit{iterative} routings (see fig. \ref{chap1:fig:basic-routing}) \cite{van1998application}.
\begin{figure}[ht!]
	\noindent
	\makebox[\textwidth]{\includegraphics[scale=4.5]{./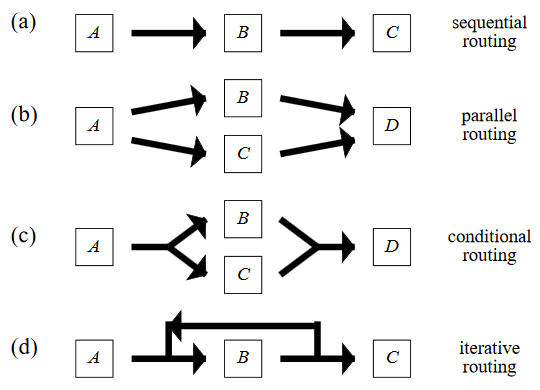}}
	\caption{Four basic routing constructs (source \cite{van1998application}).}
	\label{chap1:fig:basic-routing}
\end{figure}
\begin{itemize}
	\item Sequential routing expresses the fact that tasks must be executed one after the other (task $A$ before tasks $B$ and $C$ in figure \ref{chap1:fig:basic-routing}(a));
	\item Parallel routing is used to specify the potentially concurrent execution of certain tasks. Tasks $B$ and $C$ in figure \ref{chap1:fig:basic-routing}(b) can be executed at the same time; in this case, tasks $A$ and $D$ are considered as \textit{gateways}: $A$ is said to be an \textit{AND-Split} gateway while $D$ is an \textit{AND-Join} gateway;
	\item With alternative routing, one can model a \textit{decision}: i.e. the choice to execute one task rather than another at a given time. In figure \ref{chap1:fig:basic-routing}(c), tasks $B$ and $C$ cannot be both executed; for this case, $A$ is called and \textit{OR-Split} gateway and $D$ is an \textit{OR-Join} gateway;
	\item In some cases, it is necessary to execute a task multiple times.  In figure \ref{chap1:fig:basic-routing}(d) task $B$ is executed one or more times.
\end{itemize}

The search for more advanced and expressive routing patterns has been the subject of many studies \cite{van2012workflow, borger2012approaches}. The interested reader is invited to consult the few references mentioned above to find out more. 

When a given workflow language only allows to specify the routing of the processes' tasks, when it is not interested in modelling the consumed and produced data during tasks execution, and when it only gives a secondary role to the processes' users, it is said to be \textit{process-centric}. This is the case for all the previously mentioned languages (BPMN, WF-Net, UML activity diagrams and YAWL). This type of workflow language is often referred to as "\textit{traditional workflow language}".

~

\noindent\textbf{\textit{Examples of workflow models}}

Figure \ref{chap1:fig:comparing-workflow-languages} shows the orchestration diagrams corresponding to the graphical description of the peer-review process (see its textual description in sec. \ref{chap1:sec:running-example}) using the process-centric notations BPMN and WF-Net. The graphical notations equivalent to sequential flow, \{And, Or\}-Splits and \{And, Or\}-Joins are well represented. Each diagram resumes the \textit{main scenarios} of this process.
\begin{figure}[ht!]
	\noindent
	\makebox[\textwidth]{\includegraphics[scale=0.2]{./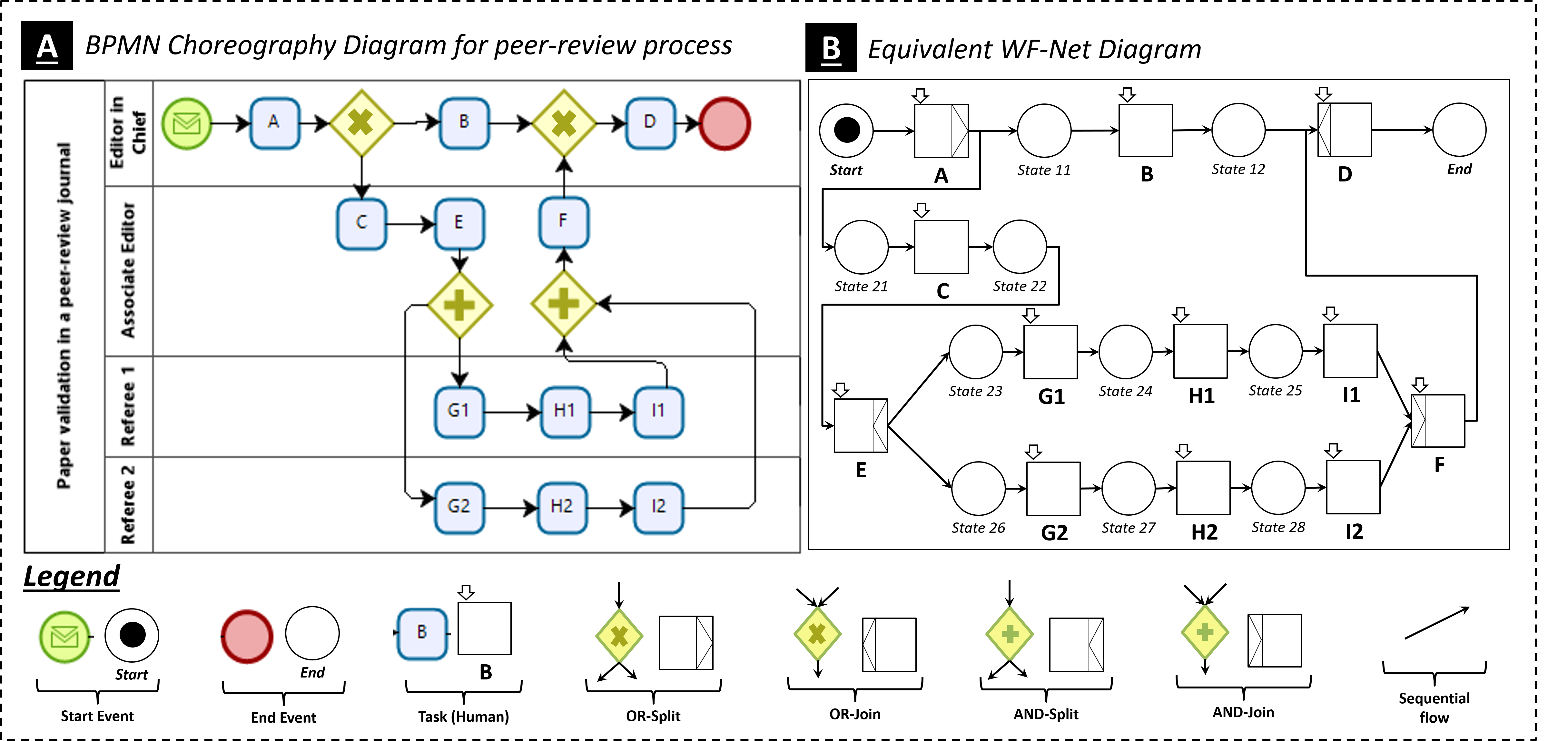}}
	\caption{Orchestration diagrams of the peer-review process.}
	\label{chap1:fig:comparing-workflow-languages}
\end{figure}

\mySubSubSection{The "Enact" Activity}{}
\label{chap1:sec:bpm-enact-activity}
\noindent\textbf{\textit{Overview}}

The "enact" activity takes as input, the workflow models (specifications) obtained during the model activity. If these models are executable (i.e. they have been coded in more technical languages taking into account implementation details) then they are directly introduced into a WfMS suitably installed at the different workflow execution sites; otherwise, they are first converted into executable models then, they are introduced into the WfMS. There are several languages for producing executable workflow models. These are usually proprietary and provided by WfMS designers. Of these languages, \textit{(Web Services) Business Process Execution Language} ((WS-)BPEL) is the standard\footnote{BPEL is standardised by the OASIS consortium. OASIS website: \url{https://www.oasis-open.org/}. BPEL Specification (PDF version): \url{https://docs.oasis-open.org/wsbpel/2.0/OS/wsbpel-v2.0-OS.pdf}.} and is well compatible with BPMN \cite{white2005using, ouyang2006bpmn, leymann2010bpel}.

Once the WfMS is properly configured using workflow models, it can create workflow instances and properly orchestrate their execution. To do this, WfMS must coordinate (according to workflow models) the execution of a set of tools and applications offering various services. In the 1990s, the WfMC developed and proposed an architectural \textit{reference model} for the implementation of WfMS \cite{workflowModel} (see fig. \ref{chap1:fig:wfms-reference-model}). The latter structures and describes precisely, the expected functionalities of a WfMS.

~

\noindent\textbf{\textit{The reference model}}

The WfMC reference model is a centralised architectural model in which the main component is called \textit{workflow enactment service}. The workflow enactment service is responsible for controlling the executions of workflow instances. It is composed of several \textit{workflow engines}. A given workflow engine handles some parts of workflows and also manages some of their resources \cite{van2013business, dumas2018fundamental}. 
\begin{figure}[ht!]
	\noindent
	\makebox[\textwidth]{\includegraphics[scale=0.6]{./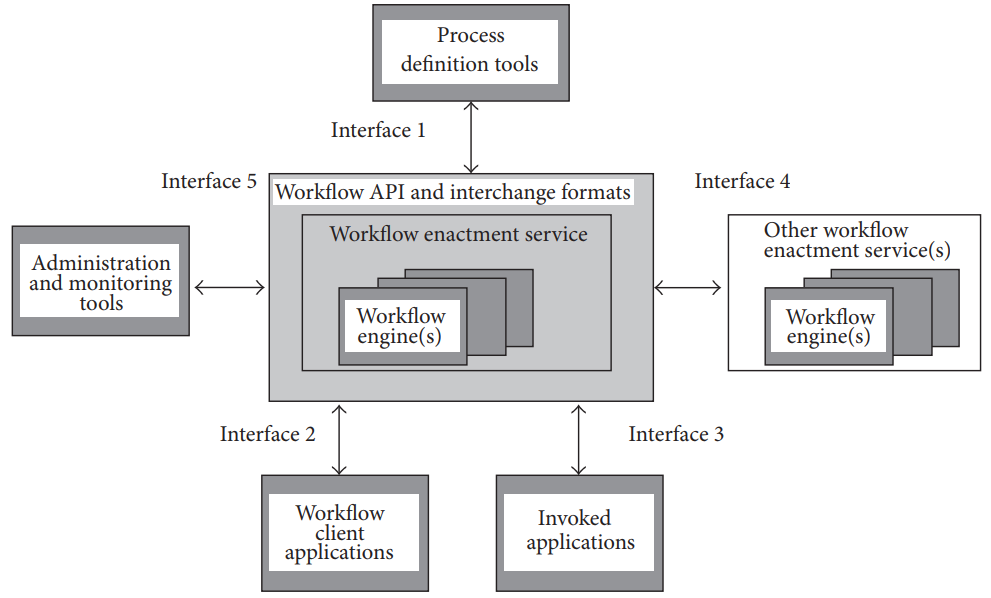}}
	\caption{Reference model of the Workflow Management Coalition (source \cite{workflowModel}).}
	\label{chap1:fig:wfms-reference-model}
\end{figure}
According to the reference model, WfMS must provide tools to facilitate their configurations using workflow models: therein, these tools are referred to as \textit{process definition tools}. Process definition tools are connected to the WfMS core (the workflow enactment service) via \textit{Interface 1}. In order to execute tasks, users use \textit{workflow client applications} that communicate with the WfMS via \textit{Interface 2}. When necessary, a given workflow engine invokes other applications via \textit{Interface 3}. The \textit{administration and monitoring tools} connected via \textit{Interface 5}, are used to monitor and control the workflows. Finally, the WfMS can be connected to other WfMS using \textit{Interface 4}. A considerable effort has been made to standardise the five interfaces shown in figure \ref{chap1:fig:wfms-reference-model}. These efforts led to the production of languages (exchange formats) such as \textit{Workflow Process Definition Language} (WPDL), \textit{XML\footnote{XML: eXtensible Markup Language.} Process Definition Language} (XPDL) and BPEL.

The reference model has been very successful. Firstly, because to this day, it perfectly orchestrates the different tools used for the design and execution of workflows. Secondly, because it has served as the basic model for a very large number of WfMS in the industry. Examples include ActionWorkflow \cite{actionWorkflow}, FlowMark \cite{flowmark}, Staffware \cite{staffware}, InConcert \cite{inConcert}, etc. Because the reference model is a centralised approach (client-server architecture), it has the advantage of facilitating a good mastery of the technologies used in the production of WfMS. Also, the implementation of (generally lightweight) client applications and the overall maintenance of WfMS (which is limited to the maintenance of the central server) are much simpler \cite{theseKanzow}. However, systems based on a client-server architecture show some limitations because of the centralisation of workflow management. Their main weaknesses are: \textit{the (non) fault tolerance}, \textit{the (difficult) scalability} and the strong dependency of the system vis a vis the central server, which stores data, controls and thus, represents \textit{a possible point of congestion} \cite{junYan06, fakas04}. Concretely \cite{junYan06},
\begin{enumerate}
	\item The client-server architecture allows centralised coordination of workflows with little use of the computing potential on the client side. Workflow systems based on such an architecture are very cumbersome. In application areas where several workflow instances need to be executed in parallel, the centralised server can be overloaded with heavy computations and intensive communications when the system load increases, thus becoming a potential bottleneck. 
	\item Client-server systems are vulnerable to server failures. The centralised server is commonly viewed as a single point of congestion in the system. Its malfunction can cause the entire system to shut down. 
	\item The limited scalability of the client-server architecture prevents the WfMS based on it, from dealing with the ever-changing work environment. This also raises difficulties in system configuration, as any changes to the system, such as the admission of new actors, may require changes and updates to the centralised workflow server, which is very impractical and inefficient. Therefore, these WfMS are particularly unsuitable for application areas where workflow actors are required to join and leave frequently.
	\item An important and crucial element of any workflow system is to allow actors to maintain their autonomy and control. However, workflow actors in a client-server-based WfMS are exclusively controlled by centralised servers. A serious problem is that, a large number of actors working on the "lightweight client side" may not be able to exercise their control, decision-making and problem-solving capabilities.
\end{enumerate}

Knowing that various actors involved in a given business process are very often spread over remote sites, the reference model does not seem to be very suitable for efficiently implementing cooperation among them, as would systems based on a distributed architectural model be.
In order to meet the shortcomings of the reference model, several works \cite{theseKanzow, junYan06, fakas04, theseImine, SON} have focused on the production of distributed WfMS built on top of peer-to-peer (P2P) architectures. This approach has also been successful since, systems such as ADEPT \cite{adept} and METEOR$_{2}$ \cite{meteor} have been designed over years \cite{theseKanzow}.

\mySection{Peer to Peer Business Process Management}{}
\label{chap1:sec:p2p-bpm}
Knowing that workflows are naturally distributed, they can sometimes involve resources from different organisations. Within each organisation, WfMS must therefore be built with a strong emphasis on (sometimes inter-organisational) cooperation; this differs from the idea in which classical information systems have often been built. However, even if WfMS must be \textit{interoperable} to facilitate cooperation, they must also ensure the \textit{autonomy} and the \textit{confidentiality} of actors and organisations involved in workflow execution; because, though organisations are aware of the need and necessity to participate in cooperation, they wish to protect their expertise in order to ensure sufficient confidentiality on their local data and local processes \cite{boukhedouma2015adaptation}. 
The main challenge for WfMS designers over the last two decades, has therefore been to build WfMS capable of both ensuring the agility of organisations and fostering the interconnection of business processes, while preserving their autonomy and the confidentiality of their local processes and data.

The production of fully distributed WfMS proved to be an effective solution to this challenge \cite{meilin1998workflow}. This has been made more feasible with the advent of new concepts such as the \textit{Multiagent} paradigm and the \textit{Service-Oriented Architecture} (SOA). In this section, we take a look at how the distributed workflow management approach works, and some of the decentralised WfMS that have been developed for this purpose

\mySubSection{The Advent of the Multiagent and Service-Oriented Concepts}{}
\label{chap1:sec:agent-soa-soc-concepts}

\mySubSubSection{The Multiagent Concept}{}
\label{chap1:sec:agent-concept}
The \textit{agent} and \textit{multiagents systems} concepts emerged in the 1980s. These concepts have generated lots of excitement in different research communities mainly because, they form the basis of a new paradigm for designing and implementing software systems that operates in distributed and open environments\footnote{\textit{"An open system is one in which the structure of the system itself is capable of dynamically changing. The characteristics of such a system are that its components are not known in advance; can change over time; and can consist of highly heterogeneous agents implemented by different people, at different times, with different software tools and techniques. Perhaps the best-known example of a highly open software environment is the internet. The internet can be viewed as a large, distributed information resource, with nodes on the network designed and implemented by different organisations and individuals."} (Katia P. Sycara, \citeyearpar{sycara1998multiagent})}, such as the internet \cite{sycara1998multiagent}. One of the best-known and most famous definitions of the agent concept was formulated by Jacques Ferber \citeyearpar{ferber1997systemes} and states that : \textit{an agent is a physical or logical entity capable of acting upon itself and its environment, which has a partial representation of that environment, which, in a multiagent system, can communicate with other agents, and whose behaviour is the consequence of its observations, knowledge and interactions with other agents}.

Multiagent systems have brought a new way to look at distributed systems and have provided a path to more robust intelligent applications \cite{deloach2001multiagent}. The challenge of the multiagent concept is to build distributed systems in which the nodes (agents), endowed with great autonomy, high reactivity and communicating using an asynchronous messaging system (they therefore possess cooperation and deliberation/decision capabilities \cite{theseKanzow}), can appear and disappear at any time without paralysing the system. A multiagent system is characterised as follows \cite{sycara1998multiagent}:
\begin{enumerate}
	\item Each agent has incomplete information or capabilities for solving problems and thus, has a limited viewpoint;
	\item There is no system global control;
	\item Data are decentralised;
	\item Computation is asynchronous.
\end{enumerate}
Such properties for a multiagent system provide it with several capabilities that have mainly attracted researchers and professionals to the multiagent paradigm. Among these capabilities we can distinguish the following \cite{sycara1998multiagent}: 
\begin{itemize}
	\item The capability to solve problems that are too large and difficult to handle by a centralised agent/server, because of resource limitations, or the sheer risk of having one centralised entity that could be a performance bottleneck or could fail at critical times;
	\item The capability to allow for the interconnection and interoperation of multiple existing legacy systems;
	\item The capability to provide solutions to problems that can naturally be regarded as a society of autonomous interacting components-agents;
	\item The capability to provide solutions that efficiently use information sources that are spatially distributed;
	\item The capability to provide solutions in situations	where expertise is distributed.
\end{itemize}

The above mentioned capabilities of multiagent systems are in line with the desired capabilities of a distributed WfMS. This has motivated the growing use of concepts developed for multiagent systems, in the production of such WfMS. However, the vocabulary used by the designers of distributed WfMS is not always identical to the multiagent jargon, and it is sometimes necessary to abstract the proposed systems to exhibit the multiagent concepts involved in their design and implementation.

\mySubSubSection{The Service-Oriented Architecture}{}
\label{chap1:sec:soa-concept}
\noindent\textbf{\textit{Basic concepts}} 

\textit{Service-Oriented Architecture} (SOA) has spread rapidly as a result of its growing success, and has been widely accepted as a supporting architecture for information systems because of its pivotal concept of \textit{service} \cite{boukhedouma2015adaptation}. Service is the essential concept of SOA and can be defined as \textit{"self-describing, platform-agnostic computational elements that support rapid, low-cost composition of distributed applications. Services perform functions, which can be anything from simple requests to complicated business  processes. Services allow organisations to expose their core competencies programmatically over a network using standard languages and protocols, and be implemented via a self-describing interface based on open standards"} (Mike P. Papazoglou, \citeyearpar{papazoglou2003service}). For MacKenzie et al. \citeyearpar{mackenzie2006reference}, the term service combines the following related ideas :
\begin{itemize}
	\item The offer to perform work for another;
	\item The capability to perform work for another;
	\item The specification of the offered work.
\end{itemize}

From a purely technological point of view, a service is a software component represented by two separate elements: its interface, which allows defining the access modalities to the service (name of the service and the parameters of the public operations defining the signatures of the operations) and its implementation \cite{boukhedouma2015adaptation}. Services are offered by \textit{service providers} (see fig. \ref{chap1:fig:basic-soa}); these are organisations that procure the service implementations, supply their service interfaces and provide related technical and business support \cite{papazoglou2003service}. Service interfaces are available for their searching, their binding, and their invocation by \textit{service consumers} (see fig. \ref{chap1:fig:basic-soa}) \cite{mackenzie2006reference}. Clients of services (service consumers) can be other solutions or applications within an enterprise or clients outside the enterprise. Service providers must therefore provide a distributed computing infrastructure for both intra and cross-enterprise application integration and collaboration. To satisfy these requirements, provided services should be \cite{papazoglou2003service}:
\begin{itemize}
	\item \textit{Technology neutral}: they must be able to be invoked by clients coded with various technologies having a few standards as a common denominator;
	\item \textit{Loosely coupled}: they must not require neither knowledge nor internal structures or conventions (context) at the service consumer or service provider side;
	\item \textit{Transparent from the point of view of their location}: one should be able to locate and invoke the services irrespective of their real location. To do so, the use of a \textit{service registry} where services interfaces and location information are stored, is required (see fig. \ref{chap1:fig:basic-soa}).
\end{itemize}

SOA is an architectural style, a logical way of designing a software system to provide services to either end-user applications or other services distributed in a network through published and discoverable interfaces. Basically, SOA defines an interaction between software agents as an exchange of messages between service consumers (clients) and service providers (see fig. \ref{chap1:fig:basic-soa}).
\begin{figure}[ht!]
	\noindent
	\makebox[\textwidth]{\includegraphics[scale=0.6]{./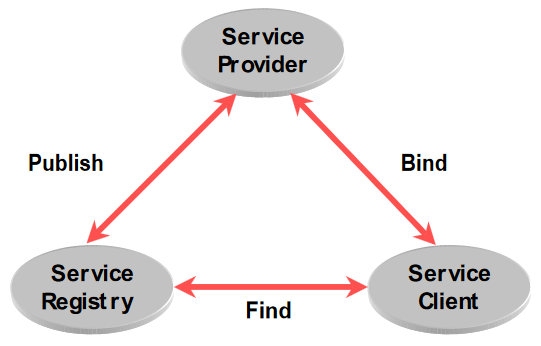}}
	\caption{The basic Service-Oriented Architecture (source \cite{papazoglou2003service}).}
	\label{chap1:fig:basic-soa}
\end{figure}
In SOA, the exchange of messages between agents can be \textit{synchronous} or \textit{asynchronous}.

In the synchronous model, the service consumer invokes a service and expects a result. The invoked service is then designed to immediately return a result and is the only service involved. This model operates similarly to remote procedure call technologies such as Remote Method Invocation (RMI) but with a much loosely coupling between the service consumer and its provider.

In the asynchronous model (which is generally a particular form of \textit{publish/subscribe}\footnote{Publish/Subscribe is a communication paradigm well adapted to the loosely coupled nature of distributed interaction in large-scale applications; with systems based on its interaction scheme, subscribers register their interest in an event, or a pattern of events, and are subsequently asynchronously notified of events generated by publishers \cite{eugster2003many}.}), a given service consumer $A$ expresses its desire to be aware of the execution state of a given service $b$, published (provided) by a service provider $B$, by subscribing to it. During this subscription, it provides information about one or more services $(a_i)$ that it also provides and that must be invoked when the execution state of $b$ has changed. Several services are thus involved, and each agent is generally both a service provider and a service consumer.

SOA has been designed to facilitate the implementation of distributed applications based on Peer-to-Peer architectures (nodes/agents communicate directly without going through a central server) and in which, the skills of each agent are exposed, discoverable and invocable by the others but, the technique and technology used by each agent is confidential. This setting completes the concept of agents to answer correctly to the challenges of distributed WfMS: hence the very increasing use of the concept of service in workflow systems. Actually, some currents of thought claim that SOA was designed to facilitate the automation of business processes and thus, the design of distributed WfMS. This is the case of Hurwitz et al. \citeyearpar{hurwitz2009service} who define SOA as : \textit{"a software architecture for building applications that implement business processes or services by using a set of loosely coupled black-box components orchestrated to deliver a well-defined level of service"}.

~

\noindent\textbf{\textit{Shared-data Overlay Network}} 

As the use of SOA in P2P applications escalates, there is a proliferation of tools to facilitate the design and implementation of these new applications \cite{kaur2013design}. \textit{Shared-data Overlay Network} (SON) \cite{SON} is one of those tools. SON is a middleware offering several \textit{Domain Specific Languages} (DSL) to facilitate the implementation of P2P systems whose components communicate in an asynchronous manner by services invocations. SON combines the powerful concepts of \textit{Component-Based Software Engineering}\footnote{Component-Based Software Engineering is an emerging paradigm of software development whose goal is, composing applications with plug \& play software components on the frameworks; so, to realise software reuse by changing both software architecture and software process \cite{aoyama1998new}.}, \textit{Service-Oriented Computing} and \textit{P2P Computing} in its engineering. 

By using SON middleware, the P2P application designer (software developer) is able not only to specify applications in component-based service model, but also to perform an effective code generation. In fact (see fig. \ref{chap1:fig:son-model}), the software developer defines using a dedicated DSL called \textit{Component Description Meta Language} (CDML), for each component, a set of services (input, internal and output). Then, he only implements the code of the components, i.e., the methods that implement the defined services. Afterwards, a code generation tool, called \textit{Component Generator}, generates a set of Java source files that implement the so-called \textit{container of the component}. These Java files are compiled together with the implementation code to generate a standalone and ready-to-use component. 
Thus, software developers are assisted and have greater ease in developing component and service-based P2P applications. These facilities allow them to focus more on the business logic and to defer to SON, the management of the runtime requirements (e.g., communication mechanisms, instantiation and connection of components, service discovery, etc.).
\begin{figure}[ht!]
	\noindent
	\makebox[\textwidth]{\includegraphics[scale=0.7]{./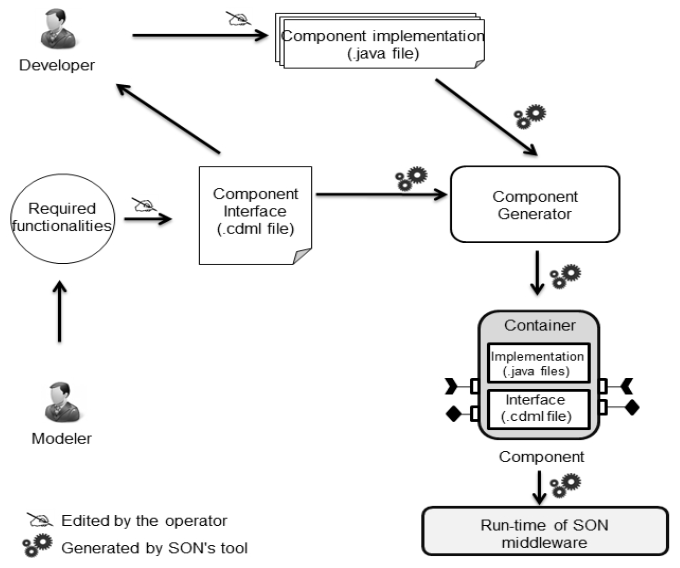}}
	\caption{Overview of a P2P application development process with SON (source \cite{SON}).}
	\label{chap1:fig:son-model}
\end{figure}
We use SON to implement prototypes of some of the models presented in this manuscript.

\mySubSection{Some Existing Distributed WfMS}{}
\label{chap1:sec:some-p2p-wfms}
In this section, we briefly present some existing approaches to distributed workflow management. As mentioned in \cite{theseKanzow}, these approaches have the following characteristics :
\begin{itemize}
	\item They are based on distributed entities that can communicate;
	\item These entities act autonomously, locally and thus, influence the further execution of the process (through their local actions, they choose the next actions to be executed);
	\item Each entity has a confidential local state;
	\item Each entity has only a partial view of the system's overall state at a given time;
	\item Workflow execution results from the automated interaction between the different entities.
\end{itemize}
The approaches to distributed workflow management presented here, can be divided into two categories :
\begin{enumerate}
	\item The first category contains those in which, data and controls are partially distributed and the WfMS is based on a client-server architecture;
	\item The second one is concerned with those in which WfMS, data and controls are fully distributed.
\end{enumerate}

\mySubSubSection{Some Partially Distributed WfMS}{}
\label{chap1:sec:partially-distributed-wfms}
\noindent\textbf{\textit{ADEPT (Advanced Decision Environment for Process Tasks) \cite{adept}}}

The ADEPT project was designed to automate flexible workflows at British Telecom\footnote{Nowadays British Telecom is renamed BT Group and remains the leader in the fixed telephony sector (source Wikipedia - \url{https://fr.wikipedia.org/wiki/BT_Group} - visited the 07/03/2020).}. Its main goal is to allocate resources to business processes using agents. According to its logic, workflow tasks are executed by agents acting as cooperating actors in a system supervised by one or more statically or dynamically assigned servers.
\begin{figure}[ht!]
	\noindent
	\makebox[\textwidth]{\includegraphics[scale=0.7]{./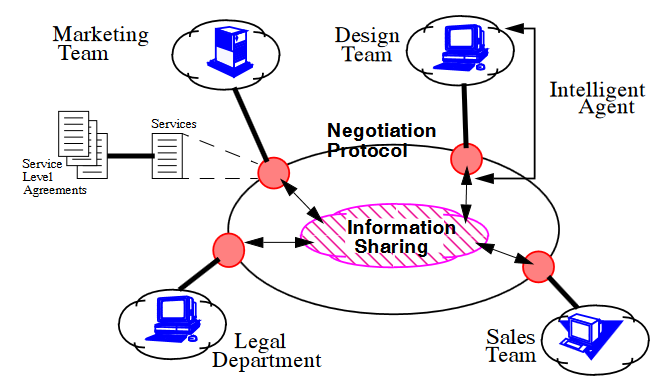}}
	\caption{An ADEPT environment (source \cite{adept}).}
	\label{chap1:fig:adept}
\end{figure}

Each agent is capable of providing one or more services. Services can be atomic (reduced to the execution of a single task) or composite (resulting from the combination of several other services, using operators that define the execution constraints - parallel, sequential, etc. -). If an agent needs the services of another agent, they must enter into an agreement called \textit{Service Level Agreement}\footnote{A Service Level Agreement is a formal contract used to guarantee that consumers' service quality expectation can be achieved \cite{wu2012service}.}. To facilitate the negotiation of agreements between agents, ADEPT provides a negotiation protocol and a service description language. Technically, the service description language allows agents to expose their services so that, they can be discovered by other agents which can then initiate negotiations for the use of those services, through the negotiation protocol.

Figure \ref{chap1:fig:adept} shows the architecture of ADEPT on an example of a workflow in which, four agents (marketing team, design team, sales team and legal department) collaborate to achieve business goals. Other publications on the ADEPT project may be useful for its understanding \cite{reichert1998adept, dadam2000clinical, reichert2003adept}.

~

\noindent\textbf{\textit{EvE (an Event-driven Distributed Workflow Execution Engine) \cite{eve}}}

According to the EvE approach, the distributed execution of workflows is done by event communication between agents (called \textit{brokers}) in charge of executing tasks. These agents perform tasks and create events in response to the occurrence of other events. EvE is based on a multi-server architecture in which, each server manages an entire cluster (a local network). However, this multi-server architecture is made transparent for the different agents thanks to \textit{adapters}; they can communicate independently of their respective domains. EvE provides amongst others the following services: 
\begin{itemize}
	\item Agents managed by servers and distributed across the network, capable of detecting events and executing tasks assigned to them; thereby, generating new events that are notified to other agents thanks to an inter-server communication mechanism that has been set up;
	\item A data warehouse in which information about agents, runtime data and \textit{Event - Condition - Action}\footnote{ECA is a paradigm that specifies the desired behaviour for reactive systems (i.e. systems that maintain ongoing interactions with their environments \cite{manna2012temporal}). In such a system centered around the ECA paradigm, when an event occurs, a condition is evaluated (by a querying mechanism) and the system takes corresponding action \cite{almeida2005modular}.} (ECA) event handling rules are stored. The information stored in the warehouse can be updated dynamically without the need to restart the system;
	\item Logging services for failure analysis and recovery. EvE supports exception notification, alerts and has the ability to resume execution after temporary disconnections;
\end{itemize}
\begin{figure}[ht!]
	\noindent
	\makebox[\textwidth]{\includegraphics[scale=0.6]{./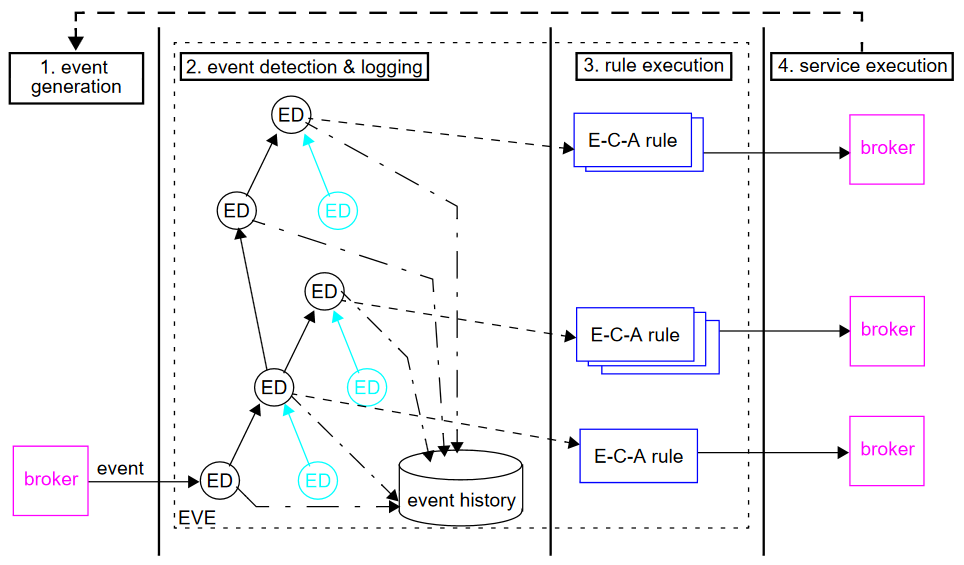}}
	\caption{The workflow execution process in EvE (source \cite{eve}).}
	\label{chap1:fig:eve}
\end{figure}

The execution of a workflow starts as soon as an event is generated by a broker. The local EVE-server (its manager) then performs event detection and ECA-rule execution. Within the execution of each rule, task assignment determines responsible brokers, which are then notified and subsequently react as defined by their ECA-rules. Particularly, brokers can generate new events, which again are handled by EVE-servers, and so on transitively (see fig. \ref{chap1:fig:eve}).

\mySubSubSection{Some Fully Distributed WfMS}{}
\label{chap1:sec:fully-distributed-wfms}
\noindent\textbf{\textit{METEOR$_2$ (Managing End-To-End OpeRations 2) \cite{das1997orbwork, meteor}}}

The METEOR$_2$ project is a continuation of the METEOR \cite{krishnakumar1995managing} effort. It is intended to reliably support coordination of users and automated tasks in real-world multi-enterprise heterogeneous computing environments. Key capabilities of the METEOR$_2$ WfMS include a comprehensive toolkit for building workflows and supporting high-level process modelling, detailed workflow specification and automatic code generation for its workflow enactment systems. 
\begin{figure}[ht!]
	\noindent
	\makebox[\textwidth]{\includegraphics[scale=0.8]{./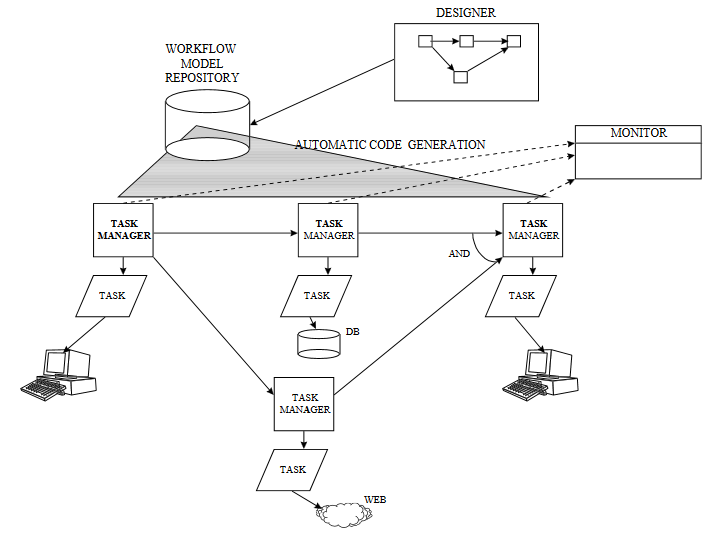}}
	\caption{The METEOR$_2$ architecture (source \cite{das1997orbwork}).}
	\label{chap1:fig:meteor2}
\end{figure}

METEOR$_2$ introduces concepts to represent each workflow as a set of tasks, task managers, processing entities and interfaces, in order to execute them in a completely distributed manner. Figure \ref{chap1:fig:meteor2} shows the various modules in METEOR$_2$ and their interaction. As can be seen in the picture, METEOR$_2$ includes a workflow designer that is used to create workflow models in a dedicated language. Once created, workflow models are stored in a workflow model repository. METEOR$_2$ also includes a workflow code generator that can read a stored workflow model and generate a convenient specific distributed workflow application. The generated application called the \textit{runtime system}, consists of a set of communicating agents called \textit{task managers} and their associated tasks, web-based user interfaces, a distributed recovery mechanism, a distributed scheduler and various monitoring components. All these workflow component are \textit{Common Object Request Broker Architecture}\footnote{CORBA is a standard middleware for distributed object systems. In its paradigm, a client application wishing to perform an operation on a server object, sends a request. The request is received by an Object Request Broker (ORB), responsible for all of the mechanisms required to find the object implementation for the request, to prepare the object implementation to receive the request, and to communicate the data making up the request to the server object. A server object accessible by CORBA is referred to as a CORBA object \cite{houlding2004system}.} (CORBA) objects and thus, they possess communication capabilities.

~

\noindent\textbf{\textit{The "Web Workflow Peer" Approach \cite{fakas04}}}

The approach proposed by Fakas and Karakostas is based on the concepts of \textit{Web Workflow Peer Directory} (WWPD) and \textit{Web Workflow Peer} (WWP). WWPD is an active directory system that maintains a list of all peers (WWP) that are available to participate in web workflow processes. It allows peers to register with the system and offer their services and resources to other peers. During the execution of workflow processes, the WWPD assists WWP to locate other WWP and use their services and resources. In their setting, key functionalities and data are distributed among WWP. The architecture is completely decentralised as no central workflow engine is employed to coordinate the process execution. The WWP encapsulates the necessary knowledge to perform the activities that are assigned to it and also to delegate some of the process execution to other WWP. The only centralised feature is the WWPD.
\begin{figure}[ht!]
	\noindent
	\makebox[\textwidth]{\includegraphics[scale=0.8]{./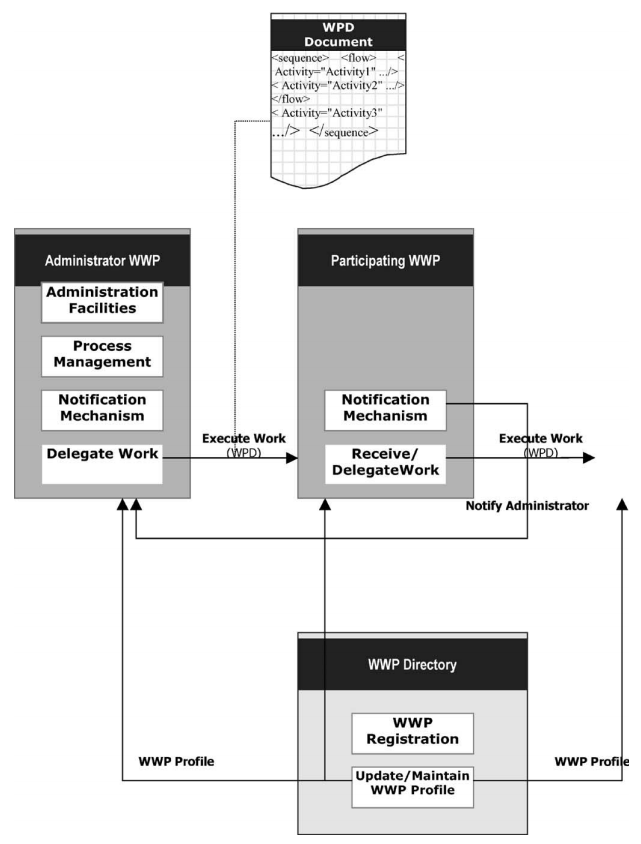}}
	\caption{A P2P workflow architecture (source \cite{fakas04}).}
	\label{chap1:fig:wwp}
\end{figure}

A WWP is a processing unit with an interface that is exposed on the Web and which can be accessed using Internet protocols. Its interface describes different types of processing capabilities, each corresponding to a workflow activity. When combined, such activities form a workflow process. A WWP that initiates and administers the process is called the \textit{Administrator Peer}. Other WWP delegated to carry out workflow activities are called the \textit{Participating Peers} (see fig. \ref{chap1:fig:wwp}). In practice, all peers are capable of becoming administrators in different workflow process instances. WWP use mobile documents called \textit{Workflow Process Description} as communication medium. Segments of those documents move from site to site and conveys structural information about the running workflow instance.

Workflow process administration is achieved by employing a notification mechanism. For instance, at the completion of an activity the WWP notifies the Administrator Peer so that, an updated status of the process instance is maintained. Similarly, upon expiration of an activity deadline, the  Administrator Peer notifies the WWP responsible for the expired activity. As far as we know, there is still no real workflow system based on this promising architecture.

~

\noindent\textbf{\textit{SwinDeW (Swinburne Decentralised Workflow) \cite{junYan06}}}

Combining workflow and P2P concepts, SwinDeW \cite{junYan06} has been designed as a special P2P system, which provides workflow management support in a truly decentralised way. SwinDeW adopts a flat, flexible and loosely coupled structure with an intentional absence of both a centralised device for data storage, and a centralised control engine for coordination. SwinDeW offers several distributed protocols, especially for the definition, instantiation and execution of processes.
\begin{figure}[ht!]
	\noindent
	\makebox[\textwidth]{\includegraphics[scale=0.5]{./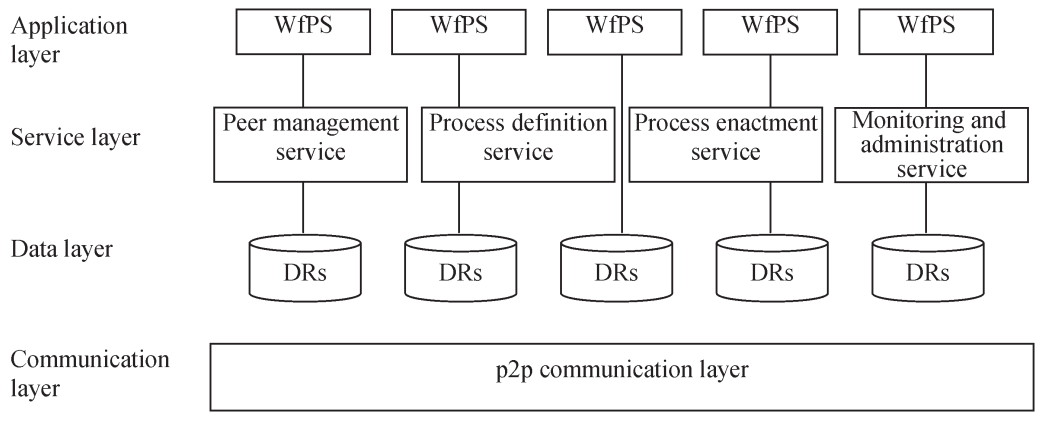}}
	\caption{A high-level view of SwinDeW's architecture (source \cite{junYan06}).}
	\label{chap1:fig:swindew}
\end{figure}

The SwinDeW system is defined as four layers (see fig. \ref{chap1:fig:swindew}). The top layer is the application layer; it defines application-related functions to fulfil workflows. \textit{Workflow Participant Software} (WfPS) is an application that provides interfaces to interact with a workflow participant and other WfPS, requesting services and responding to requests. Core services of the workflow system are provided at the service layer, which include the peer management service, the process definition service, the process enactment service, and the monitoring and administration service. The data layer consists of distributed Data Repositories (DR) that store workflow-related information. Finally, the monitoring and administration service provides supervisory capabilities and status monitoring.

In SwinDeW, a peer is given by a WfPS and a set of DR (see fig. \ref{chap1:fig:swindew-wfps}). Each peer resides on a physical machine, enabling direct communication with other peers in order to carry out the workflow. A peer is a self-managing entity that is associated with and operates on behalf of a workflow participant. From the functional perspective, the WfPS of a peer consists of three software components :
\begin{enumerate}
	\item A user component which serves as  a "bridge" between the associated workflow participant and the workflow environment;
	\item A task component that is in charge of the execution of tasks conducted by the associated participant;
	\item A flow component which helps to fit an individual task into the workflow. It deals with data dependency and control dependency among tasks by handling incoming and outgoing messages.
\end{enumerate}
A peer (agent) consists also in a set of four DR : the peer repository, the resource and tool repository, the task repository, and the process repository.
\begin{enumerate}
	\item A given peer repository stores an organisational model that represents organisational entities and their relationships. This repository represents a user's view of the completely defined organisational model;
	\item A resource and tool repository stores part of the resource model, which represents non human resources such as machines, external hardware, tools, etc.
	\item A task repository stores a set of active task instances, which represent the work allocated to the associated workflow participant in the context of process instances;
	\item A process repository stores a partial process definition distributed to the considered peer.
\end{enumerate}
\begin{figure}[ht!]
	\noindent
	\makebox[\textwidth]{\includegraphics[scale=0.5]{./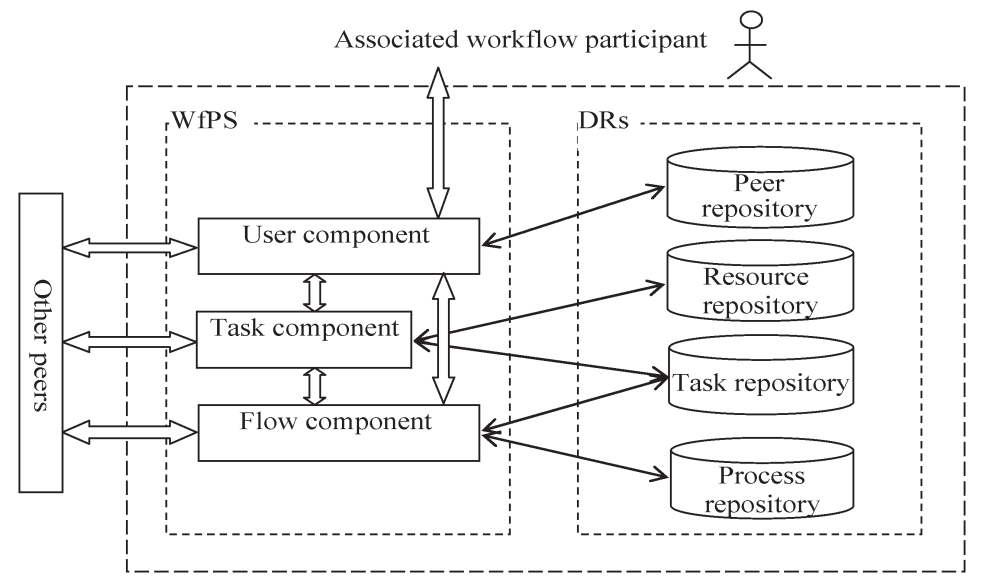}}
	\caption{Structure of a peer in SwinDeW (source \cite{junYan06}).}
	\label{chap1:fig:swindew-wfps}
\end{figure}

Workflow processes in SwinDeW are defined by a definition peer, which is associated with an authorised participant such as a process engineer. The resulting workflow models are stored in a distributed manner, in the process repositories of various peers. To avoid the distribution of too large workflow models, SwinDeW uses a "\textit{know what you should know}" policy to partition these models and thus, to configure each peer only with the partitions of the models that are of interest to it.

In SwinDeW, a workflow instance is executed under the management of the workflow system. Once such an instance is created, a peer network is also constructed for carrying out this process instance. Various task instances are scheduled to enact at different sites, step by step. The execution of a task depends on the satisfaction of two conditions : the \textit{information condition}, which defines the start condition of a task from the data dependency
perspective (a task can be executed only after essential input data are available), and the \textit{control condition}, which indicates the start condition of a task from the control dependency perspective (a task can be executed only after some relevant work has been logically completed). Peers collaborate with one another through direct message exchange, to properly schedule the execution of various task instances. There are two kinds of messages flowing between peers: \textit{information messages} and \textit{control messages}, which are structured in XML format. When a peer receives messages from its predecessor peers directly, it evaluates the information and control conditions of the task instance independently, starts working when both the conditions are satisfied, and notifies its successor peers directly by delivering information messages and control messages after the task instance is completed. The successor peers repeat the same procedure until the completion of the whole process instance. This approach is then fully distributed; moreover, it has an implementation.

\mySection{Artifact-Centric Business Process Management}{}
\label{chap1:sec:data-aware-bpm}
Emerged in the early 2000s, the \textit{artifact-centric} paradigm of BPM is one of those that has been much studied over the last two decades. This paradigm has been pioneered by IBM \cite{nigam2003business} and revisited in several works such as \cite{abi2016towards, deutsch2014automatic, hull2009facilitating, lohmann2010artifact, assaf2017continuous, assaf2018generating, boaz2013bizartifact, lohmann2011artifact, estanol2012artifact}; it proposes a new approach to BPM by focusing on both automated processes (tasks and their sequencing) and data manipulated through the concept of "\textit{business artifact}" (\textit{artifact-centric modelling}). In this section we present the key concepts of the artifact-centric paradigm as well as some artifact-centric frameworks from the literature.

\mySubSection{Artifact-Centric BPM Basic Concepts}{}
\label{chap1:sec:artifact-centric-bpm-key-concept}

\mySubSubSection{The Aim of Artifact-Centric BPM}{}
\label{chap1:sec:aim-artifact-centric-bpm}
In order to be able to better model workflows, process modelling should include a specification of the order in which tasks are executed (control flow), the way data are processed (data flow), and how different branches in distributed and inter-organisational business processes and services are invoked and coordinated (message flow) \cite{lohmann2011artifact}. These three conceptual models of workflows are also known as the \textit{process}, the \textit{informational} and the \textit{organisational} models \cite{divitini2001inter}. Traditional approaches (BPMN, YAWL, BPEL, etc.) to BPM are process-centric (they are also said to be \textit{imperative}): they generally offer two different views on business processes: 
\begin{enumerate}
	\item Collaboration diagrams (sometimes called interconnected models) that emphasise the local control flow of each participant of the process;
	\item Choreography diagrams (interaction models) that describe the process from the point of view of the messages that are exchanged among the participants.
\end{enumerate}
Traditional approaches thus express workflow models by means of diagrams which define how a workflow is supposed to operate, but give little importance (or none at all) to the information produced as a consequence of the process execution: data are treated as second-class citizens.

To precisely remedy this, researchers have developed the artifact-centric \cite{nigam2003business} approach to the design and execution of business processes. Artifact-centric models do not specify processes as a sequence of tasks to be executed or messages to be exchanged (i.e. imperatively), but from the point of view of the data objects (called \textit{business artifacts} or simply \textit{artifacts}) that are manipulated throughout the course of the process (i.e. declaratively) \cite{lohmann2011artifact}. They rely on the assumption that any business needs to record details of what it produces in terms of concrete information. Artifacts are proposed as a means to record this information. They model key business-relevant entities which are updated by a set of services (specified by pre and postconditions) that implement business process tasks. This approach has been successfully applied in practice and it provides a simple and robust structure for workflow modelling \cite{estanol2012artifact}.

\mySubSubSection{How the Artifact-Centric Approach to BPM Works}{}
\label{chap1:sec:artifact-centric-bpm-approach}
According to the artifact-centric paradigm, BPM takes place in two main phases guided by the concept of artifact. In order to automate a given process, the designer must first of all focus on artifact modelling; i.e. he must provide data structures capable of storing and logically conveying the information produced during the execution of workflows. Then, these artifacts models will be introduced into a WfMS supporting artifact-centric execution of workflows for the enactment.

~

\noindent\textbf{\textit{What is an artifact ?}}

\cite{nigam2003business} define an artifact as \textit{"a concrete, identifiable, self-describing chunk of information that can be used by a business person to actually run a business"}. Artifacts are business-relevant objects that are created, evolve, and (typically) archived as they pass through the workflow \cite{hull2009facilitating}; they represent key conceptual objects of workflow that evolve as they move through an enterprise. Artifacts are modelled through \textit{artifacts types} (or \textit{models}). An expected characteristic of artifacts is that they should be self-describing: this requirement allows a business person to be able to look at an artifact and determine if he or she can work on it. Toward this end, an artifact type should include both:
\begin{enumerate}
	\item an \textit{information model} (or "data schema"), for holding information about the artifact as it moves through the process, from creation to archival storage; and
	\item a \textit{lifecycle model} (or "lifecycle schema"), which describes how and when tasks (activities or services) might be invoked on the artifacts as they move through the process.
\end{enumerate}
A prototypical example of an artifact type is the "air courier package delivery", whose information model can hold data about a package including sender, receiver, steps occurring in transport and billing activity; and whose lifecycle model would specify the possible ways that the delivery service and billing might be carried out \cite{hull2013data, hull2009facilitating}.

Several approaches of modelling the lifecycle of artifacts have been studied in the literature. The most commonly used approach is that in which, some form of \textit{finite state machines} (automata) \cite{hull2009facilitating} are used to specify lifecycles. Other variants presenting the lifecycle of an artifact by a Petri net \cite{lohmann2010artifact}, logical formulae depicting legal successors of a state \cite{damaggio2012artifact} have also been proposed.

~

\noindent\textbf{\textit{The artifact-centric execution}}

Artifact-centric models can be executed by artifact-centric WfMS. This new type of WfMS put stress on how artifacts are created, updated and exchanged between various actors. In these, artifacts are considered as \textit{adaptive documents} that conveys all the information concerning a particular execution case of a given process, from its inception in the system to its termination. In particular, this information provides details on the case's execution status as well as on its lifecycle (a representation of the possible evolutions of this status). To do this, during the execution of a given process, the actions carried out by each of the actors (agents) have the effect of updating (\textit{editing}) the artifacts involved in that execution. If the process is cooperative, the artifact representing it will be updated by several agents: it is said to be cooperatively edited and thus, the execution of a given business process according to the artifact-centric approach, can be assimilated to the \textit{cooperative editing} of documents.

Two major trends in the artifact-centric modelling approach have been developed: \textit{orchestration} and \textit{choreography} \cite{hull2009facilitating}. 
\begin{enumerate}
	\item Orchestration suggests the creation of centralised systems (usually called \textit{artifact hubs}), coordinated by an \textit{orchestrator} whose role is to facilitate interaction between actors while ensuring that business goals are met.
	\item Choreography-oriented approaches get rid of the orchestrator, and model actors as autonomous agents coordinating with artifacts and communicating in a P2P manner, to accomplish business goals. In these, each agent focuses on achieving a local business goal and the achievement of the global business goal is the result of aggregating results from different local business goals.
\end{enumerate}
Compared to choreography, orchestration reduces the agents' autonomy by making the orchestrator the main controller of interactions. Also, the orchestrated approach does not scale well. A limitation of the choreography-oriented approach is the lack of a single synchronisation point from which, it is possible to know the process's (actual) global execution state. Despite this, we agree with \cite{lohmann2010artifact} that, this completely decentralised approach is the one that best fits the modelling of the intrinsically distributed nature of business processes.

\mySubSection{Some Existing Artifact-Centric BPM Frameworks}{}
\label{chap1:sec:existing-artifact-centric-bpm}
There are several frameworks in the literature, that implement artifact-centric concepts. We briefly present some of them in this section. We begin by presenting purely artifact-centric approaches; then we look at an even more flexible model, recently developed as a \textit{data-centric} solution for case management.

\mySubSubSection{Some Purely Artifact-Centric BPM Frameworks}{}
\label{chap1:sec:purely-artifact-centric-bpm}
\noindent\textbf{\textit{Proclets \cite{van2001proclets}}}

The concept of \textit{proclets} was introduced to specify business processes in which, objects' lifecycles can be modelled at different levels of granularity and cardinality. A proclet can be seen as a lightweight workflow process equipped with a knowledge base that contains information on previous interactions; it is thus equipped with an explicit lifecycle or active documents (i.e., documents aware of tasks and processes): in this setting then, a proclet is both an agent and an artifact. Proclets can find each other using a \textit{naming service}, and communicate with each other to exchange messages through \textit{channels} (see fig. \ref{chap1:fig:proclet}). 
\begin{figure}[ht!]
	\noindent
	\makebox[\textwidth]{\includegraphics[scale=0.7]{./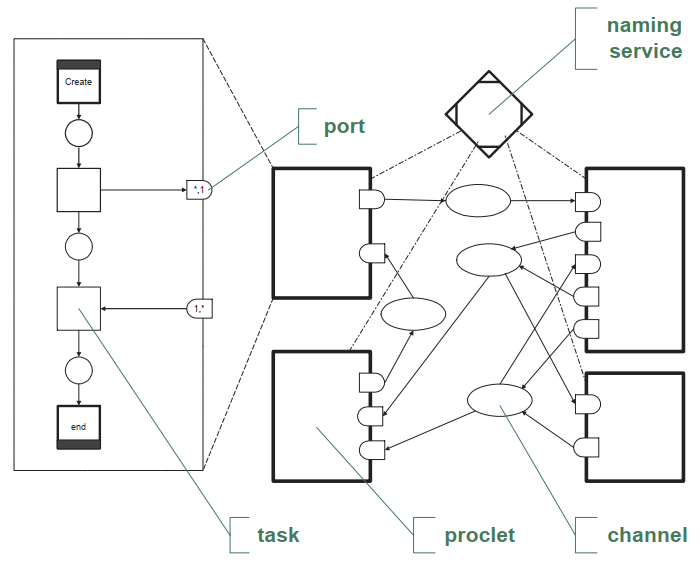}}
	\caption{Graphical representation of the proclet-based framework (source \cite{van2001proclets}).}
	\label{chap1:fig:proclet}
\end{figure}

In the proclet-based framework, the lifecycle of proclet instance is described by a \textit{proclet class} used as artifact type. Like an ordinary workflow model, a proclet class describes the order in which tasks can/need to be executed for individual instances of the class. Proclet classes are specified using a graphical language based on a sub-class of Petri nets so-called \textit{class of sound WF-nets}.

Proclets are well-suited to deal with settings in which several instances of data objects are involved. Proclets are considered to be distributed and autonomous enough to decide how to interact with the other proclets: thus, the proclet-based framework does not model proclets' locations. Moreover, the execution model of proclets is similar to choreography; the interoperation of proclets is not managed or facilitated by a centralised hub.

~

\noindent\textbf{\textit{Artifact hosting \cite{hull2009facilitating}}}

Hull et al. extend the artifact-centric model proposed by Nigam and Caswell \citeyearpar{nigam2003business}, to provide an interoperation framework in which data (artifacts) are hosted on central infrastructures named \textit{artifact-centric hubs}. Data hosted in artifact-centric hubs can be read and written by agents. This model is between choreography and orchestration because, agents are all connected to the hub but are not coordinated by a particular orchestrator. Unlike traditional orchestration schemes, the hub enables the participating agent to be pro-active, and serves primarily as a shared resource for coordinating activities. Participating agents can access information about the running artifact instances, can progress those instances along their lifecycles, and can subscribe to events in order to be alerted about significant steps in the progress of artifacts through their lifecycles. Security mechanisms are proposed for controlling access to data hosted in the hub.
\begin{figure}[ht!]
	\noindent
	\makebox[\textwidth]{\includegraphics[scale=0.5]{./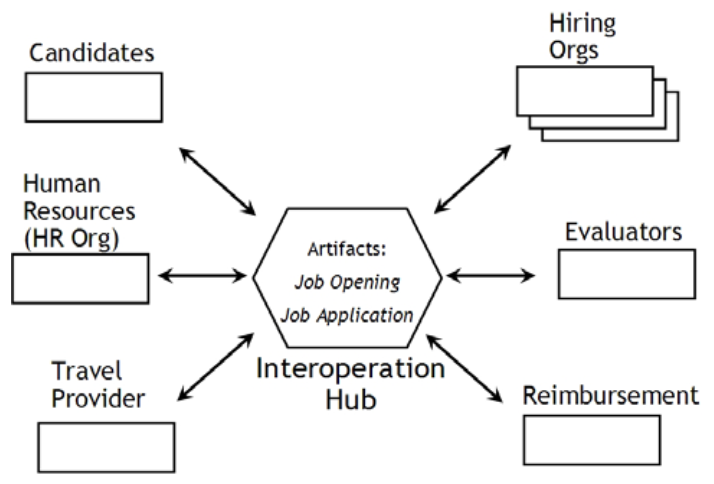}}
	\caption{An example of interoperation using an artifact-centric hub (source \cite{hull2009facilitating}).}
	\label{chap1:fig:artifact-hub}
\end{figure}

Figure \ref{chap1:fig:artifact-hub} illustrates an example of six groups of agents (potentially organisations) coordinating using artifacts that are managed in a centralised hub. These are the agents related to the resources (\textit{Candidates}, the \textit{Human Resources Organisation}, the \textit{Hiring Organisations}, the \textit{Evaluators}, the \textit{Travel Provider} and the \textit{Reimbursement}) that carry out the employee hiring process in a given enterprise.

~

\noindent\textbf{\textit{Artifact-centric choreographies \cite{lohmann2010artifact}}}

Lohmann and Wolf \citeyearpar{lohmann2010artifact} provide a more choreography-like framework for artifact-centric interoperation. They abandon the fact of having a single artifact hub \cite{hull2009facilitating} and they introduce the idea of having several agents which operate on artifacts. Some of those artifacts are \textit{mobile} (their location may change over time); thus, the authors provide a systematic approach for modelling artifact location and its impact on the accessibility of actions using a Petri net. Their model was designed with the conviction that by making explicit who is accessing an artifact and where the artifact is located, one will be able to automatically generate an interaction model that can serve as a contract between agents, and which make sure that global goal states specified on artifacts are reached. They thus propose an approach to automatically derive such an interaction model.

~

\noindent\textbf{\textit{Declarative choreographies \cite{sun2012declarative}}}

In \cite{sun2012declarative}, the authors are also interested in choreographies. More precisely, they develop a language allowing to model (in a declarative manner) the collaboration between several actors (the choreography) and a distributed algorithm allowing the execution of the choreographies specified in their language. Their choreography language has four distinct features :
\begin{enumerate}
	\item Each type of actor is an artifact schema with a selected sub-part of its information model visible to choreography specification.
	\item Correlations between actor types and instances are explicitly specified, along with cardinality constraints on correlated instances (e.g. each Order instance may correlate with exactly one Payment instance and multiple Vendor instances).
	\item Messages can include data; data in both messages and artifacts can be used in choreography constraints.
	\item The language is declarative and uses logic rules based on a mix of first-order logic and a set of binary temporal operators from DecSerFlow\footnote{DecSerFlow: \textit{a Declarative Service Flow Language} is a graphical, extendible language for expressing process models in a declarative way; it captures what is the high-level process behaviour without expressing how it is procedurally executed, hence giving a concise and easily interpretable feedback to the business manager \cite{lamma2007learning}.} \cite{van2006decserflow}.
\end{enumerate}
In particular, Skolem\footnote{Thoralf Albert Skolem (1887-1963) is a Norwegian mathematician and logician. He is particularly known for his work in mathematical logic and set theory which now bears his name, such as the L\"{o}wenheim-Skolem theorem or the notion of skolemisation (source, Wikipedia: \url{https://en.wikipedia.org/wiki/Thoralf_Skolem}, visited the 02/04/2020).} notations are used to both reference correlated actor instances and to manipulate dependencies among messages.

\mySubSubSection{A Guarded Attribute Grammars Based Framework to Data-Centric Case Management}{}
\label{chap1:sec:gag}
\noindent\textbf{\textit{What is case management ?}}

Highly important processes in organisations that have a tremendous impact on the success and add the most value, involve a high degree of knowledge work: they are driven by users' decisions (\textit{user-centric}) making it difficult to specify them into a set of activities with precedence relations at design-time (they are said to be \textit{knowledge-intensive}). Because knowledge-intensive processes are subject of frequent exceptions, traditional BPM solutions are not able to support them sufficiently \cite{hauder2014research}. \textit{Adaptive Case Management} (ACM) is gaining interest among researchers and practitioners as an  emerging paradigm to master situations in which adaptions have to be made at run-time (unpredictable situations) by so called knowledge workers. In contrast to traditional BPM, the ACM paradigm is not dictating knowledge workers a predefined course of action, but provides them with the required information at the right time (they are \textit{data-centric}) and authorises them to make decisions on their own \cite{hauder2014research}. 

The notion of \textit{case} in the ACM context, is closely related to the concept of artifact. Both involve the notion of a conceptual entity that progresses through time, according to some set of guidelines or lifecycle schema, and both taking advantage of a growing set of data accumulated over the case instance lifecycle \cite{hull2013data}. ACM can be seen as an extension of the artifact-centric paradigm in which, the flexibility of workflow models (types of artifacts) is highly valued; therefore, both users and data are treated as first-class citizens.

There is a growing research interest in the ACM paradigm and several models have already emerged. Guard-Stage-Milestone (GSM) \cite{hull2011business, damaggio2013equivalence}, a declarative model of the lifecycle of artifacts, was recently introduced and has been adopted as a basis of \textit{Case Management Model and Notation} (CMMN), the OMG standard for ACM. The GSM model defines Guards, Stages and Milestones to control the enabling, enactment and completion of (possibly hierarchical) activities; it then allows for dynamic creation of subtasks (the stages), and handles data attributes. However, interaction with users are modelled as incoming messages from the environment, or as events from low-level (atomic) stages. In this way, users do not explicitly contribute to the choice of a flow for a process.

Recently, Badouel et al. \citeyearpar{badouel14, badouel2015active} have proposed a more user-centric and data-driven ACM model (AWGAG) based on the concepts of Active Workspaces (AW) and Guarded Attribute Grammars (GAG). We are particularly interested in this model because, it incorporates concepts that we will manipulate throughout this manuscript. These include the concepts of \textit{grammars} as artifact types, \textit{structured documents} (trees) as artifacts, \textit{artifact editing}, etc.

~

\Needspace{5\baselineskip}
\noindent\textbf{\textit{AWGAG \cite{badouel2015active}}}

The AWGAG model of collaborative systems is centered on the notion of user's workspace. It assume that the workspace of a user is given by a map. It is a tree used to visualise and organise tasks in which, the user is involved together with information used for their resolution. The workspace of a given user may, in fact, consist of several maps where each map is associated with a particular service offered by the user. In short, one can assume that a user offers a unique service so that any workspace can be identified with its graphical representation as a map.
\begin{figure}[ht!]
	\noindent
	\makebox[\textwidth]{\includegraphics[scale=0.7]{./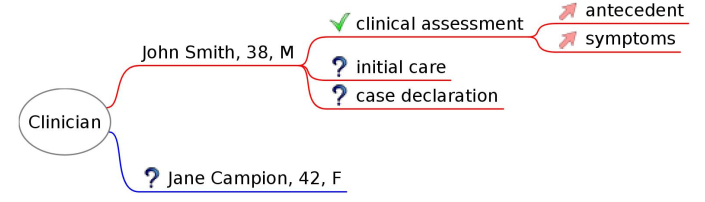}}
	\caption{Active workspace of a clinician (source \cite{badouel2015active}).}
	\label{chap1:fig:aw-clinician}
\end{figure}

As an example, figure \ref{chap1:fig:aw-clinician} shows a map that represents the workspace of a clinician acting in the context of a disease surveillance system. The service provided by the clinician is identifying the symptoms of influenza in a patient, clinically examining the patient, eventually placing him under therapeutic care, declaring the suspect cases to the disease surveillance center, and monitoring the patient based on subsequent requests from the epidemiologist or the biologist.

Each call to this service, namely when a new patient comes to the clinician, creates a new tree rooted at the central node of the map. This tree is an artifact that represents a structured document for recording information about the patient all along being taken over in the system. Initially, the artifact is reduced to a single (open) node that bears information about the name, age and sex of the patient. An open node, graphically identified by a question mark, represents a pending task that requires the clinician's attention. In this example the initial task of a given artifact is to clinically examine the patient. This task is refined into three subtasks: clinical assessment, initial care, and case declaration.

In the AWGAG model, a task is interpreted as a problem to be solved, that can be completed by refining it into sub-tasks using business rules. A business rule is modelled by a production $P: s_0 \rightarrow s_1 \ldots s_n$ expressing that task $s_0$ can be reduced to subtasks $s_1$ to $s_n$. For instance, the production 
\begin{itemize}
	\item[] $patient \rightarrow clinicalAssessment \ initialCare \ caseDeclaration$
\end{itemize}
states that, a task of sort $patient$, the axiom of the grammar associated with the service provided by the clinician, can be refined by three subtasks whose sorts are respectively $clinicalAssessment$, $initialCare$, and $caseDeclaration$. If several productions with the same left-hand side $s_0$ exist, then the choice of a particular production corresponds to a decision made by the user. In the example, the clinician has to decide whether the case under investigation has to be declared to the disease surveillance center or not. This decision can be reflected by the following two productions:
\begin{itemize}
	\item[] $suspectCase: caseDeclaration \rightarrow followUp$
	\item[] $benignCase: caseDeclaration \rightarrow$
\end{itemize}
If the case is reported as suspect, then the clinician will have to follow up the case according to further requests of the biologist or the epidemiologist. On the contrary (i.e. the clinician has described the case as benign), the case is closed with no follow up actions.

AWGAG  model considers artifacts as trees whose nodes are sorted and whose productions are taken into a grammar (GAG). The lifecycle of an artifact is implicitly given by the set of productions of the underlying GAG:
\begin{enumerate}
	\item The artifact initially associated with a case, is reduced to a single open node.
	\item An open node $X$ of sort $s$ can be refined by choosing a production $P: s \rightarrow s_1 \ldots s_n$ that fits its sort. The open node $X$ becomes a closed node under the decision of applying production $P$ to it. In doing so, task $s$ associated with $X$ is replaced by $n$ subtasks $s_1$ to $s_n$, and new open nodes $X_1$ of sort $s_1$ to $X_n$ of sort $s_n$, are created accordingly: the artifact is said to be edited (see fig. \ref{chap1:fig:aw-artifact-edition}).
	\item The case reaches completion when its associated artifact is closed, i.e. it no longer contains open nodes.
\end{enumerate}
\begin{figure}[ht!]
	\noindent
	\makebox[\textwidth]{\includegraphics[scale=0.5]{./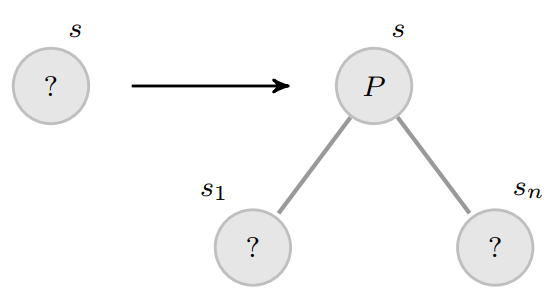}}
	\caption{Artifact edition in AWGAG (source \cite{badouel2015active}).}
	\label{chap1:fig:aw-artifact-edition}
\end{figure}

Additional information are attached to open nodes using \textit{attributes}, to model the interactions and data exchanged between the various tasks associated with them. For that, each sort comes equipped with a set of \textit{inherited} attributes and a set of \textit{synthesised} attributes where: inherited attributes represents input data, i.e. necessary data for the associated task to be executed, while synthesised attributes represents output data, i.e. data that are produced after the task being executed. This formalism puts emphasis on a declarative (logical) decomposition of tasks to avoid over-constrained schedules. Indeed, business rules do not prescribe any ordering on task executions. Ordering of tasks depend on the exchanged data and are therefore determined at runtime. In this way, the AWGAG model allows as much concurrency as possible in the execution of the current pending tasks.

Furthermore, a given AWGAG model is flexible and can incrementally be designed: one can initially let the designer manually develop large parts of the map, and progressively improve the automation of the process by refining the classification of the nodes, and introducing new business rules when recurrent patterns of activities are detected. The AWGAG model presents great properties such as \textit{distribution} and \textit{soundness}; these properties and more details are discussed in \cite{badouel2015active}. Several implementations and extensions of the AWGAG model are currently being carried out.

\mySection{Summary}{}
\label{chap1:sec:summary}
In this chapter, we have provided an overview of the basic concepts related to BPM. To this end, we have provided clear and concise definitions of numerous notions. We presented the concept of WfMS, how it works and the current challenges in the production of such systems. Knowing that the main current challenges in the production of WfMS are to provide them with the flexibility to better cope with the distributed nature of the workflows they manage, and extend their expressiveness so that they can address as first-class citizens, other perspectives of workflows such as data and users, we presented new paradigms to BPM and a few approaches to P2P BPM.

In addition, we focused on the artifact-centric paradigm for BPM and established that, according to it, the execution of a given workflow can be seen as the cooperative editing of one or more documents called artifacts. In these cases, it is preferable that the manipulated documents are structured; they can be exchanged between the different actors in the workflows' execution (they are said to be mobile) to serve as a support to help their coordination, but also, indirectly, to be the fruit of their cooperation: this is the research axis that we followed in this thesis.

To make our task easier, it would be a good idea to look at structured documents' cooperative editing workflows: it is the subject of the next chapter. Since Badouel and Tchoup\'e have theorised an asynchronous cooperative editing model for structured documents of which, some of the concepts were brilliantly taken up in the development of the AWGAG model, it is their editing model that will be the main subject of our study.

	\mathversion{normal2}
	\myChapter{A Workflow for Structured Documents' Cooperative Editing : Key Principles and Algorithms}{}
\label{chap2:structured-editing-artifact-type}
\myMiniToc{section}{Contents}
\mySection{Introduction}{}
\label{chap2:sec:introduction}
The purpose of this chapter is to introduce the main concepts related to asynchronous cooperative editing of structured documents. In an asynchronous cooperative editing workflow, several authors located on geographically distant sites coordinate to edit asynchronously the same structured document. In such editing workflows (see fig. \ref{chap2:fig:badouel-tchoupe-workflow}), the desynchronised editing phases in which each co-author edits on his site, his copy of the document, alternate with the synchronisation-redistribution phases in which, the different contributions (local replicas) are merged (on a dedicated site) into a single document, which is then redistributed to the various co-authors for the continuation of the edition. This pattern is repeated until the document is completely edited.

Badouel and Tchoup\'e \citeyearpar{badouelTchoupeCmcs} have theorised an asynchronous cooperative editing workflow in which, stakeholders (several subsystems - sites - distributed across a network) work by editing and exchanging (partial) replicas of documents representing their perceptions (views) at any given time. Therefore, each subsystem (actor) has a partial view of the edited document at any given time, and the current (global) document is given by the merging of different (partial) documents from the various subsystems. In their model, collaborations between actors can be divided into three sequential phases (see fig. \ref{chap2:fig:badouel-tchoupe-workflow}): 
\begin{itemize}
	\item The \textit{distribution phase} where global structured document (an artifact) is replicated to each subsystem;
	\item The \textit{editing phase} in which local processes of subsystems are executed, inducing an update of the local replica of the global document; 
	\item The \textit{synchronisation phase} in which the various local documents updated are merged into a global document.
\end{itemize}

In this chapter, we are mainly interested in the model proposed by Badouel and Tchoup\'e. However, we are not just doing a systematic review of literature of their model. We subtly introduce three contributions which further validate their model, and polish the path towards our goal of producing a completely decentralised model for the automation of administrative workflows :
\begin{enumerate}
	\item First of all, we extend the merge algorithm proposed by them so that, it can be applied in the more general case where conflicts might appear. To this end, we propose a consensus reconciliation algorithm that generates conflict-free maximum prefixes of the documents resulting from the merging of several conflicting replicates.
	\item Second, we propose a generic system architecture that can be used to produce workflow systems for the cooperative editing of structured documents based on their model.
	\item Finally, we propose a cooperative editing system prototype based on the proposed architecture and coded by cross-fertilisation of Java and Haskell.
\end{enumerate}

In the rest of this chapter, in section \ref{chap2:sec:cooperative-editing-concepts}, we will present some basic concepts related to cooperative editing of documents, as well as some existing cooperative editing systems. In section \ref{chap2:sec:grammatical-cooperative-editing}, we will present the main concepts of Badouel and Tchoup\'e's model as well as the reconciliation algorithm that we propose. In section \ref{chap2:sec:architecture-cooperative-editing}, will be presented, the generic architecture of workflow systems that we propose, as well as a prototype system built according to it. We will conclude this chapter in section \ref{chap2:sec:conclusion}.

\mySection{Basic Concepts on Cooperative Editing Workflows}{}
\label{chap2:sec:cooperative-editing-concepts}
Cooperative editing is a work of (hierarchically) organised groups, that operate according to a schedule involving delays and a division of labor (coordination). 
Like any CSCW, cooperative editing is subject to spatial and temporal constraints. Thus, one distinguishes distributed or not, and synchronous or asynchronous cooperative editing. When distributed, the various editing sites are geographically dispersed and each of them has a local copy of the document to be edited; systems that support such an edition should offer algorithms for data replication \cite{Yasushi2005} and for the fusion of updates. When asynchronous, various co-authors get involved at different times to bring their different contributions.

A cooperative editing workflow goes generally, from the creation of the document to edit, to the production of the final document through the alternation and repetition of distribution, editing and synchronisation phases. The literature is full of several cooperative editing workflows and of their management systems. We present a few in this section.

\mySubSection{Real-Time Cooperative Editing Workflows}{}
\label{chap2:sec:real-time-cooperative-editing}
In these generally centralised systems (Etherpad\footnote{Official website of Etherpad: \url{http://www.etherpad.org/}, visited the 04/04/2020.} \cite{epad}, Google Docs\footnote{Google Docs is accessible online at \url{https://www.docs.google.com/}, visited the 04/04/2020.}, Framapad\footnote{Get more information on Framapad at \url{http://www.framasoft.org/}, visited the 04/04/2020.}, Fidus Writer\footnote{Official website of Fidus Writer: \url{https://www.fiduswriter.org/}, visited the 04/04/2020.} \cite{fiduswriter}, etc.), the original document is created by a co-author on the central server. The latter then invites his colleagues to join him for the editing; they therefore connect to the editing session usually identified by a URL (distribution phase, although the document is generally not really duplicated). During an editing session (synchronous editing phase), all connected co-authors work on a single copy of the document but in different contexts. When the integration is automatic, changes performed by one of them are immediately (automatically) propagated to be incorporated into the basic document (synchronisation phase), and the latter is then redistributed to others. The changes are saved progressively and the server usually keeps multiple versions of the document.

The majority of real-time editors use the model of operational transformations \cite{theseOster, theseMounir}. Their architectures are therefore based on the one defined by this model. Meaning that, they distinguish two main components: an \textit{integration algorithm}, responsible for the receipt, dissemination and execution of operations and a \textit{set of processing functions} that are responsible for "merging" updates by serialising two concurrent operations.

\mySubSection{Asynchronous Cooperative Editing Workflows}{}
\label{chap2:sec:async-cooperative-editing}
This edit mode is distinguished by real distribution phases in which, the document to be edited is replicated on different sites, using appropriate algorithms \cite{Yasushi2005}. A co-author may then contribute at any time, by editing his local copy of the document. Here, we focus on a few asynchronous cooperative editors operating in client-server mode.

~

\noindent\textbf{\textit{Wikiwikiweb (Wikis)}}

Wikis \cite{wikiwikiweb} are a family of collaborative editors for editing web pages from a browser. To edit a page on a Wiki, one must duplicate it and contribute. After editing, he just have to save it and to publish a new version of that page. In a competing editing case, it is the last published version which will be visible. Even though it is still possible to access the previously published versions, there is no guarantee that a new version of the page preserves intentions (incorporates aspects) of previous versions. For this aspect, a Wiki can be seen much more as a web page version manager.

~

\noindent\textbf{\textit{CVS (Concurrent Versions System)}}

Under CVS \cite{cvs}, versions of a document are managed in a space called repository, and each user has a personal workspace.
To edit a document, the user must create a replica in his workspace. He will amend this replica, then will release a new version of the document in the repository. In case the document is concurrently edited by several authors and at least one update has already been published, the author wishing to publish a new update, will be forced to consult and integrate all previous updates through  dedicated tools, integrated in CVS.

~

\Needspace{5\baselineskip}
\noindent\textbf{\textit{SVN (Subversion)}}

SVN\footnote{Check more about SVN at \url{http://www.subversion.apache.org/}, visited the 04/04/2020.} \cite{svn} was created to replace CVS. Its main goal was to propose a better implementation of CVS. So as CVS, SVN relies on an optimistic protocol of concurrent access management: the \textit{copy-edit-merge} paradigm. SVN provides many technical changes like a new commit algorithm, the management of metadata versions, new user commands and many others features.

~

\noindent\textbf{\textit{Git}}

The main purpose of Git\footnote{Official website of Git: \url{https://www.git-scm.com/}, visited the 04/04/2020.} is the management of various files in a content tree considered as a deposit (all files of a source code for example). To edit a deposit, a given user connects to it and clones (forks). He obtains a copy of that deposit, modifies it locally through a set of commands provided by Git. Then he offers his contribution to primary maintainer which can validate it and thus, merges it with the original deposit. During this operation, new versions of modified files are created in the main repository. It is therefore possible under Git, to access any revision of a given file.

\mySubSection{Badouel and Tchoup\'e's Cooperative Editing Workflow}{}
\label{chap2:sec:badouel-tchoupe-cooperative-editing}
Badouel and Tchoup\'e \citeyearpar{badouelTchoupeCmcs} proposed a workflow for cooperative editing of structured documents (those with regular structures defined by grammatical models such as DTD, XML schema \cite{xml2000}, etc.), based on the concept of "view". The authors use context free grammars as documents models. A document is thus, a derivation tree for a given grammar.

The lifecycle of a document in their workflow can be sketched as follows: initially, the document to edit ($t$) is in a specific state (initial state); various co-authors who are potentially located in distant geographical sites, get a copy of $t$ which they locally edit. For several reasons (confidentiality, security, efficiency, etc. \cite{tchoupeAtemkeng2}), a given co-author "$i$" does not necessarily have access to all the grammatical symbols that appear in the tree (document);  only a subset of them can be considered relevant for him: that is his \textit{view} ($\mathcal{V}_i$). The locally edited document, is therefore a \textit{partial replica} (denoted $t_{\mathcal{V}_i}$) of the original document. This one is obtained by \textit{projection} ($\pi$) of the original document with regard to the view of the considered co-author ($t_{\mathcal{V}_i}=\pi_{\mathcal{V}_i}(t)$). The edition is asynchronous and local documents obtained are called updated partial replicas denoted by $t_{\mathcal{V}_i}^{maj}$.

Badouel and Tchoup\'e focus only on the positive edition: edited documents are only increasing; thus, the co-authors cannot remove portions of the document when a synchronisation has already been performed. To both ensure that property, and be able to tell a co-author where he shall contribute, the documents being edited are represented by trees with \textit{buds} that indicate the only places where editions are possible.
Buds are typed; a \textit{bud of sort $X$} is a leaf node labelled $X_\omega$: it can only be edited (extended in a subtree) by using a \textit{$X$-production} (production with $X$ as left hand side).

When a synchronisation point is reached, all contributions $t_{\mathcal{V}_i}^{maj}$ of different co-authors are merged in a single global document $t_f$. To ensure that the merging is always possible (convergence), Badouel and Tchoup\'e assume that on each site, the editions are controlled by a local grammar. These local grammars are obtained from the global one, by projection along the corresponding views \cite{badouelTchoupeCmcs, tchoupeAtemkeng2}.
\begin{figure}[ht!]
	\noindent
	\makebox[\textwidth]{\includegraphics[scale=0.65]{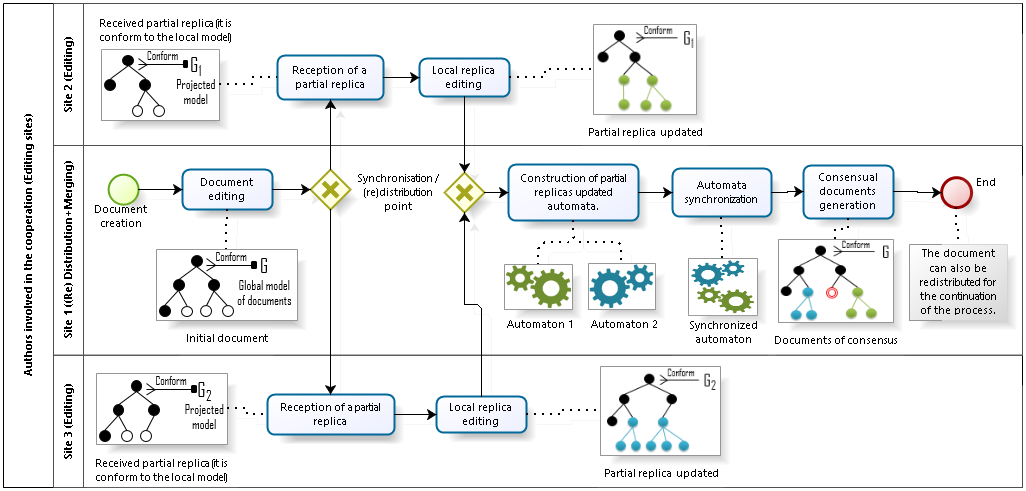}}
	\caption{A BPMN orchestration diagram sketching a cooperative editing workflow of a structured document according to Badouel and Tchoup\'e.}
	\label{chap2:fig:badouel-tchoupe-workflow}
\end{figure}

Figure \ref{chap2:fig:badouel-tchoupe-workflow} gives an overview, with a BPMN orchestration diagram, of the structured documents' cooperative editing workflow according to Badouel and Tchoup\'e's proposal; at site 1, operations of (re)distribution and merging of the document in accordance with a (global) model $G$, are realised; at sites 2 and 3, edition of partial replicas in accordance with (local) models $G_1$ and $G_2$, derived by projecting the global documents model $G$, are done.

In summary, the workflow of Badouel and Tchoup\'e is different from the others with its concept of "view" and by the fact that, it exclusively manipulates (partial) structured documents.

\mySection{Tree Automata for Extracting Consensus from Partial Replicas of a Structured Document}{}
\label{chap2:sec:grammatical-cooperative-editing}

In this section, we will better study Badouel and Tchoup\'e work on structured editing. We will adopt and adapt the different mathematical tools they proposed, to produce a more general algorithm for merging partial replicates, by taking into account the cases where these would be in conflict.

\mySubSection{Structured Cooperative Editing and Notion of Partial Replication}{} 
\label{chap2:sec:structured-cooperative-editing-partial-rep}

\mySubSubSection{Structured Document, Editing and Conformity}{}
\label{chap2:sec:structured-document-edition-conformity}

In the XML community, the document model is typically specified using a DTD or a XML Schema \cite{xml2000}. It is shown that these DTD are equivalent to (regular) grammars with special characteristics called   \textit{XML grammars} \cite{XMLG}. The (context free) grammars are therefore a generalisation of the DTD and, on the basis of the studies they have undergone, mainly as formal models for the specification of programming languages, they provide an ideal framework for the formal study of the transformations involved in XML technologies. That's why we use them in our work as tools for specifying the structure of documents.

We are only interested in the structure of the documents regardless of their contents and the attributes they may contain.  
We will therefore represent the abstract structure of a structured document by a tree, and its model by an abstract context free grammar; a valid structured document will then be a derivation tree for this grammar. 
A context free grammar defines the structure of its instances (the documents that are conform to it) by means of productions. 
A production, generally denoted $p: X_0 \rightarrow X_1 \ldots X_n$, is comparable in this context, to a structuring rule which shows how the symbol $X_0$, located in the left part of the production, is divided into a sequence of other symbols $X_1 \ldots X_n$, located on its right side. More formally, 

\begin{definition}
An \textbf{abstract context free grammar} is given by $\mathbb{G}=\left(\mathcal{S},\mathcal{P},A\right)$
composed of a finite set $\mathcal{S}$ of \textbf{grammatical symbols} or \textbf{sorts} corresponding to the different \textbf{syntactic categories} involved, a particular grammatical symbol $A\in\mathcal{S}$ called \textbf{axiom}, and a finite set $\mathcal{P}\subseteq\mathcal{S}\times\mathcal{S}^{*}$ of \textbf{productions}. 
A production $P=\left(X_{P(0)},X_{P(1)}\cdots X_{P(n)}\right)$ is denoted $P:X_{P(0)}\rightarrow X_{P(1)}\cdots X_{P(n)}$ and $\left|P\right|$ 
denotes the length of the right hand side of $P$. A production with the symbol $X$ as left part is called a \textit{X-production}.
\end{definition}

For certain treatments on trees (documents), it is necessary to designate precisely a particular node. Several indexing techniques exist, among them, the so-called \textit{Dynamic Level Numbering} \cite{boe04} based on identifiers with variable lengths, inspired by the \textit{Dewey} decimal classification (see fig. \ref{chap2:fig:indexed-tree}). According to this indexing system, a tree can be defined as follows:

\begin{definition}
A \textbf{tree} whose nodes are labelled in an alphabet $\mathcal{S}$, is a partial map $t:\mathbb{N}^*\rightarrow \mathcal{S}$, whose domain $\mathit{Dom}(t)\subseteq \mathbb{N}^*$ is a prefix closed set such that, for all $u\in \mathit{Dom}(t)$, the set $\{i\in\mathbb{N}~|~u\cdot i\in\mathit{Dom}(t)\}$ is a non-empty interval of integers $[1,\cdots,n]\cap\mathbb{N}$ ($\epsilon \in \mathit{Dom}(t) \mathit{~is~ the~ root~ label}$); the integer $n$ is the \textbf{arity} of the node whose address is $u$. 
$t(u)$ is the value (label) of the node in $t$, whose address is $u$.
If $t_1,\cdots,t_n$ are trees and $a\in \mathcal{S}$, we denote $t = a [t_1,\ldots,t_n]$, the tree $t$ of domain $\mathit{Dom}(t)=\{\varepsilon\}\cup\{i\cdot u~|~1\leq i\leq n\, ,\; u\in \mathit{Dom}(t_i)\}$ with $t(\varepsilon)=a$, and $t(i\cdot u)=t_i(u)$. 

\end{definition}
\begin{figure}[ht!]
	\noindent
	\makebox[\textwidth]{\includegraphics[scale=0.45]{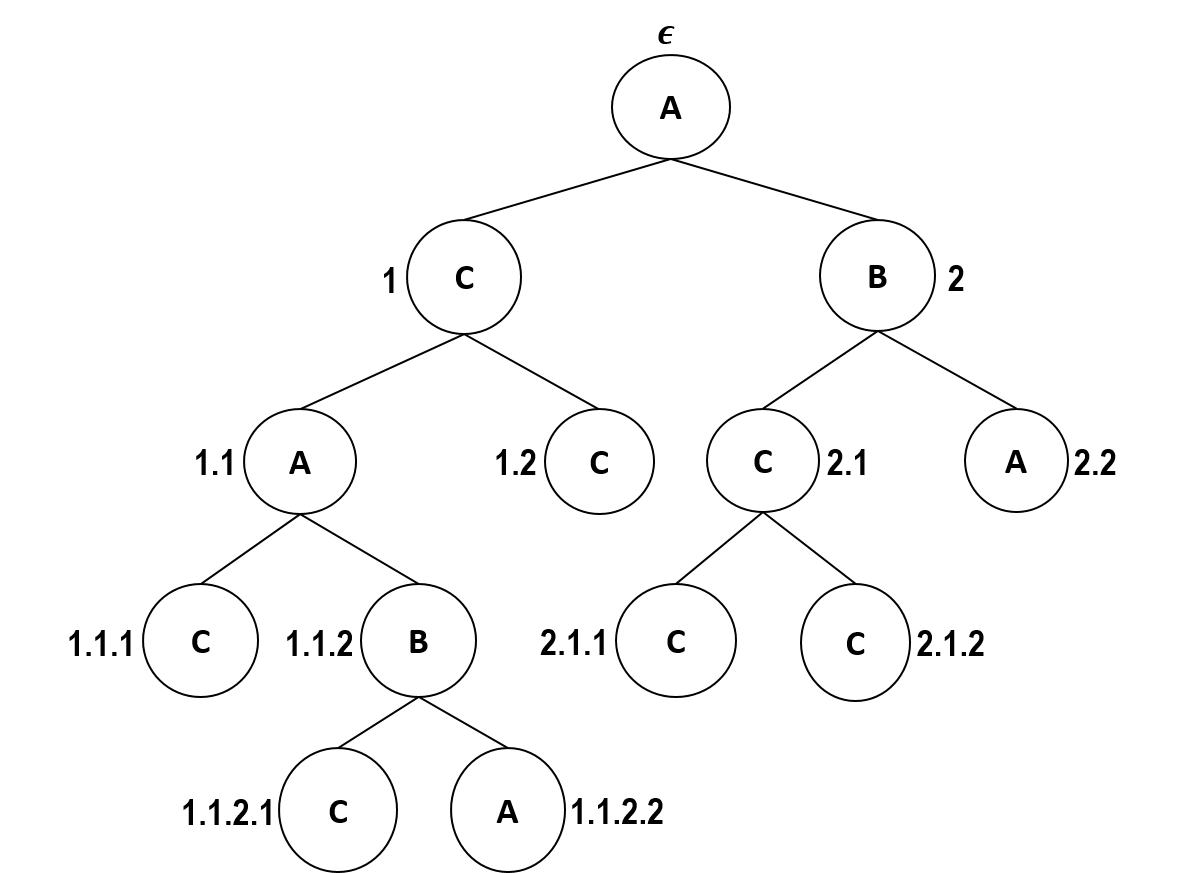}}
	\caption{Example of an indexed tree.}
	\label{chap2:fig:indexed-tree}
\end{figure}

Let $t$ be a document and $\mathbb{G}=\left(\mathcal{S},\mathcal{P},A\right)$ a grammar. $t$ is a derivation tree for $\mathbb{G}$ if its root node is labelled by the axiom $A$ of $\mathbb{G}$, and if for all internal node $n_0$ labelled by the sort $X_0$, and whose sons $n_1, \ldots n_n$, are respectively labelled by the sorts $X_1,\ldots, X_n$, there is one production $P \in \mathcal{P}$ such that, $P:X_0\rightarrow X_1\cdots X_n$ and $\left|P\right|=n$. 
It is also said in this case, that $t$ belongs to the language generated by $\mathbb{G}$ from the symbol $A$, and it is denoted $t \in \mathscr{L}{\left( \mathbb{G}, ~A \right)}$ or  $t\therefore \mathbb{G}$.

There is a bijective correspondence between the set of derivation trees of one grammar and all its \textit{Abstract Syntax Tree} (\textit{AST}). In an AST, nodes are labelled by the names of the productions. 

\begin{definition}
The set $AST(\mathbb{G},X)$ of \textbf{abstract syntax trees} according to the grammar $\mathbb{G}$ associated with grammatical symbol $X$, consists of trees in the form $P[t_1,\ldots,t_n]$ where $P$ is a production such that, $X=X_{P(0)}$, $n=|P|$ and $t_i\in AST(\mathbb{G},X_i), ~X_{i} = X_{P(i)}$ for all $1\leq i\leq n$. 
\end{definition}
AST are used to show that a given tree, labelled with grammatical symbols, is an instance of a given grammar.

A structured document being edited, is represented by a tree containing \textit{buds} (or \textit{open nodes}) which indicate in a tree, the only places where editions (i.e updates) are possible\footnote{Note that, we are interested only in the \textit{positive edition} based on a partial optimistic replication \cite{Yasushi2005} of edited documents. In fact, the edited documents are only increasing: there is no possible erasure as soon as a synchronisation has been performed.}.
Buds are typed; a \textit{bud of sort $X$} is a leaf node labelled by $X_\omega$ (see fig. \ref{chap2:fig:tree-with-bud}): it can only be edited (i.e extended to a subtree) by using an \textit{X-production}. Thus, a structured document being edited and that have the grammar $\mathbb {G} = (\mathcal {S}, \mathcal {P}, A) $ as model, is a derivation tree for the extended grammar 
$\mathbb{G}_{\Omega}=(\mathcal{S}\cup\mathcal{S}_{\omega},\mathcal{P}\cup\mathcal{S}_{\Omega},A)$, obtained from $\mathbb {G} $ as follows: for all sort $X$, we not only add in the set $\mathcal{S}$ of sorts a new sort $X_{\omega}$, but we also add  a new $\varepsilon$-production $X_{\Omega} : X_{\omega} \rightarrow \varepsilon$ in the set $\mathcal {P}$ of productions; so we have: $\mathcal{S}_{\omega}=\{X_{\omega},~ X\in\mathcal {S}\}$ and $\mathcal{S}_{\Omega} = \{X_{\Omega} : X_{\omega} \rightarrow \varepsilon,~ X_{\omega} \in \mathcal{S}_{\omega}\}$.
\begin{figure}[ht!]
	\noindent
	\makebox[\textwidth]{\includegraphics[scale=0.45]{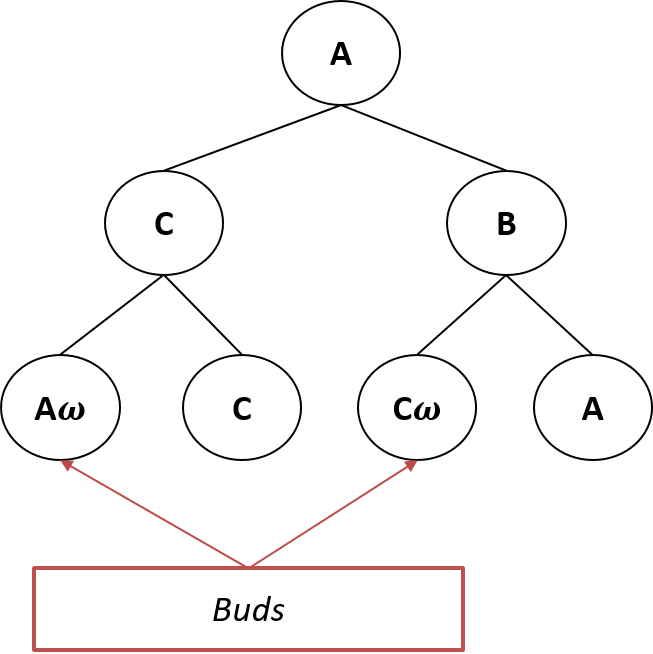}}
	\caption{Example of a tree that contains buds.}
	\label{chap2:fig:tree-with-bud}
\end{figure}

When we look at the productions of a grammar, we can notice that each sort is associated with a set of productions. From this point of view therefore, we can consider a grammar as an application
\[
	 gram : symb \rightarrow [(prod,~[symb])]
\]
which associates to each sort, a list of pairs formed by a production name and the list of sorts in the right hand side of this production. Such an observation suggests that a grammar can be interpreted as a (descending) tree automaton that can be used for recognition or for the generation of its instances.

\begin{definition}
\label{chap2:def:tree-automaton}
A (descending) \textbf{tree automaton} defined on $\Sigma$, is a quadruplet $\mathcal{A}=(\Sigma,Q,R,q_0)$ of a 
 set $\Sigma$ of symbols 
; its elements are the nodes' labels of the trees to be generated (or recognised), a set $Q$ of states, a particular state $q_0 \in Q$ called initial state, and a finite set $R\subseteq  Q \times \Sigma \times Q^*$ of transitions.
\begin{itemize}
	\item An element of $R$ is denoted $q\rightarrow \left( \sigma, [ q_1,\cdots,q_n]\right)$ or in an equivalent way  $q\stackrel{\sigma}{\rightarrow}(q_{1},\ldots,q_{n})$: intuitively, $[ q_1,\cdots,q_n]$ is the list of states accessible from $q$ by crossing a transition labelled $\sigma$.
	\item If $q\stackrel{\sigma_1}{\rightarrow}\left( q_{1}^1, \cdots, q_{n_1}^1\right), \cdots, q\stackrel{\sigma_k}{\rightarrow}\left( q_{1}^k, \cdots, q_{n_k}^k\right)$ denotes the set of transitions associated to the state $q$, we denote \textbf{$next~q=[\left( \sigma_1,[ q_{1}^1, \cdots, q_{n_1}^1]\right),\cdots, \left( \sigma_k,[ q_{1}^k, \cdots, q_{n_k}^k]\right)]$}, the list that consists of pairs $\left( \sigma_i,[ q_{1}^i, \cdots, q_{n_i}^i]\right)$. 
	A transition of the form $q\rightarrow(\sigma,[\;])$, is called \textit{final transition} and a state possessing this transition is a \textit{final state}.
\end{itemize}
\end{definition}

One can interpret a grammar $\mathbb{G}=\left(\mathcal{S},\mathcal{P},A\right)$ as a (descending) tree automaton \cite{Comon} $\mathcal{A}=(\Sigma,Q,R,q_0)$ considering that:
\begin{itemize}
	\item[(1)] $\Sigma=\mathcal{P}$ is the type of labels of the nodes forming the tree to recognise; 
	\item[(2)] $Q=\mathcal{S}$ is the type of states and, 
	\item[(3)] $q\rightarrow \left( \sigma,[ q_1,\cdots,q_n]\right)$ is a transition of the automaton when the pair $\left(\sigma,[q_1,\cdots,q_n] \right)$ appears in the list $(gram~~ q)$\footnote{Reminder: $gram$ is the application obtained by abstraction of $\mathbb{G}$ and have as type : $gram : symb \rightarrow [(prod,~[symb])]$.}.
\end{itemize}
We note $\mathcal{A}_{\mathbb{G}} $ the tree automaton derived from $\mathbb{G}$.

To obtain the set $AST_\mathcal{A}$ of \textit{AST} generated by a tree automaton $\mathcal{A}$ from an initial state $q_0$, one must:
\begin{itemize}
	\item[(1)] Create a root node $r$, associate the initial state $q_0$ and add it to the set $AST_\mathcal{A}$ initially empty;
	\item[(2)] Remove from $AST_\mathcal{A}$ an AST $t$ under construction, i.e. with at least one leaf node $node$ unlabelled. Let $q$ be the state associated to $node$.
	For all transition $q\stackrel{\sigma}{\rightarrow} \left(q_1,\cdots,q_n\right)$   of $\mathcal{A}$, add in $AST_\mathcal{A}$ the trees $t'$ which are replicas of $t$ in which, the node $node$ has been substituted by a node $node'$ labelled $\sigma$ and possessing $n$ (unlabelled) sons, each associated to a (distinct) state $q_i$ of $[ q_1,\cdots,q_n]$; 
	\item[(3)] Iterate step (2) until he obtains trees with all the leaf nodes labelled (they are consequently associated to the final states of $\mathcal{A}$): these are \textit{AST}.
\end{itemize}
We note $\mathcal{A} \models t \triangleright q$ the fact that the tree automaton $\mathcal{A}$ accepts the tree $t$ from the initial state $q$, and
 $\mathscr{L}(\mathcal{A}, q)$ (tree language) the set of trees generated by the automaton $\mathcal{A}$ from the initial state $q$. 
Thus, $ \left(\mathcal{A}  \models t \triangleright q \right) \Leftrightarrow \left( t \in \mathscr{L}(\mathcal{A}, q) \right)$.

As for automata on words, one can define a synchronous product on tree automata to obtain the automaton recognising the intersection, the union, etc., of regular tree languages \cite{Comon}. We introduce below the definition of the synchronous product of $k$ tree automata whose adaptation will be used in the next section for the derivation of the consensual automaton. 

\begin{definition}
\label{chap2:def:synchronous-product}
\textbf{Synchronous product of $k$ automata:}\\
Let $\mathcal{A}_1=\left(\Sigma,Q^{(1)},R^{(1)},q_{0}^{(1)}\right), \ldots , \mathcal{A}_k=\left(\Sigma,Q^{(k)},R^{(k)},q_{0}^{(k)}\right) $ be $k$ tree automata. The synchronous product of these $k$ automata $\mathcal{A}_1 \otimes \cdots \otimes \mathcal{A}_k$ denoted $\otimes_{i=1}^{k} \mathcal{A}^{(i)}$, is the automaton $\mathcal{A}_{(sc)}=(\Sigma,Q,R,q_{0})$ defined as follows: 
\begin{itemize}
	\item[\textbf{(a)}] Its states are vectors of states : $Q =Q^{(1)}\times\cdots\times Q^{(k)}$; 
	\item[\textbf{(b)}] Its initial state is the vector formed by the initial states of the different automata : $q_{0}=\left(q_{0}^{(1)},\cdots, q_{0}^{(k)}\right)$; 
	\item[\textbf{(c)}] Its transitions are given by :\\
				$\left(q^{(1)}, \ldots, q^{(k)}\right)$ $\stackrel{a}{\rightarrow}\left(\left(q^{(1)}_1,\ldots,q^{(k)}_1\right),\ldots,\left(q^{(1)}_n,\ldots,q^{(k)}_n\right)\right)$ $\Leftrightarrow$\\ 
				   $\left( q^{(i)}\stackrel{a}{\rightarrow}\left(q^{(i)}_1,\ldots,q^{(i)}_n\right) \quad \forall i,~ 1\leq i\leq k \right)$
\end{itemize}
\end {definition}

\mySubSubSection{Notions of View, Projection, Reverse Projection and Merging}{}
\label{chap2:sec:view-projection-expansion-merging}
\noindent\textbf{\textit{View, associated projection and merging}}

The derivation tree giving the (global) representation of a structured document edited in a cooperative way, makes visible the set of grammatical symbols of the grammar that participated in its construction. As we mentioned in section \ref{chap2:sec:badouel-tchoupe-cooperative-editing} above, for reasons of confidentiality (accreditation degree), a co-author manipulating such a document will not necessarily have access to all of these grammatical symbols; only a subset of them can be considered relevant for him: it is his \textit{view}. A view $\mathcal{V}$ is then a subset of grammatical symbols ($\mathcal{V} \subseteq \mathcal{S}$). 

A partial replica of $t$ according to the view $\mathcal{V}$, is a partial copy of $t$ obtained by deleting in $t$, all the nodes labelled by symbols that are not in $\mathcal{V}$. 
Figure \ref{chap2:fig:partial-view} shows a document $t$ (center) and two partial replicas $t_{v_1}$ (left) and $t_{v_2}$ (right) obtained respectively by projections from the views $\mathcal{V}_1 = \{A,B\}$ and $\mathcal{V}_2 = \{A,C\}$. 
\begin{figure}[ht!]
	\noindent
	\makebox[\textwidth]{\includegraphics[scale=0.5]{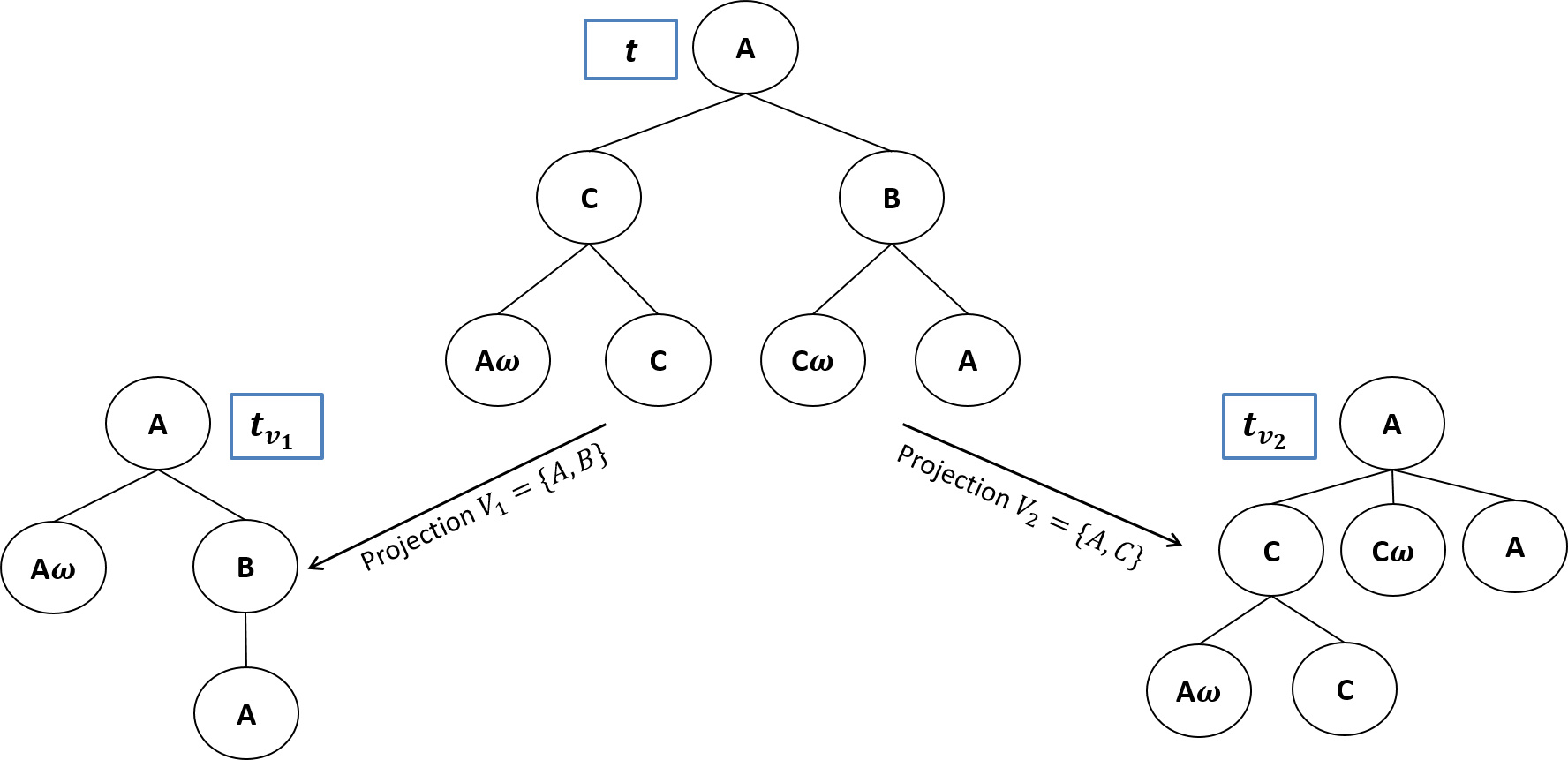}}
	\caption{Example of projections made on a document and partial replicas obtained.}
	\label{chap2:fig:partial-view}
\end{figure}

Practically, a partial replica is obtained via a \textit{projection} operation denoted $\pi$. We therefore denote $\pi_{\mathcal{V}}(t)= t_{\mathcal{V}}$ the fact that $t_{\mathcal{V}}$ is a partial replica obtained by projection of $t$ according to the view $\mathcal{V}$.

Note $t_{\mathcal{V}_i} \leq t_{\mathcal{V}_i}^{maj}$ the fact that, the document $t_{\mathcal{V}_i}^{maj}$ is an update of the document $t_{\mathcal{V}_i}$, i.e. $t_{\mathcal{V}_i}^{maj}$ is obtained from $t_{\mathcal{V}_i}$ by replacing some of its buds by trees.
In an asynchronous cooperative editing process, there are synchronisation points\footnote{Recall that a synchronisation point can be defined statically or triggered by a co-author as soon as certain properties are satisfied.} in which, one tries to merge all contributions $t_{\mathcal{V}_i}^{maj}$ of the various co-authors to obtain a single comprehensive document $t_f$\footnote{It may happen that the edition must be continued after the merging (this is the case if there are still buds in the merged document): one must redistribute to each of the $n$ co-authors a (partial) replica $t_{\mathcal{V}_i}$ of $t_f$ such that  $t_{\mathcal{V}_i} = \pi_{\mathcal{V}_i}(t_f)$ for the continuation of the editing process.}. A merging algorithm that does not incorporate conflict management and relies on a solution to the \textit{reverse projection} problem was given by Badouel and Tchoup\'e.

~

\noindent\textbf{\textit{Partial replica and reverse projection (expansion)}}

The \textit{reverse projection} (also called the \textit{expansion}) of an updated partial replica $t_{\mathcal{V}_i}^{maj}$ relatively to a given grammar $\mathbb{G}=\left(\mathcal{S},\mathcal{P},A\right)$, is the set $T_{i\mathcal{S}}^{maj}$ of documents conform to $\mathbb{G}$, that admit $t_{\mathcal{V}_i}^{maj}$ as partial replica according to ${\mathcal{V}_i}$:
\begin{itemize}
	\item[] $ T_{i\mathcal{S}}^{maj} = \left\{t_{i\mathcal{S}}^{maj} \therefore \mathbb{G}~ | ~ \pi_{\mathcal{V}_i}\left(t_{i\mathcal{S}}^{maj}\right) = t_{\mathcal{V}_i}^{maj} \right\}
	$
\end{itemize} 

A solution to the problem of evaluating the expansion of a given partial replica using tree automata, was proposed by Badouel and Tchoup\'e; in that solution, productions of the grammar $\mathbb{G}$ are used, to bind to a view $\mathcal{V}_i \subseteq \mathcal{S}$ a tree automaton $\mathcal{A}^{(i)}$ such as, the trees it recognises from an initial state built from $t_{\mathcal{V}_i}^{maj}$, are exactly those having this partial replica as projection according to the view $\mathcal{V}_i$:
$ T_{i\mathcal{S}}^{maj} = \mathscr{L}\left(\mathcal{A}^{(i)},~q_{t_{\mathcal{V}_i}}\right) $. 
Practically, they have considered that a state $q$ of the automaton $\mathcal{A}^{(i)}$ is a pair $\left(Tag ~X, \;ts\right)$ where $X$ is a grammatical symbol, $ts$ is a forest (tree set), and \textit{Tag} is a label that is either \textit{Open} or \textit{Close}, and indicates whether the concerned state $q$ can be used to generate a \textit{closed} node or a \textit{bud}. The states of $\mathcal{A}^{(i)}$ are typed: a state of the form $\left(Tag ~X, \;ts\right)$ is of type $X$. We also have a function named \textit{typeState} which, when applied to a state, returns its type\footnote{ $typeState :: state\rightarrow symb$ \\
 $.~~~typeState ~\left(Open ~X, \;ts\right) = X$\\
 $.~~~typeState ~\left(Close ~X, \;ts\right) = X$
	}.
A transition from one state $q$, is of one of the forms $\left(Close~X, \;ts \right) \rightarrow \left(p, \;[q_1, \ldots, q_n]\right)$ or $\left(Open ~X, \;[\;]\right) \rightarrow \left(X_\omega, \;[\;]\right)$. 
A transition of the form $\left(Close~X, \;ts \right) \rightarrow \left(p, \;[q_1, \ldots, q_n]\right)$ is used to generate AST of type $X$ (whose root is labelled by a \textit{X-production}) admitting ''$ts$'' as projection according to the view ${\mathcal{V}_i}$ if $X$ does not belong to ${\mathcal{V}_i}$, and ''$x [ ts ]$'' otherwise.
Similarly, a transition of the form $\left(Open ~X, \;[\;]\right) \rightarrow \left(X_\omega, \;[\;]\right)$ is used to generate a single AST reduced to a bud of type $X$. 

The interested reader may consult \cite{badouelTchoupeCmcs} for a more detailed description of the process of associating a tree automaton with a view and the section \ref{chap2:sec:consensus-illustration} for an illustration.

\mySubSection{Reconciliation by Consensus}{}
\label{chap2:sec:consensus-reconciliation}

\mySubSubSection{Issue and Principle of the Solution of Reconciliation by Consensus}{}
\label{chap2:sec:consensus-reconciliation-issue-principle}
There are generally two distinct phases when synchronising replicas of a document \cite{balasubramaniam98what}: the \textit{updates detection} phase, which consists of recognising the different replica nodes (locations) where updates have been made since the last synchronisation, and the  \textit{propagation} phase, which consists in combining the updates made on the various replicas to produce a new synchronised state (document) for each replica.
In an asynchronous cooperative editing workflow of several partial replicas of a document, when you reach a synchronisation point, you can end up with unmergeable replicas in their entirety as they contain not compatible updates\footnote{This is particularly the case if there is at least one node of the global document accessible by more than one co-author and edited by at least two of them using different productions.}: they must be reconciled. This can be done by questioning (cancelling) some local editing actions in order to resolve conflicts and result in a coherent global version said of consensus.

Studies on reconciling a document versions are based on heuristics \cite{tomMens} as there is no general solution to this problem.
In our case, since all editing actions are reversible\footnote{Reminder: the editing actions made on a partial replica may be cancelled as long as they do not have been incorporated into the global document.} and it is easy to locate conflicts when trying to merge the partial replicas (see section \ref{chap2:sec:consensus-calculation}), we have a canonical method to eliminate conflicts: when merging, we replace any node of the global document whose replicas are in conflict, by a bud. Thus, we prune at the nodes where a conflict appears, replacing the corresponding subtree with a bud of the appropriate type, indicating that this part of the document is not yet edited: the documents obtained are called consensus. These are the maximum prefixes without conflicts of the fusion of the documents resulting from the different expansions of the various updated partial replicas. For example, in figure \ref{chap2:fig:consensus-workflow}, the parts highlighted (blue backgrounds) in trees (f) and (g) are in conflict;  they are replaced in the consensus tree (h) by a bud of type $C$ (node labelled $C_{\omega}$).

The problem of the consensual merging of $k$ updated partial replica whose global model is given by a grammar $\mathbb{G}=\left(\mathcal{S},\mathcal{P},A\right)$ can therefore be stated as follows :\\
\textbf{Problem of the consensual merging}: given $k$ views $(\mathcal{V}_i)_{1 \leq i\leq k}$ and $k$ partial replicas $(t_{\mathcal{V}_i}^{maj})_{1 \leq i\leq k}$, merge consensually the family $(t_{\mathcal{V}_i}^{maj})_{1 \leq i\leq k}$ is to find the largest documents $t^{maj}_{\mathcal{S}}$ conforming to $\mathbb{G}$ such that, for any document $t$ conforming to $\mathbb{G}$ and admitting $t_{\mathcal{V}_i}^{maj}$ as projection along the view ${\mathcal{V}_i}$, $t^{maj}_{\mathcal{S}}$ and $t$ are eventually updates each for other. i.e.:
\begin{center}
$ 
\left( t^{maj}_{\mathcal{S}} \in \otimes_{i=1}^{k} t_i,  ~t_i \in T_{i\mathcal{S}}^{maj} \right) \Leftrightarrow 
\left\{ \begin{array}{l}
	i) ~~	\forall i, \, 1\leq i \leq k, ~ \forall t \therefore \mathbb{G} \mbox{ such that }~ \pi_{\mathcal{V}_i}(t) =   t_{\mathcal{V}_i}^{maj} ,~t^{maj}_{\mathcal{S}} \cong t. \\
	ii) ~~\nexists t' \leq t^{maj}_{\mathcal{S}} \mbox{ such that } t' \in  \otimes_{i=1}^{k} t_i,  ~t_i \in T_{i\mathcal{S}}^{maj} 
\end{array}\right.
$\footnote{The binary relation $\cong$ when it exists between two trees $t_1$ and $t_2$ expresses the fact that they are possibly updates each for other. This relationship is more explicitly explained in definition \ref{chap2:def:relation-cong}.}
\end{center}

The solution that we propose to this problem stems from an instrumentation of that proposed for the expansion (section \ref{chap2:sec:view-projection-expansion-merging}). 
Indeed, we use an associative and commutative operator noted $\otimes$, to synchronise the tree automata $\mathcal{A}^{(i)}$ constructed to carry out the various expansions, in order to generate the tree automaton of consensual merging. Noting $\mathcal{A}_{(sc)}$ this automaton, the documents of the consensus are the trees of the language generated by the automaton $\mathcal{A}_{(sc)}$ from an initial state built from the vector made of the initial states of the automata $(\mathcal{A}^{(i)})$: 
$\mathscr{L}(\mathcal{A}_{(sc)},~(q_{t_{\mathcal{V}_i}^{maj}})) = consensus  \{\mathscr{L}(\mathcal{A}^{(i)},~q_{t_{\mathcal{V}_i}^{maj}})\}$.	
$\mathcal{A}_{(sc)}$ is obtained by proceeding as follows:
\begin{itemize}
	\item[(1)] For each view $\mathcal{V}_i$, build the automaton $\mathcal{A}^{(i)}$ which will carry out the expansion of the partial replica $t_{\mathcal{V}_i}^{maj}$ as previously indicated (sec. \ref{chap2:sec:view-projection-expansion-merging}): $  \mathscr{L}\left(\mathcal{A}^{(i)},~q_{t_{\mathcal{V}_i}^{maj}}\right) = T_{i\mathcal{S}}^{maj}$;
	\item[(2)] Using the operator $\otimes$, compute the automaton generating the consensus language $\mathcal{A}_{(sc)}=\otimes_{i=1}^{k}\mathcal{A}^{(i)}$. 
\end{itemize}

\mySubSubSection{Consensus Calculation}{}
\label{chap2:sec:consensus-calculation}
Before presenting the consensus calculation algorithm, let us specify using the concepts introduced in section \ref{chap2:sec:structured-document-edition-conformity}, the notion of (two) documents in conflict.
Let $t_1,t_2:\mathbb{N}^*\rightarrow \mathbf{A}$ be two trees (documents)  with respectively $\mathit{Dom}(t_1)$ and $\mathit{Dom}(t_2)$ their domains. We say that $t_1$ and $t_2$ admit a consensus, and we note $t_1\curlywedgeuparrow t_2$, if their roots are of the same type\footnote{Trees we handle are AST and therefore, the nodes are labelled by productions names. Any node labelled by an \textit{X-production} is said of \textit{type} $X$. Furthermore, there is a function \textit{typeNode} such that \textit{typeNode(t(w))} returns the type of the node located at the address $w$ in $t$.}, i.e. $\left( t_1\curlywedgeuparrow t_2 \right) \Leftrightarrow \left(typeNode(t_{1}(\epsilon)) = typeNode(t_{2}(\epsilon)) \right)$ \footnote{It may then be noted that two documents (AST) admit no consensus if their roots are of different types. However, for applications that interest us, namely structured editing, since the editions are done from the root (which is always of the type of the axiom) to the leaves using productions, the documents we manipulate always admit at least a consensus.}. It is then say that, $t_1$ and $t_2$ are in conflict, and it is noted $t_1 \curlyveeuparrow t_2$, when they admit a consensus but are not mergeable in their entirety. 
Intuitively, two documents $t_1$ and $t_2$ (not reduced to buds) are not fully mergeable (see fig. \ref{chap2:fig:tree-in-conflict}), if there exists an address $w\in \mathit{Dom}(t_1) \cap \mathit{Dom}(t_2)$ such that, if we note $n_1$ (resp. $n_2$) the node located to address $w$ in $t_1$ (resp. in $t_2$), then, $n_1$ and $n_2$ which are not buds, are of the same type but have different labels. i.e. 
\begin{center}
$
\left( t_1\curlyveeuparrow t_2 ~with~t_1(\epsilon)\neq X_\omega,\, t_2(\epsilon)\neq X_\omega\right) \Leftrightarrow \left\{\begin{array}{l}
													\left(t_1\curlywedgeuparrow t_2\right)\\
													\mbox{and}\\
													\left(\begin{array}{l}\exists w \in \mathit{Dom}(t_1)\cup \mathit{Dom}(t_2), ~t_{1}(w) \neq t_{2}(w) \neq X_\omega ~ and\\
													typeNode(t_{1}(w)) = typeNode(t_{2}(w))
													\end{array}
													\right) 
											 \end{array}
								\right.
$
\end{center}
\begin{figure}[ht!]
	\noindent
	\makebox[\textwidth]{\includegraphics[scale=0.37]{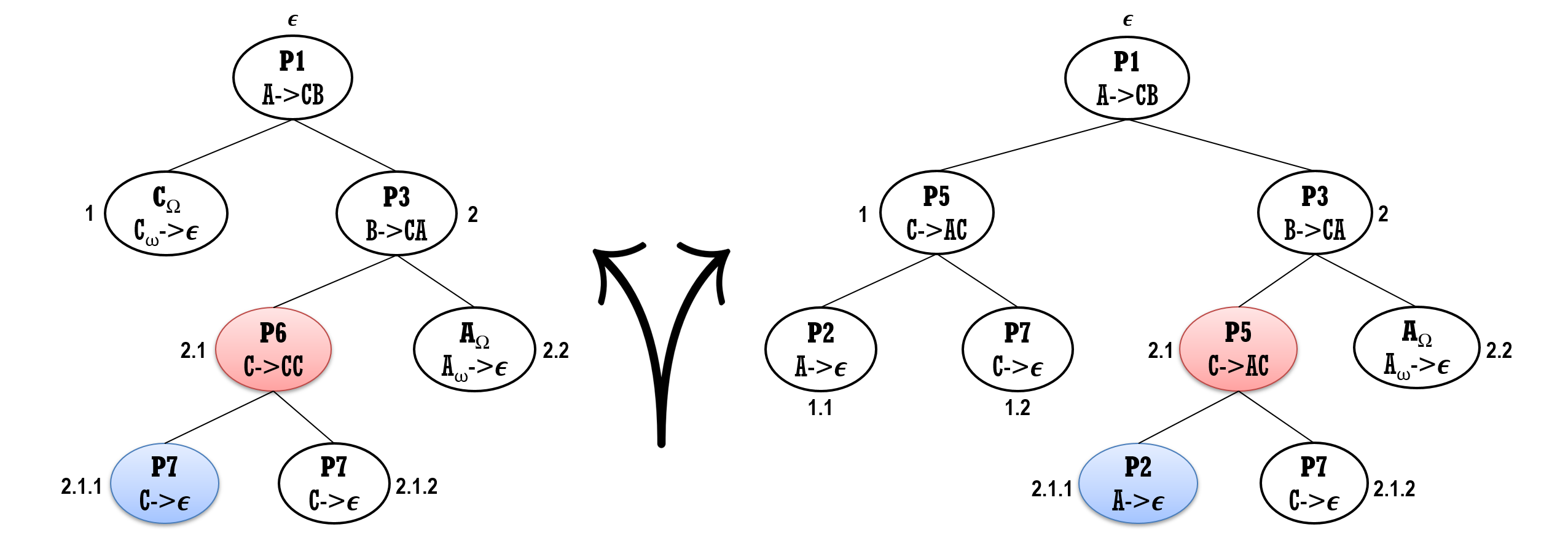}}
	\caption{Example of documents in conflict.}
	\label{chap2:fig:tree-in-conflict}
\end{figure}

Figure \ref{chap2:fig:tree-in-conflict} shows two conflicting documents. In fact, at address 2.1 we have two nodes of the same type ("C") but edited with different C-productions:  production $C \rightarrow C ~C $ in the first document, and production $C \rightarrow A~C $ in the second one.

~

\noindent\textbf{\textit{Consensus among multiple (two) documents}}

Let $t_1,t_2:\mathbb{N}^*\rightarrow \mathbf{A}$ be two trees (documents) in conflict with respectively $\mathit{Dom}(t_1)$ and $\mathit{Dom}(t_2)$ their domains.
The consensual tree $t_c:\mathbb{N}^*\rightarrow \mathbf{A}$ derived from $t_1$ and $t_2$ ($t_c = t_1 \otimes t_2$) has as domain the union of domains of the two trees in which, we subtract elements belonging to domains of subtrees derived from the conflicting nodes. In fact, we prune at the nodes in conflict and they appear in the consensus tree as a (unique) bud.
So, 
\begin{center}
$\forall w \in \mathit{Dom}(t_c)$, 
			$ t_c(w) =   
		    \left\{ \begin{array}{lll}
										t_1(w) & \mathit{if} & typeNode(t_1(w)) = typeNode(t_2(w)) ~and ~ t_1(w)=t_2(w)\\
										t_1(w) & \mathit{if} & typeNode(t_1(w)) = typeNode(t_2(w)) ~and ~ t_2 (w) = X_\omega \\
										t_2 (w) & \mathit{if} & typeNode(t_1(w)) = typeNode(t_2(w)) ~and ~ t_1(w) = X_\omega \\
										t_1(w) & \mathit{if} & w \notin Dom(t_2) ~and~ \exists u,\,v \in \mathbb{N}^* ~ tq ~ w=u.v, \\
														&							& ~ t_2 (u) = X_\omega ~and ~typeNode(t_1(u)) = typeNode(t_2(u))  \\
										t_2(w) & \mathit{if} & w \notin Dom(t_1) ~and~ \exists u,\,v \in \mathbb{N}^* ~ tq ~ w=u.v, \\
														&							&~ t_1(u) = X_\omega ~and~typeNode(t_1(u)) = typeNode(t_2(u))  \\
										X_\omega & \mathit{if} & typeNode(t_1(w)) = typeNode(t_2(w)) ~and ~ t_1(w) \neq X_\omega\\
															&							&~and~ t_2 (w) \neq X_\omega ~and~ t_1(w) \neq t_2(w)  										
									\end{array}
								\right.
  $  
\end{center}

Figure \ref{chap2:fig:consensus-documents} presents the document resulting from the consensual merging of the documents in figure \ref{chap2:fig:tree-in-conflict}. We have prune at the level of nodes 2.1  in both documents which are in conflict.
\begin{figure}[ht!]
	\noindent
	\makebox[\textwidth]{\includegraphics[scale=0.4]{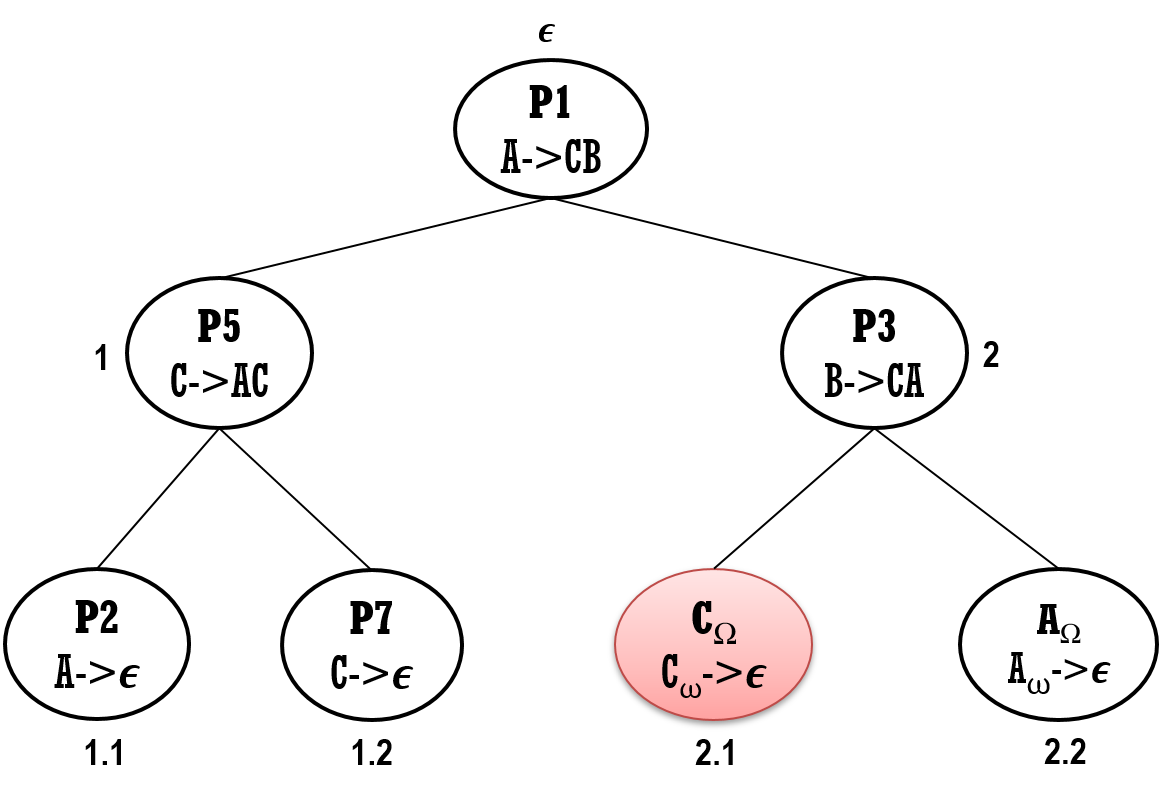}}
	\caption{Document resulting from the consensual merging of the documents in figure \ref{chap2:fig:tree-in-conflict}.}
	\label{chap2:fig:consensus-documents}
\end{figure}

When $t_c = t_1 \otimes t_2 $, there may be nodes of $t_1$ and those of $t_2$ which are updates of the nodes of $t_c$: it is said in this case that $t_1$ (resp. $t_2$) and $t_c$ are \textit{updates each for other}. 

\begin{definition}
\label{chap2:def:relation-cong}
Let $t_1, ~t_2$ two documents that are not in conflict. It will be said that \textit{they are updates each for other} and it is noted $t_1 \cong t_2$, if there exists at least two addresses $w, ~w'$ of their respective domains such that $t_1(w)$ (resp. $t_2(w')$) is a bud and $t_2(w)$ (resp. $t_1(w')$) is not.
\end{definition}

~

\noindent\textbf{\textit{Construction of the consensus automaton}}

Consideration of documents with buds requires the readjustment of some models. For example, in the following, we will handle \textit{tree automata with exit states} instead of tree automata introduced in definition \ref{chap2:def:tree-automaton}. Intuitively, a state $q$ of an automaton is called an \textit{exit state} if there is a unique transition $ q \rightarrow (X_\omega, [\, ])$ associated to it for generating a tree reduced to a bud of type $X \in \Sigma$: $q$ is then of the form \textit{(Open X, [])}.

\begin{definition}
A \textbf{tree automata with exit states} $\mathcal{A}$ is a quintuplet $(\Sigma,Q,R,q_0, \mathit{exit})$ where $\Sigma,Q,R,q_0$ designate the same objects as those introduced in definition \ref{chap2:def:tree-automaton}, and $\mathit{exit}$ is a predicate defined on the states ($\mathit{exit}: Q \rightarrow Bool$). Any state $q$ of $Q$ for which $\mathit{exit}~q$ is $\mathit{True}$ is an exit state.
\end{definition}

A type for automata with exit states can be defined in Haskell \cite{Antony} by the listing of algorithm \ref{chap2:algo:automata-type} in which, \textit{state} and \textit{prod} are type variables respectively representing the type of the automaton states and the type of labels of the AST to generate.
\begin{algorithm}
\small
\caption{A Haskell type for automata with exit states.}
\label{chap2:algo:automata-type}
\begin{Verbatim} [numbers=left]
data Automata prod state = Auto{
          exit:: state -> Bool, 
          next :: state -> [(prod,[state])]
          }
\end{Verbatim}
\end{algorithm}  

In section \ref{chap2:sec:consensus-calculation} (\textit{consensus among multiple (two) documents}) above, we said that, \textit{when two nodes are in conflict, "they appear in the consensus tree as a (unique) bud}". From the point of view of automata synchronisation, the concept of "nodes in conflict" is the counterpart of the concept of "states in conflict" (as we specify below), and the above extract is reflected in the automata context by:  "\textit{when two \textit{state} are in conflict, they appear in the \textit{consensus automaton} in the form of a (single) exit state}". 
Thus, if we consider two states of the same type $q_0^1$ and $q^2_0$ (which are not exit states) of two automata $auto_1$ and $auto_2$ with associated transitions families respectively $q_0^1 \rightarrow [(a^1_{1}, qs_1), \dots, (a^1_{n_1}, qs_{n_1})]$ and $q^2_0 \rightarrow [(a^2_{1}, qs'_1), \dots, (a^1_{n_2}, qs'_{n_2})]$, we say that the states $q^1_0$ and $q^2_0$ are in conflict (and we note $q^1_0 \curlyveeuparrow q^2_0$) if there is no transition starting from each of them and with the same label, i.e.  
\[\left(q^1_0 \curlyveeuparrow q^2_0\right) \Leftrightarrow \left(\nexists a^3, ~~~\left(a^3, qs\right) \in \left\{next~ q_o^1\right\}, ~\left(a^3, qs'\right) \in \left\{next~ q_o^2\right\}, ~|qs|=|qs'|\right)\]

This can be coded in Haskell by the function \Verb|isInConflicts| of algorithm \ref{chap2:algo:automata-state-in-conflict}.

\begin{algorithm}
\small
\caption{A Haskell function to check if two states of a given automaton are in conflict.}
\label{chap2:algo:automata-state-in-conflict}
\begin{Verbatim} [numbers=left]
isInConflicts state1@(tagsymb1, ts1) state2@(tagsymb2, ts2) = 
            null [a1 | (a1,states1) <- next auto1 state1,
                       (a2,states2) <- next auto2 state2,
                        a1==a2,
                       (length states1)==(length states2)]
\end{Verbatim}
\end{algorithm}

If $X$ is the type of two states $q$ and $q'$ in conflict, they admit a single consensual state $q_{\omega}=(Open X, [\,])$ such as $next~ q_{\omega} = [(X_{\omega}, [\,])]$. It is therefore obvious that two given automata admit a consensual automaton when their initials states are of the same type. The function defined in algorithm \ref{chap2:algo:automata-admit-consensus} performs this test.
\begin{algorithm}
\small
\caption{A Haskell function used to check if two given automata admit a consensus.}
\label{chap2:algo:automata-admit-consensus}
\begin{Verbatim} [numbers=left]
haveConsensus q0 q0' = (typeState q0) == (typeState q0')
\end{Verbatim}
\end{algorithm}

The operator $\otimes$ used to calculate the synchronised consensual automaton $\mathcal{A}_{(sc)}=\otimes_{i}^{k}\mathcal{A}^{(i)}$ is a relaxation of the operator used for calculating the automata product presented in definition \ref{chap2:def:synchronous-product}. $\mathcal{A}_{(sc)} = (\Sigma, ~Q, ~R, ~q_0, ~exit)$  is an automaton with exit states and is constructed as follows:

\begin{itemize}
	\item Its states are vectors of states : $Q =Q^{(1)}\times\cdots\times Q^{(k)}$; 
	\item its initial state is formed by the vector of initial states of different automata : $q_{0}=\left(q_{0}^{(1)},\cdots, q_{0}^{(k)}\right)$;
	\item For the \textit{exit} function, it is considered that when a given automaton $\mathcal{A}^{(j)}$ reaches an \textit{exit state}\footnote{The corresponding node in the reverse projection of the document is a bud and reflects the fact that, the corresponding author did not edit it. In the case that this node is shared with another co-author who edited it  in its (partial) replica, it is the edition made by the latter that will be retained when merging.}, it no longer contributes to the behaviour, but is not opposed to the synchronisation of the other automata: it is said to be \textit{asleep} (see algorithm \ref{listingConsensus}, lines 17, 19 and 24). So, a state $q = (q^1, \cdots, q^k)$ is an exit state if: 
	\begin{itemize}
		\item[(a)] all composite states $q^i$ are asleep (see algorithm \ref{listingConsensus}, line 6) or 
		\item[(b)] there exist any two states $q^{i}$ and $q^{j}, ~i \neq j$, components of $q$ that are in conflict (see algorithm \ref{listingConsensus}, line 12)  \\
		$\left( exit~\left(q^{(1)},\ldots,q^{(k)}\right)\right)$ $\Leftrightarrow$\\
		$\left( \left( exit~q^{(i)}, \forall i \in \{1\ldots k\} \right) \mbox{or} \left( \exists i, j, ~~i\neq j,  ~ q^{(i)} \curlyveeuparrow  q^{(j)}\right) \right)$;
	\end{itemize}
	
	\item Its transitions are given by:
	\begin{itemize} 
		\item[\textbf{(a)}] If $exit~ q$  then $q \rightarrow (X_\omega, [\,])$ is the unique transition of $q$; $X$ is the type of $q$.
		\item[\textbf{(b)}] Else  $\left(q^{(1)}, \ldots, q^{(k)}\right) \stackrel{a}{\rightarrow}\left(\left(q^{(1)}_1,\ldots,q^{(k)}_1\right),\ldots,\left(q^{(1)}_n,\ldots,q^{(k)}_n\right)\right) \Leftrightarrow$  +				$\forall i,~ 1\leq i\leq k$ 
		\begin{itemize} 
			\item[\textbf{(b1)}] $exit~q^{(i)}$ and $\left( q^{(i)}_j=\left(Open~X,~ [\;]\right), \forall j, ~1\leq j\leq n \right)~ $     \textit{ /*} $q^{(i)}$ \textit{is asleep */}, else\\
			\item[\textbf{(b2)}] ~  $q^{(i)}\stackrel{a}{\rightarrow}\left(q^{(i)}_1,\ldots,q^{(i)}_n \right)$.
		\end{itemize}
	\end{itemize}		
\end{itemize}
\textbf{(a)} reflects the fact that, if a state $q$ is an exit one, we associate a single transition for generating a tree reduced to a bud of the type of $q$ (see algorithm \ref{listingConsensus}, line 12).\\ 
With \textbf{(b1)} we say that, if the component $q^{(i)}$ of $q$ is an exit state, then for all composite state $\left(q^{(1)}_j,\ldots,q^{(k)}_j\right)$, ($1\leq j\leq n$) appearing in the right hand side of the transition \textbf{(b)}, the $i^{th}$ component should be asleep. Since it must not prevent other non-asleep states to synchronise, it must be of the form $(Open~X,~ [\;])$ where $X$ is the type of the other states $q^{(l)}_j$ (yet to be synchronised) belonging to $(q^{(1)}_j,\ldots,q^{(k)}_j)$ (see function \textit{fwdSlpState} defined in algorithm \ref{listingConsensus} line 24, and used in lines 17 and 19). Finally, with \textbf{(b2)} we stipulate that if  $q^{(i)}\stackrel{a}{\rightarrow}\left(q^{(i)}_1,\ldots,q^{(i)}_n \right)$ is a transition of the automaton $\mathcal{A}^{i} $, then for all composite state $\left(q^{(1)}_j,\ldots,q^{(k)}_j\right)$, ($1\leq j\leq n$) appearing in the right part of the transition \textbf{(b)} above, the $i^{th}$ component is $q^{(i)}_j$ (see algorithm \ref{listingConsensus}, lines 13 to 16).

\begin{algorithm}
\small
\caption{Consensus Listing.}
\label{listingConsensus}
\begin{Verbatim} [numbers=left]
autoconsens::(Eq p, Eq x) =>(x -> p) -> Automata p (Tag x, [st1])
     -> Automata p (Tag x, [st2]) 
     -> Automata p ((Tag x, [st1]), (Tag x, [st2]))
autoconsens symb2prod auto1  auto2 = Auto exit_ next_ where 
   exit_ (state1, state2) = case haveConsensus state1 state2 of
      True  -> (exit auto1 state1) && (exit auto2 state2) 
      False -> True
   next_ (state1, state2) = case haveConsensus state1 state2 of
      False -> []
      True  -> case (exit auto1 state1, exit auto2 state2) of
         (False, False) -> case (isInConflicts state1 state2) of
            True  -> [(symb2prod (typeState state1), [])]   
            False -> [(a1, zip states1 states2) | 
                      (a1, states1) <- next auto1 state1,
                      (a2, states2) <- next auto2 state2,
                      a1 == a2, (length states1) == (length states2)]
         (False, True)  -> [(a, zip states1 (fwdSlpState  states1)) |
                            (a, states1) <- next auto1 state1]
         (True, False)  -> [(a, zip (fwdSlpState  states2) states2) |
                            (a, states2) <- next auto2 state2]
         (True, True)   -> [(a1, []) | (a1,[]) <- next auto1 state1,
                            (a2, []) <- next auto2 state2, a1 == a2]
   where 
      fwdSlpState states = map (\state -> (Open (typeState state), [])) states
\end{Verbatim}
\end{algorithm}

\begin{proposition}
The tree automaton $\mathcal{A}=\otimes_{i=1}^{k}\mathcal{A}^{(i)}$  recognises/generates from the initial state $q_0 = (q_{01}, \ldots , q_{0k})$, all the trees from the consensual merging of trees recognised/generated by each automaton $\mathcal{A}^{(i)}$ from the initial state $q_{0i}$. Moreover, these trees are the biggest prefixes without conflicts of merged trees.
\[
\left( \otimes_{i=1}^{k}\mathcal{A}^{(i)} \models t \triangleright q_0\right) \Leftrightarrow 
  \left\{\begin{array}{l}
	i) ~~ \forall i,\quad\exists t_i\qquad  \mathcal{A}^{(i)}\models t_i \triangleright q_{0i}~\mbox{ and } ~t_i\cong t\\
	ii) ~~\forall t' \mbox{ prefix of } t, \quad \neg \left( \otimes_{i=1}^{k}\mathcal{A}^{(i)} \models t' \triangleright q_0\right)
	\end{array}\right.
\]
\end{proposition}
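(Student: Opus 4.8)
The plan is to prove the equivalence by structural induction on the document $t$, exploiting the fact that the recursive definition of $\mathcal{A}_{(sc)}=\otimes_{i=1}^{k}\mathcal{A}^{(i)}$ --- through its \emph{exit} predicate and its transition clauses \textbf{(a)}, \textbf{(b1)}, \textbf{(b2)} (realized by \texttt{exit\_} and \texttt{next\_} in Algorithm \ref{listingConsensus}) --- is built to mirror, position by position, the recursive definition of the consensual merge $t_c(w)$ on documents. The backbone is the characterization of acceptance for automata with exit states: $\mathcal{A}\models t \triangleright q$ holds iff either $t=X_\omega$ and $q$ is an exit state of type $X$, or $t=P[t_1,\dots,t_n]$ and there is a transition $q\xrightarrow{P}(q_1,\dots,q_n)$ with $\mathcal{A}\models t_j \triangleright q_j$ for all $j$. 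I would strengthen the induction hypothesis to carry simultaneously the types of the composite states, the witnesses $t_i$, and the relation $t_i\cong t$, since these must be tracked together for the recursion to close.

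For the forward implication ($\Rightarrow$), I assume an accepting run of $\mathcal{A}_{(sc)}$ on $t$ from $q_0=(q_{01},\dots,q_{0k})$ and split on which clause applies at the composite root state. If the root is an exit state then $t=X_\omega$, and by the definition of \emph{exit} this happens precisely when either all $q_{0i}$ are (asleep) exit states, or two components $q^{(i)},q^{(j)}$ are in conflict; here I must check that the state test \texttt{isInConflicts} is sound and complete for the document conflict relation $\curlyveeuparrow$, so that the bud produced in $t$ sits exactly at the positions demanded by the last clause of $t_c(w)$. Otherwise the root fires a common production $P$ via \textbf{(b2)}, the sleeping components being forced by \textbf{(b1)} to the open state $(Open~X,[\,])$; applying the induction hypothesis to the subtrees yields, for each $i$, a subtree accepted by $\mathcal{A}^{(i)}$, which I reassemble into a witness $t_i$ with $\mathcal{A}^{(i)}\models t_i\triangleright q_{0i}$. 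The relation $t_i\cong t$ then follows by collecting the local refinements: $t$ carries a bud wherever $t_i$ has genuine content (at conflicts and at positions edited only by other co-authors) and conversely. Clause ii) --- that no proper prefix $t'$ of $t$ is accepted --- holds because at any non-exit composite state clauses \textbf{(b1)}/\textbf{(b2)} force a genuine production, so a branch can be terminated into a bud exactly at an exit (conflict or globally-asleep) state, never prematurely.

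For the converse ($\Leftarrow$), I start from witnesses $t_1,\dots,t_k$ satisfying i) and ii) and build an accepting run of $\mathcal{A}_{(sc)}$ on $t$ by induction on $t$, reading its root. If $t(\varepsilon)=X_\omega$, I show the composite initial state must be an exit state, distinguishing the case where the bud comes from a genuine conflict among the $t_i$ (again via the $\curlyveeuparrow$/\texttt{isInConflicts} correspondence) from the case where every contributing component is itself a bud or asleep. Otherwise $t=P[\dots]$, and maximality ii) guarantees the pruning is not premature, so the common production $P$ is available in enough components for clause \textbf{(b)} to fire; the induction hypothesis then closes the subtrees.

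The step I expect to be the main obstacle is the rigorous handling of the \emph{asleep} states together with the forward propagation \texttt{fwdSlpState}: I must prove that once a component $\mathcal{A}^{(i)}$ enters an exit state it contributes exactly an open node $(Open~X,[\,])$ of the correct sort at every descendant position, so that it neither blocks the synchronization of the still-active components nor injects spurious structure, and that this faithfully realizes the ``updates each for other'' clauses of $t_c(w)$ (the clauses with $t_2(u)=X_\omega$ or $t_1(u)=X_\omega$). Closely tied to it is pinning down that \texttt{isInConflicts} at the state level coincides exactly with the document conflict $\curlyveeuparrow$ and that the buds it creates carry the correct type; establishing these two correspondences between the state-level and document-level definitions is where the argument is most delicate, and where the loaded induction hypothesis tracking types and $\cong$-witnesses is indispensable. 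For arbitrary $k$ I would run the induction directly on the $k$-ary definition; the associativity and commutativity already attributed to $\otimes$ give an alternative reduction to the binary case coded in Algorithm \ref{listingConsensus}.
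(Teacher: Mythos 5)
Your plan follows essentially the same route as the paper's own proof: both analyse the (unique) state-labelling of an accepted tree component by component, identify exit states of $\otimes_{i=1}^{k}\mathcal{A}^{(i)}$ with states whose components are all asleep or pairwise in conflict, recover each witness $t_i$ by cutting a branch at an $\mathcal{A}^{(i)}$-exit state or re-expanding a conflict node with a subtree generated from $q_i$, and obtain clause ii) from the fact that a node pruned from $t$ is labelled by a state admitting no final transition. The one ingredient you leave implicit but the paper invokes explicitly is the determinism of $\otimes_{i=1}^{k}\mathcal{A}^{(i)}$ (inherited from the $\mathcal{A}^{(i)}$), which is what forces the labelling of a prefix $t'$ to coincide with that of $t$ at common addresses and thus makes your ``never prematurely'' argument for clause ii) rigorous.
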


\begin{proof}
A tree $t$ is recognised by the synchronised automaton $\otimes_{i=1}^{k}\mathcal{A}^{(i)}$ if and only if, one can label each of its nodes by a state of the automaton in accordance with what is specified by the transitions of the automaton. 
Moreover, all the leaf nodes of $t$ must be labelled by using \textit{final transitions}; in our case, they are of the form $q\rightarrow (p, [\;])$.
This means that, if a node whose initial label is $a$, is labelled by the state $q$, and if it admits $n$ successors respectively labelled by $q_1,\ldots,q_n$, then $q\stackrel{a}{\rightarrow}(q_1,\ldots,q_n)$ must be a transition of the automaton. As the automaton is deterministic\footnote{Automata $\mathcal{A}^{(i)}$ being deterministic (see proposition 3.3.3 of \cite{theseTchoupe}), $\otimes_{i=1}^{k}\mathcal{A}^{(i)}$ is deterministic as synchronous product of deterministic automata.}, this labelling is unique (including the initial state attached to the root of the tree). 
By focusing our attention both on the state $q$ labelling a node, and its $i^{th}$ component $q_i$, on each of the branches of $t$, 
\begin{itemize}
	\item[(1)] we cut as soon as we reach an exit state in relation to the automaton $\mathcal{A}^{(i)} $ (i.e. $q_i$ is an exit state), or,
	\item[(2)] if $q$ is an exit state (in this case we are handling a leaf) and $q_i$ is not an exit state relatively to $\mathcal{A}^{(i)}$ (in this case, $q_i$ was in conflict with at least one other component $q_j$ of $q$), we replace that node with any subtree $t'_i$ that can be generated by $\mathcal{A}^{(i)}$ from the state $q_i$.
\end{itemize}
So,
\[
 \left(\otimes_{i=1}^{k}\mathcal{A}^{(i)} \models t \triangleright q_0 \right) \Rightarrow 
\left(\forall i,\;\exists t_i\quad \mathcal{A}^{(i)}\models t_i\triangleright q_{0i}\mbox{ and }t_i\cong t\right)
\]
since a state of $\mathcal{A}$ is an exit one if and only if, each of its components is an exit state (in the $\mathcal{A}^{i}$) or, if at least two of its components are in conflict. \\
Conversely, suppose $ \mathcal{A}^{(i)}\models t_i \triangleright q_{0i}$, by definition of the synchronised consensual automaton, we have $ \otimes_{i=1}^{k}\mathcal{A}^{(i)}\models \otimes_{i=1}^{k}t_i\triangleright (q_{01}, \ldots , q_{0k})$. So overall,
\[
L\left(\otimes_{i=1}^{k}\mathcal{A}^{(i)},~q_0\right) =\left\{ \otimes_{i=1}^{k}t_i | \quad  \mathcal{A}^{(i)}\models t_i\triangleright q_{0i} \right\}
\]

Suppose that $t$ is recognised by $\otimes_{i=1}^{k}\mathcal{A}^{(i)}$; thus, there is a labelling of its nodes with the states of $\otimes_{i=1}^{k}\mathcal{A}^{(i)}$, and as such the transitions used for the labelling of its leaves are final. 
Let $t_p$ be a prefix of $t$. Let us show that $t_p$ is not recognised by $\otimes_{i=1}^{k}\mathcal{A}^{(i)}$ using the fact that, any labelling of $t_p$ has at least one leaf node labelled by a state that is not associated to a final transition.
The labels associated to the nodes of $t_p$ are the same as those associated to the nodes of same addresses in $t$ because, $t_p$ is a prefix of $t$ and $\otimes_{i=1}^{k}\mathcal{A}^{(i)}$ is deterministic. $t_p$ is obtained from $t$ by pruning some subtrees of $t$; so, naturally, it has a (non-zero) number of leaf nodes that can be developed to obtain $t$. Let us choose such a node and call it $n_f$. Suppose that it is labelled $p$ and was associated with a state $q_f= (q_{1}, \ldots q_{k})$ when labelling $t$.
The \textit{p-transition} that allowed to recognise $n_f$ is not a final transition. Indeed, $n_f$ has in $t$, |$p$| sons whose labels can be supposed to be the states $q_{f_1}, \ldots, q_{f_{|p|}}$. This means that, according to the labelling process and considering the fact that $\otimes_{i=1}^{k}\mathcal{A}^{(i)}$ is deterministic, the single transition used for the labelling of $n_f$ and of its |$p$| sons is $q_f\stackrel{p}{\rightarrow}\left(q_{f_1},\ldots,q_{f_{|p|}}\right)$ which, is not a final transition. Therefore, $t_p$ is not recognised by $\otimes_{i=1}^{k}\mathcal{A}^{(i)}$.
\end{proof}

\mySubSubSection{Illustration}{}
\label{chap2:sec:consensus-illustration}
\begin{figure}[ht!]
	\noindent
	\makebox[\textwidth]{\includegraphics[scale=0.3]{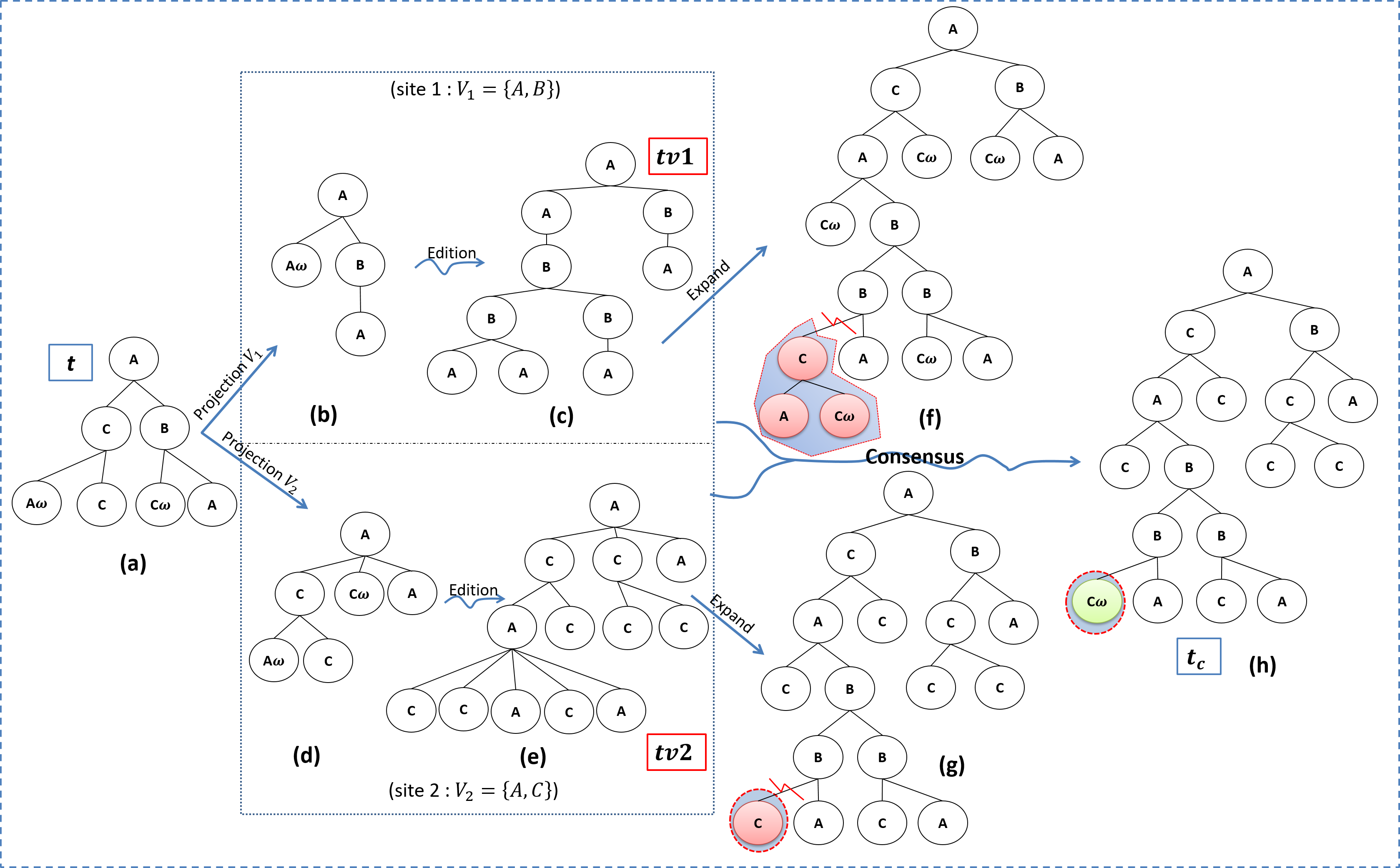}}
	\caption{An edition with conflicts and corresponding consensus.}
	\label{chap2:fig:consensus-workflow}
\end{figure}

Figure \ref{chap2:fig:consensus-workflow} 
is an illustration of an asynchronous cooperative editing process generating partial replicas (fig. \ref{chap2:fig:consensus-workflow}(c) and fig. \ref{chap2:fig:consensus-workflow}(e)) in conflict\footnote{By realising expansions of each of the replicas, we respectively obtain among others, the documents presented by figure \ref{chap2:fig:consensus-workflow}(f) and figure \ref{chap2:fig:consensus-workflow}(g) on which, one can easily observe a conflict highlighted by areas having a blue background.} from the grammar having as productions: 
\[ 
\begin{array}{lll}
				P_{1}:\; A\rightarrow C\; B & \; P_{3}:\; B\rightarrow C\; A\; & \; P_{5}:\; C\rightarrow A\; C  \\
				P_{2}:\; A\rightarrow \varepsilon & \; P_{4}:\; B\rightarrow B\; B\; & \; P_{6}:\; C\rightarrow C\; C\\
				& & P_{7}:\;C\rightarrow \varepsilon
\end{array}
 \]

Initially in the process, two partial replicas (fig. \ref{chap2:fig:consensus-workflow}(b) and fig. \ref{chap2:fig:consensus-workflow}(d)) are obtained by projections of the global document (fig. \ref{chap2:fig:consensus-workflow}(a)). After their update (fig. \ref{chap2:fig:consensus-workflow}(c) and fig. \ref{chap2:fig:consensus-workflow}(e)) a synchronisation point is reached and, by applying the approach described in section \ref{chap2:sec:consensus-reconciliation-issue-principle}, a consensus document is found (fig. \ref{chap2:fig:consensus-workflow}(h)). 
More precisely, as detailed below, we associate the automata $\mathcal{A}^{(1)}$ and $\mathcal{A}^{(2)}$ respectively to the updated partial replicas $tv1$ and $tv2$ (fig. \ref{chap2:fig:consensus-workflow}(c) and fig. \ref{chap2:fig:consensus-workflow}(e)), then we build the automaton of consensus $\mathcal{A}_{(sc)}=\mathcal{A}^{(1)}\otimes\mathcal{A}^{(2)}$ by applying the approach described in section \ref{chap2:sec:consensus-calculation} (\textit{construction of the consensus automaton}) and finally, we generate the simplest documents of the consensus (fig. \ref{chap2:fig:consensus-example-trees}) among which is the document in figure \ref{chap2:fig:consensus-workflow}(h).

~

\noindent\textbf{\textit{Linearisation of a structured document}}

To simplify the presentation, we represent in the following, trees by their linearisation in the form of a Dyck word. To do this, we associate a (various) pair of brackets to each grammatical symbol and the linearisation of a tree is obtained by performing a Depth First Search (DFS) of the resulting tree (see fig. \ref{chap2:fig:tree-linearisation}).
\begin{figure}[ht!]
	\noindent
	\makebox[\textwidth]{\includegraphics[scale=0.6]{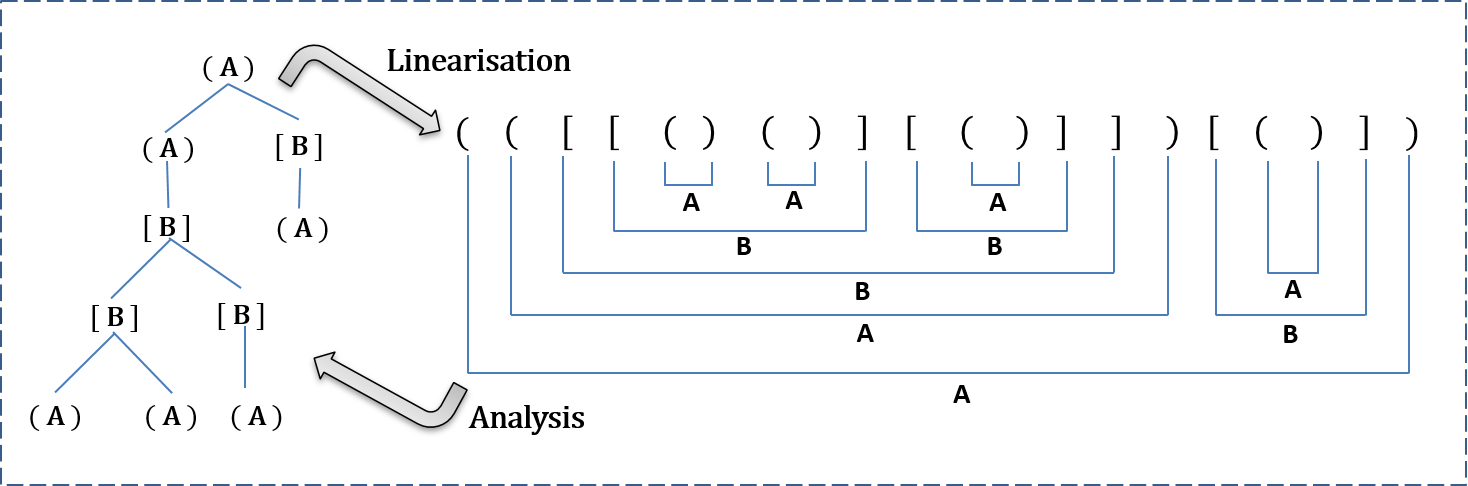}}
	\caption{Linearisation of a tree ($tv1$): the Dyck symbols '$($' and '$)$' (resp. '$[$' and '$]$') have been associated with the grammatical symbol $A$ (resp. $B$).}
	\label{chap2:fig:tree-linearisation}
\end{figure}

~

\noindent\textbf{\textit{The transition schemas for the view $\{A, B\}$}}

A list of trees (forest) is represented by the concatenation of their linearisation. We use the opening parenthesis '(' and the closing one ')' to represent Dyck symbols associated with the visible symbol $A$, and the opening bracket '[' and the closing one ']' to represent those associated with the visible symbol $B$. Each transition of the automata associated to partial replicas according to the view $\{A, B\}$ is conform to one of the transition schemas\footnote{We do not represent the whole set of transition schemas in this example; only the useful subset for reconciliation of closed documents is shown here because the documents to reconcile in this example are all closed (has no buds). To consider buds, one should, for each visible sort $X$, associate a new pair of Dyck symbols to the bud of type $X$ then, derive the new schemas.} in table \ref{chap2:table:trans-schem-a-b}.
\begin{table}[ht]
	\centering
	\caption{The transition schemas for the view $\{A, B\}$.}
	\label{chap2:table:trans-schem-a-b}
	\begin{tabular}[t]{lcll}
	$\langle A,w_{1} \rangle$ & $\longrightarrow$ & $(P_{1}, [\langle C,u \rangle, \langle B,v \rangle])$ & if $w_{1}=u[v]$ \\
	
	$\langle A,w_{2} \rangle$ & $\longrightarrow$ & $(P_{2}, [\textit{ }])$ & if $w_{2}=\epsilon$\\
	
	$\langle B,w_{3} \rangle$ & $\longrightarrow$ & $(P_{3}, [\langle C,u \rangle, \langle A,v \rangle])$ & if $w_{3}=u(v)$\\
	
	$\langle B,w_{4} \rangle$ & $\longrightarrow$ & $(P_{4}, [\langle B,u \rangle, \langle B,v \rangle])$ & if $w_{4}=[u][v]$\\
	
	$\langle C,w_{5} \rangle$ & $\longrightarrow$ & $(P_{5}, [\langle A,u \rangle, \langle C,v \rangle])$ & if $w_{5}=(u)v$\\
	
	$\langle C,w_{6} \rangle$ & $\longrightarrow$ & $(P_{6}, [\langle C,u \rangle, \langle C,v \rangle])$ & if $w_{6}=uv\neq\epsilon$\\
	
	$\langle C,w_{7} \rangle$ & $\longrightarrow$ & $(C_\omega,[\,])$ & if $w_{7}=\epsilon$ \\
	\end{tabular}
\end{table}
These schemas are obtained from the grammar productions \cite{badouelTchoupeCmcs}, and the pairs $\langle X,w_{i} \rangle$ are states where $X$ is a grammatical symbol and $w_{i}$ a forest encoded in the Dyck language. The first schema for example, states that the AST generated from the state $\langle A,w_{1} \rangle$ are those obtained using the production $P_{1}$ to create a tree of the form $P_{1}[t_{1}, t_{2}]$; $t_{1}$ and $t_{2}$ being generated respectively from the states $\langle C,u \rangle$ and $\langle B,v \rangle$ such that $w_{1}=u[v]$. The state $\langle C,w_{7} \rangle$ with $w_{7}=\epsilon$ being an exit state, the rule $\langle C,w_{7} \rangle$ $\longrightarrow$ $(C_{\omega}, [\,])$ linked to the production $P_{7}$ states that, the AST generated from the state $\langle C,w_{7} \rangle$ is reduced to a bud of type $C$ ($C$ is the symbol located at the left hand side of $P_{7}$).

~

\noindent\textbf{\textit{Construction of the automaton $\mathcal{A}^{(1)}$ associated to $tv1$}}

Having associated Dyck symbols '(' and ')' (resp. '[' and ']') to the grammatical symbol $A$ (resp. $B$), the linearisation of the partial replica $tv1$ (fig. \ref{chap2:fig:consensus-workflow}(c)) gives $(([[()()][()]])[()])$. As $A$ is the axiom of the grammar, the initial state of the automaton $\mathcal{A}^{(1)}$ is $q_{0}^{1}=\langle A,([[()()][()]])[()] \rangle$. When considering only the  states accessible from $q_{0}^{1}$ and by applying the previous schema of transition, we obtain the automaton in table \ref{chap2:table:automaton-a-b}, for the replica $tv1$ (fig. \ref{chap2:fig:consensus-workflow}(c)).
\begin{table}[ht]
	\caption{The tree automaton associated to the replica $tv1$.}
	\label{chap2:table:automaton-a-b}
	\begin{tabular}[t]{|lcp{5.7cm}|lcp{5cm}|}
	\hline
	$q_{0}^{1}$ & $\longrightarrow$ & $(P_{1}, [q_{1}^{1}, q_{2}^{1}])$ & & with & $q_{1}^{1}=\langle C,([[()()][()]]) \rangle$ and $q_{2}^{1}=\langle B,() \rangle$\\
	
	\hline
	$q_{1}^{1}$ & $\longrightarrow$ & $(P_{5}, [q_{3}^{1}, q_{4}^{1}])$ & & with & $q_{3}^{1}=\langle A,[[()()][()]] \rangle$ and $q_{4}^{1}=\langle C,\epsilon \rangle$\\
	\hline
	
	$q_{1}^{1}$ & $\longrightarrow$ & $(P_{6}, [q_{4}^{1}, q_{1}^{1}])$ | $(P_{6}, [q_{1}^{1}, q_{4}^{1}])$ & & &\\
	\hline
	
	$q_{2}^{1}$ & $\longrightarrow$ & $(P_{3}, [q_{4}^{1}, q_{5}^{1}])$ & & with & $q_{5}^{1}=\langle A,\epsilon \rangle$\\
	\hline
	
	$q_{3}^{1}$ & $\longrightarrow$ & $(P_{1}, [q_{4}^{1}, q_{6}^{1}])$ & & with & $q_{6}^{1}=\langle B,[()()][()] \rangle$\\
	\hline
	
	$q_{4}^{1}$ & $\longrightarrow$ & $(C_\omega,[\,])$ & & &\\
	\hline
	
	$q_{5}^{1}$ & $\longrightarrow$ & $(P_{2}, [\,])$ & & &\\
	\hline
	
	$q_{6}^{1}$ & $\longrightarrow$ & $(P_{4}, [q_{7}^{1}, q_{2}^{1}])$ & & with & $q_{7}^{1}=\langle B,()() \rangle$\\
	\hline
	
	$q_{7}^{1}$ & $\longrightarrow$ & $(P_{3}, [q_{8}^{1}, q_{5}^{1}])$ & & with & $q_{8}^{1}=\langle C,() \rangle$\\
	\hline
	
	$q_{8}^{1}$ & $\longrightarrow$ & $(P_{5}, [q_{5}^{1}, q_{4}^{1}])$ & & &\\
	\hline
	
	$q_{8}^{1}$ & $\longrightarrow$ & $(P_{6}, [q_{8}^{1}, q_{4}^{1}])$ | $(P_{6}, [q_{4}^{1}, q_{8}^{1}])$ & & &\\
	\hline
	\end{tabular}
\end{table}
The state $q_{4}^{1}=\langle C,\epsilon \rangle$ is the only exit state of  $\mathcal{A}^{(1)}$.
It is easy to verify that the document of figure \ref{chap2:fig:consensus-workflow}(f) resulting from the reverse projection of $tv1$, belongs to the language accepted by the automaton $\mathcal{A}^{(1)}$.

~

\noindent\textbf{\textit{Construction of the automaton $\mathcal{A}^{(2)}$ associated to $tv2$}}

As before, by associating to the grammatical symbol $C$ (resp. $A$) the Dyck symbols '[' and ']' (resp. '(' and ')'), we obtain the transition schemas for the automata associated to the partial replicas according to the view $\{A, C\}$ (see table \ref{chap2:table:trans-schem-a-c}).
\begin{table}[ht]
	\centering
	\caption{The transition schemas for the view $\{A, C\}$.}
	\label{chap2:table:trans-schem-a-c}
	\begin{tabular}[t]{lcll}
	
	$\langle A,w_{1} \rangle$ & $\longrightarrow$ & $(P_{1}, [\langle C,u \rangle, \langle B,v \rangle])$ & if $w_{1}=[u]v$\\
	
	$\langle A,w_{2} \rangle$ & $\longrightarrow$ & $(P_{2}, [\textit{ }])$ & if $w_{2}=\epsilon$\\
	$\langle B,w_{3} \rangle$ & $\longrightarrow$ & $(P_{3}, [\langle C,u \rangle, \langle A,v \rangle])$ & if $w_{3}=[u](v)$\\
	
	$\langle B,w_{4} \rangle$ & $\longrightarrow$ & $(P_{4}, [\langle B,u \rangle, \langle B,v \rangle])$ & if $w_{4}=uv\neq\epsilon$\\
	$\langle B,w_{5} \rangle$ & $\longrightarrow$ & $(B_\omega,[\,])$ & if $w_{5}=\epsilon$\\
	
	$\langle C,w_{6} \rangle$ & $\longrightarrow$ & $(P_{5}, [\langle A,u \rangle, \langle C,v \rangle])$ & if $w_{6}=(u)[v]$\\

	$\langle C,w_{7} \rangle$ & $\longrightarrow$ & $(P_{6}, [\langle C,u \rangle, \langle C,v \rangle])$ & if $w_{7}=[u][v]$\\
	
	$\langle C,w_{8} \rangle$ & $\longrightarrow$ & $(P_{7},[\textit{ }])$ & if $w_{8}=\epsilon$\\
	\end{tabular}
\end{table}

The linearisation of the partial replica $tv2$ (fig. \ref{chap2:fig:consensus-workflow}(e)) is $([([\textit{ }][\textit{ }]()[\textit{ }]())[\textit{ }]][[\textit{ }][\textit{ }]]())$. The automaton $\mathcal{A}^{(2)}$ associated to this replica has as initial state $q_{0}^{2}=\langle A,[([\textit{ }][\textit{ }]()[\textit{ }]())[\textit{ }]][[\textit{ }][\textit{ }]]() \rangle$ and its transitions are the ones in table \ref{chap2:automaton-a-c}. 
\begin{table}[ht]
	\caption{The tree automaton associated to the replica $tv2$.}
	\label{chap2:automaton-a-c}
	\begin{tabular}[t]{|lcp{5.7cm}|lcp{5cm}|}
	\hline
	$q_{0}^{2}$ & $\longrightarrow$ & $(P_{1}, [q_{1}^{2}, q_{2}^{2}])$ & & with & $q_{1}^{2}=\langle C,([\,][\,]()[\,]())[\,] \rangle$ and $q_{2}^{2}=\langle B,[[\,][\,]]() \rangle$\\
	\hline
	
	$q_{1}^{2}$ & $\longrightarrow$ & $(P_{5}, [q_{3}^{2}, q_{4}^{2}])$ & & with & $q_{3}^{2}=\langle A,[\,][\,]()[\,]() \rangle$ and $q_{4}^{2}=\langle C,\epsilon \rangle$\\
	\hline
	
	$q_{2}^{2}$ & $\longrightarrow$ & $(P_{3}, [q_{5}^{2}, q_{6}^{2}])$ & & with & $q_{5}^{2}=\langle C,[\,][\,] \rangle$ and $q_{6}^{2}=\langle A,\epsilon \rangle$\\
	\hline
	
	$q_{3}^{2}$ & $\longrightarrow$ & $(P_{1}, [q_{4}^{2}, q_{7}^{2}])$ & & with & $q_{7}^{2}=\langle B,[\,]()[\,]() \rangle$\\
	\hline
	
	$q_{4}^{2}$ & $\longrightarrow$ & $(P_{7}, [\,])$ & & &\\
	
	\hline
	
	$q_{5}^{2}$ & $\longrightarrow$ & $(P_{6}, [q_{4}^{2}, q_{4}^{2}])$ & & &\\
	
	\hline
	
	$q_{6}^{2}$ & $\longrightarrow$ & $(P_{2}, [\,])$ & & &\\
	
	\hline
	
	$q_{7}^{2}$ & $\longrightarrow$ & $(P_{4}, [q_{8}^{2}, q_{7}^{2}])$ | $(P_{4}, [q_{9}^{2}, q_{10}^{2}])$ | $(P_{4}, [q_{11}^{2}, q_{11}^{2}])$ | $(P_{4}, [q_{12}^{2}, q_{13}^{2}])$ | $(P_{4}, [q_{7}^{2}, q_{8}^{2}])$ & & with & $q_{8}^{2}=\langle B,\epsilon \rangle$, $q_{9}^{2}=\langle B,[\,] \rangle$, $q_{10}^{2}=\langle B,()[\,]() \rangle$, $q_{11}^{2}=\langle B,[\,]() \rangle$, $q_{12}^{2}=\langle B,[\,]()[\,] \rangle$ and $q_{13}^{2}=\langle B,() \rangle$\\
	
	\hline
	
	$q_{8}^{2}$ & $\longrightarrow$ & $(B_\omega,[\,])$ & & &\\
	
	\hline
	
	$q_{9}^{2}$ & $\longrightarrow$ & $(P_{4}, [q_{8}^{2}, q_{9}^{2}])$ | $(P_{4}, [q_{9}^{2}, q_{8}^{2}])$ & & &\\
	
	\hline
	
	$q_{10}^{2}$ & $\longrightarrow$ & $(P_{4}, [q_{8}^{2}, q_{10}^{2}])$ | $(P_{4}, [q_{13}^{2}, q_{11}^{2}])$ | $(P_{4}, [q_{14}^{2}, q_{13}^{2}])$ | $(P_{4}, [q_{10}^{2}, q_{8}^{2}])$ & & with & $q_{14}^{2}=\langle B,()[\,]    \rangle$\\
	
	\hline
	
	$q_{11}^{2}$ & $\longrightarrow$ & $(P_{3}, [q_{4}^{2}, q_{6}^{2}])$ & & &\\
	
	\hline
	
	$q_{12}^{2}$ & $\longrightarrow$ & $(P_{4}, [q_{8}^{2}, q_{12}^{2}])$ | $(P_{4}, [q_{9}^{2}, q_{14}^{2}])$ | $(P_{4}, [q_{11}^{2}, q_{9}^{2}])$ | $(P_{4}, [q_{12}^{2}, q_{8}^{2}])$ & & &\\
	
	\hline
	
	$q_{13}^{2}$ & $\longrightarrow$ & $(P_{4}, [q_{8}^{2}, q_{13}^{2}])$ | $(P_{4}, [q_{13}^{2}, q_{8}^{2}])$ & & &\\
	
	\hline
	
	$q_{14}^{2}$ & $\longrightarrow$ & $(P_{4}, [q_{8}^{2}, q_{14}^{2}])$ | $(P_{4}, [q_{13}^{2}, q_{9}^{2}])$ | $(P_{4}, [q_{14}^{2}, q_{8}^{2}])$ & & &\\
	\hline
	\end{tabular}
\end{table}
The state $q_{8}^{2}=\langle B,\epsilon \rangle$ is the only exit state of the automaton $\mathcal{A}^{(2)}$.

~

\noindent\textbf{\textit{Construction of the consensus automaton $\mathcal{A}_{(sc)}$}}

By application of synchronous product of several tree automata described in section \ref{chap2:sec:consensus-calculation} (\textit{construction of the consensus automaton}) to the automata $\mathcal{A}^{(1)}$ and $\mathcal{A}^{(2)}$, the consensual automaton $\mathcal{A}_{(sc)}=\mathcal{A}^{(1)}\otimes\mathcal{A}^{(2)}$ has $q_{0}=(q_{0}^{1}, q_{0}^{2})$ as initial state. $\mathcal{A}^{(1)}$ has a transition from $q_{0}^{1}$ to $[q_{1}^{1}, q_{2}^{1}]$ labelled $P_{1}$. Similarly, $\mathcal{A}^{(2)}$ has a transition from $q_{0}^{2}$ to $[q_{1}^{2}, q_{2}^{2}]$ labelled $P_{1}$. So, we have in $\mathcal{A}_{(sc)}$, a transition labelled $P_{1}$ for accessing states $[q_{1}=(q_{1}^{1}, q_{1}^{2}), q_{2}=(q_{2}^{1}, q_{2}^{2})]$ from the initial state $q_{0}=(q_{0}^{1}, q_{0}^{2})$. Following this principle, we construct the consensual automaton in table \ref{chap2:table:automaton-a-b-c}. 
\begin{table}[ht]
	\caption{The consensual tree automaton.}
	\label{chap2:table:automaton-a-b-c}
	\begin{tabular}[t]{|lcp{5.7cm}|lcp{5cm}|}
	\hline
	& & & & & $q_{0}=(q_{0}^{1}, q_{0}^{2})$\\
	\hline
	$q_{0}$ & $\longrightarrow$ & $(P_{1}, [q_{1}, q_{2}])$ & & with & $q_{1}=(q_{1}^{1}, q_{1}^{2})$ and $q_{2}=(q_{2}^{1}, q_{2}^{2})$\\
	\hline
	
	$q_{1}$ & $\longrightarrow$ & $(P_{5}, [q_{3}, q_{4}])$ & & with & $q_{3}=(q_{3}^{1}, q_{3}^{2})$ and $q_{4}=(q_{4}^{1}, q_{4}^{2})$\\
	
	\hline
	
	$q_{2}$ & $\longrightarrow$ & $(P_{3}, [q_{5}, q_{6}])$ & & with & $q_{5}=(q_{4}^{1}, q_{5}^{2})$ and $q_{6}=(q_{5}^{1}, q_{6}^{2})$\\
	
	\hline
	
	$q_{3}$ & $\longrightarrow$ & $(P_{1}, [q_{4}, q_{7}])$ & & with & $q_{7}=(q_{6}^{1}, q_{7}^{2})$\\
	
	\hline
	
	$q_{4}$ & $\longrightarrow$ & $(P_{7}, [\,])$ & & &\\
	
	\hline
	
	$q_{5}$ & $\longrightarrow$ & $(P_{6}, [q_{8}, q_{8}])$ & & with & $q_8=(q_{s1}, q^2_4)$ and $q_{s1}=\langle Open~C,[\,] \rangle $\\
	
	\hline
	
	$q_{6}$ & $\longrightarrow$ & $(P_{2}, [\,])$ & & &\\
	
	\hline
	
	$q_{7}$ & $\longrightarrow$ & $(P_{4}, [q_{9}, q_{10}])$ | $(P_{4}, [q_{11}, q_{12}])$ | $(P_{4}, [q_{13}, q_{14}])$ | $(P_{4}, [q_{15}, q_{16}])$ | $(P_{4}, [q_{17}, q_{18}])$ & & with & $q_{9}=(q_{7}^{1}, q_{8}^{2})$, $q_{10}=(q_{2}^{1}, q_{7}^{2})$, $q_{11}=(q_{7}^{1}, q_{9}^{2})$, $q_{12}=(q_{2}^{1}, q_{10}^{2})$, $q_{13}=(q_{7}^{1}, q_{11}^{2})$, $q_{14}=(q_{2}^{1}, q_{11}^{2})$, $q_{15}=(q_{7}^{1}, q_{12}^{2})$, $q_{16}=(q_{2}^{1}, q_{13}^{2})$, $q_{17}=(q_{7}^{1}, q_{7}^{2})$ and $q_{18}=(q_{2}^{1}, q_{8}^{2})$\\
	\hline
	
	$q_{8}$ & $\longrightarrow$ & $(P_{7}, [\,])$ & & &\\
	
	\hline
	
	$q_{9}$ & $\longrightarrow$ & $(P_{3}, [q_{19}, q_{20}])$ & & with & $q_{19}=(q_{8}^{1}, q_{s1})$ and $q_{20}=(q_{5}^{1}, q_{s2})$, $q_{s2}=\langle Open~A,[\,] \rangle $\\
	
	\hline
	
	$q_{13}$ & $\longrightarrow$ & $(P_{3}, [q_{21}, q_{6}])$ & & with & $q_{21}=(q_{8}^{1}, q_{4}^{2})$\\
	
	\hline
	
	$q_{14}$ & $\longrightarrow$ & $(P_{3}, [q_{4}, q_{6}])$ & & &\\
	
	\hline
	
	$q_{18}$ & $\longrightarrow$ & $(P_{3}, [q_{22}, q_{20}])$ & & with & $q_{22}=(q_{4}^{1}, q_{s1})$\\
	
	\hline
	
	$q_{19}$ & $\longrightarrow$ & $(P_{5}, [q_{20}, q_{22}])$ | $(P_{6}, [q_{19}, q_{22}])$ | $(P_{6}, [q_{22}, q_{19}])$ & & &\\
	\hline
	
	$q_{20}$ & $\longrightarrow$ & $(P_{2}, [\,])$ & & &\\
	\hline
	
	$q_{10}$ & $\longrightarrow$ & $(B_\omega, [\,])$ & & &\\ 
	\hline
	$q_{11}$ & $\longrightarrow$ & $(B_\omega, [\,])$ & & &\\
	\hline
	$q_{12}$ & $\longrightarrow$ & $(B_\omega, [\,])$ & & &\\
	\hline
	$q_{15}$ & $\longrightarrow$ & $(B_\omega, [\,])$ & & &\\
	\hline
	$q_{16}$ & $\longrightarrow$ & $(B_\omega, [\,])$ & & &\\
	\hline
	$q_{17}$ & $\longrightarrow$ & $(B_\omega, [\,])$ & & &\\
	\hline
	$q_{21}$ & $\longrightarrow$ & $(C_\omega, [\,])$ & & &\\
	\hline
	$q_{22}$ & $\longrightarrow$ & $(C_\omega, [\,])$ & & &\\
	\hline
	\end{tabular}
\end{table}

The states $\{q_{10}, q_{11}, q_{12}, q_{15}, q_{16}, q_{17}, q_{21}, q_{22}\}$ are the exit states of the automaton $\mathcal{A}_{(sc)}$. They are states whose composite states are either in conflict (for example $q_{10}=(q_{2}^{1}, q_{7}^{2})$ et $q_{2}^{1} \curlyveeuparrow q_{7}^{2}$), or are all exit states (for example $q_{22}=(q_{4}^{1}, q_{s1}))$.

The use of the function that generates the simplest AST (with buds) of a tree language from its automaton \cite{badouelTchoupeCmcs} on $\mathcal{A}_{(sc)}$, produces \textit{four} AST whose derivation trees (the consensus) are shown in figure \ref{chap2:fig:consensus-example-trees}.
\begin{figure}[ht!]
	\noindent
	\makebox[\textwidth]{\includegraphics[scale=0.3]{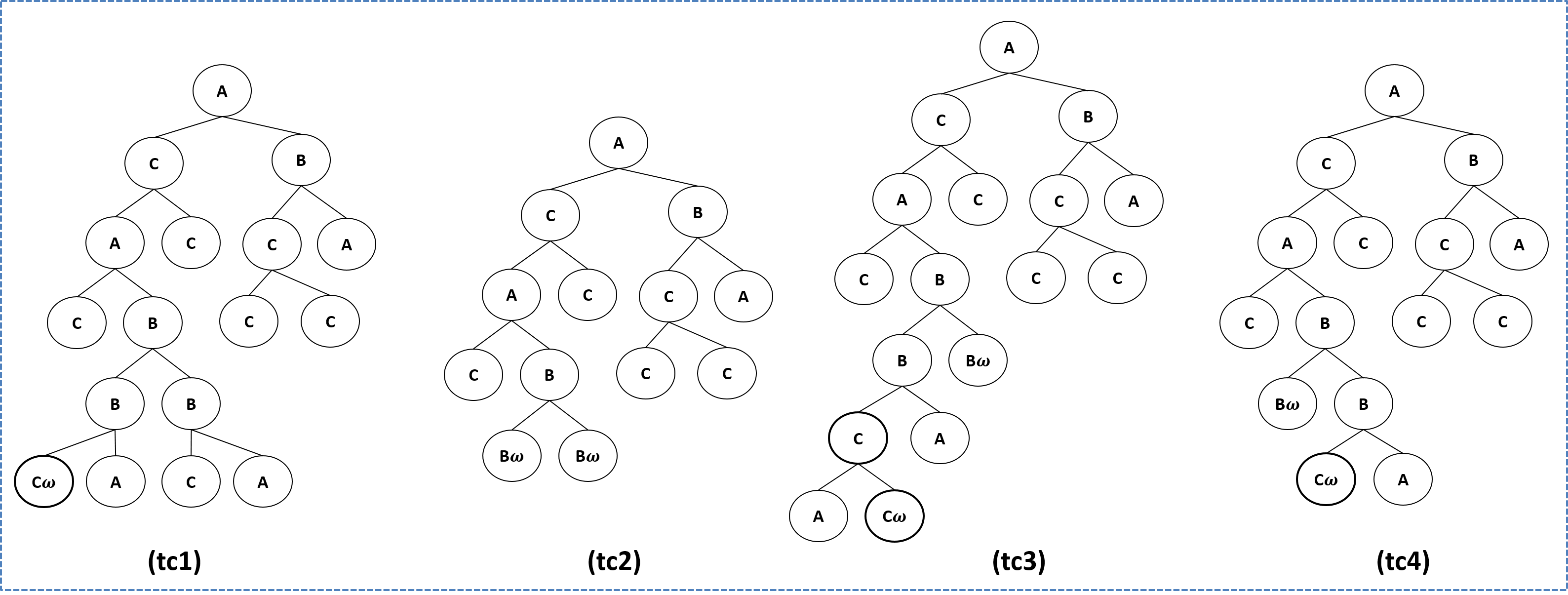}}
	\caption{Consensual trees generated from the automaton $\mathcal{A}_{(sc)}$}
	\label{chap2:fig:consensus-example-trees}
\end{figure}

\mySection{A Software Architecture for Centralised Management of Structured Documents in a Cooperative Editing Workflow}{}
\label{chap2:sec:architecture-cooperative-editing}

In this section, we will focus on the implementation of a system that can support cooperative editing as perceived by Badouel and Tchoup\'e. This effort is motivated by the fact that: 
\begin{enumerate}
\item \textit{This type of editing workflow applies to structured documents}: 
this leads to the fact that, one can locally perform validations in accordance with a local model derived from the global one;
\item \textit{This type of editing workflow is particularly compatible with administrative workflows}: concepts of "view" and partial replica introduced by Badouel and Tchoupé, make that the type of workflow they offer is particularly adapted for the specification of many administrative processes. Consider, for example, the process "tracking a medical record in a health center with the reception and consultation services": the aforesaid record can be modelled as a structured document in which the members of the host service (reception) cannot view and/or modify certain information contained therein; those information, requesting the expertise of the consulting staff for example. Therefore, one can associate views to each of these services. It is left only to specify the medical record's circuit and an editing workflow of the type described in the previous section is obtained;
\item \textit{A generic architectural model describing precisely an approach for the implementation of this type of workflow does not exist}: the only prototype \cite{artTinyCE} which was designed around the concepts handled (view, partial replica, merging, etc.) for this type of workflow, was more of a graphic tool (editor) for the experimentation of concepts and algorithms presented in \cite{badouelTchoupeCmcs}; workflow management is not addressed in it: this tool cannot be used to specify an editing workflow, it does not support routing or storage of artifacts, nothing is done concerning monitoring, etc., yet these concerns are among the most important to be taken care of by a workflow management infrastructure \cite{ima}.
\end{enumerate}

\mySubSection{The Proposed Architecture}{}
\label{chap2:sec:archi-proposed-architecture}

\mySubSubSection{Overall Operations}{}
\label{chap2:sec:archi-overall-operations}
The architecture that we propose is composed of three tiers: some \textit{clients}, a \textit{central server} and several \textit{administration tools}. We consider that, each participant in a given workflow has a client. Initially, the workflow owner (comparable to a deposit owner in Git) connects to the server from his client. He creates his workflow by specifying all necessary informations (the workflow name, the overall grammar, different participants, their rights and their views, the basic document and the workflow's circuit), then triggers the process. Next, participants concerned by the newly created workflow receive an alert message from the system, inviting them to participate. Each participant must therefore connect himself to the server to obtain a partial replica of the workflow model (encoded in a specification file written in a dedicated DSL) and state (his local document model, a partial replica of the initial document, etc.) according to his rights and his view on the given workflow. A given participant performs his duties and submits his local (partial) replica to the central server which performs synchronisations as soon as possible and the process continues (see fig. \ref{chap2:fig:badouel-tchoupe-workflow}) until the end.  For specific needs (authentication, access to corporate data, etc.), clients and server may require the intervention of an administration tool (database, paperwork and many others). These three tiers are interconnected around a middleware as presented in figure \ref{chap2:fig:architecture}.
\begin{figure}[ht!]
	\noindent
	\makebox[\textwidth]{\includegraphics[scale=0.3]{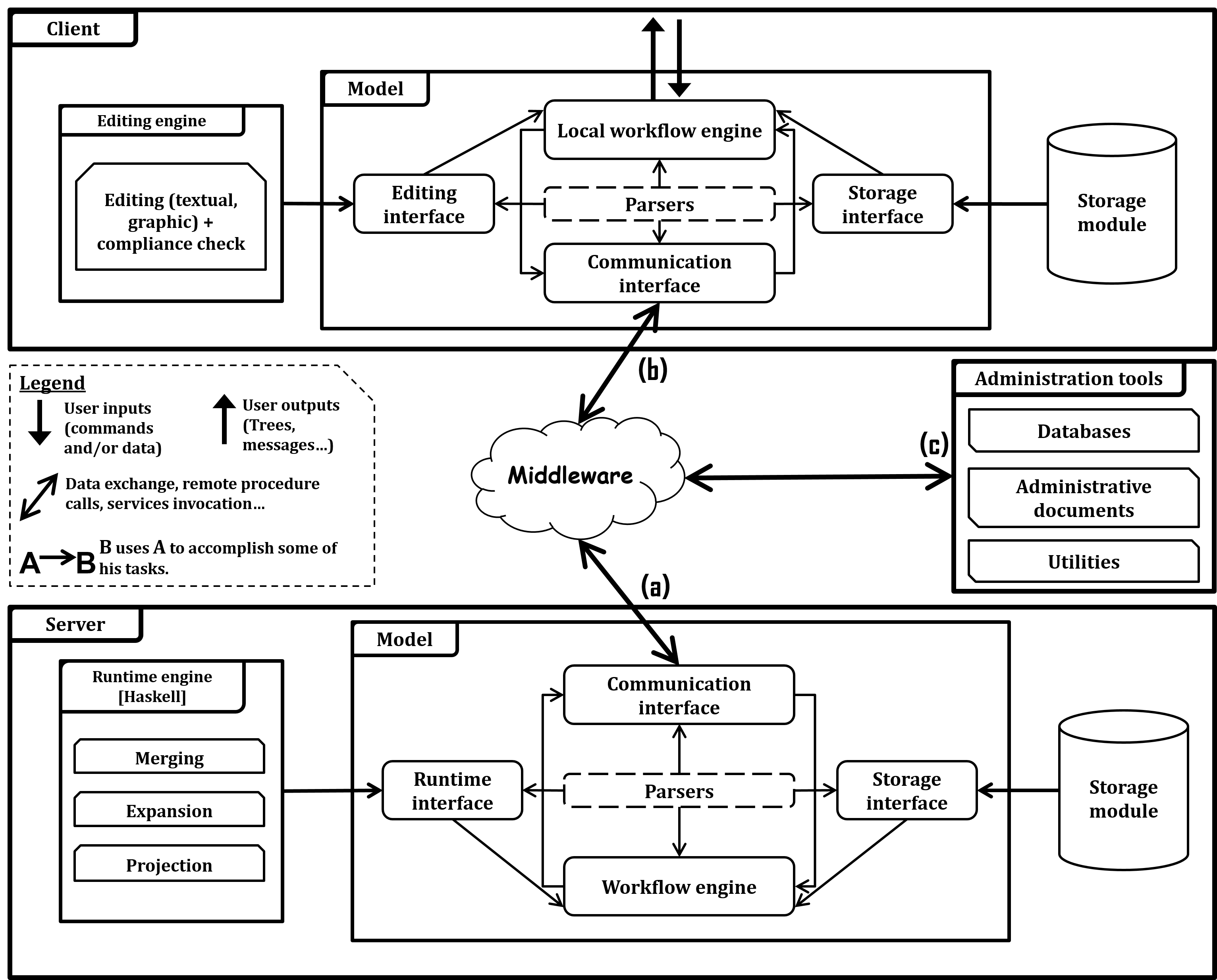}}
	\caption{A software architecture (three-tiers) for centralised management of structured documents' cooperative editing workflows.}
	\label{chap2:fig:architecture}
\end{figure}

\mySubSubSection{Server Architecture}{}
\label{chap2:sec:archi-server-architecture}
The server is responsible for the storage, restoration, execution and monitoring of workflows. Its architecture is based on three basic elements as shown in figure \ref{chap2:fig:architecture}(a) : its \textit{model}, \textit{storage module} and its \textit{runtime engine}.
\begin{enumerate}
\item \textit{The model}: it is the one orchestrating all the tasks supported by the server. It consists of a workflow engine, a set of parsers and three communication interfaces (the interface with the middleware, that with the storage module and the one with the runtime engine).
\item \textit{The storage module}: it is responsible for the storage of workflows. Like CVS, it maintains a main repository for each workflow. The repository space of a given workflow includes its specification file written in a DSL. There are also (global) document versions showing the state of the workflow at given times. These versions of the underlying documents, facilitate the control and monitoring of workflows.
\item \textit{The runtime engine}: it consists of implementations of projection, expansion and consensual merging algorithms. These implementations are used by the workflow engine in the realisation of these tasks. A runtime engine written entirely in Haskell, was proposed in \cite{artTinyCE}. However, it is quite rigid and almost impossible to adapt to the architecture presented here. To this end, we present in section \ref{chap2:sec:archi-tinyce-v2-cross-fertilisation}, a more flexible version of the latter.
\end{enumerate}

\mySubSubSection{Client Architecture}{}
\label{chap2:sec:archi-client-architecture}
The client (figure \ref{chap2:fig:architecture}(b)) is also based on three entities: a \textit{model}, an \textit{editing engine} and a \textit{storage module}. The model is responsible for organising and controlling the execution of tasks and user commands. For each new local workflow, the model generates an editing environment which is used by the editing engine to provide conventional facilities of structured document editors (compliance check, syntax highlighting, graphical editing of documents presentations, etc.). Each workflow is locally represented by a specification file and by one structured document representing the current perception of the overall workflow from the current local site. When reaching synchronisation phases, the local structured document is forwarded to the server site, where it is merged with others, in one structured document representing the current state of the overall workflow : it is therefore, a coordination support between the workflow engines of the client and of the server.

\mySubSubSection{The Middleware}{}
\label{chap2:sec:archi-middleware}
The middleware is responsible for the interaction between different tiers of our architecture. It must be designed so that, the coupling between these tiers is as weak as possible. One can for this purpose, consider a SOA in which:
\begin{itemize}
\item Our clients are service clients;
\item The server is a service provider for clients and a client of services offered by the administration tools;
\item The administration tools are service providers.
\end{itemize}
With such an architecture, we can guarantee the independence of each tier and thus, an easier maintenance.

\mySubSection{TinyCE v2}{}
\label{chap2:sec:archi-tinyce-v2}

\mySubSubSection{Presentation of TinyCE v2}{}
\label{chap2:sec:archi-tinyce-v2-presentation}
Due to its technical nature and to the number of technologies it needs for its instantiation, the architecture presented above has not yet been fully implemented. However, many of its components have already been implemented and tested in a test project called TinyCE v2\footnote{TinyCE v2 is a more advanced version of TinyCE \cite{artTinyCE}.} (a Tiny Cooperative Editor version 2).

TinyCE v2 is an editor prototype providing graphic and cooperative editing of the abstract structure of structured documents. It is used following a networked client-server model. Its user interface offers to the user, facilities for the creation of workflows (documents, grammars, actors and views (see fig. \ref{chap2:fig:workflow-creation})), edition and validation of partial replicas (see fig. \ref{chap2:fig:user-connexion}). 
Moreover, this interface also offers the functionality to experiment the concepts of projection, expansion and consensual merging (see fig. \ref{chap2:fig:workflow-merging}). TinyCE v2 is designed using Java and Haskell languages. It offers several implementations of our architecture concepts namely: parsers, storage modules, server's runtime engine, workflow engines and communication interfaces.
\begin{figure}[ht!]
	\noindent
	\makebox[\textwidth]{\includegraphics[scale=0.85]{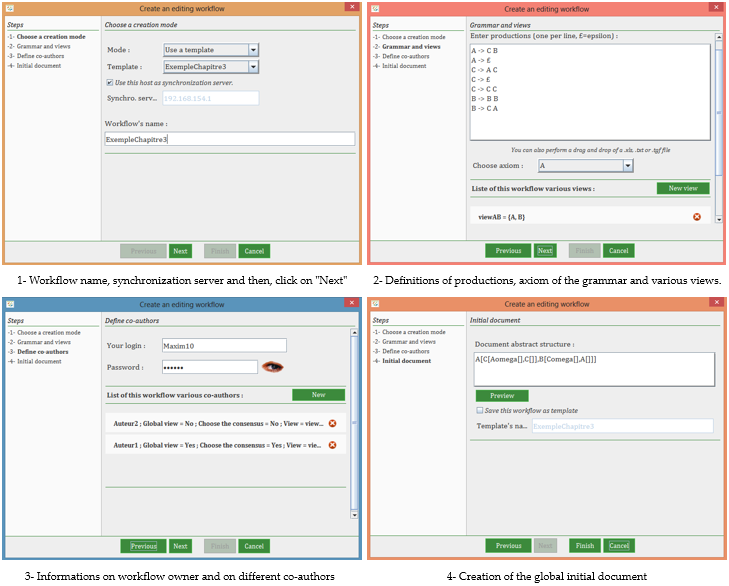}}
	\caption{Some screenshots showing the creation process of a cooperative editing workflow in TinyCE v2.}
	\label{chap2:fig:workflow-creation}
\end{figure}

\begin{figure}[ht!]
	\noindent
	\makebox[\textwidth]{\includegraphics[scale=0.78]{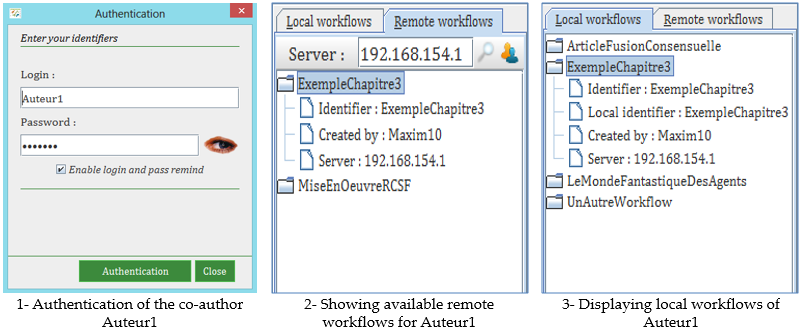}}
	\caption{Some screenshots of TinyCE v2 showing the authentication window of a co-author (Auteur1) as well as those displaying the various local and remote workflows in which he is implicated.}
	\label{chap2:fig:user-connexion}
\end{figure}

\begin{figure}[ht!]
	\noindent
	\makebox[\textwidth]{\includegraphics[scale=0.55]{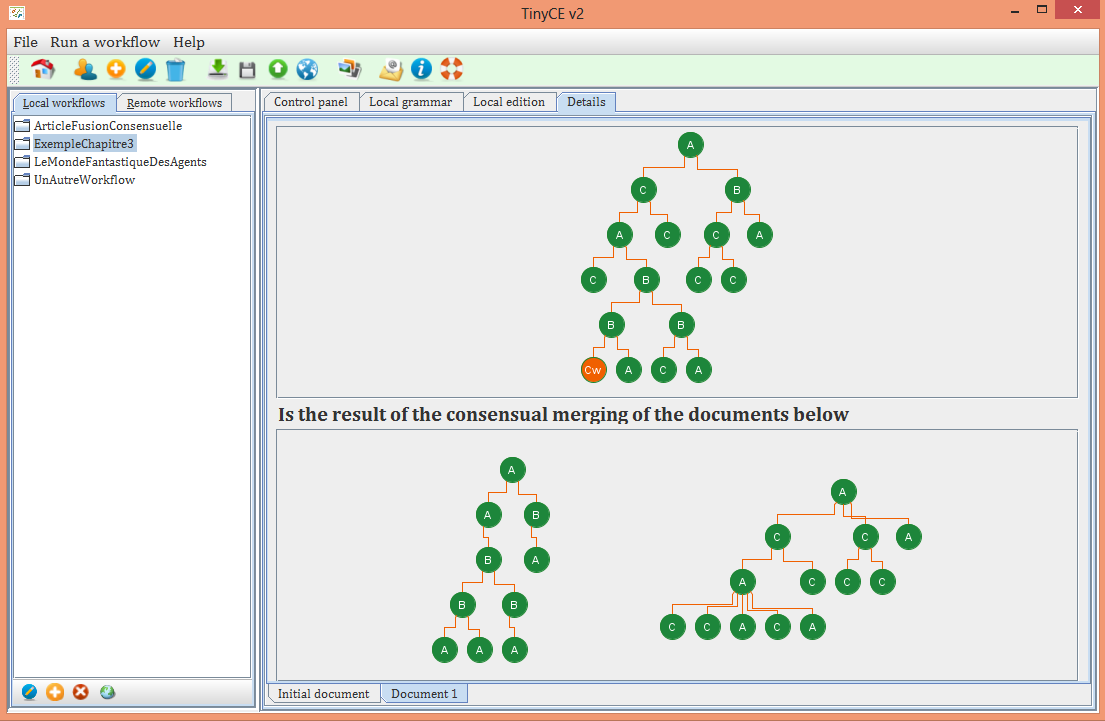}}
	\caption{An illustration of consensual merging in TinyCE v2.}
	\label{chap2:fig:workflow-merging}
\end{figure}

\mySubSubSection{Java-Haskell Cross-Fertilisation in TinyCE v2}{}
\label{chap2:sec:archi-tinyce-v2-cross-fertilisation}
As in \cite{artTinyCE}, the runtime engine of TinyCE v2 exploits the possibility offered by Java, to run an "external program".  Indeed, we designed an interface of TinyCE v2 (runtime interface) capable of launching a Haskell interpreter (GHCi - Glasgow Haskell Compiler interactive\footnote{Official website of GHC: \url{http://www.haskell.org/ghc/}, visited the 04/04/2020.} - in this case) and make it execute various commands. When creating a workflow, TinyCE v2 generates a Haskell program file (.hs), containing data types and functions necessary to achieve the operations of projection, expansion and consensual merging on the structured document representing the state of that workflow. In this way, we considerably reduce the use frequency of parsers presented in \cite{artTinyCE}. The functions are more open to changes as they are contained in a text file and not in a compiled program as in \cite{artTinyCE}. In fact, the main differences between our Java-Haskell cross-fertilisation approach and the one of \cite{artTinyCE} are almost the same that drive the debates on interpreted and compiled languages; our approach is likened to interpreted languages and that of \cite{artTinyCE}, to compiled languages. So, even though our approach can present security risks (that can be addressed using PKI (Public Key Infrastructure) and standard encryption systems like AES (Advanced Encryption Standard), RSA (Rivest Shamir Adleman), etc.), it has the advantage of being portable and easier to maintain. 

\mySection{Summary}{}
\label{chap2:sec:conclusion}
This chapter was devoted to the study of asynchronous cooperative editing concepts in general, and to the study of notions related to the model introduced by Badouel and Tchoup\'e for structured documents cooperative editing \cite{badouelTchoupeCmcs}. In order to be more efficient, we have chosen to study these models with new contributions, including: an algorithm for reconciling potentially conflicting partial replicas of a structured document, and a generic architecture for designing workflow systems that can be modelled as structured cooperative editing systems in the sense of Badouel and Tchoup\'e. The correction of the proposed algorithms has been demonstrated. Implementations of these have been made, notably in TinyCE v2, the cooperative editor prototype implemented in Java and Haskell according to the new architecture proposed in this chapter.

The aim of these studies was to familiarise us with some key mathematical tools, in particular: grammars, views, projection/replication algorithms, fusion/reconciliation algorithms, etc. These mathematical tools are proving to be effective in the implementation of artifact-centric BPM models as demonstrated by Badouel et al. with the AWGAG model \cite{badouel14, badouel2015active}. For this purpose, they will form the foundation of the new artifact-centric model for the completely decentralised design and execution of administrative workflows (assimilated to the asynchronous cooperative edition of mobile structured documents) on a SOA that we propose in the next chapter.

	\mathversion{normal}
	\mathversion{normal2}
	\myChapter{A Choreography-like Workflow Design and Distributed Execution Framework Based on Structured Mobile Artifacts' Cooperative Editing}{}
\label{chap3:choreography-workflow-design-execution}
\myMiniToc{section}{Contents}
\mySection{Introduction}{}
\label{chap3:sec:introduction}

As outlined in chapter \ref{chap1:artifact-centric-bpm}, section \ref{chap1:sec:artifact-centric-bpm-approach}, the execution of a given business process according to the artifact-centric approach can be assimilated to the cooperative editing of documents. Indeed, in IBM's work \cite{nigam2003business}, an artifact (also called "\textit{adaptive document}") is considered as a document that conveys all the information concerning a particular case of execution of a given process, from its inception into the system to its termination. In particular, this information provides details on the execution status of the case as well as on its lifecycle (a representation of the possible evolutions of this status). 
To do this, during the execution of a given process, the actions carried out by each of the stakeholders (agents) have the effect of updating (\textit{editing}) the artifacts involved in that execution. If the process is cooperative, the artifact representing it will be updated by several agents: it is said to be cooperatively edited (\textit{cooperative editing}). 

In this chapter, we propose a new artifact-centric approach to BPM. In this one, artifacts are seen as structured documents (annotated trees) that can be exchanged between the different agents involved in the execution of a given business process particular case (it is in this sense that they are said to be mobile); during their life, they are edited appropriately to make the system converge towards the achievement of one of the considered process's business goals. The approach presented in this chapter is based on the asynchronous structured cooperative editing techniques proposed in the work of Badouel et al. \cite{badouelTchoupeCmcs, theseTchoupe, tchoupeAtemkeng2} and extended in chapter \ref{chap2:structured-editing-artifact-type} \cite{tchoupeZekeng2016, tchoupeZekeng2017, zekengTchoupe2018} of this manuscript.  

\noindent The major contributions of this chapter are as follows:	
\begin{enumerate}
\item The proposal of another tree-based model of "business artifact", which makes it possible to better assimilate them to structured documents edited cooperatively;

\item The proposal of a choreography-oriented artifact-centric execution model in which agents execute the same and unique update (editing of artifact upon receipt) and diffusion (dissemination of updates) protocol;

\item The proposal of a prototype of a distributed system allowing to fully experiment the approach investigated in this chapter.
\end{enumerate}
  
In the rest of this chapter, in section \ref{chap3:sec:model-overview}, we present an overview of the studied artifact-centric model and the distributed execution of the peer-review process used as a running example in this manuscript. In section \ref{chap3:sec:modelling-artifacts}, we introduce and formally define the concepts of artifact and artifact-type (GMWf). We then present in section \ref{chap3:sec:agents-and-choreography}, the internal structure (architecture and features) of an agent, the notion of accreditation as well as the new artifact-centric and completely decentralised execution model of administrative processes that we propose. Illustrations of section \ref{chap3:sec:agents-and-choreography}'s algorithms are given in section \ref{chap3:sec:choreograpghy-illustration} in order to facilitate their understanding. In the same vein, a prototype system allowing to fully experiment the approach investigated in this chapter is presented in section \ref{chap3:sec:p2ptinywfms}. In section \ref{chap3:sec:discussion}, we discuss the obtained results as well as a positioning of these results in relation to those in the literature. The section \ref{chap3:sec:conclusion} is devoted to the conclusion.

\mySection{Overview of the Artifact-Centric Model Presented in this Thesis}{}
\label{chap3:sec:model-overview}
In this section, a brief description of the artifact-centric model studied in this chapter is given. Furthermore, an overview of the distributed execution of the peer-review process using this model is presented.

\mySubSection{A Description of the Artifact-Centric Model Presented in this Thesis}{}
\label{chap3:sec:model-description}
We outlined in this chapter's introduction that the presented artifact-centric model is based on the asynchronous structured cooperative editing techniques proposed in the work of Badouel et al. As in these works, an artifact is represented by a tree containing "\textit{open nodes}" on some of its leaves, materialising the tasks to be executed or being executed and, an attributed grammar called the \textit{Grammatical Model of Workflow} (GMWf) is used as \textit{artifact type}. The symbols of a given GMWf represent the process tasks and each of its productions represents a scheduling of a subset of these tasks; intuitively, a production given by its left and right hand sides, specifies how the task on the left hand side precedes (must be executed before) those on the right hand side (see sec. \ref{chap3:sec:artifacts-structure}).
When a task is executed on a given site, the corresponding open node in the artifact is closed accordingly (it is said to be \textit{closed}) and the data produced during execution are filled in its attributes; then, one of the GMWf's production having the considered task as left hand side is chosen by the local actor to expand the open node into a subtree highlighting in the form of new open nodes, the new tasks to be executed: this is what editing an artifact consists of.
\begin{figure}[ht!]
	\noindent
	\makebox[\textwidth]{\includegraphics[scale=0.27]{./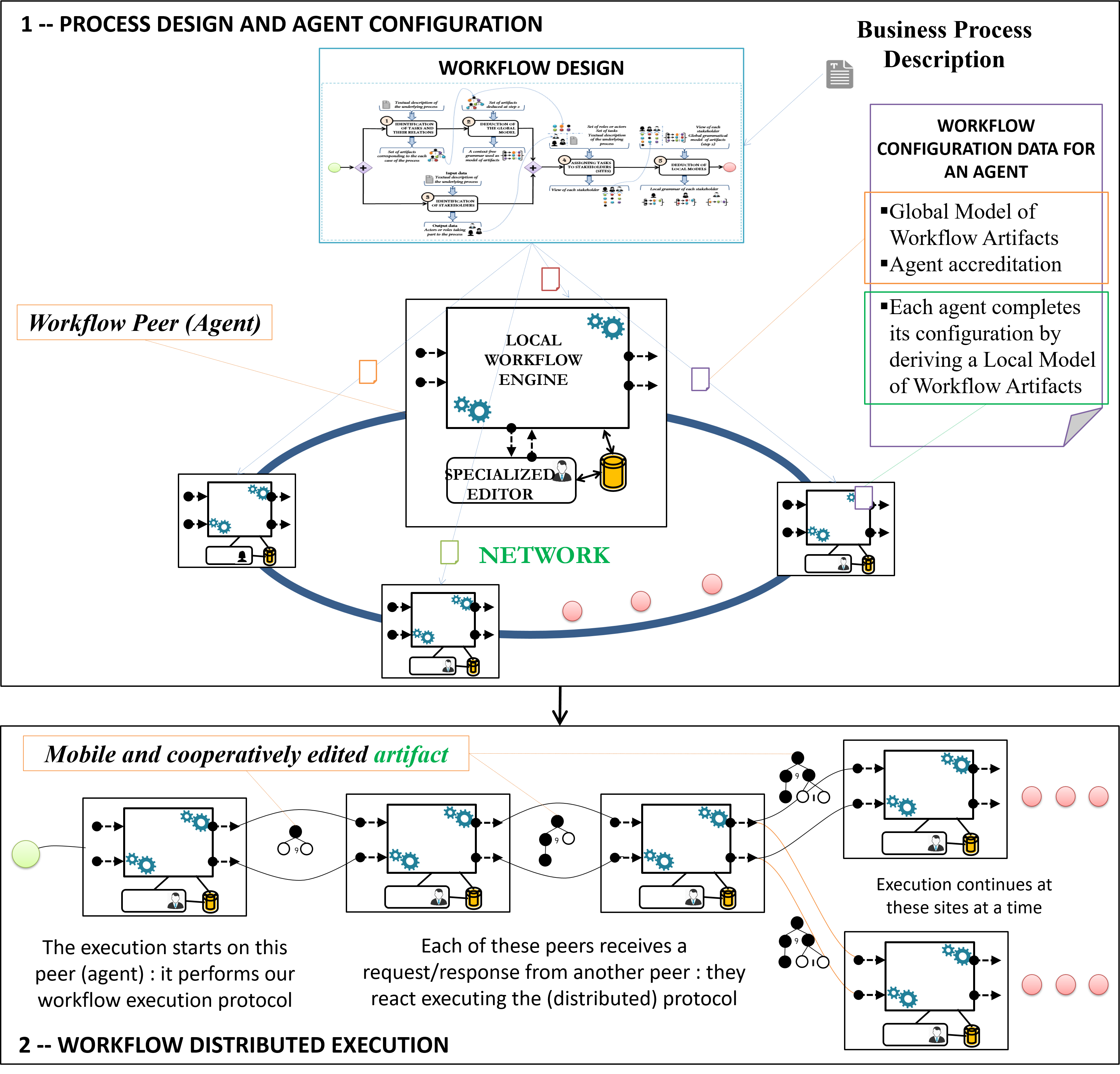}}
	\caption{An overview of the artifact-centric BPM model presented in this chapter.}
	\label{chap3:fig:overview-protocol}
\end{figure}

We are especially interested on administrative processes and the approach we propose for their automation is declined in two steps: derivation of different models (target artifacts and their model, accreditations, etc.) from a textual description of the process and, implementation of a choreography between the agents communicating by asynchronous exchange of artifacts for its execution. More precisely, from the observation that one can analyse the textual description of an administrative business process to exhibit all the possible execution scenarios leading to its business goals, we propose to model each of these scenarios by an annotated tree in which, each node corresponds to a task of the process assigned to a given agent, and each hierarchical decomposition (a node and its sons) represents a scheduling of these tasks: these annotated trees are called \textit{target artifacts}. From these target artifacts, are derived a GMWf (\textit{artifact type}) which contains both the \textit{information model} (modelled by its attributes) and the \textit{lifecycle model} (thanks to the set of its productions) which are two essential notions of the artifact-centric modelling paradigm \cite{hull2013data}. Once the GMWf is obtained, we propose to add organisational information called \textit{accreditations} in this chapter; they aim, as in \cite{badouelTchoupeCmcs, theseTchoupe, tchoupeAtemkeng2, tchoupeZekeng2016, tchoupeZekeng2017, zekengTchoupe2018}, is to enrich the notion of access to different parts of artifacts, by offering a simple mechanism for modelling the generally different perceptions that actors have on processes and their data. With the couple (GMWf, accreditations), each autonomous agent is configured (see fig. \ref{chap3:fig:overview-protocol} (1)) and is ready to proceed to the decentralised execution of the studied process.

The actual execution is a choreography in which the agents are reactive autonomous software components, communicating in P2P mode and are  driven by human agents (actors) in charge of executing tasks. An agent's reaction to the reception of a message (an artifact) consists in the execution of a five-step protocol clearly described in this chapter (see sec. \ref{chap3:sec:the-protocols}). 
This protocol allows it to: (1) \textit{merge} the received artifact with the one it hosts locally in order to consider all updates, (2) \textit{project} the artifact resulting from the merger in order to hide the parts to which the local actor may not have access and highlight the tasks to be locally executed, (3) make the local actor \textit{execute} the revealed tasks and thus edit the potentially partial replica of the artifact obtained after the projection, (4) integrate the new updates into the artifact through an operation called \textit{expansion-pruning} and finally, (5) \textit{diffuse} the updated artifact to other sites for further execution of the process if necessary. 
The agents' operational capabilities allow that, for the execution of a given process, an artifact created by one of them (initially reduced to an open node), moves from site to site to indicate tasks that are ready to be executed at the appropriate time and to provide necessary data (created by other agents) for that execution; the mobile artifact, cooperatively edited by agents, thus "grows" as it transits through the distributed system so formed (see fig. \ref{chap3:fig:overview-protocol} (2)).

\mySubSection{The Running Example: the Peer-Review Process}{}
\label{chap3:sec:running-example}

\mySubSubSection{Description of the Peer-Review Process}{}
\label{chap3:sec:peer-review-description}
The peer-review process \cite{peerReview02} is a common example of administrative business process. We presented a brief description of it inspired by those made in \cite{peerReview02, van2001proclets, badouel14}, in chapter \ref{chap1:artifact-centric-bpm}, section \ref{chap1:sec:running-example}. Described in this way, we will use the peer-review process as an illustrative example in this chapter.

Lets recall that from that description, we have identified all the tasks to be executed, their sequencing, actors involved and the tasks assigned to them. Precisely, four actors are involved: an editor in chief ($EC$) who is responsible for initiating the process, an associated editor ($AE$) and two referees ($R1$ and $R2$).
A summary of tasks assignment was presented in table \ref{tableau:tachesExecutant} (page \pageref{tableau:tachesExecutant}), and orchestration diagrams using BPMN and WF-Net were also presented in figure \ref{chap1:fig:comparing-workflow-languages} (page \pageref{chap1:fig:comparing-workflow-languages}).

\mySubSubSection{Overview of the Peer-Review Process Artifact-Centric Execution using the Model Presented in this Thesis}{}
\label{chap3:sec:peer-review-overview}
To run the peer-review process described above according to the artifact-centric model presented in this chapter, four agents controlled by four actors (\textit{the editor in chief}, \textit{the associated editor} and the \textit{two referees}) will be deployed. 
Each of them will be pre-configured using a global Grammatical Model of Workflow (GMWf) and a set of accreditations. As we will see later, the global GMWf is used as model of artifacts and formally describes all the process tasks to be executed as well as their execution order (see fig. \ref{chap1:fig:comparing-workflow-languages}), and the accreditations set specifies the permissions (\textit{reading}, \textit{writing} and \textit{execution}) of each of the four actors relative to these tasks. 
After the pre-configuration of agents, each of them will derive (by projection \cite{tchoupeAtemkeng2}) a local GMWf which will locally guide the execution of the tasks to guarantee the confidentiality of some workflow data (contained in a mobile artifact) and the consistency of local updates with the global GMWf.

The artifact-centric execution of a scientific paper validation workflow will be triggered on the editor in chief's site, by introducing (in this site) an artifact (an annotated tree) reduced to its root node. Each node of the artifact represents a task and encapsulates an attribute containing its execution status. Therefore at a given time, the whole artifact contains information on already executed tasks and on data produced during their execution; it also exhibits tasks that are ready to be executed.
The analysis of this artifact by the local agent will highlight the expected contributions from the editor in chief. Guided by the local GMWf, he (here tasks are executed by a human) will perform tasks resulting in the consistent updating of the artifact's local copy; meaning that new nodes will be added to the artifact and some of its existing nodes will be updated: this is what we call editing an artifact. Then, this (updated) copy will be immediately analysed by the local agent to determine whether the currently managed process scenario is complete (this is the case when the artifact local copy structure matches one of the target artifacts: all causally dependent tasks have been executed) or not: in this case, all sites on which execution must continue are determined and an execution request is addressed to each of them (the artifact is sent to them).
Figure \ref{chap3:fig:overview-example} sketches an overview of exchanges that can take place between the four agents of the peer-review process when validating a scientific paper using the model presented in the present chapter. The scenario presented there, corresponds to the one in which the paper is pre-validated by the editor in chief and therefore, is analysed by a peer review committee. Note that there may be situations where multiple copies of the artifact are updated in parallel; this is notably the case when they are present on site 3 (first referee) and 4 (second referee).
\begin{figure}[ht!]
	\noindent
	\makebox[\textwidth]{\includegraphics[scale=0.28]{./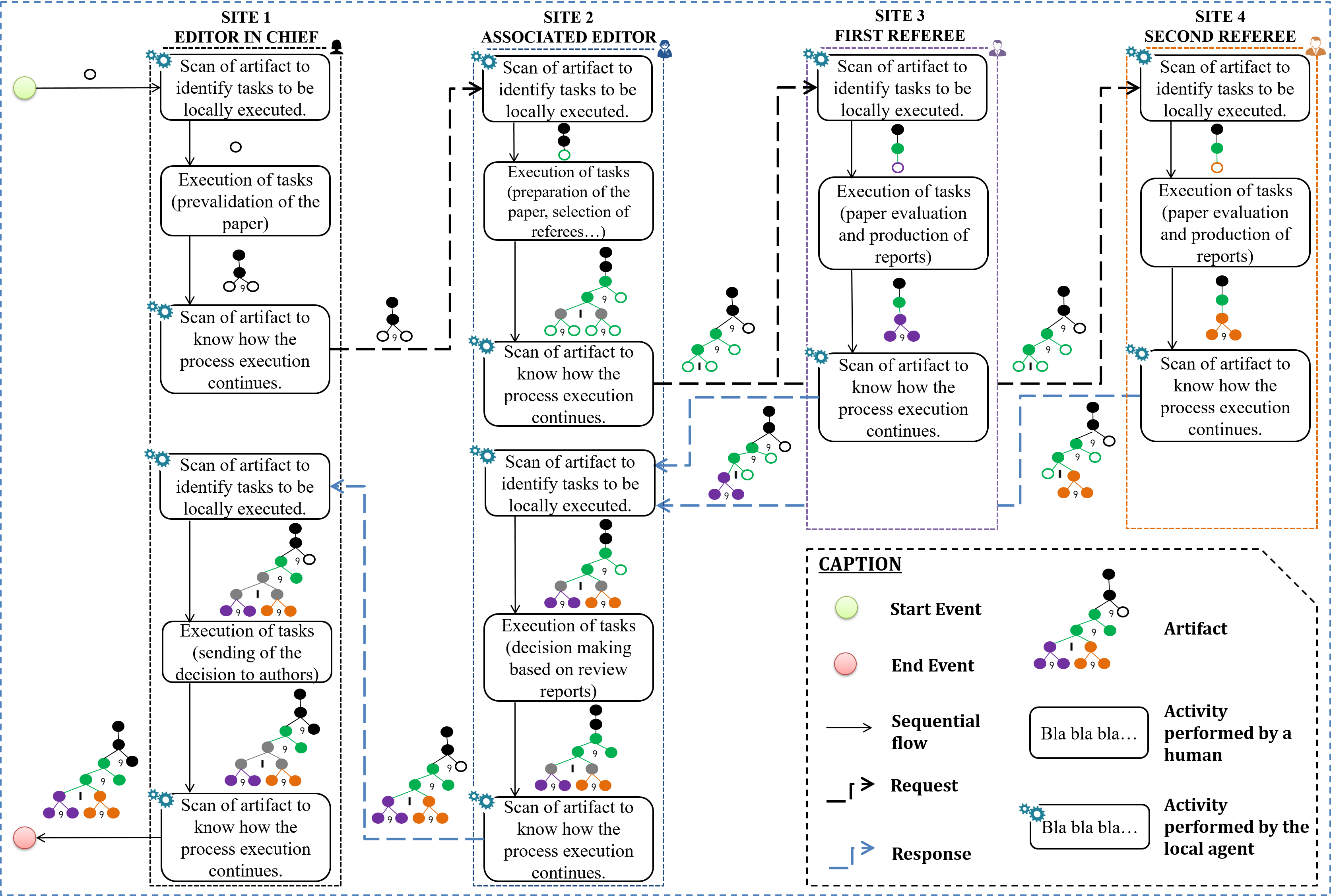}}
	\caption{An overview of the artifact-centric execution of the peer-review process using the model presented in this chapter.}
	\label{chap3:fig:overview-example}
\end{figure}

\mySection{Modelling Artifacts}{}
\label{chap3:sec:modelling-artifacts}

\mySubSection{Artifacts' Structure}{}
\label{chap3:sec:artifacts-structure}
Let's consider an administrative process $\mathcal{P}_{op}$ to be automated. The set $\left\{ \mathcal{S}_{op}^1,\ldots,\mathcal{S}_{op}^k \right\}$ of $\mathcal{P}_{op}$ execution scenarios is known in advance and so, $\mathcal{P}_{op}$ can be specified as any oriented graph with tools like BPMN or as a petri net with tools like YAWL. 
Moreover, each execution scenario of $\mathcal{P}_{op}$ can be modelled using an annotated tree $t_i$. Indeed, starting from the fact that a given scenario $\mathcal{S}_{op}^i$ consists of a set $\mathbb{T}_n = \{X_1, \ldots, X_n\}$ of $n$ (non-recursive) tasks to be executed in a specific order (in parallel or in sequence), one can represent $\mathcal{S}_{op}^i$ as a tree $t_i$ in which each node (labelled $X_i$) potentially corresponds to a task $X_i$ of $\mathcal{S}_{op}^i$, and each hierarchical decomposition (a node and its sons) corresponds to a scheduling: the task associated with the parent node must be executed before those associated with the son nodes; the latter must be executed according to an order - parallel or sequential - that can be specified by particular annotations. Indeed, it is enough to have two annotations "$\fatsemi$" (is sequential to) and "$\parallel$" (is parallel to) to be applied to each hierarchical decomposition. The annotation "$\fatsemi$" (resp. "$\parallel$") reflects the fact that the tasks associated with the son nodes of the decomposition must (resp. can) be executed in sequence (resp. in parallel).

Considering the running example (the peer-review process), the two scenarios that make it up can be modelled using the two annotated trees in figure \ref{chap3:fig:global-artefacts}. In particular, we can see that the tree $art_1$ shows how the task "Receipt and pre-validation of a submitted paper" assigned to the editor in chief ($EC$), and associated with the symbol $A$ (see table \ref{tableau:tachesExecutant}, page \pageref{tableau:tachesExecutant}), must be executed before tasks associated with the symbols $B$ and $D$, that are to be executed in sequence. This annotated tree represents the scenario where the paper received by the editor in chief, is immediately rejected.
\begin{figure}[ht!]
	\noindent
	\makebox[\textwidth]{\includegraphics[scale=0.6]{./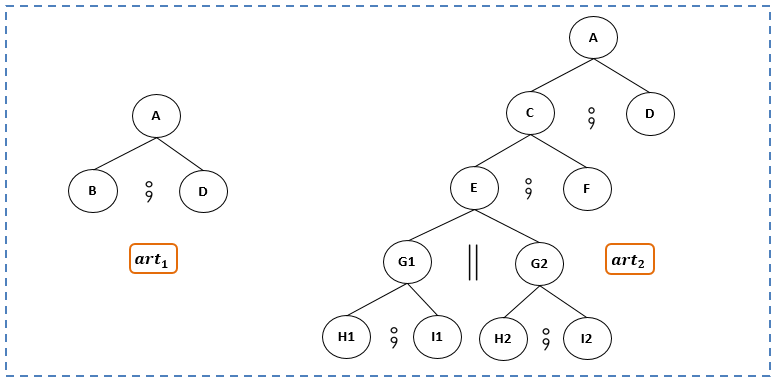}}
	\caption{Target artifacts of a peer-review process.}
	\label{chap3:fig:global-artefacts}
\end{figure}

\mySubSection{Target Artifacts and Grammatical Model of Workflow}{}
\label{chap3:sec:target-artifacts-and-gmwf}
In this chapter, we use the expression \textit{target artifact} to designate the annotated tree $t_i$ modelling a given scenario $\mathcal{S}_{op}^i$ of a given administrative process $\mathcal{P}_{op}$. From the set of target artifacts of a given process, it is possible to derive an abstract grammar\footnote{It is enough to consider the set of target artifacts as a regular tree language: there is therefore a (abstract) grammar to generate them.} that can be enriched to serve as a \textit{artifact type} as defined in \cite{hull2009facilitating}: it is this grammar that we designate by the expression \textit{Grammatical Model of Workflow (GMWf)}.

Let's consider the set $\left\{t_1,\ldots,t_k\right\}$ of target artifacts modelling the $k$ execution scenarios of a given process $\mathcal{P}_{op}$ of $n$ tasks ($\mathbb{T}_n = \{X_1, \ldots, X_n\}$). Each $t_i$ is a derivation tree for an abstract grammar (a GMWf) $\mathbb{G}=\left(\mathcal{S},\mathcal{P},\mathcal{A}\right)$ whose set of symbols is $\mathcal{S}=\mathbb{T}_n$ (all process tasks) and each production $p \in \mathcal{P}$ reflects a hierarchical decomposition contained in at least one of the target artifacts. Each production is therefore exclusively of one of the following two forms: $p: X_0 \rightarrow X_1 \fatsemi \ldots \fatsemi X_n$ or  $p: X_0 \rightarrow X_1 \parallel \ldots \parallel X_n$. The first form $p: X_0 \rightarrow X_1 \fatsemi \ldots \fatsemi X_n$ (resp. the second form $p: X_0 \rightarrow X_1 \parallel \ldots \parallel X_n$) means that task $X_0$ must be executed before tasks $\left\{X_1,\ldots,X_n\right\}$, and these must be (resp. these can be) executed in sequence (resp. in parallel). A GMWf can therefore be formally defined as follows:
\begin{definition}
	\label{defGMWf1}
	A \textbf{Grammatical Model of Workflow} (GMWf) is defined by $\mathbb{G}=\left(\mathcal{S},\mathcal{P},\mathcal{A}\right)$
	where:
	\begin{itemize}
	\item $\mathcal{S}$ is a finite set of \textbf{grammatical symbols} or \textbf{sorts} corresponding to various \textbf{tasks} to be executed in the studied business process; 
	\item $\mathcal{A}\subseteq\mathcal{S}$ is a finite set of particular symbols called \textbf{axioms}, representing tasks that can start an execution scenario (roots of target artifacts), and 
	\item $\mathcal{P}\subseteq\mathcal{S}\times\mathcal{S}^{*}$ is a finite set of \textbf{productions} decorated by the annotations "$\fatsemi$" (is sequential to) and "$\parallel$" (is parallel to): they are \textbf{precedence rules}. 
	A production $P=\left(X_{P(0)},X_{P(1)},\cdots, X_{P(|P|)}\right)$ is either of the form $P: X_0 \rightarrow X_1 \fatsemi \ldots \fatsemi X_{|P|}$, or of the form $P: X_0 \rightarrow X_1 \parallel \ldots \parallel X_{|P|}$ and $\left|P\right|$ 
	designates the length of $P$'s right-hand side.
	A production with the symbol $X$ as left-hand side is called a \textit{X-production}.
	\end{itemize}
\end{definition}

Let's illustrate the notion of GMWf by considering the one generated from an analysis of the target artifacts obtained in the case of the peer-review process (see fig. \ref{chap3:fig:global-artefacts}). The derived GMWf is  $\mathbb{G}=\left(\mathcal{S},\mathcal{P},\mathcal{A}\right)$ in which, the set $\mathcal{S}$ of grammatical symbols is
$\mathcal{S}=\{A, B, C, D, E, F, G1, G2, H1, H2, I1, I2\}$ (see table \ref{tableau:tachesExecutant});
the only initial task (axiom) is $A$ (then $\mathcal{A}=\{A\}$) and the set $\mathcal{P}$ of productions is:
\[ 
\begin{array}{l|l|l|l}
P_{1}:\; A\rightarrow B\fatsemi D & \; P_{2}:\; A\rightarrow C\fatsemi D\; & \; P_{3}:\; C\rightarrow E\fatsemi F\; & \; P_{4}:\; E\rightarrow G1\parallel G2    \\
P_{5}:\; G1\rightarrow H1 \fatsemi I1 & \; P_{6}:\; G2\rightarrow H2 \fatsemi I2\; & \; P_{7}:\; B\rightarrow \varepsilon\; & \; P_{8}:\; D\rightarrow \varepsilon  \\
P_{9}:\; F\rightarrow \varepsilon & \; P_{10}:\; H1\rightarrow \varepsilon & \; P_{11}:\; I1\rightarrow \varepsilon\; & \; P_{12}:\; H2\rightarrow \varepsilon  \\
P_{13}:\; I2\rightarrow \varepsilon &  &  &   \\
\end{array}
\]

For some administrative business processes, there may be special cases where it is not possible to strictly schedule the tasks of a scenario using the two (only) forms of productions selected for GMWf. For example, this is the case for the scenario of a four-task process with tasks $A, B, C$ and $D$, where the task $A$ precedes all others, the tasks $B$ and $C$ can be executed in parallel and precede $D$. 
In these cases, the introduction of a certain number of new symbols known as \textit{(re)structuring symbols} (not associated with tasks) can make it possible to produce a correct scheduling that respects the form imposed on productions. For the previous example, the introduction of a new symbol $S$ allows us to obtain the following productions: $p_1: A \rightarrow S \fatsemi D$, $p_2: S \rightarrow B \parallel C$, $p_3:B \rightarrow \epsilon$, $p_4:C \rightarrow \epsilon$ and $p_5:D \rightarrow \epsilon$, which properly model the required scheduling. 
To deal with such cases, the previously given GMWf definition (definition \ref{defGMWf1}) is slightly adapted by integrating the (re)structuring symbols; the resulting definition is as follows:
\begin{definition} 
	\label{defGMWf2}
	A \textbf{Grammatical Model of Workflow} (GMWf) is defined by $\mathbb{G}=\left(\mathcal{S},\mathcal{P},\mathcal{A}\right)$
	wherein, 
	$\mathcal{P}$ and $\mathcal{A}$ refer to the same purpose as in definition \ref{defGMWf1}, 
	$\mathcal{S}=\mathcal{T} \cup \mathcal{T}_{Struc}$ 
	is a finite set of \textbf{grammatical symbols} or \textbf{sorts} in which, those of $\mathcal{T}$ correspond to \textbf{tasks} of the studied business process, while those of $\mathcal{T}_{Struc}$ are (re)structuring symbols.
\end{definition}

\mySubSection{Artifact Type and Artifact Edition}{}
\label{chap3:sec:gmwf-as-artifact-type}
As formalised in definition \ref{defGMWf2}, a GMWf perfectly models the tasks and control flow of administrative processes (lifecycle model). To remain faithful to the artifact-centric philosophy, the GMWf definition must be adjusted to be able to use it as an artifact type. In particular, it is necessary to equip it with tools allowing to represent the information model (the data) of processes as well as the dynamic (evolutionary) character of artifacts.

\mySubSubSection{Modelling the Information Model of Processes with GMWf}{}
\label{chap3:sec:gmwf-information-model}
The structure of the consumed and produced data by business processes differs from one process to another. It is therefore not easy to model them using a general type, although several techniques to do so have emerged in recent years \cite{badouel14}. For the work presented in this chapter, tackling the data structure of automated processes has no proven interest because, it does not bring any added value to the proposed model: a representation of these data using a set of variables is largely sufficient.

To represent the potential consumed and produced data by the tasks of a process modelled using GMWf, we use the notion of \textit{attribute} embedded in the nodes associated with tasks. To take them into account, we adjust for the last time, the definition of GMWf. We thus attach to each symbol, an attribute named $status$, allowing to store all the data of the associated task; its precise type is left to the discretion of the process designer. However, for the purposes of this work, we will consider it a string. The new definition of GMWf is thus the following one:
\begin{definition} 
	\label{defGMWf3}
	A \textbf{Grammatical Model of Workflow} (GMWf) is defined by $\mathbb{G}=\left(\mathcal{S},\mathcal{P},\mathcal{A}\right)$
	wherein, 
	$\mathcal{S}$, $\mathcal{P}$ and $\mathcal{A}$ refer to the same purpose as in definition \ref{defGMWf2}.
	Each grammatical symbol $X\in\mathcal{S}$ is associated with an attribute named \textbf{\textit{status}} of type string, that can be updated when task $X$ is executed; $\textbf{X.status}$ provides access (read and write) to its content.
\end{definition}

A GMWf is therefore ultimately an attributed grammar whose instances represent the different execution scenarios of the underlying business process. In artifact-centric models, the artifact used as a communication medium between agents executing the tasks, must represent at each moment, the execution state of the underlying process. As defined up to now, GMWf models do not satisfy this second concern: they cannot therefore be used as artifact types. We will now equip them with tools to allow them to endow their instances (the artifacts) with the ability to report about the execution state of the process they represent.

\mySubSubSection{Artifact Type}{}
\label{chap3:sec:artifact-type}
For each task, it is important to know whether or not it has already been executed; if not, it is also important to know whether or not it is ready to be executed. Recall also that, we model the execution of processes as the desynchronised cooperative editing of mobile artifacts (which are exchanged by agents). This implies that the artifact-centric model of this chapter considers that, an artifact is a structured document that is initially empty, and which is completed as it circulates between the agents. Contrary to the models in the literature, at each moment, the artifact thus contains only a (potentially empty) part of the lifecycle model of the process. This is why we have chosen not to represent it as a (tree) state machine but rather as an annotated tree that is incrementally built in accordance with an attributed grammar.

Concretely, an \textit{artifact} is an annotated tree that potentially contains buds (this is the equivalent of the notion of structured document being edited as presented in chapter \ref{chap2:structured-editing-artifact-type}). A \textit{bud} or \textit{open node} is a typed leaf node indicating in an artifact, a place where an edition is possible; i.e. a node associated with a task that has not yet been executed. A bud can be unlocked (\textit{unlocked bud}) or locked (\textit{locked bud}) depending on whether the task associated with it is ready to be executed\footnote{A task is ready to be executed if all the tasks that precede it according to the precedence constraint set have already been executed and the agent that currently holds the mobile artifact, have the necessary accreditation to trigger its execution.} or not. More formally, a \textit{bud of type $X \in \mathcal{S}$} is a leaf node labelled either by $X_{\overline{\omega}}$ or by $X_\omega$ depending on its state (\emph{locked} or \emph{unlocked}). An artifact containing no buds is said to be \textit{closed}. Such an artifact, symbolises the end of tasks execution with respect to the agent hosting the artifact. An example of an artifact related to the peer-review process and containing buds is shown in figure \ref{chap3:sec:artifact-with-buds}. In this one, we can see that the tasks associated with symbols $A$ and $C$ have already been executed. Task $E$ is ready to be executed while tasks $F$ and $D$ are not ready to be executed yet.
\begin{figure}[ht!]
	\noindent
	\makebox[\textwidth]{\includegraphics[scale=0.4]{./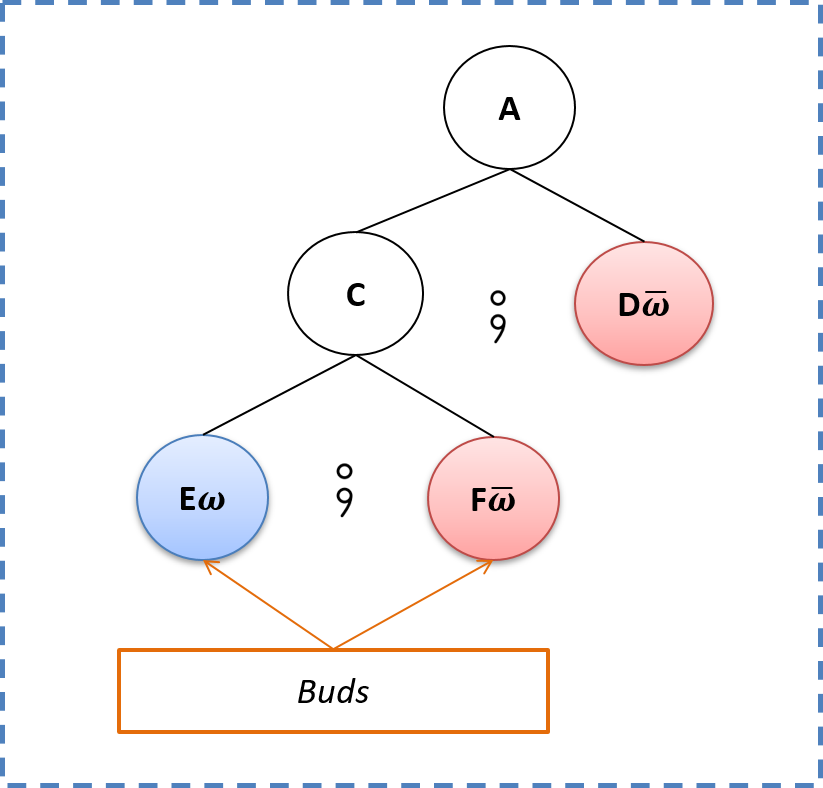}}
	\caption{An intentional representation of an annotated artifact containing buds.}
	\label{chap3:sec:artifact-with-buds}
\end{figure}

From the thus given definition of (mobile) artifact, it is clear that an artifact is updated only at the level of its leaves and therefore, it only "grows" (positive editing). Knowing moreover that, the correct and complete execution of a given administrative process corresponds to the execution of one of its scenarios, we deduce that: for a process $\mathcal{P}_{op}$ whose GMWf is $\mathbb{G}=\left(\mathcal{S},\mathcal{P},\mathcal{A}\right)$, a given mobile artifact, is a prefix to one of its target artifacts. Thus, the type (model) of this artifact is a grammar $\mathbb{G}_{\Omega}=(\mathcal{S}\cup\mathcal{S}_{\omega},\mathcal{P}\cup\mathcal{S}_{\Omega},\mathcal{A}\cup\mathcal{A}_{\omega})$ obtained by extending $\mathbb{G}$ (for bud recognition and recognition of all possible prefixes of target artifacts) as follows:
\begin{enumerate}
\item For all sort $X$, add in the set $\mathcal{S}$ of sorts, two new sorts $X_{\overline{\omega}}$ and $X_{\omega}$;

\item For all new sort $X_{\omega}$ added to $\mathcal{S}$, add in the set $\mathcal {P}$ of productions two new $\varepsilon$-productions $X_{\Omega} : X_{\omega} \rightarrow \varepsilon$ and $X_{\overline{\Omega}} : X_{\overline{\omega}} \rightarrow \varepsilon$; 
we then have: $\mathcal{S}_{\omega}=\{X_{\overline{\omega}}, ~X_{\omega},~ X\in\mathcal {S}\}$, $ \mathcal{A}_{\omega}=\{X_{\overline{\omega}}, ~X_{\omega},~ X\in\mathcal{A}\} $ $ and $ $\mathcal{S}_{\Omega} = \{X_{\Omega} : X_{\omega} \rightarrow \varepsilon, ~X_{\overline{\Omega}} : X_{\overline{\omega}} \rightarrow \varepsilon,~ X_{\overline{\omega}}~and~X_{\omega} \in \mathcal{S}_{\omega}\}$.
\end{enumerate}

\mySubSubSection{Artifact Edition}{}
\label{chap3:sec:artifact-edition}
If we still consider a running process $\mathcal{P}_{op}$ whose GMWf is $\mathbb{G}=\left(\mathcal{S},\mathcal{P},\mathcal{A}\right)$, then, the \textit{editing} of an artifact $t$ circulating between agents consists of developing one or more of its buds into a subtree while updating their \textit{status} attributes. Concretely, for a bud ${X_\omega}$ of the said artifact one can:
\begin{enumerate}
\item Execute the task associated with $X$; 

\item Choose an $X$-production $P \in \mathcal{P}$ to be used for the development of ${X_\omega}$;

\item If $P$ is of the form $P:~X \rightarrow X_1 \parallel X_2 \parallel \ldots \parallel X_{|P|}$ (resp. $P:~X \rightarrow X_1 \fatsemi X_2 \fatsemi \ldots \fatsemi X_{|P|}$) then, create $|P|$ buds $X_{1\omega}, X_{2\omega}, \ldots, X_{|P|\omega}$ (resp. $X_{1\omega}, X_{2\overline{\omega}}, \ldots, X_{|P|\overline{\omega}}$) respectively of type $X_1, X_2, \ldots , X_{|P|}$, and replace in the artifact, the considered bud ${X_\omega}$ by the parallel (resp. sequential) subtree $X[X_{1\omega}, X_{2\omega}, \ldots, X_{|P|\omega}]$\footnote{The tree coded by $X[X_{1\omega}, X_{2\omega}, \ldots, X_{|P|\omega}]$ is the one whose root is labelled $X$ and has as sons, $|P|$ nodes labelled by $X_{1\omega}, X_{2\omega}, \ldots, X_{|P|\omega}$ respectively.} (resp. $X[X_{1\omega}, X_{2\overline{\omega}}, \ldots, X_{|P|\overline{\omega}}]$);

\item Update the execution status of the task associated with $X$: $X.status$ $=$ $"bla~bla~\ldots"$.
Updating $t$ results in an artifact $t^{maj}$ and we note $t \leq t^{maj}$.
\end{enumerate}

Although it is obvious, it seems important to clarify that the editing of an artifact is only a consequence of process tasks' execution by actors located on agents. We can therefore imagine this scenario for the peer-review process (see fig. \ref{chap3:fig:artifact-edition}): the associated editor who received a request from the editor in chief to peer-review a given article, has executed task $C$ (i.e. he has appraised the paper and formatted it to prepare the peer-review). Through a dedicated tool (a specialised editor), he has been invited to submit a report on the execution of the said task (via the filling of a form for example). The submission of that report, in which he may have provided a copy of the formatted paper as well as comments for referees, will cause (in background) the update of the artifact. The retrieved data will thus be stored in the $status$ attribute of task $C$ and the bud $C_\omega$ will be extended into a subtree as described above (the only production available for this purpose is $P_{3}: C\rightarrow E\fatsemi F$).
\begin{figure}[ht!]
	\noindent
	\makebox[\textwidth]{\includegraphics[scale=0.4]{./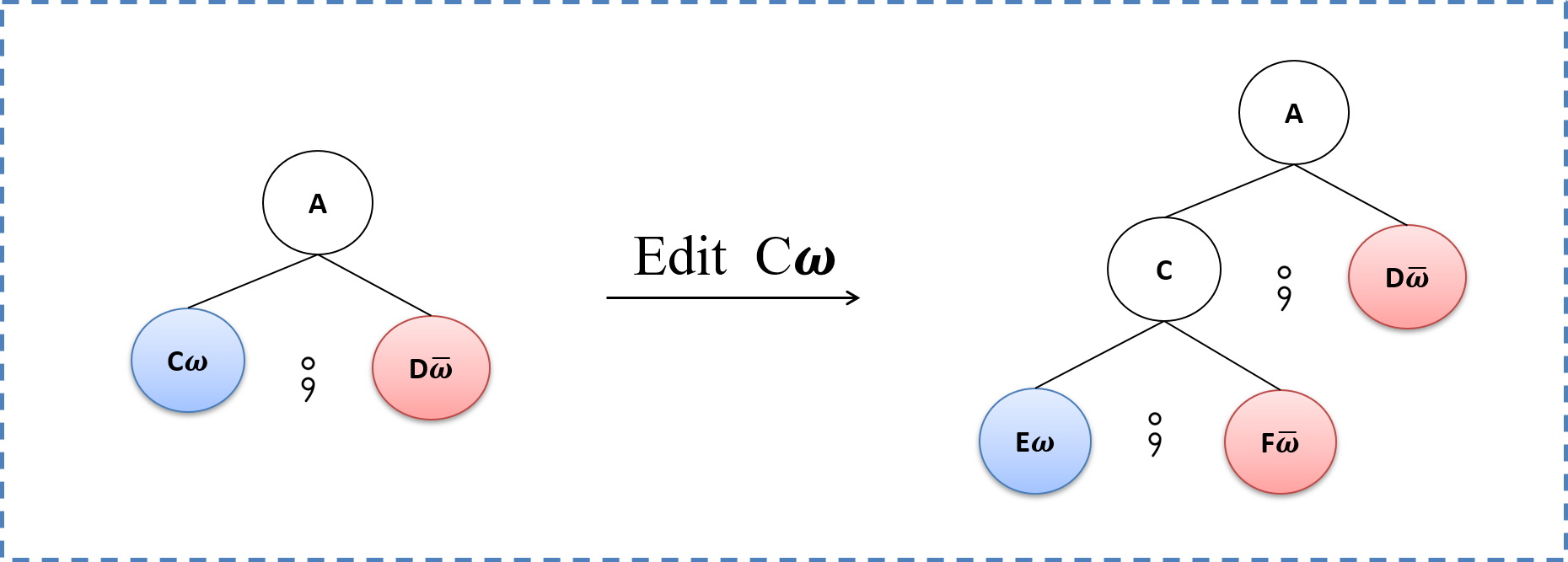}}
	\caption{An example of artifact edition: the bud $C_\omega$ is extended in a subtree.}
	\label{chap3:fig:artifact-edition}
\end{figure}

\mySection{Agent and choreography}{}
\label{chap3:sec:agents-and-choreography}

Now that we have formally defined the structure and the editing model of artifacts, let's focus on the structure of agents that oversee the execution of tasks and update the artifact accordingly, as well as on the artifact-centric choreography implemented between them. 

\mySubSection{Relations between Agent, Actor and Choreography}{}
\label{chap3:sec:agent-definition}
We borrowed the term \textit{agent} from \cite{lohmann2010artifact}. In the case of our study, an \textit{agent} (which we also call a \textit{peer}) is a software component, installed at a given site, piloted by a human agent called \textit{actor} (the tasks of the processes we handle are executed by humans) and capable of interacting with other agents by service invocation (message exchange). An agent is completely \textit{autonomous}: i.e. it encapsulates all the data and functions necessary for the execution of the tasks assigned to it, or precisely, tasks assigned to the actor piloting it. The agent is \textit{reactive}: it reacts in the same way to each message it receives by executing a well-defined protocol that goes from the analysis of the received message (artifact) to the possible transmission of other messages. As announced in the introduction, each message contains a collectively edited \textit{mobile artifact}. For the execution of a given process, the choreography is therefore a result of the messages (artifact replicas) exchanges between the agents involved and of the reaction of the latter to the reception of messages.

\mySubSection{Structure of an Agent}{}
\label{chap3:sec:agent-structure}

An agent is built to be able to fully manage the lifecycle (creation, storage, edition/execution) of a given business process' artifacts. Thus, an agent is made up of three major software components: a local workflow engine (LWfE), a specialised graphical editor and a storage device (see fig. \ref{chap3:fig:simplify-architecture-peer}).
\begin{figure}[ht!]
	\noindent
	\makebox[\textwidth]{\includegraphics[scale=0.45]{./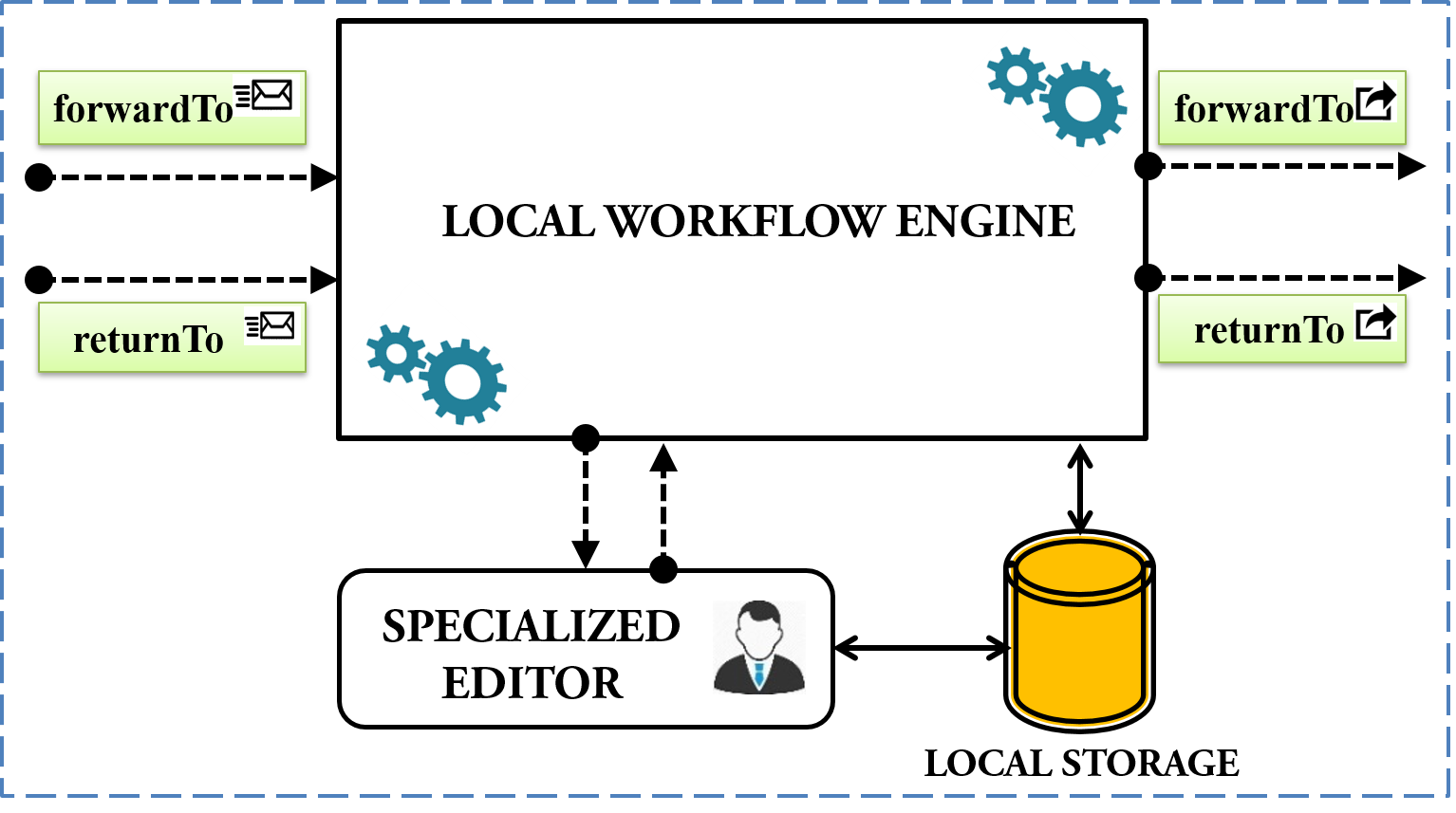}}
	\caption{Simplified architecture of an agent.}
	\label{chap3:fig:simplify-architecture-peer}
\end{figure}

\mySubSubSection{The Local Workflow Engine}{}
\label{chap3:sec:the-local-workflow-engine}

The local workflow engine (LWfE) is the main component of an agent. 
It receives messages from other agents and reacts by executing a well defined protocol (see sec. \ref{chap3:sec:the-protocols}). It communicates with the other agent's engines via its communication interface that exposes four services : two input services or \textit{provided services} (\textit{returnTo} and \textit{forwardTo}) connected to two corresponding output services or \textit{required services} (\textit{returnTo} and \textit{forwardTo}) so that: 

\begin{itemize}
	\item The invocation by an agent $j$ of the service \textit{forwardTo} offered by an agent $i$, causes on $i$, the execution of its corresponding input service \textit{forwardTo}. This service makes it possible to send a \textbf{request} from agent $j$ to agent $i$. The request contains the replica of the mobile artifact located on agent $j$. This artifact must contain buds to be completed (executed) by actor $A_i$ (the human agent piloting agent $i$). 
	\item The invocation by an agent $i$ of the service \textit{returnTo} offered by an agent $j$, causes the execution on $j$, of its corresponding input service \textit{returnTo}. This service allows agent $i$ to return the \textbf{response} to a request previously received from agent $j$. As the request, the response contains the replica of the mobile artifact located on agent $i$.
\end{itemize}

\mySubSubSection{The Storage Device}{}
\label{chap3:sec:the-storage-device}

A database (DB) of documents (a JSON\footnote{JavaScript Object Notation, \url{http://www.json.org}, \url{https://www.mongodb.com}, visited the 04/04/2020.} DB for example) is used by the LWfE to store an agent's configuration and data (especially artifacts) that it handles. 	
	
\mySubSubSection{The Specialised Editor}{}
\label{chap3:sec:the-specialised-editor}
	
Each agent provides a specialised editor (preferably WYSIWYG\footnote{What You See Is What You Get.}) that allows its actor (the pilot) to execute tasks. More precisely, the specialised editor allows the actor to view the tasks that are assigned to him, those ready to be executed, and when he has executed a task, it gives him the means to record an execution report. Any (editing) action carried out by the local actor via the specialised editor, causes the consistent update (as presented in section \ref{chap3:sec:artifact-edition}) of the mobile artifact local replica.

The specialised editor is particularly important as it guarantees controlled access to the artifact. Indeed, as announced in the introduction and following the steps of \cite{hull2009facilitating}, for reasons of confidentiality/security, actors do not necessarily have the right to access all information relating to the execution of a process in which they are involved. It is therefore important to provide a mechanism for regulating access to this information (stored in the artifact). In our case, we define this mechanism under the name \textit{accreditation} and we include it in the configuration of an agent in the same way as the GMWf of the studied process.

\mySubSection{Concepts of Accreditation, Partial Replica of an Artifact and Local GMWf}{}
\label{chap3:sec:accreditation-partial-replica}

\mySubSubSection{Concept of Accreditation}{}
\label{chap3:sec:accreditation}

Let's consider a process $\mathcal{P}_{op}$ and its GMWf $\mathbb{G}=\left(\mathcal{S},\mathcal{P},\mathcal{A}\right)$. The accreditation of an agent provides information on the rights (permissions) its actor has on each sort (task) of $\mathbb{G}$. 
To simplify, the nomenclature of rights manipulated here is inspired by the one used in Unix-like operating systems. Three types of accreditation are then defined: accreditation in reading \textit{(r)}, in writing \textit{(w)} and in execution \textit{(x)}. 
\begin{enumerate}
	\item \textit{Accreditation in reading \textit{(r)}}: when an agent is accredited in reading on a sort $X$, its actor has the right to know if the associated task is executed. Moreover, he can access its execution status.
	We call an agent's (actor's) \textbf{\textit{view}} the set of sorts on which it (he) is accredited in reading.
	\item \textit{Accreditation in writing \textit{(w)}}: when an agent is accredited in writing on a sort $X$, its actor can execute the associated task. 
	Note that a task can be executed only by exactly one actor: for a given sort, a single agent is accredited in writing; this is an important point of the model which guarantees the absence of execution conflicts. 
	Since the dedicated editors for "updating artifacts" are of type WYSIWYG (see sec. \ref{chap3:sec:the-specialised-editor}), any agent accredited in writing on a symbol must therefore be accredited in reading on it. 
	\item \textit{Accreditation in execution \textit{(x)}}: an agent accredited in execution on a sort $X$ is authorised to ask the agent which is accredited in writing on it, to execute it. Note that this request can be made without the agent being accredited in reading\footnote{In fact, as we will see later (sec. \ref{chap3:sec:the-protocols} (\textit{the diffusion protocol}, page \pageref{chap3:sec:execution-protocol-diffusion})), it is an automatic task (of the agent) that sends the execution request and not the actor.} on the considered sort.
\end{enumerate}

\noindent More formally, an accreditation is defined as follows:

\begin{definition} \label{defSyllabaire}
	An \textbf{accreditation} $\mathcal{A}_{A_i}$ defined on the set $\mathcal{S}$ of grammatical symbols for an agent $i$ piloted by an actor $A_i$, is a triplet $\mathcal{A}_{A_i}=\left(\mathcal{A}_{A_i(r)},\mathcal{A}_{A_i(w)},\mathcal{A}_{A_i(x)}\right)$ such that, 
	$\mathcal{A}_{A_i(r)} \subseteq \mathcal{S}$ also called \textbf{view} of $i$ (or \textbf{view} of $A_i$), is the set of symbols on which $i$ is accredited in reading, 
	$\mathcal{A}_{A_i(w)} \subseteq \mathcal{A}_{A_i(r)}$ is the set of symbols on which $i$ is accredited in writing and  
	$\mathcal{A}_{A_i(x)} \subseteq \mathcal{S}$ is the set of symbols on which $i$ is accredited in execution.
\end{definition}

The accreditations of various agents must be produced by the workflow designer just after modelling the scenarios in the form of target artifacts. From the task assignment for the peer-review process in the running example (see table \ref{tableau:tachesExecutant}), it follows that the accreditation in writing of the editor in chief is $\mathcal{A}_{EC(w)}=\{A, B, D\}$, that of the associated editor is $\mathcal{A}_{AE(w)}=\{C, E, F\}$ and that of the first (resp. the second) referee is $\mathcal{A}_{R_1(w)}=\{G1, H1, I1\}$ (resp. $\mathcal{A}_{R_2(w)}=\{G2, H2, I2\}$).
Even more, since the editor in chief can only perform the task $D$ if the task $C$ is already executed (see artifacts $art_1$ and $art_2$, fig. \ref{chap3:fig:global-artefacts}), in order for the editor in chief to be able to ask the associated editor to perform this task, it (the agent) must be accredited in execution on it; so we have $\mathcal{A}_{EC(x)}=\{C\}$.
Moreover, in order to be able to access all the information on the peer-review evaluation of a paper (task $C$) and to summarise the right decision to send to the author, the editor in chief must be able to consult the reports (tasks $I1$ and $I2$) and the messages (tasks $H1$ and $H2$) of the different referees, as well as the final decision taken by the associated editor (task $F$). These tasks, added to $\mathcal{A}_{EC(w)}$\footnote{Recall that in our case, we use WYSIWYG tools and therefore, one can only execute what he see.} constitute the set $\mathcal{A}_{EC(r)}=\mathcal{V}_{EC}=\{A, B, C, D, H1, H2, I1, I2, F\}$ of tasks on which, it is accredited in reading. By doing so for each of other agents, we deduce the accreditations represented in table \ref{tableau:vuesActeurs}.
\begin{table}[ht]
	\centering
	\caption{Accreditations of the different agents taking part in the peer-review process.}
	\label{tableau:vuesActeurs}
	\begin{tabular}[t]{|m{3.5cm}|m{10.3cm}|}
		\hline
		\textbf{Agent} & \textbf{Accreditation} \\
		\hline
		Editor in Chief ($EC$) & $\mathcal{A}_{EC}=\left(\{A, B, C, D, H1, H2, I1, I2, F\}, \{A, B, D\}, \{C\}\right)$ \\
		\hline
		Associated Editor ($AE$) & $\mathcal{A}_{AE}=\left(\{A, C, E, F, H1, H2, I1, I2\}, \{C, E, F\}, \{G1, G2\}\right)$ \\
		\hline
		First referee ($R1$) & $\mathcal{A}_{R1}=\left(\{C, G1, H1, I1\}, \{G1, H1, I1\}, \emptyset\right)$ \\
		\hline
		Second referee ($R2$) & $\mathcal{A}_{R2}=\left(\{C, G2, H2, I2\}, \{G2, H2, I2\}, \emptyset\right)$ \\
		\hline
	\end{tabular}
\end{table}

~

\noindent\textbf{\textit{In summary, what is the workflow model ?}}

To summarise, we state that in the artifact-centric model presented in this chapter, an administrative process $\mathcal{P}_{op}$ is completely specified using a triplet $\mathbb{W}_f=\left(\mathbb{G}, \mathcal{L}_{P_k}, \mathcal{L}_{\mathcal{A}_k} \right)$ called \textit{a Grammatical Model of Administrative Workflow Process} (GMAWfP) and composed of: a GMWf, a list of actors (agents) and a list of their accreditations. 
The GMWf is used to describe all the tasks of the studied process and their scheduling, while the list of accreditations provides information on the role played by each actor involved in the process execution.  
A GMAWfP can then be formally defined as follows:
\begin{definition}
	\label{defMGSPWfA}
	A \textbf{Grammatical Model of Administrative Workflow Process} (GMAWfP) $\mathbb{W}_f$ for a given business process, is a triplet $\mathbb{W}_f=\left(\mathbb{G}, \mathcal{L}_{P_k}, \mathcal{L}_{\mathcal{A}_k} \right)$
	wherein $\mathbb{G}$ is the studied process (global) GMWf, $\mathcal{L}_{P_k}$ is the set of $k$ agents taking part in its execution and $\mathcal{L}_{\mathcal{A}_k}$ represents the set of these agents' accreditations. 
\end{definition}

\mySubSubSection{Concept of Partial Replica of an Artifact}{}
\label{chap3:sec:partial-replica}

To effectively ensure that actors only have access to information of proven interest to them, each agent let them access only to a potentially partial replica $t_{\mathcal{V}_i}$ of the mobile artifact $t$. 
The $t$'s partial replicas are obtained by projections according to the views of each actor. 
A partial replica $t_{\mathcal{V}_i}$ of $t$ according to the \textit{view} $\mathcal{V}_{A_i} =  \mathcal{A}_{A_i(r)} $, is a partial copy of $t$ obtained by means of the so-called \textit{projection operator} denoted $\pi$ as presented below. 

Technically, the projection $t_{\mathcal{V}_i}$ of an artifact $t$ according to the view $\mathcal{V}_{i} = \mathcal{A}_{A_i(r)}$ is obtained by deleting in $t$ all nodes whose types do not belong to $\mathcal{V}_{i}$. In our case, the main challenges in this operation are:
\begin{enumerate}
	\item[\textbf{(1)}] nodes of $t_{\mathcal{V}_i}$ must preserve the previously existing execution order between them in $t$,
	\item[\textbf{(2)}] $t_{\mathcal{V}_i}$ must be build by using exclusively the only two forms of production retained for GMWf and
	\item[\textbf{(3)}] $t_{\mathcal{V}_i}$ must be unique in order to ensure the continuation of process execution (see sec. \ref{chap3:sec:the-protocols}).
\end{enumerate}

The projection operation is noted $\pi$. Inspired by the one proposed in \cite{badouelTchoupeCmcs}, it projects an artifact by preserving the hierarchy (father-son relationship) between nodes of the artifact (it thus meets challenge \textbf{(1)}); but in addition, it inserts into the projected artifact when necessary, new additional \textit{(re)structuring symbols } (accessible in reading and writing by the agent for whom the projection is made). This enables it to meet challenge \textbf{(2)}. The details of how to accomplish the challenge \textbf{(3)} are outlined immediately after the algorithm (algorithm \ref{chap3:algo:artifact-projection}) is presented.
\begin{figure}[ht!]
	\noindent
	\makebox[\textwidth]{\includegraphics[scale=0.35]{./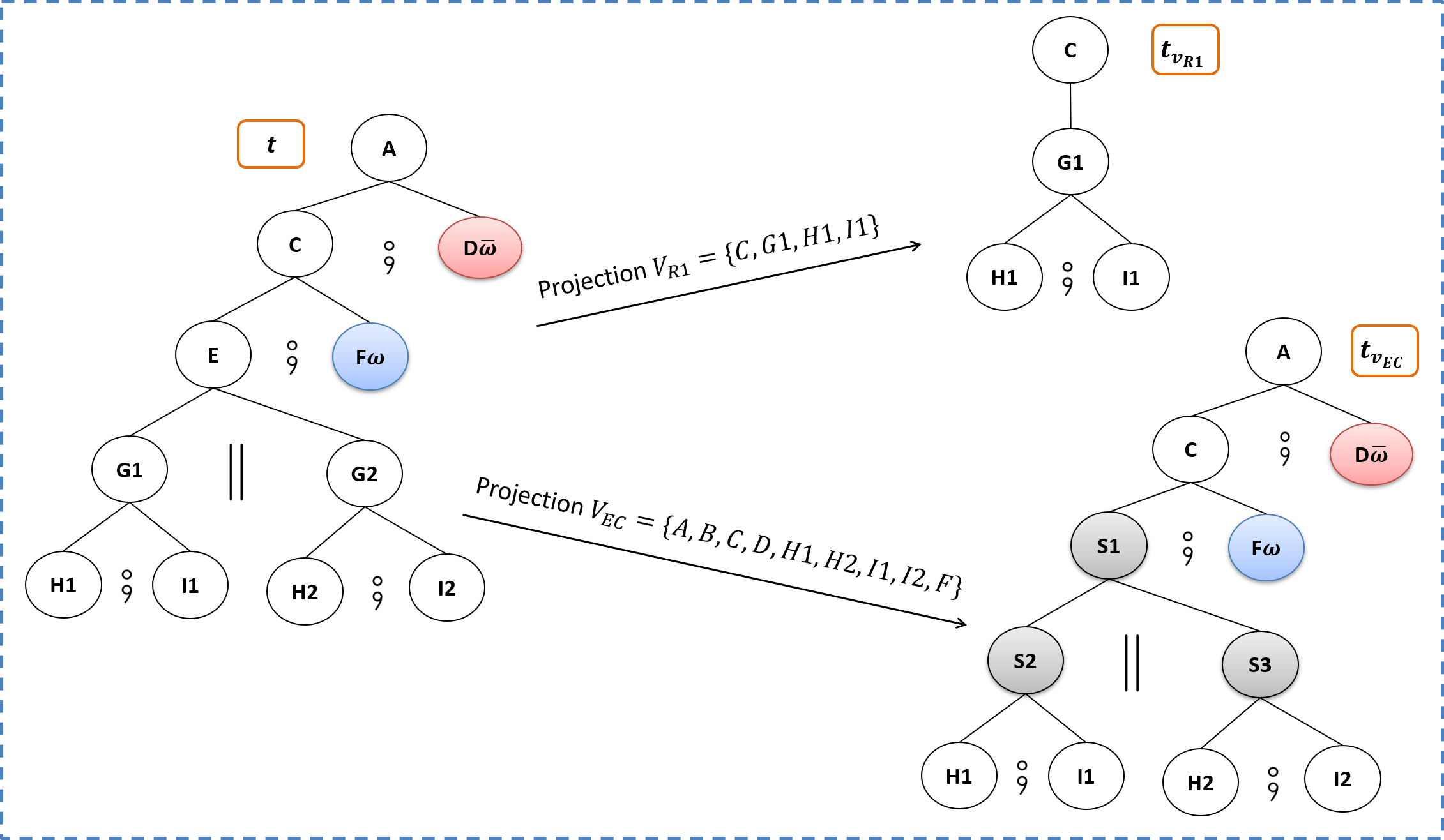}}
	\caption{Example of projections made on an artifact and partial replicas obtained.}
	\label{chap3:fig:partial-replicas}
\end{figure}

Figure \ref{chap3:fig:partial-replicas} illustrates the projection of an artifact of the peer-review process relatively to the $R1$ (first referee) and $EC$ (Editor in Chief) agent views. Note the presence in $t_{\mathcal{V}_{EC}}$ of new (re)structuring symbols (in gray). These last ones make it possible to avoid introducing in $t_{\mathcal{V}_{EC}}$, the production $p: C \rightarrow H1 \fatsemi I1 \parallel H2 \fatsemi I2 \fatsemi F$ whose form does not correspond to the two forms of production retained for the GMWf writing\footnote{Note that this production specifies in its right-hand side that we must have parallel and sequential treatments.
Inserting $S1$, $S2$ and $S3$ allows to rewrite $p$ in four productions $p1: C \rightarrow S1 \fatsemi F$, $p2: S1 \rightarrow S2 \parallel S3$, $p3: S2 \rightarrow H1 \fatsemi I1$ and $p4: S3 \rightarrow H2 \fatsemi I2$.}.

~

\noindent\textbf{\textit{The algorithm}}

Let's consider an artifact $t$ and note by $n=X\left[t_1,\ldots,t_m\right]$ a node of $t$ labelled with the symbol $X$ and having $m$ sub-artifacts $t_1,\ldots,t_m$. Note also by $p_n$, the production of the GMWf that was used to extend node $n$; the type of $p_n$ is either \textit{sequential} (i.e. $p_n: X \rightarrow X_1 \fatsemi \ldots \fatsemi X_m$ where $X_1,\ldots,X_m$ are the roots of the sub-artifacts $t_1,\ldots,t_m$) or \textit{parallel} ($p_n: X \rightarrow X_1 \parallel \ldots \parallel X_m$). 
Concretely, to project $t$ according to a given view $\mathcal{V}$ (i.e to find $\mathit{projs_t}=\pi_{\mathcal{V}}\left(t \right)$), the recursive processing presented in algorithm \ref{chap3:algo:artifact-projection}, is applied to the root node $n=X\left[t_1,\ldots,t_m\right]$ of $t$.

\begin{algorithm}
\small
\caption{Algorithm to project a given artifact according to a given view.}
\label{chap3:algo:artifact-projection}
\begin{mdframed}[style=MyFrame]
	{\large\textbullet} $~$ \textbf{If symbol $X$ is visible ($X \in \mathcal{V}$)} then :
	
	\textbf{1.}$~$ $n$ is kept in the artifact;
	
	\textbf{2.}$~$ For each sub-artifact $t_i$ of $n$, having node $n_i=X_i\left[t_{i_1},\ldots,t_{i_k}\right]$ as root (of which $p_{n_i}$ is the production that was used to extend it), the following processing is applied :
	
	$~~$\textbf{a.}$~$ The projection of $t_i$ according to $\mathcal{V}$ is done. We obtain the list $\mathit{projs_{t_i}} = \pi_{\mathcal{V}}\left(t_i \right) = \left\{t_{i_{\mathcal{V}_1}},\ldots,t_{i_{\mathcal{V}_l}}\right\}$;
	
	$~~~~$\textbf{b.}$~$ If the type of $p_{n_i}$ is the same as the type of $p_{n}$ or the projection of $t_i$ has produced no more than one artifact ($\left|\mathit{projs_{t_i}}\right| \leq 1$), we just replace $t_i$ by artifacts $t_{i_{\mathcal{V}_1}},\ldots,t_{i_{\mathcal{V}_l}}$ of the list $\mathit{projs_{t_i}}$;
	
	$~~~~$ Otherwise, a new (re)structuring symbol $S_i$ is introduced and we replace the sub-artifact $t_i$ with a new artifact $new\_t_i$ whose root node is $n_{t_i}=S_i\left[t_{i_{\mathcal{V}_1}},\ldots,t_{i_{\mathcal{V}_l}}\right]$;
	
	\textbf{3.}$~$ If the list of new sub-artifacts of $n$ contains only one element $t_1$ having $n_1=S_1\left[t_{1_{\mathcal{V}_1}},\ldots,t_{1_{\mathcal{V}_l}}\right]$ (with $S_1$ a newly created (re)structuring symbol) as root node, we replace in this one, $t_1$ by the sub-artifacts $t_{1_{\mathcal{V}_1}},\ldots,t_{1_{\mathcal{V}_l}}$ of $n_1$. This removes a non-important (re)structuring symbol $S_1$.
	
	\noindent{\large\textbullet} $~$ \textbf{Else}, $n$ is deleted and the result of the projection ($\mathit{projs_t}$) is the union of the projections of each of its sub-artifacts: $\mathit{projs_t} = \pi_{\mathcal{V}}\left(t \right) = \bigcup^{m}_{i=1} \pi_{\mathcal{V}}\left(t_i \right)$
\end{mdframed}
\end{algorithm}

Note that the algorithm described here can return several artifacts (a forest). To avoid that it produces a forest in some cases and thus meet challenge \textbf{(3)}, we make the following assumption: 
\begin{displayquote}
\textit{GMAWfP manipulated in this work are such that all agents are accredited in reading on the GMWf axioms (\textbf{axioms' visibility assumption}).}
\end{displayquote}
The designer must therefore ensure that all agents are accredited in reading on all GMWf axioms. To do this, after modelling a process $\mathcal{P}_{op}$ and obtaining its GMWf $\mathbb{G}=\left(\mathcal{S},\mathcal{P},\mathcal{A}\right)$, it is sufficient (if necessary) to create a new axiom $A_{\mathbb{G}}$ on which, all actors will be accredited in reading, and to associate it with new unit productions\footnote{A production of a context free grammar is a \textit{unit production} if it is on the form $A \rightarrow B$, where $A$ and $B$ are non-terminal symbols.} $pa : A_{\mathbb{G}} \rightarrow X_a$ where, $X_a \in \mathcal{A}$ is a symbol labelling the root of a target artifact.
Moreover, the designer of the GMWf must statically choose the agent responsible for initiating the process. This agent will therefore be the only one to possess an accreditation in writing on the new axiom $A_{\mathbb{G}}$.

\begin{proposition}
	\label{propositionStabiliteProjArt}
	For all GMAWfP $\mathbb{W}_f=\left(\mathbb{G}, \mathcal{L}_{P_k}, \mathcal{L}_{\mathcal{A}_k} \right)$ verifying the axioms' visibility assumption, the projection of an artifact $t$ which is conform to its GMWf ($t \therefore \mathbb{G}$) according to a given view $\mathcal{V}$, results in a single artifact $t_{\mathcal{V}}=\pi_{\mathcal{V}} \left(t\right)$ (stability property of $\pi$).
\end{proposition}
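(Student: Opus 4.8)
The plan is to exploit the structural dichotomy that is hard-wired into the projection algorithm (Algorithm~\ref{chap3:algo:artifact-projection}) and to combine it with the fact that a conforming artifact is necessarily rooted at an axiom. The algorithm makes a single case distinction on the label $X$ of the root node $n=X\left[t_1,\ldots,t_m\right]$ of its argument: in the visible case ($X\in\mathcal{V}$) the node $n$ is retained and becomes the root of the returned artifact, whereas in the invisible case ($X\notin\mathcal{V}$) the node is deleted and the result is handed back as the union $\mathit{projs_t}=\bigcup_{i=1}^{m}\pi_{\mathcal{V}}\left(t_i\right)$, which in general is a forest. Thus the only mechanism by which $\pi_{\mathcal{V}}$ can ever yield more than one artifact is the invisible-root branch, and the whole proof reduces to showing that this branch is never taken at the top level.

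First I would establish the auxiliary claim that whenever the root symbol of the argument is visible, the projection returns exactly one artifact. This is immediate from the shape of the visible branch: step~\textbf{1} keeps $n$, while steps~\textbf{2} and~\textbf{3} only rearrange, flatten, or wrap the list of sub-artifacts of $n$ (inserting or deleting a single (re)structuring symbol where required), and none of these operations ever touches the root label $X$ itself. Hence the output is always a single tree of the form $X\left[\ldots\right]$, no matter how many artifacts each recursive call on a child $t_i$ happens to return. No induction on the depth of $t$ is needed for this step; it follows directly from the observation that the root is preserved and every subsequent edit is confined to the children.

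Next I would invoke conformity together with the axioms' visibility assumption. Since $t\therefore\mathbb{G}$, the root of $t$ is labelled by an axiom $A\in\mathcal{A}$ of the GMWf. By the axioms' visibility assumption every agent is accredited in reading on every axiom, so $A$ lies in the view $\mathcal{V}=\mathcal{A}_{A_i(r)}$ with respect to which the projection is computed. Consequently the top-level invocation of the algorithm enters the visible branch, and by the auxiliary claim it returns a single artifact $t_{\mathcal{V}}=\pi_{\mathcal{V}}\left(t\right)$, which is exactly the stability property asserted.

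The argument contains no genuinely difficult step; the only point that will require care is the verification of the auxiliary claim, namely that the manipulations of the children in steps~\textbf{2} and~\textbf{3}---flattening when the production types agree or when a child projects to at most one artifact, wrapping in a fresh (re)structuring symbol otherwise, and the subsequent removal of a redundant restructuring root---can never split the result into several disjoint components. I expect to dispatch this by noting that each such operation either leaves the number of top-level roots unchanged or merely relabels, inserts, or eliminates a single intermediate node, so that the unique root $X$ survives all of them and the output remains a single tree.
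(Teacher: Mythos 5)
Your proof is correct and takes essentially the same route as the paper's: both observe that the only way $\pi_{\mathcal{V}}$ can return a forest is via the invisible-root branch, and then use conformity plus the axioms' visibility assumption to conclude the root is a visible axiom, so the visible branch (which preserves a single root) is taken. The paper additionally remarks in one sentence that the projection preserves the form of productions so the result is indeed an artifact; your argument covers this only implicitly, but the core reasoning is identical.
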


\begin{proof}
	Let's show that $\pi_{\mathcal{V}} \left(t\right)$ produces a single tree $t_{\mathcal{V}}$ which is an artifact. 
	Note that the only case in which the projection of an artifact $t$ according to a view $\mathcal{V}$ produces a forest, is when the root node of $t$ is associated with an invisible symbol $X$ ($X \notin \mathcal{V}$). Knowing that $t \therefore \mathbb{G}$ and that $\mathbb{W}_f$ validates the axioms' visibility assumption, it is deduced that the root node of $t$ is labelled by one of the axioms $A_{\mathbb{G}}$ of $\mathbb{G}$ and that $A_{\mathbb{G}} \in \mathcal{V}$ (hence the uniqueness of the produced tree). Since the projection operation preserves the form of productions, it is concluded that $t_{\mathcal{V}}=\pi_{\mathcal{V}} \left(t\right)$ is an artifact.
\end{proof}

A Haskell implementation of this projection algorithm is introduced in appendix \ref{appendice1:algorithms-implementations} of this manuscript. Another implementation in Java has also been proposed and integrated into the prototype that we will present in section \ref{chap3:sec:p2ptinywfms} of this chapter.

\mySubSubSection{The Need of a Local GMWf}{}
\label{chap3:sec:local-gmwf}
Since the artifact copy manipulated at a specific site is a potentially partial replica of the mobile (global) artifact, and since its editing depends on the agent's perception (view) of the process, it becomes crucial to provide each agent with a local GMWf. The latter will serve in addition to preserve the possible confidentiality of certain tasks and data, to guide the local actions of updating the artifact in order to ensure the convergence of the system to a coherent business goal state.
The local GMWf of an agent can be derived by projecting the global GMWf $\mathbb{G}$ according to the view $\mathcal{V}_i$ of its pilot (\textbf{\textit{GMWf projection}}). This projection is carried out using $\Pi$ operator and the GMWf obtained is noted $\mathbb{G}_{\mathcal{V}_i}=\Pi_{\mathcal{V}_i}\left(\mathbb{G}\right)$.

~

\noindent\textbf{\textit{A naive algorithm for non-recursive GMWf projection}}

The goal of this algorithm is to derive by projection of a given GMWf $\mathbb{G}=\left(\mathcal{S},\mathcal{P},\mathcal{A}\right)$ according to a view $\mathcal{V}$, a local GMWf $\mathbb{G}_{\mathcal{V}} = \left(\mathcal{S}_{\mathcal{V}},\mathcal{P}_{\mathcal{V}}, \mathcal{A}_{\mathcal{V}}\right)$ (we note $\mathbb{G}_{\mathcal{V}} = \Pi_{\mathcal{V}}\left(\mathbb{G} \right)$). The proposed algorithm is algorithm \ref{chap3:algo:gmwf-projection}.

\begin{algorithm}
\small
\caption{Algorithm to project a given GMWf according to a given view.}
\label{chap3:algo:gmwf-projection}
\begin{mdframed}[style=MyFrame]
	
	\noindent\textbf{1.}$~$ First of all, it is necessary to generate all the target artifacts denoted by $\mathbb{G}$ (see note (1) below); 
	we thus obtain a set $arts_{\mathbb{G}}=\left\{t_1,\ldots,t_n\right\}$;
	
	\noindent\textbf{2.}$~$ Then, each of the target artifacts must be projected according to $\mathcal{V}$. We thus obtain a set $arts_{\mathbb{G}_{\mathcal{V}}} = \left\{t_{\mathcal{V}_1},\ldots,t_{\mathcal{V}_m}\right\}$ (with $m \leq n$ because there may be duplicates; 
	in this case, only one copy is kept) of artifacts partial replicas;
	
	\noindent\textbf{3.}$~$ Then, collect the different (re)structuring symbols appearing in artifacts of $arts_{\mathbb{G}_{\mathcal{V}}}$, making sure to remove duplicates (see note (2) below) 
	and to consequently update the artifacts and the set $arts_{\mathbb{G}_{\mathcal{V}}}$. We thus obtain a set $\mathcal{S}_{\mathcal{V}_{Struc}}$ of symbols and a final set $arts_{\mathbb{G}_{\mathcal{V}}} = \left\{t_{\mathcal{V}_1},\ldots,t_{\mathcal{V}_l}\right\}$ (with $l \leq m$) of artifacts. These are exactly the only ones that must be conform to the searched GMWf $\mathbb{G}_{\mathcal{V}}$. So we call them, \textit{local target artifacts for the view $\mathcal{V}$};
	
	\noindent\textbf{4.}$~$ At this stage, it is time to collect all the productions that made it possible to build each of the \textit{local target artifacts for the view $\mathcal{V}$}. We obtain a set $\mathcal{P}_{\mathcal{V}}$ of distinct productions.\\
	\textbf{The searched local GMWf $\mathbb{G}_{\mathcal{V}} = \left(\mathcal{S}_{\mathcal{V}},\mathcal{P}_{\mathcal{V}}, \mathcal{A}_{\mathcal{V}}\right)$ is such as}:
	
	$~~$\textbf{a.}$~$ its set of symbols is $\mathcal{S}_{\mathcal{V}} = \mathcal{V} \cup \mathcal{S}_{\mathcal{V}_{Struc}}$;
	
	$~~$\textbf{b.}$~$ its set of productions is $\mathcal{P}_{\mathcal{V}}$;
	
	$~~$\textbf{c.}$~$ its axioms are in $\mathcal{A}_{\mathcal{V}} = \mathcal{A}$
	
	~
	
	\noindent\textit{\textbf{Note (1):}$~$ To generate all the target artifacts denoted by a GMWf $\mathbb{G}=\left(\mathcal{S},\mathcal{P},\mathcal{A}\right)$, one just has to use the set of productions to generate the set of artifacts having one of the axiom $A_{\mathbb{G}}\in \mathcal{A}$ as root. In fact, for each axiom $A_{\mathbb{G}}$, it should be considered that every $A_{\mathbb{G}}$-production $P=\left(A_{\mathbb{G}},X_1\cdots X_n\right)$ induces artifacts $\left\{t_1, \ldots, t_m\right\}$ such as: the root node of each $t_i$ is labelled $A_{\mathbb{G}}$ and has as its sons, a set of artifacts $\left\{t_{i_1},\ldots,t_{i_n}\right\}$, part of the Cartesian product of the sets of artifacts generated when considering each symbol $X_1,\cdots, X_n$ as root node.}
	
	\noindent\textit{\textbf{Note (2):}$~$ In this case, two (re)structuring symbols are identical if for all their appearances in nodes of the different artifacts of $arts_{\mathbb{G}_{\mathcal{V}}}$, they induce the same local scheduling.}
\end{mdframed}
\end{algorithm}

Figure \ref{chap3:fig:gmwf-projection} illustrates the research of a local model $\mathbb{G}_{\mathcal{V}_{EC}}$ such as $\mathbb{G}_{\mathcal{V}_{EC}} = \Pi_{\mathcal{V}_{EC}}\left(\mathbb{G}\right)$ with $\mathcal{V}_{EC}=\mathcal{A}_{EC(r)}=\{A, B, C, D, H1, H2, I1, I2, F\}$. Target artifacts generated from $\mathbb{G}$ (fig. \ref{chap3:fig:gmwf-projection}(b)) are projected to obtain two \textit{local target artifacts for the view $\mathcal{V}_{EC}$} (fig. \ref{chap3:fig:gmwf-projection}(c)). 
From the local target artifacts thus obtained, the searched GMWf is produced (fig. \ref{chap3:fig:gmwf-projection}(d)).
\begin{figure}[ht!]
	\noindent
	\makebox[\textwidth]{\includegraphics[scale=0.3]{./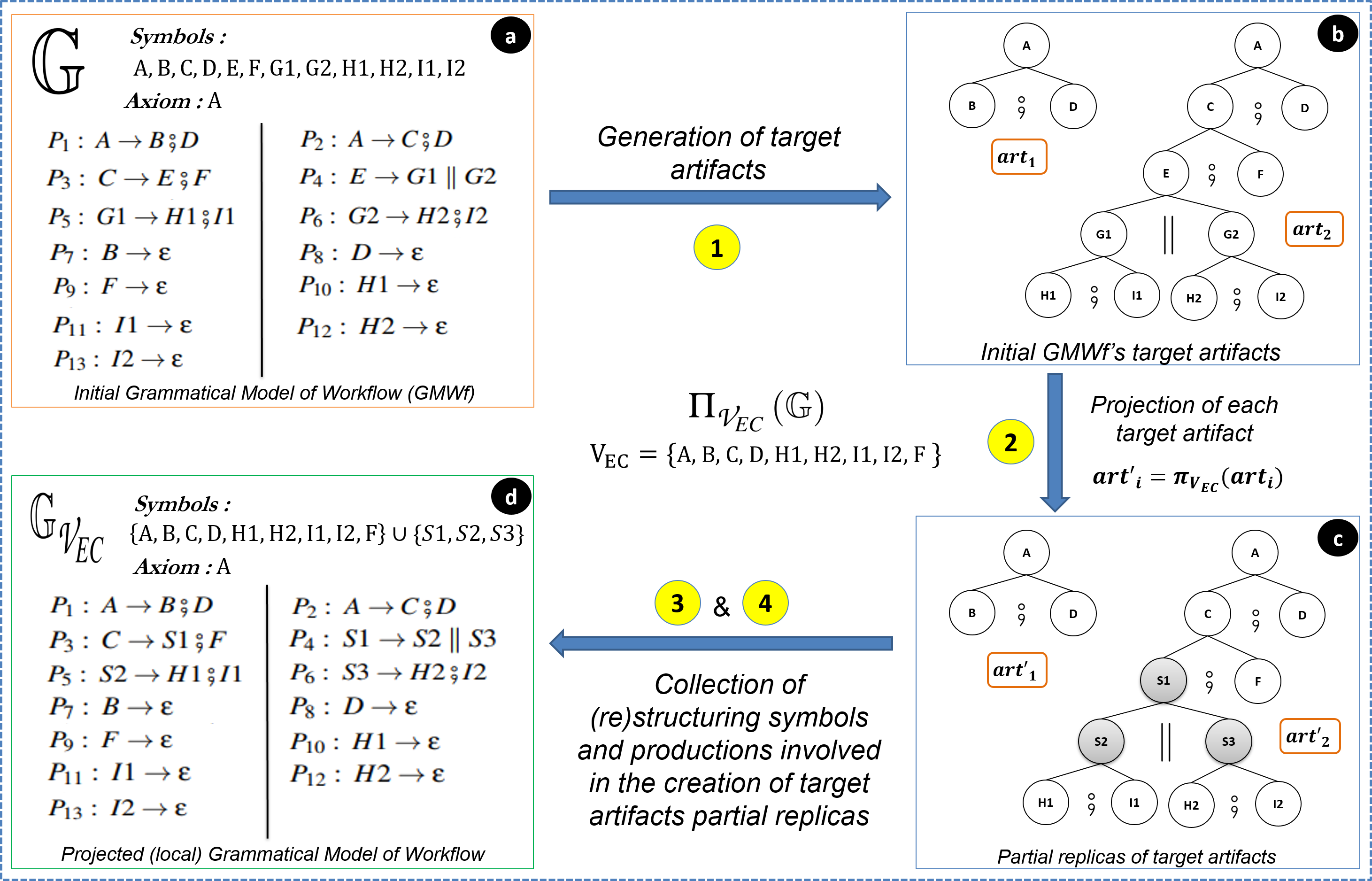}}
	\caption{Example of projection of a GMWf according to a given view.}
	\label{chap3:fig:gmwf-projection}
\end{figure}

The GMWf projection algorithm presented here only works for GMWf that do not allow recursive symbols\footnote{It is only in this context that all the target artifacts can be enumerated.}. We therefore assume that:
\begin{displayquote}
\textit{For the execution model presented in this chapter, the manipulated GMAWfP are those whose GMWf do not contain recursive symbols (\textbf{non-recursive GMWf assumption})}.
\end{displayquote} 
Therefore, it is no longer possible to express iterative routing between process tasks (in the general case); except in cases where the maximum number of iterations is known in advance. This algorithm has some interesting properties and the interested reader will find an introduction to its Haskell implementation in appendix \ref{appendice1:algorithms-implementations}.

\begin{proposition}
	\label{propositionStabiliteProjGMWf}
	For all GMAWfP $\mathbb{W}_f=\left(\mathbb{G}, \mathcal{L}_{P_k}, \mathcal{L}_{\mathcal{A}_k} \right)$ verifying the axioms' visibility and the non-recursivity of GMWf assumptions, the projection of its GMWf $\mathbb{G}=\left(\mathcal{S},\mathcal{P},\mathcal{A}\right)$ according to a given view $\mathcal{V}$, is a GMWf $\mathbb{G}_{\mathcal{V}} = \Pi_{\mathcal{V}}\left(\mathbb{G} \right)$ for a GMAWfP $\mathbb{W}_{f_{\mathcal{V}}}$ verifying the assumptions of axiom visibility and non-recursivity of GMWf (stability property of $\Pi$).
\end{proposition}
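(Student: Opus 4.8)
The plan is to verify, for the grammar $\mathbb{G}_{\mathcal{V}}=\left(\mathcal{S}_{\mathcal{V}},\mathcal{P}_{\mathcal{V}},\mathcal{A}_{\mathcal{V}}\right)$ returned by Algorithm \ref{chap3:algo:gmwf-projection}, the three defining requirements in turn: that it is a well-formed GMWf in the sense of Definition \ref{defGMWf3} (every production has one of the two admissible forms), that it fulfils the axioms' visibility assumption, and that it fulfils the non-recursivity assumption. The accompanying GMAWfP $\mathbb{W}_{f_{\mathcal{V}}}$ is then obtained by keeping the same roster of agents and restricting each accreditation to $\mathcal{S}_{\mathcal{V}}$, declaring the freshly created (re)structuring symbols readable and writable by the agent of view $\mathcal{V}$. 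First I would record that the whole construction is well-defined: because $\mathbb{W}_f$ satisfies the non-recursivity assumption, Step~1 enumerates a \emph{finite} set $arts_{\mathbb{G}}=\{t_1,\ldots,t_n\}$ of finite target artifacts (this is exactly what the algorithm's footnote relies on), and since $\mathbb{W}_f$ also satisfies the axioms' visibility assumption with each $t_i \therefore \mathbb{G}$, Proposition \ref{propositionStabiliteProjArt} applies so that every $\pi_{\mathcal{V}}(t_i)$ is a \emph{single} artifact. Hence Step~2 yields a finite set $arts_{\mathbb{G}_{\mathcal{V}}}=\{t_{\mathcal{V}_1},\ldots,t_{\mathcal{V}_l}\}$ of finite local target artifacts from which Steps~3--4 read off the symbols and productions.

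Next I would dispatch well-formedness and axioms' visibility, which are the easy parts. Well-formedness follows from the design of $\pi$ (Algorithm \ref{chap3:algo:artifact-projection}): it was built precisely so that each local target artifact is again an annotated tree using only the sequential ($\fatsemi$) and parallel ($\parallel$) decompositions, inserting a fresh (re)structuring symbol whenever a projected node would otherwise mix the two groupings (challenge~(2)); consequently every production collected in Step~4 has one of the two forms of Definition \ref{defGMWf3}, so $\mathbb{G}_{\mathcal{V}}$ is a genuine GMWf. For axioms' visibility I would use that $\mathcal{A}_{\mathcal{V}}=\mathcal{A}$ together with the fact that the original assumption forces $\mathcal{A}\subseteq\mathcal{V}$ for every view occurring in $\mathbb{W}_f$; thus the axioms of $\mathbb{G}_{\mathcal{V}}$ are visible symbols, every agent of $\mathbb{W}_{f_{\mathcal{V}}}$ remains accredited in reading on them, and the assumption is preserved.

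The delicate part, which I would treat last, is non-recursivity of $\mathbb{G}_{\mathcal{V}}$. I would avoid the tempting ``the local language is finite'' shortcut, since finiteness of a context-free language is \emph{equivalent} to non-recursivity and the argument would be circular. Instead I would exhibit a well-founded ranking $\mu$ on $\mathcal{S}_{\mathcal{V}}$ that strictly decreases along every production of $\mathcal{P}_{\mathcal{V}}$, which is equivalent to acyclicity of the dependency graph and hence to non-recursivity. For a visible symbol $X\in\mathcal{V}$ I would set $\mu(X)=2\rho(X)$, where $\rho$ is a topological rank of $\mathbb{G}$ (it exists because $\mathbb{G}$ is non-recursive, i.e. its dependency graph is acyclic); since $\pi$ preserves the father--son hierarchy (challenge~(1)), any visible symbol lying strictly below $X$ in a local target artifact is a proper descendant of $X$ in the corresponding original target artifact, hence has strictly smaller $\rho$. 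For a (re)structuring symbol $S$, which Algorithm \ref{chap3:algo:artifact-projection} always inserts strictly between a visible node labelled $X_S$ and a grouping of that node's visible descendants, I would set $\mu(S)$ just below $2\rho(X_S)$, refined by the bounded depth of $S$ inside the finite nest of restructuring symbols created for that single decomposition, and then check the decrease across the four kinds of edges (visible-to-visible, visible-to-restructuring, restructuring-to-visible, restructuring-to-restructuring).

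The hard part will be the restructuring-to-restructuring case together with the identification of ``equivalent'' restructuring symbols performed in Note~(2) of Algorithm \ref{chap3:algo:gmwf-projection}. I must verify that merging two restructuring symbols that induce the same local scheduling can never place one symbol simultaneously under visible ancestors of different rank, so that $X_S$, and therefore $\mu(S)$, is well-defined; concretely this amounts to showing the merging criterion only identifies symbols occupying the same position within the same well-founded decomposition context. Once that well-definedness is secured, the strict decrease of $\mu$ along all productions yields an acyclic dependency graph, hence $\mathbb{G}_{\mathcal{V}}$ is non-recursive, and $\mathbb{W}_{f_{\mathcal{V}}}$ satisfies both assumptions, completing the proof.
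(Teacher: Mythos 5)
Your proposal follows the same skeleton as the paper's proof — walk through Algorithm \ref{chap3:algo:gmwf-projection}, use non-recursivity of $\mathbb{G}$ to get a finite enumerable set $arts_{\mathbb{G}}$, invoke Proposition \ref{propositionStabiliteProjArt} to get a single projected artifact per target artifact rooted at an axiom, and read off well-formedness and axioms' visibility from the construction — so on those points the two arguments coincide. Where you genuinely diverge is on non-recursivity. The paper disposes of it in one sentence: each (re)structuring symbol is ``created and used only once to replace a symbol that is not visible and not recursive,'' hence the new symbols are not recursive; it never addresses whether the \emph{visible} symbols could become recursive in $\mathbb{G}_{\mathcal{V}}$, even though the productions of $\mathbb{G}_{\mathcal{V}}$ are not a subset of those of $\mathbb{G}$ but new ``shortcut'' productions harvested from finitely many projected trees, which could in principle recombine into a cycle. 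Your explicit well-founded ranking $\mu$, built from a topological rank $\rho$ of the acyclic dependency graph of $\mathbb{G}$ and the observation that $\pi$ preserves the ancestor--descendant relation (so a visible symbol below $X$ in a local target artifact is a proper descendant of $X$ in $\mathbb{G}$ and has strictly smaller $\rho$), is exactly the justification the paper omits; you are also right that the ``finite language, hence non-recursive'' shortcut is unavailable, precisely because recombination of collected productions could a priori enlarge the language. The one place where you stop short of a full argument — checking that the identification of (re)structuring symbols in Note~(2) cannot place a merged symbol under visible ancestors of incomparable rank, so that $\mu(S)$ is well-defined — is a real obligation, but you name it correctly and sketch why it holds; the paper's proof does not acknowledge this issue at all. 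In short, your route is a strict strengthening of the paper's: same construction, but with the non-recursivity claim actually proved rather than asserted.
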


\begin{proof}
	Let's show that $\mathbb{G}_{\mathcal{V}} = \Pi_{\mathcal{V}}\left(\mathbb{G} \right)$ is a GMWf for a new GMAWfP $\mathbb{W}_{f_{\mathcal{V}}}=\left(\mathbb{G}_{\mathcal{V}}, \mathcal{L}_{P_k}, \mathcal{L}_{\mathcal{A}_{\mathcal{V}_k}} \right)$ that verifies the assumptions of axioms' visibility and non-recursivity of GMWf. 
	As $\mathbb{W}_f=\left(\mathbb{G}, \mathcal{L}_{P_k}, \mathcal{L}_{\mathcal{A}_k} \right)$ validates the non-recursivity of GMWf assumption, the set of target artifacts ($arts_{\mathbb{G}}=\left\{t_1,\ldots,t_n\right\}$) that it denotes is finite and can therefore be fully enumerated. Knowing further that $\mathbb{W}_f$ validates the axioms' visibility assumption, it is deduced that the set $arts_{\mathbb{G}_{\mathcal{V}}} = \left\{t_{\mathcal{V}_1}=\pi_{\mathcal{V}}\left(t_1\right), \ldots,t_{\mathcal{V}_n}=\pi_{\mathcal{V}}\left(t_n\right)\right\}$ is finite and the root node of each artifact $t_{\mathcal{V}_i}$ is associated with an axiom $A_{\mathbb{G}} \in \mathcal{A}$ (see proposition \ref{propositionStabiliteProjArt}). $\mathbb{G}_{\mathcal{V}}$ being built from the set $arts_{\mathbb{G}_{\mathcal{V}}}$, its axioms $\mathcal{A}_{\mathcal{V}}=\mathcal{A}$ are visible to all actors and its productions are only of the two forms retained for GMWf. In addition, each new (re)structuring symbol ($S \in \mathcal{S}_{\mathcal{V}_{Struc}}$)) is created and used only once to replace a symbol that is not visible and not recursive (by assumption) when projecting artifacts of $arts_{\mathbb{G}}$. The new symbols are therefore not recursive. By replacing in $\mathcal{L}_{\mathcal{A}_k}$ the view $\mathcal{V}$ by $\mathcal{V} \cup \mathcal{S}_{\mathcal{V}_{Struc}}$, one obtains a new set $\mathcal{L}_{\mathcal{A}_{\mathcal{V}_k}}$ of accreditations for a new GMAWfP $\mathbb{W}_{f_{\mathcal{V}}}=\left(\mathbb{G}_{\mathcal{V}}, \mathcal{L}_{P_k}, \mathcal{L}_{\mathcal{A}_{\mathcal{V}_k}} \right)$ verifying the assumptions of axioms' visibility and non-recursivity of GMWf.
\end{proof}

\begin{proposition}
	\label{propositionCoherenceArtefact}
	For all GMAWfP $\mathbb{W}_f=\left(\mathbb{G}, \mathcal{L}_{P_k}, \mathcal{L}_{\mathcal{A}_k} \right)$ verifying the axioms' visibility and the non-recursivity of GMWf assumptions, the projection of an artifact $t$ which is conform to the GMWf $\mathbb{G}$ according to a given view $\mathcal{V}$, is an artifact which is conform to the projection of $\mathbb{G}$ according to $\mathcal{V}$ $\left(\forall t, ~t \therefore \mathbb{G} \Rightarrow \pi_{\mathcal{V}}\left(t\right) \therefore \Pi_{\mathcal{V}}\left(\mathbb{G} \right)\right)$.
\end{proposition}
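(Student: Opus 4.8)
The plan is to prove conformity by routing everything through the construction of $\mathbb{G}_{\mathcal{V}}=\Pi_{\mathcal{V}}(\mathbb{G})$ from the target artifacts of $\mathbb{G}$, rather than reasoning about the two projection operators independently. First I would invoke the characterisation recalled in Section~\ref{chap3:sec:artifact-type}: an artifact $t$ is conform to $\mathbb{G}$ exactly when it is a \emph{prefix} of some target artifact $t'\in arts_{\mathbb{G}}$, i.e.\ when $t$ is obtained from a complete derivation tree $t'$ by pruning some subtrees back to buds. Dually, since $(\mathbb{G}_{\mathcal{V}})_{\Omega}$ recognises all prefixes of the local target artifacts, $\pi_{\mathcal{V}}(t)\therefore\mathbb{G}_{\mathcal{V}}$ holds iff $\pi_{\mathcal{V}}(t)$ is a prefix of some projected target artifact. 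Under the axioms' visibility and non-recursivity assumptions, Proposition~\ref{propositionStabiliteProjGMWf} guarantees that $\mathbb{G}_{\mathcal{V}}$ is a genuine GMWf whose local target artifacts are precisely $\{\pi_{\mathcal{V}}(t')\mid t'\in arts_{\mathbb{G}}\}$, and whose production set $\mathcal{P}_{\mathcal{V}}$ consists of exactly the hierarchical decompositions occurring in those projected trees (this is how Algorithm~\ref{chap3:algo:gmwf-projection} builds $\mathbb{G}_{\mathcal{V}}$).

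Next I would fix a target artifact $t'$ of which $t$ is a prefix. By Proposition~\ref{propositionStabiliteProjArt} both $\pi_{\mathcal{V}}(t)$ and $\pi_{\mathcal{V}}(t')$ are single well-formed artifacts, and $\pi_{\mathcal{V}}(t')$ is conform to $\mathbb{G}_{\mathcal{V}}$ by construction. It then suffices to establish the key commutation lemma
\[
t \text{ prefix of } t' \;\Longrightarrow\; \pi_{\mathcal{V}}(t)\text{ prefix of }\pi_{\mathcal{V}}(t').
\]
Granting this, every node-and-sons decomposition appearing in $\pi_{\mathcal{V}}(t)$ already appears in $\pi_{\mathcal{V}}(t')$, hence matches a production of $\mathcal{P}_{\mathcal{V}}$, while the leaves of $\pi_{\mathcal{V}}(t)$ are buds or closed by $\varepsilon$-productions; this is exactly the content of $\pi_{\mathcal{V}}(t)\therefore\mathbb{G}_{\mathcal{V}}$ through the extended grammar $(\mathbb{G}_{\mathcal{V}})_{\Omega}$.

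I would prove the commutation lemma by structural induction following the recursion of Algorithm~\ref{chap3:algo:artifact-projection}, case-splitting on whether the root symbol $X$ is visible and, in the visible case, comparing the child-by-child processing of $t$ and $t'$. The main obstacle — and the step I expect to absorb most of the work — is the bookkeeping of the (re)structuring symbols $S_i$ inserted by steps 2.b and 3 of the projection algorithm, coupled with the treatment of buds. When $t$ prunes a subtree of $t'$ to a bud, one must verify that the restructuring carried out while projecting the smaller tree is \emph{coherent} with the restructuring carried out on the full tree, so that the former is literally a prefix of the latter and not merely isomorphic up to symbol renaming; here I would lean on Note~(2) of Algorithm~\ref{chap3:algo:gmwf-projection}, which identifies restructuring symbols that induce the same local scheduling, to guarantee that the symbols introduced at matching positions coincide. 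The most delicate sub-case is a bud of \emph{invisible} type occurring as a sequential predecessor of a visible (locked) bud $X_{\overline{\omega}}$: one has to check that the decomposition produced at the visible ancestor when the predecessor is still a bud is compatible, as a prefix, with the decomposition produced once the predecessor is expanded. I would therefore strengthen the induction hypothesis to assert simultaneously that the restructuring symbol chosen at any position depends only on the local scheduling imposed there, a property that is manifestly stable under pruning, which is what ultimately forces the prefix relation to be preserved.
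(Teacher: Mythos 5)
Your proof routes through a commutation lemma ($t$ prefix of $t'$ implies $\pi_{\mathcal{V}}(t)$ prefix of $\pi_{\mathcal{V}}(t')$) that the paper never needs, because you have read ``$t \therefore \mathbb{G}$'' as ``prefix of some target artifact''. In this manuscript that relation means that $t$ is a \emph{complete} derivation tree of $\mathbb{G}$ (see the definition of derivation tree in section \ref{chap2:sec:structured-document-edition-conformity}); artifacts containing buds are conform to the extension $\mathbb{G}_{\Omega}$, not to $\mathbb{G}$ itself. Under the non-recursivity assumption the paper simply observes that $t \therefore \mathbb{G} \Leftrightarrow t \in arts_{\mathbb{G}}$, so $\pi_{\mathcal{V}}(t)$ is (up to renaming of restructuring symbols) one of the local target artifacts from which algorithm \ref{chap3:algo:gmwf-projection} harvests the productions of $\Pi_{\mathcal{V}}\left(\mathbb{G}\right)$; conformity is then immediate from the fact that every production used in $\pi_{\mathcal{V}}(t)$ lies in $\mathcal{P}_{\mathcal{V}}$ and the root is a visible axiom. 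The prefix machinery is therefore unnecessary for the statement actually being proved.

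More importantly, the commutation lemma is false in the generality in which you need it. Take a visible closed node $X$ of $t$ with sons $Y_{\omega}$ and $Z$, where $Y$ is invisible and $Z$ is visible, and let $t'$ expand the bud $Y_{\omega}$ into a subtree containing a visible node $D$. Projecting $t$ deletes the invisible bud outright (it contributes an empty forest), so $X$ keeps a single son; projecting $t'$ promotes $D$ to a son of $X$, so $X$ acquires two sons. Since $X$ is not a bud in $\pi_{\mathcal{V}}(t)$ and the arities differ, $\pi_{\mathcal{V}}(t)$ is \emph{not} a prefix of $\pi_{\mathcal{V}}(t')$ under the prefix relation of algorithm \ref{chap3:algo:search-guide} --- this is precisely the phenomenon the paper handles elsewhere with upstairs buds and the pruning step of the expansion-pruning protocol, not inside proposition \ref{propositionCoherenceArtefact}. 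Restricting your argument to complete derivation trees (where the lemma degenerates to $\pi_{\mathcal{V}}(t)$ being a prefix of itself) collapses it onto the paper's proof; as written, the inductive step you describe would fail at exactly the sub-case you identify as delicate.
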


\begin{proof}
	Knowing that the considered GMAWfP $\mathbb{W}_f=\left(\mathbb{G}, \mathcal{L}_{P_k}, \mathcal{L}_{\mathcal{A}_k} \right)$ verifies the axioms' visibility and the non-recursivity of GMWf assumptions, it is deduced that the set of its target artifacts $arts_{\mathbb{G}}$ (those who helped to build its GMWf $\mathbb{G}$) is finite and any artifact that is conform to its GMWf $\mathbb{G}$ is a target artifact $\left( \forall t, ~t \therefore \mathbb{G} \Leftrightarrow t \in arts_{\mathbb{G}} \right)$. Therefore, considering a given artifact $t$ such that $t$ is conform to $\mathbb{G}$ ($t \therefore \mathbb{G}$), one knows that it is a target artifact ($t \in arts_{\mathbb{G}}$) and its projection according to a given view $\mathcal{V}$ produces a single artifact $t_{\mathcal{V}}=\pi_{\mathcal{V}}\left(t\right)$ (see "stability property of $\pi$", proposition \ref{propositionStabiliteProjArt}) such as $t$ and $t_{\mathcal{V}}$ have the same root (one of the axioms $A_{\mathbb{G}} \in \mathcal{A}$ of $\mathbb{G}$). Since $t$ is a target artifact, its projection $t_{\mathcal{V}}$ (through the renaming of some potential (re)structuring symbols) is part of the set $arts_{\mathbb{G}_{\mathcal{V}}}$ of artifacts that have generated $\mathbb{G}_{\mathcal{V}} = \Pi_{\mathcal{V}}\left(\mathbb{G} \right)$ by applying the projection principle described in the algorithm \ref{chap3:algo:gmwf-projection}. Therefore, the productions involved in the construction of $t_{\mathcal{V}}$ are all included in the set of productions of the GMWf $\mathbb{G}_{\mathcal{V}} = \Pi_{\mathcal{V}}\left(\mathbb{G} \right)$. As the set of axioms of $\mathbb{G}_{\mathcal{V}}$ is $\mathcal{A}_{\mathcal{V}} = \mathcal{A}$, it is deduced that $A_{\mathbb{G}} \in \mathcal{A}_{\mathcal{V}}$ and concluded that $t_{\mathcal{V}} \therefore \mathbb{G}_{\mathcal{V}}$.
\end{proof}

\begin{proposition}
	\label{propositionReciproqueCoherenceArtefact}
	Consider a GMAWfP $\mathbb{W}_f=\left(\mathbb{G}, \mathcal{L}_{P_k}, \mathcal{L}_{\mathcal{A}_k} \right)$ verifying the axioms' visibility and the non-recursivity assumptions. For all artifact $t_{\mathcal{V}}$ which is conform to $\Pi_{\mathcal{V}}\left(\mathbb{G} \right)$, it exists at least one artifact $t$ which is conform to $\mathbb{G}$ such that $t_{\mathcal{V}}=\pi_{\mathcal{V}}\left(t\right)$ $\left(\forall t_{\mathcal{V}}, ~t_{\mathcal{V}} \therefore \Pi_{\mathcal{V}}\left(\mathbb{G} \right) \Rightarrow \exists t, ~t \therefore \mathbb{G} ~and~ t_{\mathcal{V}}=\pi_{\mathcal{V}}\left(t\right) \right)$.
\end{proposition}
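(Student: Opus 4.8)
The plan is to exploit the non-recursivity of the projected model together with the explicit construction of $\mathbb{G}_{\mathcal{V}}$ given in algorithm \ref{chap3:algo:gmwf-projection}, reducing the statement to a membership argument among finitely many target artifacts. First I would invoke the stability property of $\Pi$ (Proposition \ref{propositionStabiliteProjGMWf}): since $\mathbb{W}_f$ satisfies the axioms' visibility and non-recursivity assumptions, $\mathbb{G}_{\mathcal{V}} = \Pi_{\mathcal{V}}(\mathbb{G})$ is itself the GMWf of a GMAWfP $\mathbb{W}_{f_{\mathcal{V}}}$ that again satisfies both assumptions. In particular $\mathbb{G}_{\mathcal{V}}$ is non-recursive, so --- exactly as in the opening of the proof of Proposition \ref{propositionCoherenceArtefact} --- its set of target artifacts can be fully enumerated and every artifact conform to it is one of them, i.e. $\forall t_{\mathcal{V}}, ~ t_{\mathcal{V}} \therefore \mathbb{G}_{\mathcal{V}} \Leftrightarrow t_{\mathcal{V}} \in arts_{\mathbb{G}_{\mathcal{V}}}$.

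Next I would read off from the construction of $\mathbb{G}_{\mathcal{V}}$ that its generating set $arts_{\mathbb{G}_{\mathcal{V}}}$ is, up to the renaming and duplicate removal of (re)structuring symbols described in steps 2--4 and in Note (2) of algorithm \ref{chap3:algo:gmwf-projection}, exactly the family of projections $\{\pi_{\mathcal{V}}(t) \mid t \in arts_{\mathbb{G}}\}$ of the target artifacts of $\mathbb{G}$; each such projection is well defined and unique by the stability of $\pi$ (Proposition \ref{propositionStabiliteProjArt}). Combining the two steps, any $t_{\mathcal{V}} \therefore \mathbb{G}_{\mathcal{V}}$ lies in $arts_{\mathbb{G}_{\mathcal{V}}}$, hence equals $\pi_{\mathcal{V}}(t)$ for some target artifact $t \in arts_{\mathbb{G}}$; and since $t \in arts_{\mathbb{G}} \Leftrightarrow t \therefore \mathbb{G}$, this $t$ is precisely the required antecedent.

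The hard part will be justifying the identity $arts_{\mathbb{G}_{\mathcal{V}}} = \{\pi_{\mathcal{V}}(t) \mid t \in arts_{\mathbb{G}}\}$ at the level of the whole generated language, not merely of the generating set. A priori, collecting the productions of finitely many local target artifacts can yield a grammar that over-generates: one might splice a top-level decomposition coming from one local target artifact together with a sub-artifact coming from another, producing a tree conform to $\mathbb{G}_{\mathcal{V}}$ that is not literally one of the enumerated $\pi_{\mathcal{V}}(t_i)$. To close this gap I would show that the projection operator commutes with grammar generation, so that every derivation in $\mathbb{G}_{\mathcal{V}}$ lifts to a derivation in $\mathbb{G}$: each production of $\mathbb{G}_{\mathcal{V}}$ is traceable, via $\pi_{\mathcal{V}}$, to a hierarchical decomposition occurring in some target artifact of $\mathbb{G}$, and a spliced tree --- being itself a target artifact of the non-recursive $\mathbb{G}_{\mathcal{V}}$ --- can therefore be pulled back to a legal derivation in $\mathbb{G}$ whose projection reproduces it. Making this lifting precise, by induction on the structure of $t_{\mathcal{V}}$ using the local correspondence between productions of $\mathbb{G}_{\mathcal{V}}$ and decompositions of $\mathbb{G}$ recorded while running the projection, is the technical core of the argument; the axioms' visibility assumption ensures the induction can start at the root, since the shared axiom $A_{\mathbb{G}} \in \mathcal{A}$ is visible and hence labels both $t$ and $t_{\mathcal{V}}$.
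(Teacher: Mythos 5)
Your first two paragraphs reproduce the paper's own argument almost exactly: invoke the stability of $\Pi$ (Proposition \ref{propositionStabiliteProjGMWf}) to transfer both assumptions to $\mathbb{G}_{\mathcal{V}}$, identify conformity to $\mathbb{G}_{\mathcal{V}}$ with membership in $arts_{\mathbb{G}_{\mathcal{V}}}$, observe that $arts_{\mathbb{G}_{\mathcal{V}}}$ consists exclusively of projections of elements of $arts_{\mathbb{G}}$, and conclude via $t \therefore \mathbb{G} \Leftrightarrow t \in arts_{\mathbb{G}}$. Your third paragraph, however, puts its finger on a point the paper silently skips: the claim that every derivation tree of the grammar assembled from the productions of finitely many (local) target artifacts is itself one of those artifacts is exactly the over-generation question, and the paper asserts it (both for $\mathbb{G}$ in the proof of Proposition \ref{propositionCoherenceArtefact} and for $\mathbb{G}_{\mathcal{V}}$ here) without argument. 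Your proposed remedy --- a lifting lemma showing that each production of $\mathbb{G}_{\mathcal{V}}$ traces back through $\pi_{\mathcal{V}}$ to a decomposition of $\mathbb{G}$, so that any spliced derivation in $\mathbb{G}_{\mathcal{V}}$ pulls back inductively to a legal derivation in $\mathbb{G}$, anchored at the visible axiom --- is the right way to close that gap, and would make the argument strictly more complete than the one printed in the paper.
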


\begin{proof}
	With proposition \ref{propositionStabiliteProjGMWf} ("stability property of $\Pi$") it has been shown that the projection $\mathbb{G}_{\mathcal{V}} = \Pi_{\mathcal{V}}\left(\mathbb{G} \right)$ according to the view $\mathcal{V}$ of a GMWf $\mathbb{G}$ verifying the axioms' visibility and the non-recursivity assumptions, is a GMWf verifying the same assumptions. On this basis and using similar reasoning to that used to prove the proposition \ref{propositionCoherenceArtefact}, it's been determined that an artifact $t_{\mathcal{V}}$ that is conform to $\mathbb{G}_{\mathcal{V}}$, is one of its target artifacts (\textit{local target artifact for the view $\mathcal{V}$}): i.e, $t_{\mathcal{V}} \in arts_{\mathbb{G}_{\mathcal{V}}}$. Referring to the projection process which made it possible to obtain $\mathbb{G}_{\mathcal{V}}$, it is determined that the set $arts_{\mathbb{G}_{\mathcal{V}}}$ is exclusively made up of the projections of the set $arts_{\mathbb{G}}=\left\{t_1,\ldots,t_n\right\}$ of $\mathbb{G}$'s target artifacts. $t_{\mathcal{V}}$ is therefore the projection of at least one target artifact $t_i \in arts_{\mathbb{G}}$ of $\mathbb{G}$ $\left(t_{\mathcal{V}}=\pi_{\mathcal{V}}\left(t_i\right)\right)$. Knowing that $\forall t, ~t \therefore \mathbb{G} \Leftrightarrow t \in arts_{\mathbb{G}}$ (see proof of proposition \ref{propositionCoherenceArtefact}), it is deduced that $t_i \therefore \mathbb{G}$ and the proof of this proposition is made.
\end{proof}

By applying the GMWf projection algorithm presented above to the running example, one obtain the productions listed in table \ref{tableau:gramLocales} for the different agents respectively. Let us note that this algorithm simply project each target artifact according to the view of the considered agent, then gather the productions in the obtained partial replicas while removing the duplicates. In the illustrated case here, we have considered an update of the GMWf of the peer-review process so that it validates the axioms' visibility assumption (see sec. \ref{chap3:sec:partial-replica}).
\begin{table}[h]
	\centering
	\caption{Local GMWf productions of all the agents involved in the peer-review process.}
	\label{tableau:gramLocales}
	\begin{tabular}[t]{|m{3.5cm}|m{10.5cm}|}
		\hline
		\textbf{Agent} & \textbf{Productions of local GMWf} \\
		\hline
		Editor in Chief ($EC$) &
		\[ 
		\begin{array}{l|l|l}
		P_{1}:\; A_{\mathbb{G}}\rightarrow A & \; P_{2}:\; A\rightarrow B\fatsemi D\; & \; P_{3}:\; A\rightarrow C\fatsemi D  \\
		P_{4}:\; C\rightarrow S1\fatsemi F & \; P_{5}:\; S1\rightarrow S2\parallel S3\; & \; P_{6}:\; S2\rightarrow H1 \fatsemi I1  \\
		P_{7}:\; S3\rightarrow H2 \fatsemi I2 & \; P_{8}:\; B\rightarrow \varepsilon\; & \; P_{9}:\; D\rightarrow \varepsilon \\
		P_{10}:\; F\rightarrow \varepsilon & \; P_{11}:\; H1\rightarrow \varepsilon\; & \; P_{12}:\; I1\rightarrow \varepsilon  \\
		P_{13}:\; H2\rightarrow \varepsilon & \; P_{14}:\; I2\rightarrow \varepsilon \; &   \\
		\end{array}
		\]
		\\
		\hline
		Associated Editor ($AE$) & 
		\[ 
		\begin{array}{l|l|l}
		P_{1}:\; A_{\mathbb{G}}\rightarrow A & \; P_{2}:\; A\rightarrow C \; & \; P_{3}:\; C\rightarrow E\fatsemi F  \\
		P_{4}:\; E\rightarrow S1\parallel S2 & \; P_{5}:\; S1\rightarrow H1\fatsemi I1 \; & \; P_{6}:\; S2\rightarrow H2\fatsemi I2  \\
		P_{7}:\; H1\rightarrow \varepsilon & \; P_{8}:\; I1\rightarrow \varepsilon \; & \; P_{9}:\; H2\rightarrow \varepsilon \\
		P_{10}:\; I2\rightarrow \varepsilon & \; P_{11}:\; F\rightarrow \varepsilon \; & \; P_{12}:\; A_{\mathbb{G}}\rightarrow \varepsilon \\
		\end{array}
		\]
		\\
		\hline
		First referee ($R1$) & 
		\[ 
		\begin{array}{l|l|l}
		P_{1}:\; A_{\mathbb{G}}\rightarrow C & P_{2}:\; C\rightarrow G1\; & \; P_{3}:\; G1\rightarrow H1\fatsemi I1 \\
		P_{4}:\; H1\rightarrow \varepsilon & P_{5}:\; I1\rightarrow \varepsilon \; & \; P_{6}:\; A_{\mathbb{G}}\rightarrow \varepsilon \\
		\end{array}
		\]
		\\
		\hline
		Second referee ($R2$) & 
		\[ 
		\begin{array}{l|l|l}
		P_{1}:\; A_{\mathbb{G}}\rightarrow C & P_{2}:\; C\rightarrow G2\; & \; P_{3}:\; G2\rightarrow H2\fatsemi I2 \\
		P_{4}:\; H2\rightarrow \varepsilon & P_{5}:\; I2\rightarrow \varepsilon \; & \; P_{6}:\; A_{\mathbb{G}}\rightarrow \varepsilon \\
		\end{array}
		\]
		\\
		\hline
	\end{tabular}
\end{table}

\mySubSection{The Artifact-Centric Choreography}{}
\label{chap3:sec:execution-model}

In this section, we are interested in the actual execution of a process $\mathcal{P}_{op}$ whose GMWf is $\mathbb{G}=\left(\mathcal{S},\mathcal{P},\mathcal{A}\right)$.

\mySubSubSection{Initial Configuration of an Agent}{}
\label{chap3:sec:initial-configuration-of-a-peer}

Each agent $i$ taking part in the choreography, has a single identifier (its ID). For a proper execution, it manages a local copy of the process' global GMWf $\mathbb{G}$, accreditations of various agents involved and its local GMWf $\mathbb{G}_{i}$. In addition, it handles a list $RET_i$ of agents who have made requests and whose answers are yet to be sent, as well as two queues: $REQ_i$ which stores requests waiting to be executed, and $ANS_i$ which temporally stores answers received from agents to which requests were previously made. A local copy $t_i$ of the mobile artifact and its (potentially partial) replica $t_{\mathcal{V}_{i}}$ are also handled by agent $i$.

\mySubSubSection{The Execution Choreography and Agent's Behaviour}{}
\label{chap3:sec:architecture-of-a-peer}
The execution of an instance of the process is triggered when an artifact $t$ is introduced into the system (on the appropriate agent); this artifact is in fact an unlocked bud of the type of one axiom $A_\mathbb{G} \in \mathcal{A}$ (initial task) of the (global) GMWf $\mathbb{G}$.

An artifact that arrives on a given agent is either a request or a response to a request; depending on the case, it is inserted in the appropriate queue ($REQ_i$ or $ANS_i$).
As soon as possible\footnote{For instance at the end of the local replica completion or after a given time interval.}, the artifact is removed from the queue, merged with the local copy (if it exists) and is then completed as needed.
Completing an artifact consists of executing in a coherent way, the various tasks it imposes, i.e. those on which the current agent is accredited in writing.

At the end of the completion on an artifact, if its configuration shows that it must be completed by other agents (this is the case if the artifact contains buds created by the current agent and whose agent accredited in writing, are remote), replicas of the artifact are sent to the said agents by invoking the service \textit{forwardTo}.
Otherwise, the artifact is complete (it contains no more buds), or semi-complete (it contains buds that had been created by other agents and on which, the current agent is not accredited in writing); in which case, a replica is returned to the agent from which the artifact was previously received by invoking the service \textit{returnTo}.

The execution of the process ends when all the tasks constituting a scenario of the process have been executed. In this case, the artifact that is cooperatively edited is complete (closed) on the agent where the process was triggered.

\mySubSubSection{The Protocols}{}
\label{chap3:sec:the-protocols}

The activity that takes place on an agent in relation to the handling of a given artifact, breaks down into five sub-activities (see fig. \ref{chap3:fig:peer-architecture}); each of them is managed by a dedicated protocol. These activities are the following: 

\begin{itemize}
	\item \textit{creation} (initialisation of a new case) or \textit{receipt-merger} of a replica.
	\item \textit{replication}: it consists in the extraction (from the local replica) of the partial replica that the local agent has to complete (manage its execution).
	\item \textit{execution}: it consists in the extension by the local actor (via the specialised editor) of the buds for which he is accredited in writing.
	\item \textit{expansion-pruning}: it consists in the reconstruction by expansion of the local (global) replica from the updated local partial replica.
	\item \textit{diffusion}: it corresponds to the return of the response to a request, or to the sending of requests.
\end{itemize}
\begin{figure}[ht!]
	\noindent
	\makebox[\textwidth]{\includegraphics[scale=0.28]{./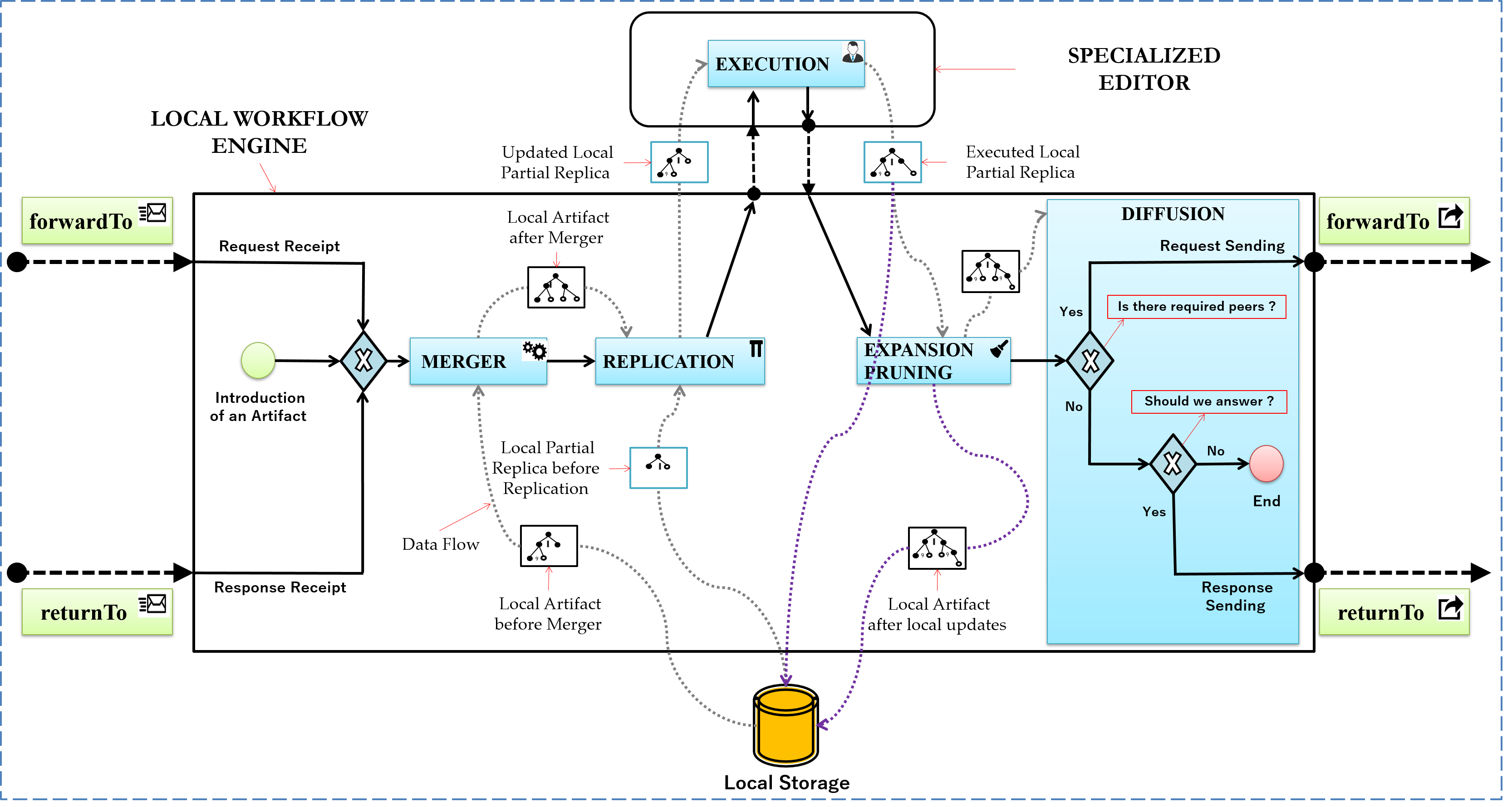}}
	\caption{Activity of an agent in the system.}
	\label{chap3:fig:peer-architecture}
\end{figure}

The management protocols for these different activities are described in the following paragraphs.

~

\noindent\textbf{\textit{The Receipt-Merger Protocol}}

An artifact is received either when a new case is initialised or after a request or a response is delivered. In all cases, a merge using an adaptation of the algorithm in \cite{badouelTchoupeCmcs} is performed. The goal in this step is to update the local copy $t_i$ of the global artifact from those received (the $(t^{maj}_j)_{1 \leq j \neq i \leq k}$ contained in queues $REQ_i$ and $ANS_i$). 
For that purpose (see algorithm \ref{algorithmeFusion}), we merge $t_i$ with each artifact $t^{maj}_j$ from the requests queue (algorithm \ref{algorithmeFusion}, lines 2 to 6) and from the responses queue (algorithm \ref{algorithmeFusion}, lines 7 to 10) until they are empty. 
At each merge, $t_i$ is updated (algorithm \ref{algorithmeFusion}, lines 3 and 8). 
For each received request, the identity of the sender is kept in the list $RET_i$ (algorithm \ref{algorithmeFusion}, line 5) to be able to return a response at the end of the request processing.
Note that during the merge, some previously locked buds can be unlocked: this is the case if all the tasks that precede them have been executed.

\begin{algorithm}
\small
\caption{Merger protocol executed by an agent $i$.\label{algorithmeFusion}}
\begin{algorithmic}[1]
	\Procedure{Merger}{}
		\For{$req : REQ_i$}\Comment{While there is a request}
			\State $t_i \gets merge(t_i,~\textrm{\textit{req.artifact}})$\Comment{We merge the artifact of the request with $t_i$}
			\State delete $req$ from $REQ_i$
			\State $enqueue(req.sender,~RET_i)$\Comment{And we add the request sender in $RET_i$ queue}
		\EndFor
		\For{$ans : ANS_i$}
			\State $t_i \gets merge(t_i,~\textrm{\textit{ans.artifact}})$\Comment{We merge the artifact of the answer with $t_i$}
			\State delete $ans$ from $ANS_i$
		\EndFor
	\EndProcedure
\end{algorithmic}
\end{algorithm}

~

\Needspace{5\baselineskip}
\noindent\textbf{\textit{The Replication Protocol}}

Replication is done just after the merge. The objective here is to update the local partial replica $t_{\mathcal{V}_{i}}$ from the local (global) artifact $t_i$.
To do this (see algorithm \ref{algorithmeReplication}), the local workflow engine proceeds as follows:
\begin{itemize}
	\item It realises the expansion\footnote{It is important to note that the expansion algorithm used here only returns one artifact (see \textit{the expansion-pruning protocol}, page \pageref{chap3:sec:execution-protocol-expansion-pruning}), unlike the one presented in \cite{badouelTchoupeCmcs} which generates a potentially infinite family of artifacts represented by a tree automaton. This uniqueness is guaranteed by the fact that the expansion of $t_{\mathcal{V}_{i}}$ into $t^{maj}_i$ is done using a three-way approach (\textit{three-way merge} \cite{tomMens}). In fact, the expansion is carried out based on the grammatical model $\mathbb{G}$ and on the view $\mathcal{V}_{i}$, but also on the prefix $t_i$ of (the local global artifact replica) $t^{maj}_i$.} of the partial replica $t _{\mathcal{V}_{i}}$ to obtain a global artifact $t^{maj}_i$ which integrates all the updates made during the previous execution (algorithm \ref{algorithmeReplication}, line 2). This operation is necessary, since at the end of the previous expansion, there may have been a pruning (see \textit{the expansion-pruning protocol}, page \pageref{chap3:sec:execution-protocol-expansion-pruning}) which removed from the global artifact local copy $t_i$, some updates contained in $t_{\mathcal{V}_{i}}$;
	\item Then, it merges $t_i$ and $t^{maj}_i$ in one artifact $t_{i_f}$ (algorithm \ref{algorithmeReplication}, line 3);
	\item Finally, it realises the projection of $t_{i_f}$ relatively to the view $\mathcal{V}_{i}$ to obtain the new version of $t_{\mathcal{V}_{i}}=\pi_{\mathcal{V}_i}(t_{i_f})$ (algorithm \ref{algorithmeReplication}, line 4).
\end{itemize}

\begin{algorithm}
\small
\caption{Replication protocol executed by an agent $i$.\label{algorithmeReplication}}
\begin{algorithmic}[1]
	\Procedure{Replication}{}
		\State $t^{maj}_i \gets expand(t_{\mathcal{V}_{i}}, ~t_i, ~\mathcal{V}_{i}, ~\mathbb{G})$
		\State $t_{i_f} \gets merge(t_i,~t^{maj}_i)$
		\State $t_{\mathcal{V}_{i}} \gets \textrm{\textit{projection}}(t_{i_f}, ~\mathcal{V}_{i}, ~\mathbb{G}_{i})$
	\EndProcedure
\end{algorithmic}
\end{algorithm}

~

\noindent\textbf{\textit{The Execution Protocol}}

This protocol (algorithm \ref{algorithmeExecution}) is executed after the production of the local partial replica $t_{\mathcal{V}_{i}}$ by an agent $i$. It is executed by the local actor through the specialised editor, in order to extend the (unlocked) buds of $t_{\mathcal{V}_{i}}$ on which, he is accredited in writing.

The execution of the artifact's local replica by the agent $i$, must be done "as far as possible" by respecting the scheduling (sequential or parallel) of the tasks. Indeed, during the extension of a bud, if there is unlocking or creation of new unlocked buds on which the current agent is accredited in writing, its actor must extend/execute them; this is the purpose of the \textit{while} loop in algorithm \ref{algorithmeExecution}. In addition, the extension of buds whose type $S$ corresponds to a (re)structuring symbol, is automatically done by the local workflow engine when the local GMWf has only one $S$-production.

\begin{algorithm}
\small
\caption{Execution protocol executed by an agent $i$.\label{algorithmeExecution}}
\begin{algorithmic}[1]
	\Procedure{Execution}{}
		\While{$not~isEmpty(buds \gets nextLocalUnlockedBuds(t_{\mathcal{V}_{i}},~\mathcal{A}_{A_i(w)}))$}\Comment{While there are tasks that can be concurrently executed by the actor $A_i$ of agent $i$ in the partial replica $t_{\mathcal{V}_{i}}$}
			\State $bud \gets prompt("Choose~a~task~to~execute",~buds)$\Comment{Actor $A_i$ chooses the task (bud) to execute}
			\State $prods \gets localExecutionPossibilities(bud.type)$\Comment{The specialized editor (agent) generates and activates the set of execution possibilities according to the current (local) configuration}
			\State $choice \gets prompt("Choose~an~execution~possibility",~prods)$\Comment{Actor $A_i$ executes the selected task and provide feedback through the specialized editor}
			\State $t_{\mathcal{V}_{i}} \gets \textrm{\textit{updateArtifact}}(choice,~t_{\mathcal{V}_{i}})$\Comment{Then $t_{\mathcal{V}_{i}}$ is updated accordingly}
		\EndWhile
	\EndProcedure
\end{algorithmic}
\end{algorithm}

~

\noindent\textbf{\textit{The Expansion-Pruning Protocol}}
\label{chap3:sec:execution-protocol-expansion-pruning}

After completion of the partial replica $t_{\mathcal{V}_{i}}$, the updates must be propagated to the local (global) replica $t_i$ of the artifact. This makes it possible to highlight (if they exist) the tasks for which requests must be made, or to determine if answers to requests can be returned.
Algorithm \ref{algorithmeExpansion} allows addressing this concern. For that, the expansion of the local updated partial replica $t_{\mathcal{V}_{i}}$ is made (algorithm \ref{algorithmeExpansion}, line 2) to obtain the global artifact $t^{maj}_i$ which integrates all the contributions made by actor $A_i$ during the previous local execution phase.

Note that the artifact $t^{maj}_i$ may have so called \textit{upstairs buds}\footnote{Intuitively, a node ${n_{X_{\bar{\omega}}}}$ associated with the task $X$ is an \textit{upstair bud} if, $X$ (not already executed) precedes at least one task $Y$ made visible (and naturally not already executed) on the site of an agent $i$, $i$ not having any accreditation in reading on $X$.} (these are the internal nodes of $t^{maj}_i$ that do not belong to $\mathcal{V}_{i}=\mathcal{A}_{A_i (r)}$, and which are not in $t_i$ - see fig. \ref{chap3:fig:execution-figure-2}). 
To prevent and manage this situation, a pruning of $t^{maj}_i$ is performed (algorithm \ref{algorithmeExpansion}, line 3) to ensure compliance with task-related precedence constraints of executions. 
To do this, for every path of $t^{maj}_i$ starting from the root, we prune at the level of the first (upstairs) bud encountered; it must appear unlocked if all the tasks that precede it have already been executed. The artifact obtained after this phase is the new version of $t_i$, and represents the current state of the process execution from the point of view of agent $i$. 

\begin{algorithm}
\small
\caption{Expansion-Pruning protocol executed by an agent $i$.\label{algorithmeExpansion}}
\begin{algorithmic}[1]
	\Procedure{Expansion-Pruning}{}
		\State $t^{maj}_i \gets expand(t_{\mathcal{V}_{i}}, ~t_i, ~\mathcal{V}_{i}, ~\mathbb{G})$
		\State $t_{i} \gets pruning(t^{maj}_i, ~t_i)$
	\EndProcedure
\end{algorithmic}
\end{algorithm}

~

\noindent\textbf{\textit{A three-way merging expansion algorithm}}\label{three-way-merge}: consider an (global) artifact under execution $t$, and $t_{\mathcal{V}}=\pi_{\mathcal{V}}\left(t\right)$ its partial replica on the site of an actor $A_i$ whose view is $\mathcal{V}$. Consider the partial replica $t_{\mathcal{V}}^{maj} \geq t_{\mathcal{V}}$ obtained by developing some unlocked buds of $t_{\mathcal{V}}$ as a result of $A_i$'s contribution. The expansion problem consists in finding an (global) artifact under execution $t_f$, which integrates nodes of $t$ and $t_{\mathcal{V}}$. To solve this problem made difficult by the fact that $t$ and $t_{\mathcal{V}}$ are conform to two different models ($\mathbb{G}$ and $\mathbb{G}_{\mathcal{V}} = \Pi_{\mathcal{V}} \left(\mathbb{G} \right)$), we perform a three-way merge {\cite{tomMens}. We merge the artifacts $t$ and $t_{\mathcal{V}}$ using a (global) target artifact $t_g$ such that: 
\begin{enumerate}
	\item[\textbf{(a)}] $t$ is a prefix of $t_g$ ($t \leq t_g$)
	\item[\textbf{(b)}] $t_{\mathcal{V}}^{maj}$ is a prefix of the partial replica of $t_g$ according to $\mathcal{V}$ $\left(t_{\mathcal{V}}^{maj} \leq \pi_{\mathcal{V}}\left(t_g \right)\right)$
\end{enumerate}	
The proposed algorithm proceeds in two steps.
	
~
	
\noindent\textit{Step 1 - Search for the merging guide $t_g$}:
the search of a merging guide is done by the algorithm \ref{chap3:algo:search-guide}.

\begin{algorithm}
\small
\caption{Algorithm to search a merging guide.}
\label{chap3:algo:search-guide}
\begin{mdframed}[style=MyFrame]
	\noindent\textbf{1.}$~$ First of all, we have to generate the set $arts_{\mathbb{G}}=\left\{t_1,\ldots,t_n\right\}$ of target artifacts denoted by $\mathbb{G}$;
	
	\noindent\textbf{2.}$~$ Then, we must filter this set to retain only the artifacts $t_i$ admitting $t$ as a prefix (criterion \textbf{(a)}) and whose projections according to $\mathcal{V}$ ($t_{i_{\mathcal{V}_j}}$) admit $t_{\mathcal{V}}^{maj}$ as a prefix (criterion \textbf{(b)}). It is said that an artifact $t_a$ (whose root node is $n_a=X_a[t_{a_1},\ldots,t_{a_l}]$) is a prefix of a given artifact $t_b$ (whose root node is $n_b=X_b[t_{b_1},\ldots,t_{b_m}]$) if and only if the root nodes $n_a$ and $n_b$ are of the same types (i.e $X_a=X_b$) and:
	
		$~~$\textbf{a.}$~$ The node $n_a$ is a bud or,
		
		$~~$\textbf{b.}$~$ The nodes $n_a$ and $n_b$ have the same number of sub-artifacts (i.e $l=m$), the same type of scheduling for the sub-artifacts and each sub-artifact $t_{a_i}$ of $n_a$ is a prefix of the sub-artifact $t_{b_i}$ of $n_b$.
	
	\noindent We obtain the set $guides=\left\{t_{g_1},\ldots,t_{g_k}\right\}$ of artifacts that can guide the merging;
	
	\noindent\textbf{3.}$~$ Finally, we randomly select an element $t_g$ from the set $guides$.
\end{mdframed}
\end{algorithm}

~

\noindent\textit{Step 2 - Merging $t$, $t_{\mathcal{V}}^{maj}$ and $t_g$}:
we want to find an artifact $t_f$ that includes all the contributions already made during the workflow execution. The structure of the searched artifact $t_f$ is the same as that of $t_g$: hence the interest to use $t_g$ as a guide. The merging is carried out by the algorithm \ref{chap3:algo:three-way-merge}.

\begin{algorithm}
\small
\caption{Three-way merging algorithm.}
\label{chap3:algo:three-way-merge}
\begin{mdframed}[style=MyFrame]
	\noindent A prefixed depth path of the three artifacts ($t$, $t_{\mathcal{V}}^{maj}$ and $t_g$) is made simultaneously until there is no longer a node to visit in $t_g$. Let $n_{t_i}$ (resp. $n_{t_{\mathcal{V}_j}^{maj}}$ and $n_{t_{g_k}}$) be the node located at the address $w_i$ (resp. $w_j$ and $w_k$) of $t$ (resp. $t_{\mathcal{V}}^{maj}$ and $t_g$) and currently being visited. If nodes $n_{t_i}$, $n_{t_{\mathcal{V}_j}^{maj}}$ and $n_{t_{g_k}}$ are such that (\textbf{processing}):
	
	\noindent\textbf{1.}$~$ $n_{t_{\mathcal{V}_j}^{maj}}$ is associated with a (re)structuring symbol (fig. \ref{chap3:fig:expansion-pattern}(d)) then: we take a step forward in the depth path of $t_{\mathcal{V}}^{maj}$ and we resume processing;
	
	\noindent\textbf{2.}$~$ $n_{t_i}$, $n_{t_{\mathcal{V}_j}^{maj}}$ and $n_{t_{g_k}}$ exist and are all associated with the same symbol $X$ (fig. \ref{chap3:fig:expansion-pattern}(a) and \ref{chap3:fig:expansion-pattern}(b)) then:
	we insert $n_{t_{\mathcal{V}_j}^{maj}}$ (it is the most up-to-date node) into $t_f$ at the address $w_k$; 
	if $n_{t_{\mathcal{V}_j}^{maj}}$ is a bud then we prune (delete sub-artifacts) $t_g$ at the address $w_k$; 
	we take a step forward in the depth path of the three artifacts and we resume processing.
	
	\noindent\textbf{3.}$~$ $n_{t_i}$, $n_{t_{\mathcal{V}_j}^{maj}}$ and $n_{t_{g_k}}$ exist and are respectively associated with symbols $X_i$, $X_j$ and $X_k$ such that $X_k \neq X_i$ and $X_k \neq X_j$ (fig. \ref{chap3:fig:expansion-pattern}(e)) then: 
	we add $n_{t_{g_k}}$ in $t_f$ at address $w_k$. This is an upstair bud; 
	we take a step forward in the depth path of $t_g$ and we resume processing.
	
	\noindent\textbf{4.}$~$ $n_{t_i}$ (resp. $n_{t_{\mathcal{V}_j}^{maj}}$) and $n_{t_{g_k}}$ exist and are associated with the same symbol $X$ (fig. \ref{chap3:fig:expansion-pattern}(c) and \ref{chap3:fig:expansion-pattern}(f)) then: 
	we insert $n_{t_i}$ (resp. $n_{t_{\mathcal{V}_j}^{maj}}$) into $t_f$ at the address $w_k$;
	if $n_{t_i}$ (resp. $n_{t_{\mathcal{V}_j}^{maj}}$) is a bud, we prune $t_g$ at the address $w_k$; 
	we take a step forward in the depth path of the artifacts $t$ (resp. $t_{\mathcal{V}}^{maj}$) and $t_g$, then we resume processing.
\end{mdframed}
\end{algorithm}

\begin{figure}[ht!]
	\noindent
	\makebox[\textwidth]{\includegraphics[scale=0.27]{./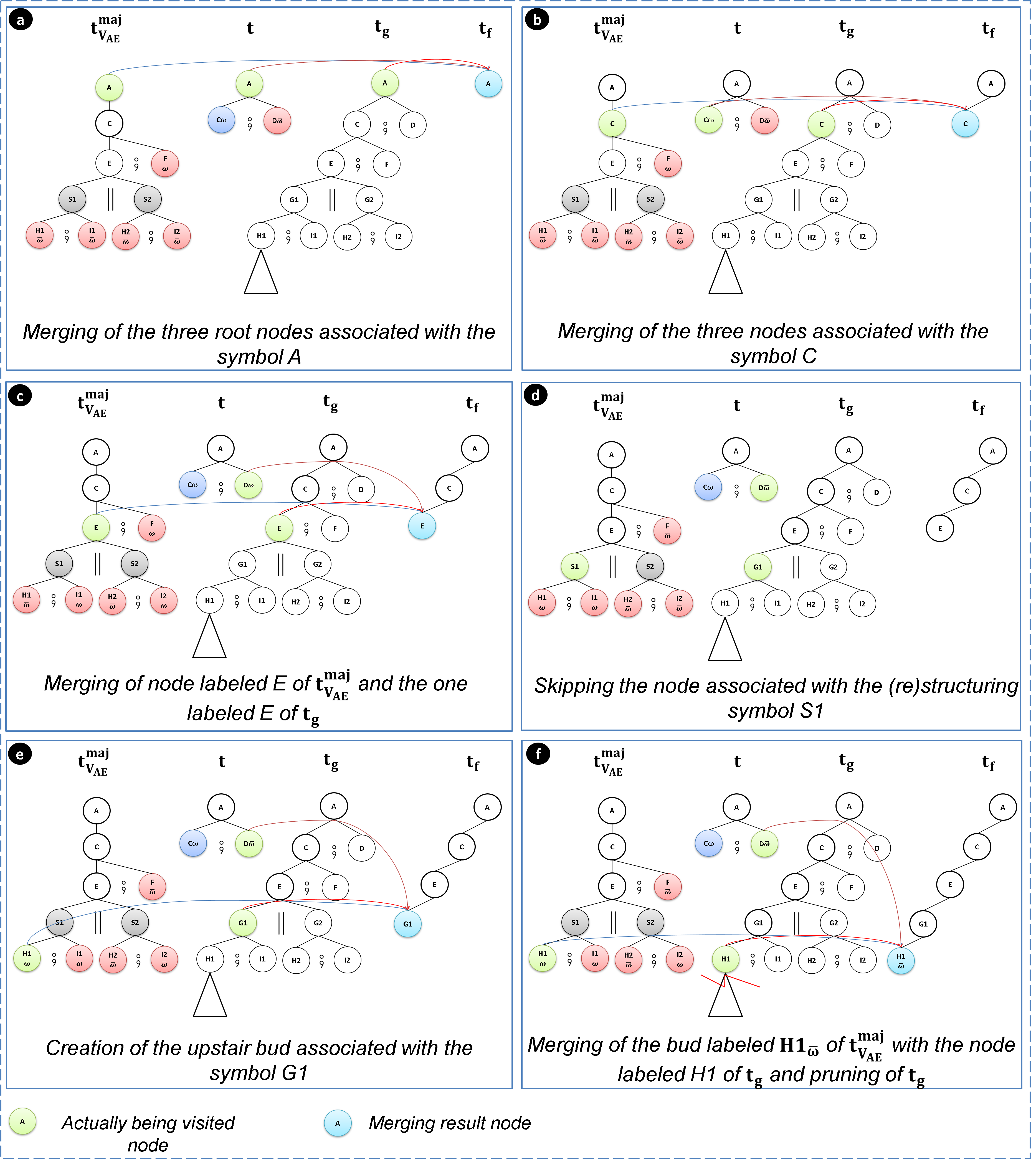}}
	\caption{Some scenarios to be managed during the expansion.}
	\label{chap3:fig:expansion-pattern}
\end{figure}

As for the other key algorithms, a Haskell implementation of this expansion-pruning algorithm is introduced in appendix \ref{appendice1:algorithms-implementations}.

\begin{proposition}
	\label{propositionAtLeastOneGuide}
	For any update $t_{\mathcal{V}}^{maj}$ in accordance with a GMWf $\mathbb{G}_{\mathcal{V}} = \Pi_{\mathcal{V}}\left(\mathbb{G} \right)$, of a partial replica $t_{\mathcal{V}}=\pi_{\mathcal{V}}\left(t\right)$ obtained by projecting (according to the view $\mathcal{V}$) an artifact $t$ being executed in accordance with the GMWf $\mathbb{G}$ of a GMAWfP verifying the axioms' visibility and the non-recursivity assumptions, there is at least one target artifact (the three-way merge guide) $t_g \in arts_{\mathbb{G}}$ of $\mathbb{G}$ such as:
	\begin{enumerate}
		\item[\textbf{(a)}] $t$ is a prefix of $t_g$ ($t \leq t_g$)
		\item[\textbf{(b)}] $t_{\mathcal{V}}^{maj}$ is a prefix of the partial replica of $t_g$ according to $\mathcal{V}$ $\left(t_{\mathcal{V}}^{maj} \leq \pi_{\mathcal{V}}\left(t_g \right)\right)$
	\end{enumerate}	
\end{proposition}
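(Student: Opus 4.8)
The plan is to establish the existence of the guide $t_g$ by a direct construction rather than by invoking propositions \ref{propositionCoherenceArtefact} and \ref{propositionReciproqueCoherenceArtefact} as black boxes, since those relate \emph{complete} conform artifacts and do not by themselves let us pick a preimage that also extends $t$. The two standing assumptions are exactly what make the construction possible: non-recursivity guarantees (as already exploited in the proof of proposition \ref{propositionCoherenceArtefact}) that $arts_{\mathbb{G}}$ is finite and that every artifact under execution is a prefix of at least one target artifact, while the axioms' visibility assumption guarantees (proposition \ref{propositionStabiliteProjArt}) that projecting a conform artifact never splits it into a forest. First I would note that, because $t \therefore \mathbb{G}$ is an artifact under execution, the set $A=\{t_i\in arts_{\mathbb{G}}\mid t\le t_i\}$ is nonempty, so condition \textbf{(a)} is trivially attainable in isolation; the whole content of the proposition is to satisfy \textbf{(a)} and \textbf{(b)} with one and the same $t_g$.

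The core step would be to lift the local update back to the global grammar. By hypothesis $t_{\mathcal{V}}^{maj}\ge t_{\mathcal{V}}=\pi_{\mathcal{V}}(t)$ is obtained by developing unlocked buds of visible sort (the agent writes only on sorts of its view, since $\mathcal{A}_{A_i(w)}\subseteq\mathcal{A}_{A_i(r)}=\mathcal{V}$), together with the automatic expansion of (re)structuring buds. For each bud $X_\omega$ of $t_{\mathcal{V}}$ developed in $t_{\mathcal{V}}^{maj}$, the local $X$-productions used come, by the very construction of $\mathbb{G}_{\mathcal{V}}$ in algorithm \ref{chap3:algo:gmwf-projection}, from projecting a subtree rooted at $X$ inside some global target artifact; I would use this to select a complete global subtree $s_X$ (rooted at $X$) such that $\pi_{\mathcal{V}}(s_X)$ extends the portion that $t_{\mathcal{V}}^{maj}$ adds below that bud, re-folding the symbols of $\mathcal{S}_{\mathcal{V}_{Struc}}$ to their original global shape. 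Grafting each such $s_X$ onto the corresponding bud of $t$ yields an artifact $\tilde{t}$ with $t\le\tilde{t}$, which (using non-recursivity once more) is then completed into a full target artifact $t_g\in arts_{\mathbb{G}}$.

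It then remains to verify the two prefix conditions. Condition \textbf{(a)} holds because grafting only develops buds of $t$ and completion only extends further, so $t\le\tilde{t}\le t_g$. For condition \textbf{(b)} I would first record the lemma that projection preserves the prefix order, i.e. $u\le u'$ implies $\pi_{\mathcal{V}}(u)\le\pi_{\mathcal{V}}(u')$ (which follows from the node-local, hierarchy-preserving definition of $\pi$ in algorithm \ref{chap3:algo:artifact-projection}), together with the fact that $\pi$ commutes with grafting at a bud. Applying these, $\pi_{\mathcal{V}}(\tilde{t})$ is $t_{\mathcal{V}}$ with each developed bud replaced by $\pi_{\mathcal{V}}(s_X)$, which by choice of $s_X$ extends the corresponding part of $t_{\mathcal{V}}^{maj}$; hence $t_{\mathcal{V}}^{maj}\le\pi_{\mathcal{V}}(\tilde{t})\le\pi_{\mathcal{V}}(t_g)$, as required.

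The main obstacle I anticipate is the lifting step and its global consistency: I must show that the independently chosen subtrees $s_X$ can be grafted \emph{simultaneously} onto distinct buds of $t$ without violating conformance to $\mathbb{G}$, and that re-folding the (re)structuring symbols genuinely inverts the projection locally. Simultaneous consistency follows from the locality of context-free derivations — distinct buds are pairwise incomparable nodes and each graft is an independent $X$-derivation — and from the uniqueness of write accreditation, which prevents incompatible developments of the same node. The re-folding is essentially the inverse of the restructuring performed in algorithm \ref{chap3:algo:gmwf-projection}, and making this correspondence precise (an explicit bijection between local developments guided by $\mathbb{G}_{\mathcal{V}}$ and global subtrees of $\mathbb{G}$, modulo renaming of $\mathcal{S}_{\mathcal{V}_{Struc}}$) is where the argument will need the most care; it is exactly the content underlying propositions \ref{propositionCoherenceArtefact} and \ref{propositionReciproqueCoherenceArtefact}, which I would reprove at the level of subtrees rather than whole artifacts.
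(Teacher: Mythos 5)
Your construction is a genuinely different route from the paper's. The paper argues non-constructively, by intersecting two candidate sets: the target artifacts extending $t$ project onto local target artifacts extending $t_{\mathcal{V}}$; the update $t_{\mathcal{V}}^{maj}$, being an artifact under execution with respect to $\mathbb{G}_{\mathcal{V}}$, is itself a prefix of one of those local target artifacts; and pulling that local target artifact back through proposition \ref{propositionReciproqueCoherenceArtefact} exhibits one global target artifact satisfying \textbf{(a)} and \textbf{(b)} at once. What the paper's route buys is that the only projection ever compared against $t_{\mathcal{V}}^{maj}$ is the projection of a \emph{complete} target artifact; what your route buys is an explicit witness $t_g$ and, as a by-product, the subtree-level correspondence between local and global derivations that the paper leaves implicit. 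Your worry about whether independently chosen grafts $s_X$ jointly yield a conform artifact is, within the paper's own framework, discharged by the (strong) claim made in the proof of proposition \ref{propositionCoherenceArtefact} that under non-recursivity every $\mathbb{G}$-conform artifact belongs to $arts_{\mathbb{G}}$.

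The one step in your chain that fails as stated is the monotonicity lemma used for \textbf{(b)}: $u\le u'$ does not imply $\pi_{\mathcal{V}}(u)\le\pi_{\mathcal{V}}(u')$ under the prefix relation of algorithm \ref{chap3:algo:search-guide}. If $u$ contains an \emph{invisible} bud $Y_{\omega}$ whose nearest visible ancestor $P$ is already closed, then $Y$ contributes an empty forest to $\pi_{\mathcal{V}}(u)$, whereas developing $Y$ in $u'$ into a subtree containing visible sorts makes $P$ acquire new projected children; since $P$ is not a bud in $\pi_{\mathcal{V}}(u)$, the arity clause of the prefix definition is violated. This is precisely the situation in your final inequality $\pi_{\mathcal{V}}(\tilde{t})\le\pi_{\mathcal{V}}(t_g)$, because completing $\tilde{t}$ into $t_g$ develops exactly the buds you did not graft, including invisible ones. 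The repair is to do at this point what the paper does: never project the intermediate $\tilde{t}$, but instead argue that $t_{\mathcal{V}}^{maj}$ is a prefix of some complete local target artifact and compare it directly with $\pi_{\mathcal{V}}(t_g)$ for a completion $t_g$ chosen so that $\pi_{\mathcal{V}}(t_g)$ is that local target artifact. (The paper's own opening step, deducing $\pi_{\mathcal{V}}(t)\le\pi_{\mathcal{V}}(t_i')$ from $t\le t_i'$, silently relies on the same monotonicity, so this repair is needed on both sides.)
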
	

\begin{proof}
	Thanks to the proposals \ref{propositionStabiliteProjGMWf}, \ref{propositionCoherenceArtefact} and the artifact editing model used here (see sec. \ref{chap3:sec:artifact-edition}) it is established that since the artifact $t$ being executed in accordance with $\mathbb{G}$ is a prefix of a non-empty set of $\mathbb{G}$'s target artifacts $arts_{\mathbb{G}}^{'} = \left\{t_{1}^{'},\ldots, t_{n}^{'}\right\}$ ($\forall 1 \leq i \leq n, ~t \leq t_{i}^{'}$), its projection $t_{\mathcal{V}}$ according to the view $\mathcal{V}$ is a prefix of a non-empty set $arts_{\mathbb{G}_{\mathcal{V}}}^{'} = \left\{t_{{\mathcal{V}}_{1}}^{'},\ldots, t_{{\mathcal{V}}_{m}}^{'}\right\}$ of $\mathbb{G}_{\mathcal{V}} = \Pi_{\mathcal{V}}\left(\mathbb{G} \right)$'s local target artifacts for the said view ($\forall 1 \leq j \leq m, ~t_{\mathcal{V}} \leq t_{{\mathcal{V}}_{j}}^{'}$): elements of $arts_{\mathbb{G}}^{'}$ are potential merging guides candidates that all verify the property \textbf{(a)}. In addition, using the propositions \ref{propositionStabiliteProjGMWf} and \ref{propositionReciproqueCoherenceArtefact}, it is established that each element of $arts_{\mathbb{G}_{\mathcal{V}}}^{'}$ is the projection of at least one element of $arts_{\mathbb{G}}^{'}$ according to the view $\mathcal{V}$ \textbf{(1)}. Given that $t_{\mathcal{V}}^{maj}$ is obtained by developing buds of $t_{\mathcal{V}}$ in accordance with $\mathbb{G}_{\mathcal{V}}$, it is inferred that $t_{\mathcal{V}}^{maj}$ is a prefix of a non-empty subset $arts_{\mathbb{G}_{\mathcal{V}}}^{maj} \subseteq arts_{\mathbb{G}_{\mathcal{V}}}^{'}$ of local target artifacts for the view $\mathcal{V}$ \textbf{(2)}. With the proposition \ref{propositionReciproqueCoherenceArtefact} once again, it is determined that for each artifact $t_{{\mathcal{V}}_{j}}^{'} \in arts_{\mathbb{G}_{\mathcal{V}}}^{maj}$, there is at least one artifact $t_{g_j}$ that is conform to $\mathbb{G}$ such as $t_{{\mathcal{V}}_{j}}^{'} = \pi_{\mathcal{V}}\left(t_{g_j} \right)$: this new set $arts_{\mathbb{G}}^{maj} = \left\{t_{g_1},\ldots, t_{g_k}\right\}$ is made up of potential merging guides candidates that all verify the property \textbf{(b)}. Results \textbf{(1)} and \textbf{(2)} show that $arts_{\mathbb{G}}^{maj}$ and $arts_{\mathbb{G}}^{'}$ are not disjoint. As a consequence, the set $guides= arts_{\mathbb{G}}^{maj} \cap arts_{\mathbb{G}}^{'}$ of potential merging guides that all verify both property \textbf{(a)} and \textbf{(b)} is not empty.
\end{proof}

\begin{corollary}
	\label{propositionUniqueExpansion}
	For an artifact $t$ being executed in accordance with a GMWf $\mathbb{G}$ of a GMAWfP verifying the axioms' visibility and the non-recursivity assumptions, and an update $t_{\mathcal{V}}^{maj} \geq t_{\mathcal{V}}$ of its partial replica $t_{\mathcal{V}}=\pi_{\mathcal{V}}\left(t\right)$ according to the view $\mathcal{V}$, the expansion of $t_{\mathcal{V}}^{maj}$ contains at least one artifact and the expansion-pruning algorithm presented here returns one and only one artifact.
\end{corollary}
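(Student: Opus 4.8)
The plan is to obtain the statement as a genuine corollary of Proposition~\ref{propositionAtLeastOneGuide}, combined with a structural analysis of the three-way merging algorithm (Algorithm~\ref{chap3:algo:three-way-merge}) and of the pruning that closes the expansion-pruning protocol (Algorithm~\ref{algorithmeExpansion}). The claim naturally splits in two: \emph{existence} (the expansion of $t_{\mathcal{V}}^{maj}$ is non-empty) and \emph{uniqueness} (the expansion-pruning procedure outputs exactly one artifact). For existence I would invoke Proposition~\ref{propositionAtLeastOneGuide} directly: since the GMAWfP satisfies the axioms' visibility and the non-recursivity assumptions, the set $guides$ of candidate merge guides is non-empty, so there is a target artifact $t_g \in arts_{\mathbb{G}}$ with $t \leq t_g$ and $t_{\mathcal{V}}^{maj} \leq \pi_{\mathcal{V}}(t_g)$. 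Feeding any such $t_g$ to Algorithm~\ref{chap3:algo:three-way-merge} yields an artifact $t_f$ which, because it follows the structure of the target artifact $t_g$, is conform to $\mathbb{G}$ and admits $t$ as a prefix and $t_{\mathcal{V}}^{maj}$ as a prefix of its $\mathcal{V}$-projection; hence the expansion is non-empty.

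For uniqueness I would first isolate the determinism for a \emph{fixed} guide $t_g$. The non-recursivity assumption bounds the size of $t_g$, so the single synchronized prefixed depth-first walk of $t$, $t_{\mathcal{V}}^{maj}$ and $t_g$ terminates; and at every visited triple of nodes exactly one of the mutually exclusive processing rules (1)--(4), matching the patterns of Figure~\ref{chap3:fig:expansion-pattern}, applies. Each rule prescribes a unique node to write into $t_f$ and a unique advance of the cursors, so the walk produces a single artifact $t^{maj}_i$; the subsequent pruning, which cuts every root-to-leaf path at its first upstair bud, is likewise deterministic. Thus, once $t_g$ is fixed, Algorithm~\ref{algorithmeExpansion} returns a single artifact.

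The delicate step — and the one I expect to be the main obstacle — is to show that this output is independent of the \emph{random} selection of $t_g$ inside $guides$. The intended argument is that the guide only supplies the invisible "upstair" material bridging the already-executed visible nodes to the newly revealed ones: every node written into $t_f$ that is visible or already present is drawn from $t_{\mathcal{V}}^{maj}$ or from $t$ (rules (2) and (4)), and these are common to all guides by properties~\textbf{(a)} and~\textbf{(b)}, whereas any guide-specific subtree occurs strictly below the first upstair bud of its path. Since the pruning step discards exactly that subtree, I would argue the pruned artifact is completely determined by $t$, by $t_{\mathcal{V}}^{maj}$, and by the frontier of first upstair buds. It then remains to verify that this frontier is itself forced: using the determinism of projection (stability property of $\pi$, Proposition~\ref{propositionStabiliteProjArt}) and the fact that all guides share the prefix $t$ and project onto the common prefix $t_{\mathcal{V}}^{maj}$, the type and position of the first invisible node separating the visible executed region from each revealed task must coincide across guides. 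Combining this choice-independence with the determinism established above yields that the expansion-pruning algorithm returns one and only one artifact, which completes the proof.
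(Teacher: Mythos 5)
Your existence argument coincides with the paper's: Proposition \ref{propositionAtLeastOneGuide} gives a non-empty set of guides, and running the three-way merge on any one of them produces an artifact in the expansion. Where you diverge is on uniqueness. The paper's own justification is deliberately minimal: it observes that the last instruction of Algorithm \ref{chap3:algo:search-guide} selects a \emph{single} guide at random from the non-empty set, so only one artifact is fed into the three-way merge and hence only one artifact is returned --- ``one and only one'' is read as a statement about the arity of the algorithm's output, not about its independence from the random selection. Your proof establishes strictly more: first the determinism of Algorithm \ref{chap3:algo:three-way-merge} and of the pruning for a \emph{fixed} guide (a worthwhile elaboration the paper leaves implicit), and then the claim that the final pruned artifact does not depend on which guide was drawn. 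That last step is not needed for the corollary as stated, and it is also the one place where your argument has a real gap: you assert that the frontier of first upstair buds ``must coincide across guides,'' but properties \textbf{(a)} and \textbf{(b)} only constrain a guide on the nodes of $t$ and on the $\mathcal{V}$-visible nodes covering $t_{\mathcal{V}}^{maj}$; they say nothing about the invisible nodes the guide inserts between the two. If the GMWf admits two distinct invisible refinements of the same bud of $t$, both compatible with the visible update $t_{\mathcal{V}}^{maj}$, two guides can place first upstair buds of different types at the same address, and the pruned outputs then differ. So either drop the choice-independence claim (as the paper does) or add a hypothesis excluding such ambiguity; as written, that step would fail in general.
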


This result (corollary \ref{propositionUniqueExpansion}) derives from the proof of the proposition \ref{propositionAtLeastOneGuide} (\textit{there is always at least one artifact in the expansion of $t_{\mathcal{V}}^{maj}$ under the conditions of corollary \ref{propositionUniqueExpansion}}) and from the fact that in the last instruction of the algorithm \ref{algo:search-guide}, an artifact is randomly selected an returned from a non-empty set of potential guides (\textit{only one of the expansion artifacts is used in the three-way merging}).

~

\noindent\textbf{\textit{The Diffusion Protocol}}
\label{chap3:sec:execution-protocol-diffusion}

After expansion-pruning, the local workflow engine must examine whether requests need to be sent to other agents (this is the case if $t_i$ still have unlocked buds created on its site\footnote{An agent only requests the execution of a bud if it was created on its site.}, on which the current agent is not accredited in writing) or, if responses are to be returned (this is the case if $t_i$ is complete or semi-complete).

To build the list of requests to diffuse, the local workflow engine scans the artifact $t_i$ produced by expansion-pruning and builds the list of required agents (those to receive a request) from buds\footnote{Normally (due to assumptions of our model) at this stage, for each unlocked bud of $t_i$, agent $i$ is accredited in execution on the associated task. Any other situation would be a design flaw.} (algorithm \ref{algorithmeDiffusion}, lines 2 to 7). If the required agents list is not empty (the artifact is not complete), it sends a request to each agent in the list (algorithm \ref{algorithmeDiffusion}, lines 8 to 12). Otherwise, if there are agents who have previously made requests, it sends responses instead (algorithm \ref{algorithmeDiffusion}, lines 13 to 18).

\begin{algorithm}
\small
\caption{Diffusion protocol executed by an agent $i$.\label{algorithmeDiffusion}}
\begin{algorithmic}[1]
	\Procedure{Diffusion}{}
		\For{$bud : unlockedBuds(t_i)$ and $bud$ have been created by $i$}
			\If{$bud.type \in \mathcal{A}_{A_i(x)}$}
				\State $agent \gets \textrm{\textit{executorOf}}(bud.type)$
				\State $enqueue(agent, requiredAgents)$
			\EndIf
			\EndFor
			\If {$not~isEmpty(requiredAgents)$}
				\State $req \gets new~Request(i,~t_i)$
			  \For{$agent : requiredAgents$}
					\State $invoqueService("forwardTo",~req,~agent)$
				\EndFor		
			\ElsIf{$not~isEmpty(RET_i)$}
				\State $ans \gets new~Answer(i,~t_i)$
			 	\For{$agent : RET_i$}
					\State $invoqueService("returnTo",~ans,~agent)$
					\State $delete~agent~from~RET_i$
				\EndFor
			\Else
				\State $alert("The~process~execution~is~terminated.")$
			\EndIf
	\EndProcedure
\end{algorithmic}
\end{algorithm}

\mySection{Illustrating the Choreography on the Peer-Review Process}{}
\label{chap3:sec:choreograpghy-illustration}

The execution of an instance of our running example begins when under the editor in chief's action (via the GUI of the specialised editor), an unlocked bud of type $A_{\mathbb{G}}$ is created on his site.
Figure \ref{chap3:fig:execution-figure-1} which must be read following the direction of the arrows it contains, summarises the state\footnote{The state of an agent $i$ at a given moment is given by the values of variables $REQ_i$, $ANS_i$, $RET_i$ and the replicas $t_i$ and $t_{\mathcal{V}_i}$.} of the agent $EC$ (editor in chief site) before and after the event creating the artifact; it also illustrates the running of the five-step protocol on the agent $EC$.
\begin{figure}[ht!]
	\noindent
	\makebox[\textwidth]{\includegraphics[scale=0.28]{./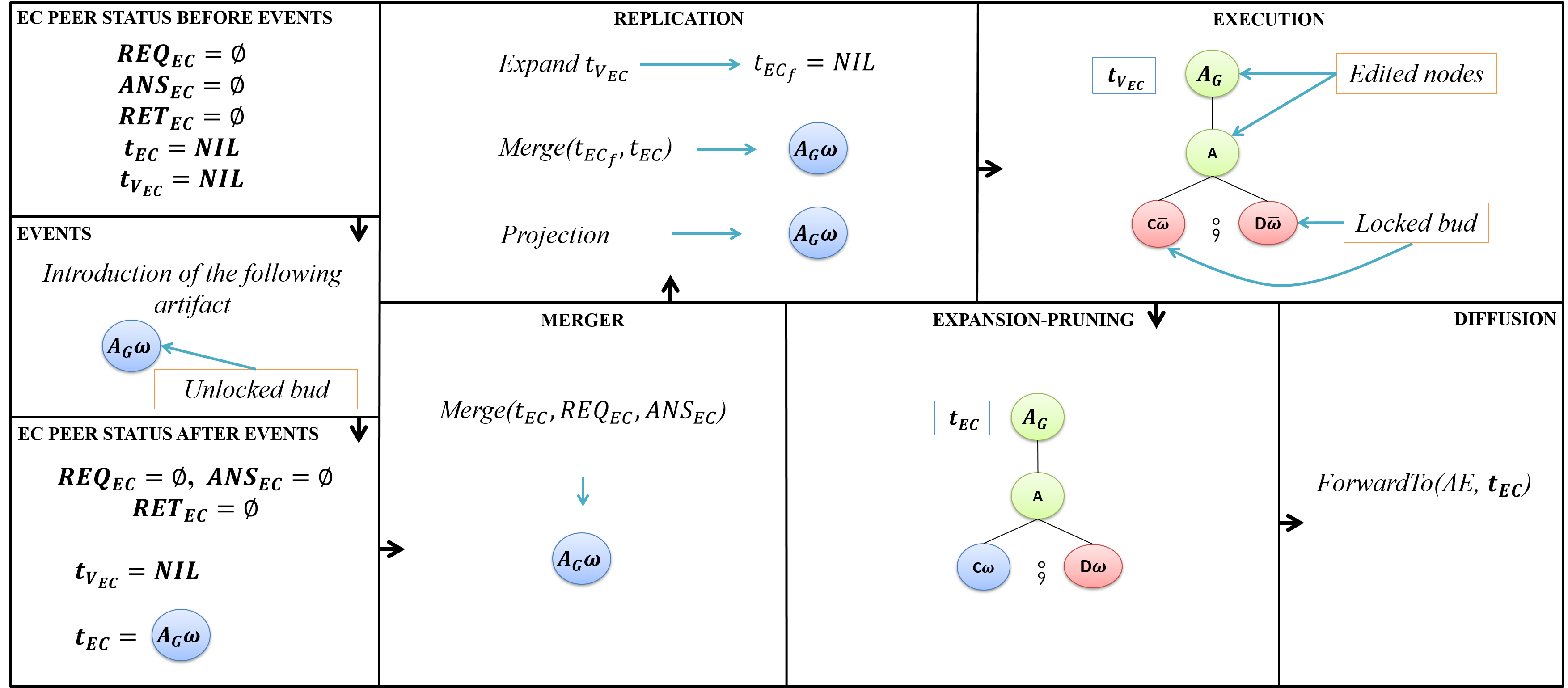}}
	\caption{Beginning of the peer-review process on the editor in chief's site.}
	\label{chap3:fig:execution-figure-1}
\end{figure}

As soon as a bud of type $A_{\mathbb{G}}$ is created, the local workflow engine extends it using the unique $A_{\mathbb{G}}$-production ($P_{1}:A_{\mathbb{G}} \rightarrow A$) of the local GMWf. This results in the creation of a bud of type $A$ that the editor in chief must extend via the specialised editor by choosing an $A$-production.
For this scenario, it is assumed that he chooses the production $P_{3}:A \rightarrow C \fatsemi D$.
The task ($A$) executed by the latter is shown in green colour on figure \ref{chap3:fig:execution-figure-1}.
The newly created tasks ($C$ and $D$) appear in the form of locked buds (the locked buds are shown in red colour) because the editor in chief is not accredited in writing on $C$ and, since $D$ is linked to $C$ by a sequential scheduling constraint, it can only be executed when all tasks ($C$, $E$, $F$, $G1$, $G2$, $H1$, $H2$, $I1$, $I2$) preceding it will have been executed.
After expansion-pruning, the only required agent is the associated editor (responsible for executing task $C$): a request is sent to it by invoking the service \textit{forwardTo}.
\begin{figure}[ht!]
	\noindent
	\makebox[\textwidth]{\includegraphics[scale=0.28]{./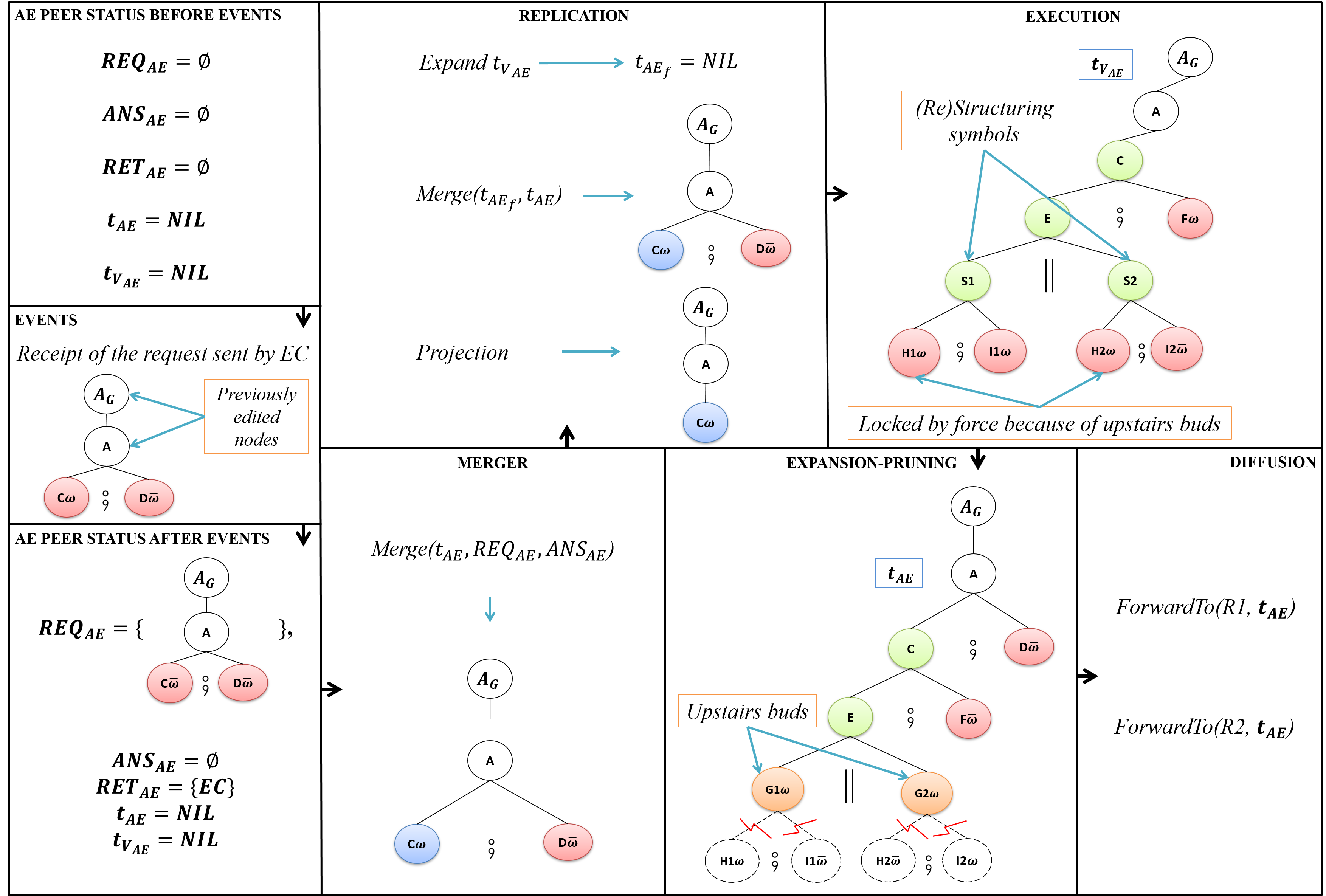}}
	\caption{Continuation of the peer-review process execution on the associated editor's site; the latter receives the request formulated by the editor in chief.}
	\label{chap3:fig:execution-figure-2}
\end{figure}

The event that triggers the workflow execution on the site of associated editor (fig. \ref{chap3:fig:execution-figure-2}) is the receipt of the request sent by the editor in chief.
In the artifact sent by the latter, there are buds ($C_{\overline{\omega}}$ and $D_{\overline{\omega}}$).
After merging, the bud $C_{\omega}$ is unlocked (the unlocked buds are shown in blue). It indicates the only place where the contribution of the associated editor is expected.
During the execution phase, the local artifact partial replica is updated by the associated editor via the productions $P_{3}: C\rightarrow E\fatsemi F$, $P_{4}: E\rightarrow S1\parallel S2$, $P_{5}: S1\rightarrow H1\fatsemi I1$ and $P_{6}: S2\rightarrow H2\fatsemi I2$ of his local GMWf. 
At the end of this phase, buds of types $H1, H2, I1$ and $I2$ appear locked not only because they are constrained by a sequential scheduling (case of $I1$ and $I2$), but especially because of the presence of \textit{upstairs buds} (the upstairs buds are represented in orange colour).
Indeed, tasks $G1$ and $G2$ made visible after the expansion are upstairs buds because they must be executed before the tasks of type $H1$ and $H2$. So, there is pruning at $G1$ and $G2$ before sending (in parallel) the artifact to both referees.
\begin{figure}[ht!]
	\noindent
	\makebox[\textwidth]{\includegraphics[scale=0.28]{./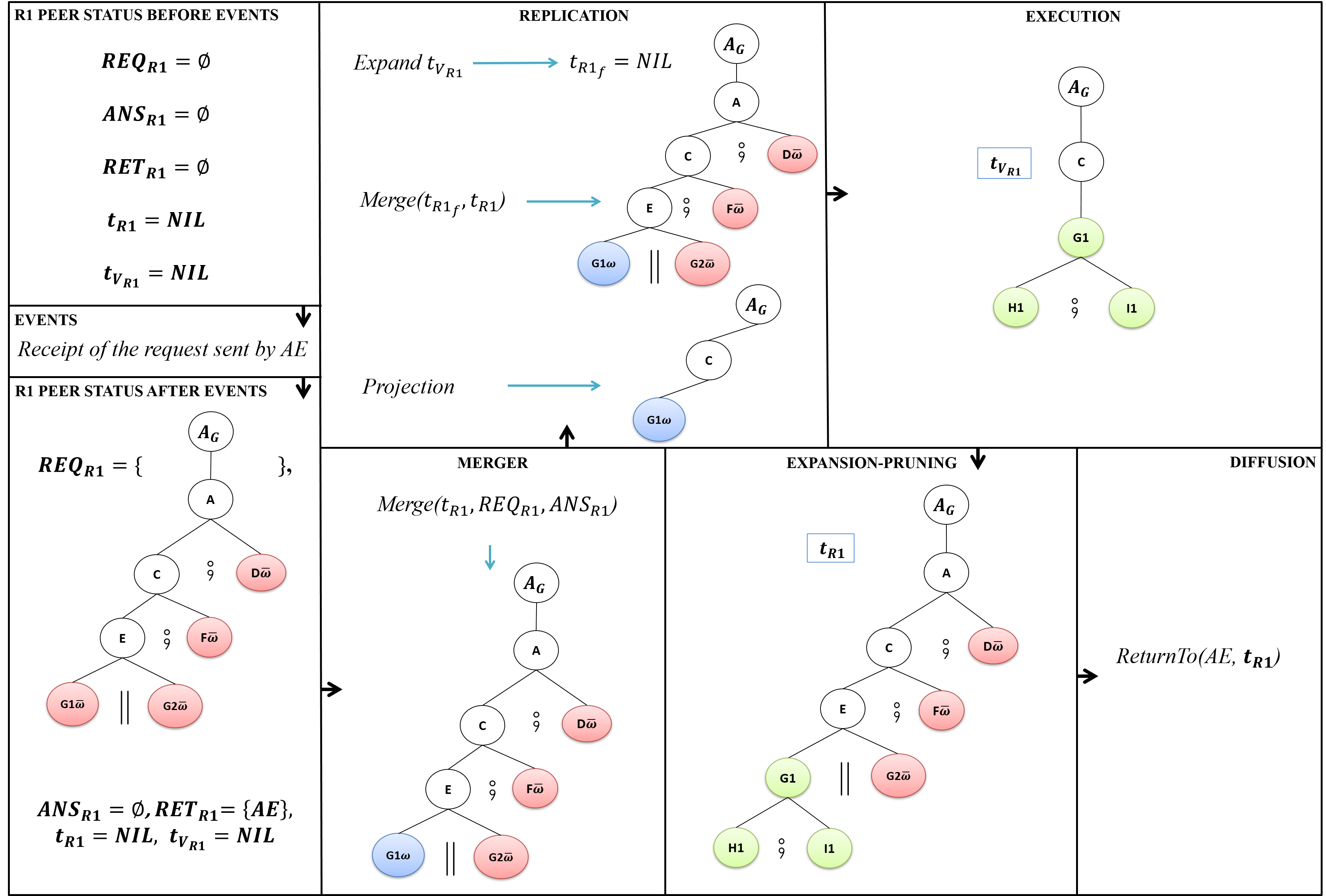}}
	\caption{Continuation of the peer-review process execution on the first referee's site: the request of the associated editor arrives at the first referee.}
	\label{chap3:fig:execution-figure-3}
\end{figure}

Figure \ref{chap3:fig:execution-figure-3} illustrates how the protocol takes place on the site of one of the referees (the first referee). After the contribution of the latter, no new bud is created: no request is formulated. It is rather a response corresponding to the request previously received from the associated editor which is returned by invoking the service $returnTo$.
\begin{figure}[ht!]
	\noindent
	\makebox[\textwidth]{\includegraphics[scale=0.28]{./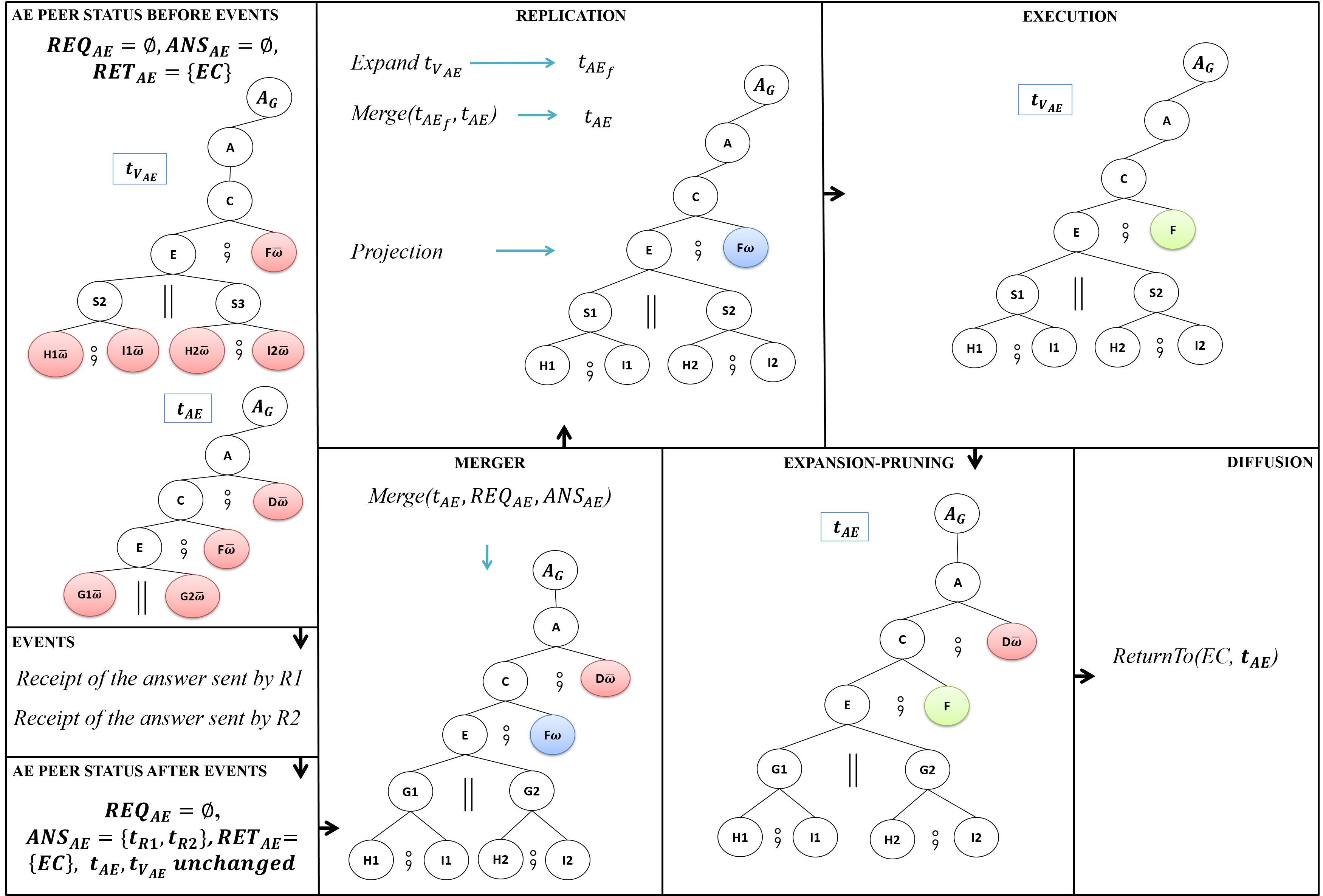}}
	\caption{Continuation of the peer-review process execution: the associated editor receives answers from referees, to requests that he has previously made.}
	\label{chap3:fig:execution-figure-4}
\end{figure}

The execution protocol is unrolled again on the site of the associated editor following events related to the reception of responses from the two referees (fig. \ref{chap3:fig:execution-figure-4}). We choose to treat these responses simultaneously; but we could do otherwise and obtain the same result. 
At merge, since the subtree rooted in $E$ is closed, the bud $F_{\overline{\omega}}$ is unlocked and the associated editor extends it through production $P_{11}: F \rightarrow \varepsilon$. Having no request to make, the answer to the request previously received from the editor in chief is returned.
\begin{figure}[ht!]
	\noindent
	\makebox[\textwidth]{\includegraphics[scale=0.28]{./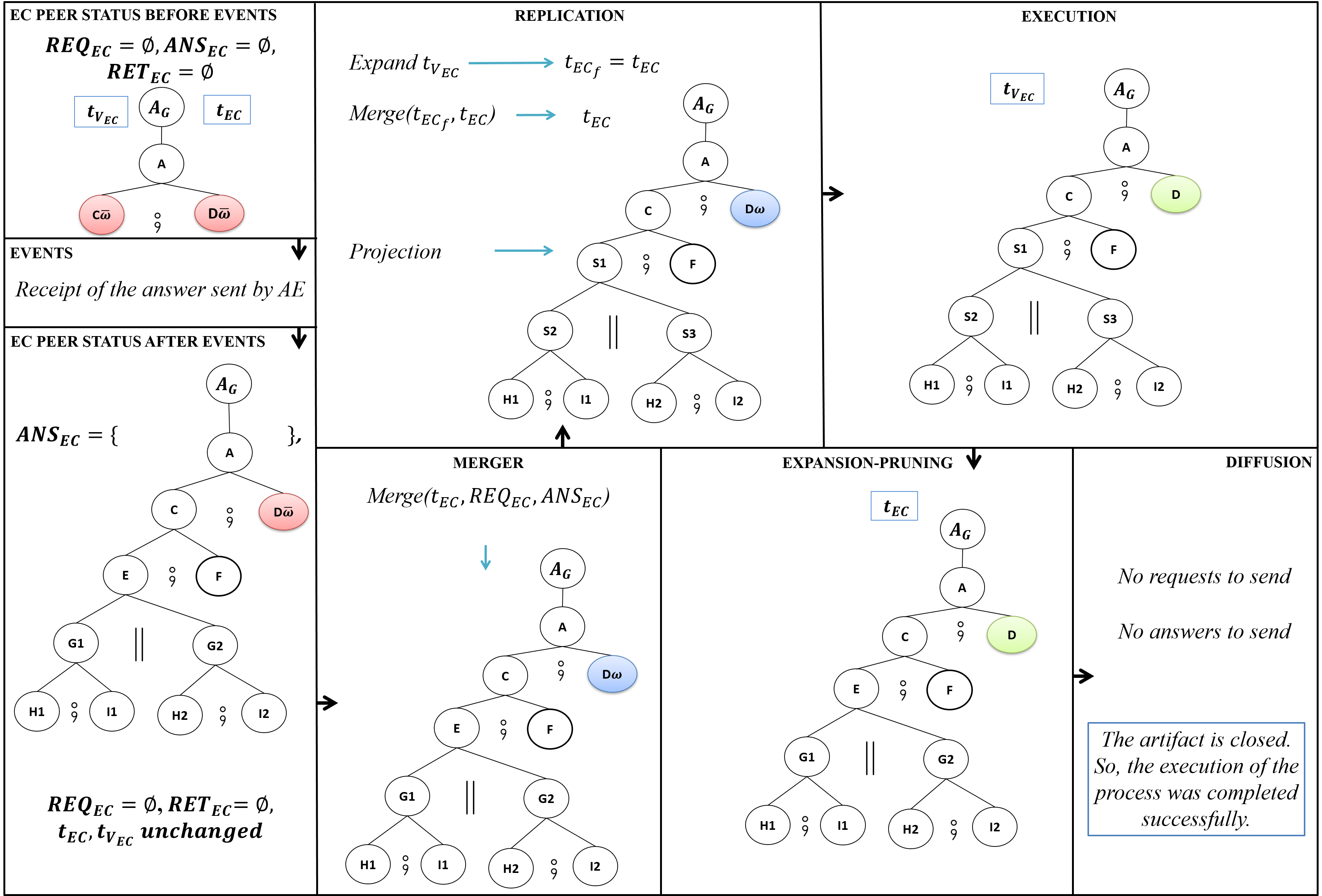}}
	\caption{Continuation and end of the peer-review process execution: the editor in chief receives a response containing referees' contributions, from the associated editor.}
	\label{chap3:fig:execution-figure-5}
\end{figure}

The editor in chief receives the response from associated editor and once again runs the execution protocol (fig. \ref{chap3:fig:execution-figure-5}). 
After its contribution (on the node $D$), the artifact obtained after expansion-pruning is closed and the execution of the process ends successfully.

\mySection{Experimentation}{}
\label{chap3:sec:p2ptinywfms}

In this section, we present and experiment \textit{P2PTinyWfMS} (a Peer-to-Peer Tiny Workflow Management System), an experimental prototype system implemented according to the approach proposed in this chapter.

\mySubSection{P2PTinyWfMS: an Experimental Prototype System}{}
\textit{P2PTinyWfMS} is a tool developed in Java under Eclipse\footnote{Official website of Eclipse: \url{https://www.eclipse.org}, visited the 04/04/2020.} and dedicated to the distributed execution of administrative workflows specified using GMWf. 
In accordance with the agent's architecture of this chapter (see fig. \ref{chap3:fig:peer-architecture}), \textit{P2PTinyWfMS} has a front-end for displaying and graphically editing artifacts manipulated during the execution of a business process (see fig. \ref{chap3:sec:p2ptinywfms-1} and \ref{chap3:sec:p2ptinywfms-3}), as well as a communication module built from SON\footnote{SON is available under Eclipse from a family of SmartTools plugins.}. 

Let's recall that SON (Shared-data Overlay Network) \cite{SON} is a middleware offering several DSL to facilitate the implementation of P2P systems whose components communicate by service invocations. 
Component Description Meta Language (CDML) is the DSL provided by SON to specify among other things the services required and provided by each peers; from a CDML specification, SON generates Java code for allowing peers to communicate.
The following listing shows the contents of the CDML file used in the case of \textit{P2PTinyWfMS} to specify the four services that its instances expose; they are: two input services (\textit{inForwardTo} - lines 6 to 9 -, and \textit{inReturnTo} - lines 10 to 13 -) and two output services (\textit{outForwardTo} - lines 14 to 17 - and \textit{outReturnTo} - lines 18 to 21 -). These services take as argument an artifact corresponding to either a request or a response.

\begin{Verbatim}[frame=lines,fontsize=\scriptsize, numbers=left, numbersep=8pt, label=CDML file: specification of required and provided services of P2PTinyWfMS]
<?xml version="1.0" encoding="ISO-8859-1"?>
<component name="p2pTinyWfMS" type="p2pTinyWfMS" extends="inria.communicationprotocol"
 ns="p2pTinyWfMS">
  <containerclass name="P2pTinyWfMSContainer"/>
  <facadeclass name="P2pTinyWfMSFacade" userclassname="P2pTinyWfMS"/>
  <input name="forwardTo" method="inForwardTo">
    <attribute name="request" 
     javatype="smartworkflow.dwfms.lifa.miu.util.p2pworkflow.PeerToPeerWorkflowRequest"/>
  </input>
  <input name="returnTo" method="inReturnTo">
    <attribute name="response" 
     javatype="smartworkflow.dwfms.lifa.miu.util.p2pworkflow.PeerToPeerWorkflowResponse"/>
  </input>
  <output name="forwardTo" method="outForwardTo">
    <attribute name="request" 
     javatype="smartworkflow.dwfms.lifa.miu.util.p2pworkflow.PeerToPeerWorkflowRequest"/>
  </output>
  <output name="returnTo" method="outReturnTo">
    <attribute name="response" 
     javatype="smartworkflow.dwfms.lifa.miu.util.p2pworkflow.PeerToPeerWorkflowResponse"/>
  </output>
</component>
\end{Verbatim}

\mySubSection{Executing our Running Example under P2PTinyWfMS}{}
SON offers a DSL (the "\textit{.world}" files) for the description of the deployment of a distributed system whose components have been specified by a CDML file. In order to execute our running example, we deployed four instances of \textit{P2PTinyWfMS} identified by $EC$, $AE$, $R1$ and $R2$ respectively. As explained in section \ref{chap3:sec:choreograpghy-illustration}, each instance is initially equipped with the global GMWf as well as accreditations of various agents from which it derives its local GMWf by projection.

Figures \ref{chap3:sec:p2ptinywfms-1}, \ref{chap3:sec:p2ptinywfms-3} and \ref{chap3:sec:p2ptinywfms-6} are screen shots with some highlights of the workflow's distributed execution.
We have the tab "\textit{Workflow overview}" presenting at the beginning of the execution, various tasks, agents, target artifacts etc., on the editor in chief's site (fig. \ref{chap3:sec:p2ptinywfms-1}). We also have the tabs "\textit{Workflow execution}" of the sites of the associated editor (fig. \ref{chap3:sec:p2ptinywfms-3}) and of the editor in chief (fig. \ref{chap3:sec:p2ptinywfms-6}) that present the artifacts resulting from their execution after receiving a request from the editor in chief (resp. after receiving a response from the associated editor).
\begin{figure}[ht!]
	\noindent
	\makebox[\textwidth]{\includegraphics[scale=0.43]{./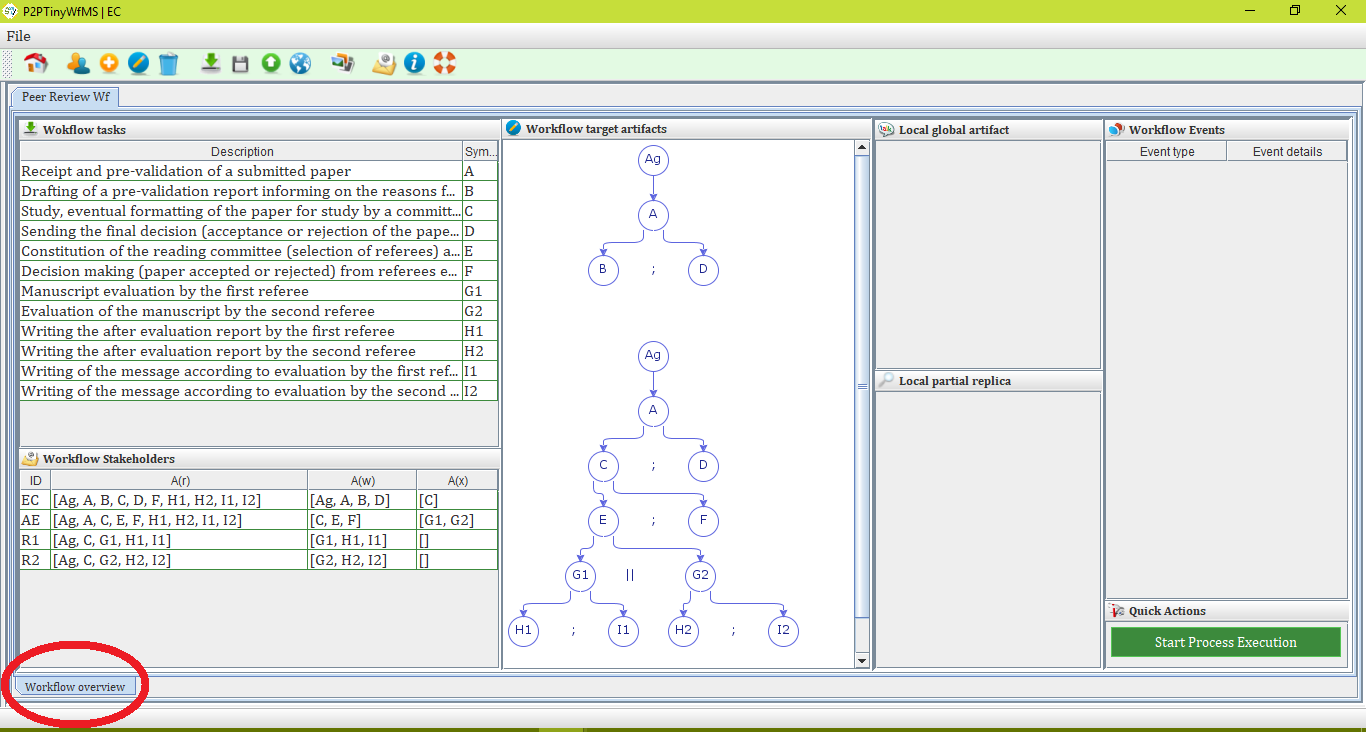}}
	\caption{P2pTinyWfMS on the editor in chief's site: presentation of the GMWf (the tasks and their relations, the actors and their accreditations).}
	\label{chap3:sec:p2ptinywfms-1}
\end{figure}

\begin{figure}[ht!]
	\noindent
	\makebox[\textwidth]{\includegraphics[scale=0.43]{./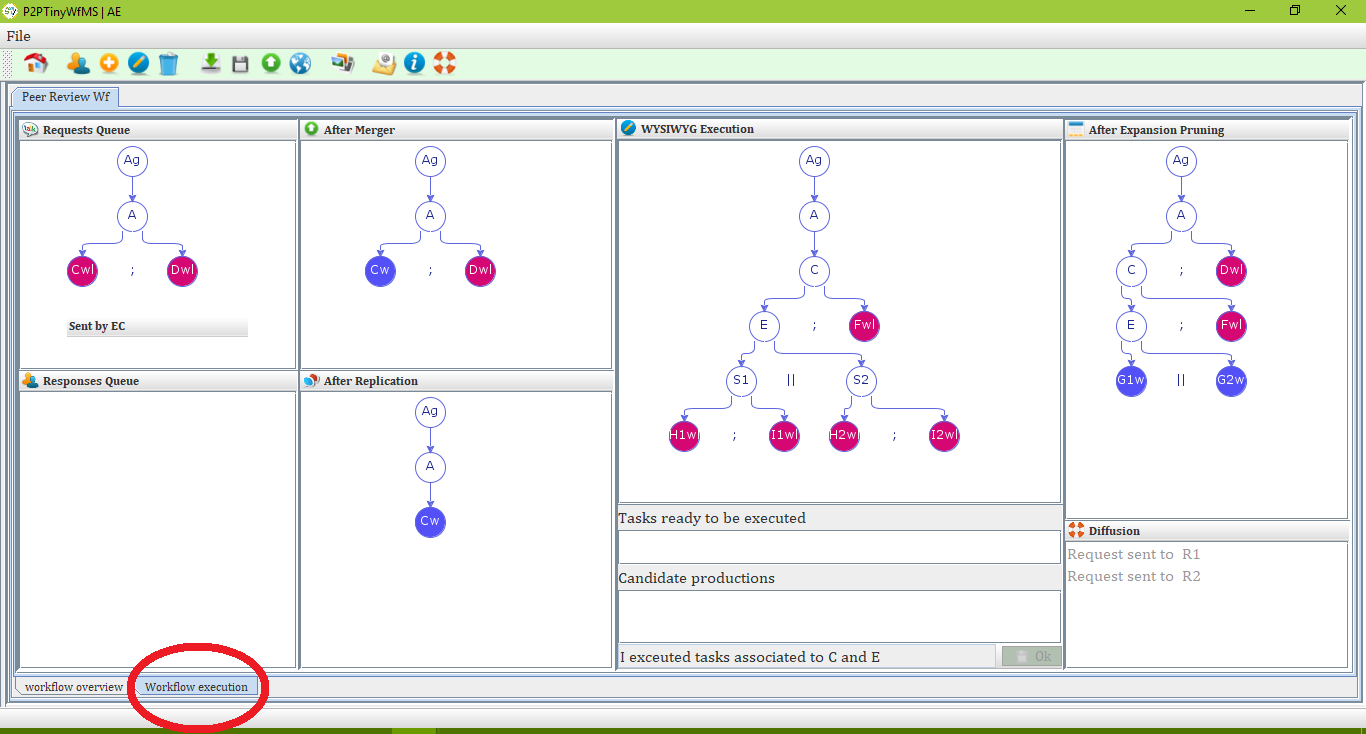}}
	\caption{P2pTinyWfMS on the associated editor's site: receipt of editor in chief's request, execution of tasks, expansion-pruning, and diffusion.}
	\label{chap3:sec:p2ptinywfms-3}
\end{figure}

\begin{figure}[ht!]
	\noindent
	\makebox[\textwidth]{\includegraphics[scale=0.43]{./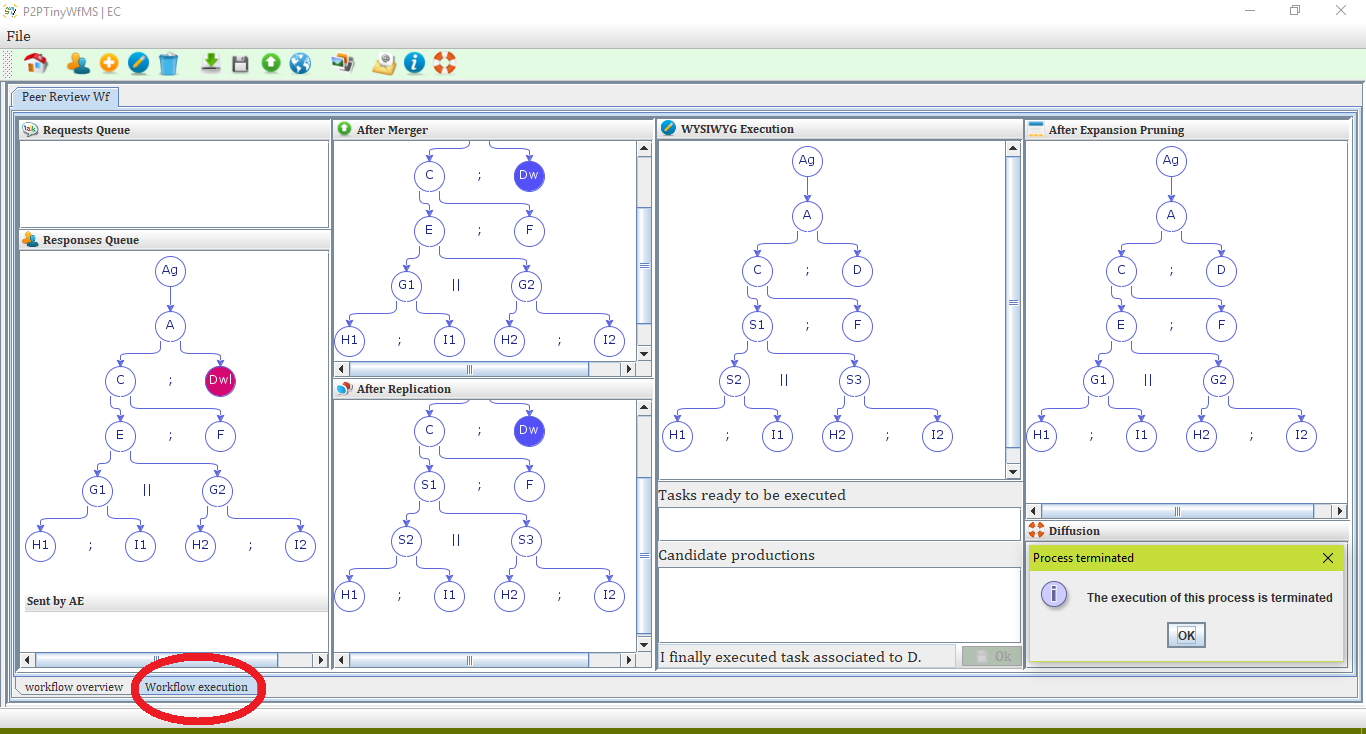}}
	\caption{P2pTinyWfMS on the editor in chief's site: reception of the associated editor's response, execution of tasks, expansion-pruning and end of the case.}
	\label{chap3:sec:p2ptinywfms-6}
\end{figure}

\mySection{Related Works and Discussion}{}
\label{chap3:sec:discussion}

In this section we briefly discuss the similarities and differences of the model presented in this chapter, comparing it with some related work presented earlier (Chapter \ref{chap1:artifact-centric-bpm}). We will mention a few related studies and discuss directly; a more formal comparative study using qualitative and quantitative metrics should be the subject of future work.

Hull et al. \citeyearpar{hull2009facilitating} provide an interoperation framework in which, data are hosted on central infrastructures named \textit{artifact-centric hubs}. As in the work presented in this chapter, they propose mechanisms (including user views) for controlling access to these data. Compared to choreography-like approach as the one presented in this chapter, their settings has the advantage of providing a conceptual rendezvous point to exchange status information. The same purpose can be replicated in this chapter's approach by introducing a new type of agent called "\textit{monitor}", which will serve as a rendezvous point; the behaviour of the agents will therefore have to be slightly adapted to take into account the monitor and to preserve as much as possible the autonomy of agents.

Lohmann and Wolf \citeyearpar{lohmann2010artifact} abandon the concept of having a single artifact hub \cite{hull2009facilitating} and they introduce the idea of having several agents which operate on artifacts. Some of those artifacts are mobile; thus, the authors provide a systematic approach for modelling artifact location and its impact on the accessibility of actions using a Petri net. Even though we also manipulate mobile artifacts, we do not model artifact location; rather, our agents are equipped with capabilities that allow them to manipulate the artifacts appropriately (taking into account their location). Moreover, our approach considers that artifacts can not be remotely accessed, this increases the autonomy of agents.

The process design approach presented in this chapter, has some conceptual similarities with the concept of \textit{proclets} proposed by Wil M. P. van der Aalst et al. \citeyearpar{van2001proclets, van2009workflow}: they both split the process when designing it. In the model presented in this chapter, the process is split into execution scenarios and its specification consists in the diagramming of each of them. Proclets \cite{van2001proclets, van2009workflow} uses the concept of \textit{proclet-class} to model different levels of granularity and cardinality of processes. Additionally, proclets act like agents and are autonomous enough to decide how to interact with each other.

The model presented in this chapter uses an attributed grammar as its mathematical foundation. This is also the case of the AWGAG model by Badouel et al. \citeyearpar{badouel14, badouel2015active}. However, their model puts stress on modelling process data and users as first class citizens and it is designed for Adaptive Case Management.

To summarise, the proposed approach in this chapter allows the modelling and decentralized execution of administrative processes using autonomous agents. In it, process management is very simply done in two steps. The designer only needs to focus on modelling the artifacts in the form of task trees and the rest is easily deduced. Moreover, we propose a simple but powerful mechanism for securing data based on the notion of accreditation; this mechanism is perfectly composed with that of artifacts. The main strengths of our model are therefore : 
\begin{itemize}
	\item The simplicity of its syntax (process specification language), which moreover (well helped by the accreditation model), is suitable for administrative processes;
	\item The simplicity of its execution model; the latter is very close to the blockchain's execution model \cite{hull2017blockchain, mendling2018blockchains}. On condition of a formal study, the latter could possess the same qualities (fault tolerance, distributivity, security, peer autonomy, etc.) that emanate from the blockchain;
	\item Its formal character, which makes it verifiable using appropriate mathematical tools;
	\item The conformity of its execution model with the agent paradigm and service technology.
\end{itemize}
In view of all these benefits, we can say that the objectives set for this thesis have indeed been achieved. However, the proposed model is perfectible. For example, it can be modified to permit agents to respond incrementally to incoming requests as soon as any prefix of the extension of a bud is produced. This makes it possible to avoid the situation observed on figure \ref{chap3:fig:execution-figure-4} where the associated editor is informed of the evolution of the subtree resulting from $C$ only when this one is closed. All the criticisms we can make of the proposed model in particular, and of this thesis in general, have been introduced in the general conclusion (page \pageref{chap5:general-conclusion}) of this manuscript.

\mySection{Summary}{}
\label{chap3:sec:conclusion}
In this chapter, we have proposed a new approach to facilitate workflow design and their execution in a distributed environment. The approach relies on the cooperative editing of a mobile artifact by several agents.
The design of artifacts and their type (artifact type), the architecture and management protocols (choreography) of agents have been presented and discussed. Likewise, an illustration and an experimentation of the whole approach on a real example of an administrative business process (the peer-review process) were presented.

The proposed approach has many advantages, which we have outlined. In particular, it covers all aspects of process automation using BPM technology. 
In addition to the formal and demonstrative presentations, we have conducted a preliminary discussion of the presented approach in relation to existing work, and finally concluded that we have achieved the objective of this thesis.

	\mathversion{normal}
	
	
\myChapterStar{General Conclusion}{}{true}
\label{chap5:general-conclusion}
\myMiniToc{section}{Contents}

We hereby summarise the work reported and presented in this document. To this, we associate a critical analysis of our models and methodological choices. Finally, we present some research avenues that can be explored following the work in this thesis manuscript.

\mySectionStar{Recall of this Thesis' Challenge and of our Methodological Choices}{}{true}
In this thesis, we focused on the automation of business processes using the technological framework offered by the BPM domain. 
We have contributed to the ambition of making more accessible, administrative business processes automation through this technology. We were guided by the aim of increasing its success in business sectors using administrative processes as it has been in other sectors such as science, banking and insurance, which are governed by much more programmable processes. 
We thought that a first step towards achieving this great ambition was to rely on the current (most up-to-date) BPM paradigms and tools, to design and implement a new BPM framework that would be tailor-made for the management of administrative business processes. 
Having identified the artifact-centric BPM, structured cooperative edition, P2P computing, multiagent system and SOA concepts as being the hot topics in the implementation of BPM, we set the following goal for this thesis:
\begin{displayquote}
	\textit{The proposal of a new artifact-centric framework, facilitating the modelling of administrative business processes and the completely decentralised execution of the resulting workflows; this completely decentralised execution being provided by a P2P system conceived as a set of agents communicating asynchronously by service invocation so that, the execution of a given workflow instance is technically assimilated to the cooperative edition of (mobile) structured documents called artifacts.}
\end{displayquote}

To achieve this goal, we have chosen to base our work on the structured document cooperative editing model developed by Badouel and Tchoup\'e a decade earlier. They proposed an approach based on grammatical models, to manage the lifecycle of a document collegially edited by actors on geographically distant sites. In their model, each actor has a potentially partial view (obtained by projection) of the edited document. The latter is used as an interface between the different actors of the system. When an actor receives a document he must know from its content, what he can do and/or what he has to do about it. The information contained in a document can only grow over its lifetime in the system. Since the system is distributed and under the assumption that several actors can contribute concurrently, it is possible that at any given time there may be several potentially partial replicas of the document in the system. Therefore, it was necessary not only to address the problems related to the coherence of views in order to ensure the feasibility of synchronisation/merge, and to ensure the system's convergence towards a coherent end-state, but also to provide algorithms for the merging of partial replicas. Badouel and Tchoup\'e did this brilliantly, making their work a solid foundation for modelling workflow systems.

We adapted Badouel and Tchoup\'e's document model to define an artifact model. Like them, we have therefore chosen to use grammars as our basic mathematical instrument. In the same vein, we made the choice to model workflow systems in which actors have potentially only a partial perception of the processes they execute. We believe that this configuration is relevant to many administrative business processes. For example, in a peer-review process, a reviewer does not necessarily need to know if another reviewer has been contacted for the expertise of the article entrusted to him; and even if so, he should not necessarily know if the latter has already returned his report, etc. 
Similarly, when organising a journey for a Head of State, not all actors (secret services, civil office, doctor, presidential guard, etc.) have access to the same information which may include for example, tasks to be executed, their dates and states of execution, etc.

\mySectionStar{A Critical Analysis of the Performed Work}{}{true}
Chronologically, we started by better understanding Badouel and Tchoup\'e's model in order to extend it so that it takes into account, conflicts detection and resolution. We then embarked on the construction of a workflow system based on this model by first proposing a generic architecture of such systems and an experimental system prototype based on it. We finished by proposing an artifact-centric framework for the completely decentralised management of administrative business processes. The results we can claim are the following:

~

\noindent\textbf{An algorithm for reconciling partial replicas of a structured document}: we gave a solution to the conflicts that arose when merging partial replicas of a structured document by developing reconciliation and control techniques adapted to the modelling of Badouel and Tchoup\'e. To address this problem, we have expressed it as that of the consensual merging of $k$ updated partial replica $(t_{\mathcal{V}_i}^{maj})_{1 \leq i\leq k}$ (according to $k$ views $(\mathcal{V}_i)_{1 \leq i\leq k}$) whose global model is given by a grammar $\mathbb{G}=\left(\mathcal{S},\mathcal{P},A\right)$, which consists in: finding the largest documents $t^{maj}_{\mathcal{S}}$ conforming to $\mathbb{G}$ such that, for any document $t$ conforming to $\mathbb{G}$ and admitting $t_{\mathcal{V}_i}^{maj}$ as projection along the view ${\mathcal{V}_i}$, $t^{maj}_{\mathcal{S}}$ and $t$ are eventually updates each for other. 
The solution we have proposed is as follows: (1) We associate a tree automaton with \textit{exit states} $\mathcal{A}^{(i)}$ for each update $t_{\mathcal{V}_i}^{maj}$ of a partial replica $t_{\mathcal{V}_i}$; this automaton recognises the trees (conform to the global model) for which $t_{\mathcal{V}_i}^{maj}$ is a projection. (2) We perform a \textit{synchronous product} of the automata $\mathcal{A}^{(i)}$ with a commutative and associative operator noted $\otimes$ that we define to obtain the consensual automaton $\mathcal{A}_{(sc)}$ generating the consensus documents: $\mathcal{A}_{(sc)}=\otimes\mathcal{A}^{(i)}$. (3) We obtain the consensus documents by generating the set of trees accepted by the automaton $\mathcal{A}_{(sc)}$.

~

\noindent\textbf{A software architecture for the implementation of workflow systems}: we proposed a generic architecture that could facilitate the implementation of workflow systems as modelled by Badouel and Tchoup\'e. The proposed architecture is composed of three tiers: \textit{clients}, a \textit{central server} and \textit{administration tools}. These three tiers are interconnected around a middleware that facilitates service-oriented interfacing between them.

~

\noindent\textbf{TinyCE v2}: based on the proposed system architecture, we have built a workflow system prototype referred to as TinyCE v2. The latter was coded in Java and Haskell following a cross-fertilisation protocol that we presented in this manuscript, and it allowed us to test all the proposed algorithms related to the reconciliation of documents' partial replicas.

~

\noindent\textbf{A grammar-based language for the artifact-centric modelling of administrative processes}: we proposed a new tool (a language) that allows to specify any administrative business process using a triplet $\mathbb{W}_f=\left(\mathbb{G}, \mathcal{L}_{P_k}, \mathcal{L}_{\mathcal{A}_k} \right)$ composed of: 
a grammatical model (GMWf) $\mathbb{G}$, a list of actors $\mathcal{L}_{P_k}$ and a list of accreditations $\mathcal{L}_{\mathcal{A}_k}$. 
The GMWf is an attributed grammar used to describe all the tasks (by means of its symbols or sorts) of the studied process and the precedence of execution between them (by means of its productions); it is used as artifact type.
The list of accreditations provides information on the role played by each actor involved in the process execution; it is through accreditations that one is able to model the potentially partial perceptions that different actors have on the processes and their data.

~

\noindent\textbf{A distributed workflow system and a fully decentralised execution model of administrative business processes}: we proposed a multiagent-like distributed system in which autonomous software agents, based on a same and unique architecture presented in this manuscript, can exchange artifacts to communicate through service invocations, so as to orchestrate the fully decentralised execution of a given administrative business process instance modelled using the proposed grammar-based language. Each time a given agent receives an artifact, it executes a unique five-step protocol allowing it to identify tasks that are ready to be executed on its site, to allow their execution by the local actor and to diffuse if necessary, the updated artifact. To ensure the successful completion of this execution model on a given process case, the initial configuration of agents must be such as to guarantee the coherence of their respective accreditations (views). We have therefore proposed projection algorithms to derive agent-specific models that allow them to control their actions in order to ensure the system's convergence towards a business goal state. We have also investigated some mathematical properties of these algorithms.

~

\noindent\textbf{P2PTinyWfMS}: we have finally produced a prototype of a distributed system providing an artifact-centric management of administrative workflows according to our models. In this one, we have implemented all our algorithms in Java language and we have tested them on some process examples with convincing results.

~

The work we have done and presented in this manuscript is not perfect: neither in the applied methodology, nor in its presentation, and even less in its scientific contributions. The first criticisms that we can make of it are the following:

~

\noindent\textbf{The diversity of our algorithms' presentation formats}: we didn't just use pseudo code to present our algorithms. Sometimes we used code (Haskell and Java) and other times we wrote them as an arbitrary and ordered set of instructions written in natural language; in these cases we have neglected the more frequently used and more precise mathematical notations. This methodological choice of presentation can indeed be confusing for the reader. We justify it, however, by our desire to be precise, concise and as simple to understand as possible. We have written each of our algorithms in all the formats used in this manuscript before selecting for each of them, the format that seemed to us the clearest and simplest to present and understand.

~

\noindent\textbf{Conflict management}: we have chosen to use a single conflict resolution strategy: that of rejecting conflicting contributions and asking for new ones from contributors. This seems to us rather restrictive but it was a necessary (not necessarily wise) choice for a complete automation of the process of merging partial replicas of a structured document. However, in practice, it would be more appropriate to propose, following the example of Git, several conflict management strategies using a participant as an actor (coordinator) of this resolution.

~

\noindent\textbf{Insistence on manipulating user views}: although we have already justified the choice to take user views into consideration in our models by explaining their contribution to both security and accuracy, it is no less true that they do not always have a positive impact on our work. They have made it a little more complex and have led us to make slightly restrictive assumptions such as the non-recursive GWMf assumption that guarantees their projection. We could make the use of these views more flexible by restricting it to only censorship and not to the complete deletion of sensitive data; this would certainly allow us to overcome some assumptions.

~

\noindent\textbf{The weak study of our execution model's properties}: apart from the isolated study of the properties of a few of its algorithms, we did not study some properties of our decentralised workflow execution model as it is usual in similar BPM studies. We probably pay the consequences of our not completely formal and uncommon (but specific to artifact-centric models, it is one of their often mentioned limits) presentation that mixes the artifact modelling with its execution. A clear separation of these two aspects would certainly allow us to better study them in an isolated and more conventional way.

~

\noindent\textbf{Still as theoretical as ever}: obviously, we have the ambition to produce concrete systems that can be used in production environments. We are still a long way from that. For the two types of systems studied in this thesis, we have only produced prototypes that allow us to provide experimental proof of concepts. Theoretical studies on the Badouel and Tchoup\'e model being already quite advanced, it would be time to start implementing these concrete environments.

\mySectionStar{Some Perspectives}{}{true}
The perspectives presented here are classified into categories according to their priority, to better guide the reader wishing to continue the work. The categories includes :
\begin{itemize}
	\item \textbf{Short-term}: to indicate that the perspective is almost unavoidable and its results will be a real plus to the overall vision we have; it is therefore necessary to work on it as soon as possible;
	\item \textbf{Mid-term}: to indicate that the perspective's results will be pratical and usable but not a necessity;
	\item \textbf{Long-term}: to indicate that the perspective is optional and its theoretical results would only help give credibility to our work.
\end{itemize}
Here are now some interesting avenues for the continuation of our work that come to mind:

~

\noindent\textbf{A Language for the Specification of Administrative Workflow based on Attributed Grammars (LSAWfP) (priority: short-term)}: it is obvious that process modelling is a crucial phase of BPM \cite{dumas2018fundamental}. Despite the many efforts made in producing process modelling tools, existing tools (languages) are not commonly accepted. They are mainly criticised for their inability to specify both the tasks making up the processes and their scheduling (their lifecycle models), the data they manipulate (their information models) and their organisational models. Process modelling in these languages often results in a single task graph; such a graph can quickly become difficult to read and maintain. Moreover, these languages are often too general (they have a very high expressiveness); this makes their application to specific types of processes complex: especially for administrative processes. 
One can generalise the artifact specification model presented in this thesis, in order to provide a new language for administrative processes modelling that allows designers to specify the lifecycle, information and organisational models of such processes using a mathematical tool based on a variant of attributed grammars. Therefore, the approach imposed by the new language will certainly require the designer to subdivide his process into scenarios, then to model each scenario individually using a simple task graph (an annotated tree) from which a grammatical model will be further derived. At each moment then, the designer will manipulate only a scenario of the studied process: this seems more intuitive and modular because it will allow to produce task graphs that will be more refined and therefore, more readable and easier to maintain.

~

\noindent\textbf{A Scenario-Oriented Scheme for Administrative Business Processes Modelling (priority: mid-term)}: in the BPM community, researchers and professionals in the field have little interest in the "how" to model business processes to the benefit of the "with what" to model them. As a result, there is a plethora of workflow modelling languages but very few methods \cite{dumas2018fundamental}. The question on the method to be used to successfully model a given process is however crucial when we know that BPM reduces the automation of the said process to its specification in a particular workflow language: a well carried-out specification produces a quality workflow system. Because the modelling language that can be extracted from the work of this thesis seems to be adapted to a process modelling philosophy centered on the notion of scenario, it would be interesting to propose a method that would accompany it. This one would present the steps to be followed to succeed in its scenario-oriented modelling of administrative workflow systems.

~

\noindent\textbf{Verification of workflows specified using our models (priority: short-term)}: one of the BPM activities is the formal analysis/verification of the specifications produced using a given workflow language. Proposing and/or adapting a verification method for workflows designed using our models seems to be an interesting avenue of research especially since several similar works have done the same \cite{badouel2015active, van1997verification, van2000workflow}. To this end, it will certainly be necessary to deeply study our models in order to highlight their mathematical properties. These properties will then make it possible to identify and present the criteria that must be verified by a specification in order to be qualified as correct (sound).

~

\noindent\textbf{A tool to help in the specification of administrative processes with a scenario-oriented approach (priority: short-term)}: since the models we have proposed are new, it would be wise to propose a tool to assist in their use in practice. In addition to being a guide, such a tool should simplify the creation of process models and possibly validate them according to pre-established correction criteria. Moreover, it will be able to provide several DSL for saving the specifications as well as several modules for exporting them in more conventional notations (BPMN, YAWL, etc.).

~

\noindent\textbf{A framework to generate administrative business processes' specific decentralised execution simulators (priority: mid-term)}: as implemented, the WfMS prototype P2PTinyWfMS can be used as a foundation for the production of a tool that generates simulators of the completely decentralised execution of administrative processes specified using our models. The new framework can be based on the models found in \cite{tchembe2019ad}; then, it will be implemented to generate a simulation environment tailored to a given administrative process. Still in a generative logic, another approach would be to use the recent concepts of model-driven engineering to produce a simulator based on integrated development environments such as Eclipse: one could for example use a GEMOC approach \cite{bousse2016execution, combemale2017language}.

~

\noindent\textbf{Extension of our decentralised execution model for recursive GMWf and monitoring support (priority: short-term)}: we already mentioned this in the discussion section of chapter \ref{chap3:choreography-workflow-design-execution}. It would be interesting to introduce into our multiagent system, new types of agents to monitor the execution of processes. In addition, we could also make the use of views more flexible by modifying their impact on projection operations. For example, we could redefine the artifact projection operation so that it no longer erases nodes but censors them; that is, it replaces them with symbols that help in the specification of control flows (restructuring symbols for instance) carrying no information on the process. This will preserve the possibility of offering only a potentially partial view of the processes and their data to actors while allowing the use of recursive symbols in GMWf.

~

\noindent\textbf{Concrete implementation of WfMS supporting our models (priority: short-term)}: eventually, it will also be necessary to propose implementations of the WfMS presented here. Naturally, the implemented system will have to cover all phases and activities of the BPM lifecycle while focusing on the automation of administrative business processes. A study of recent BPM systems to ensure interoperability and an openness of the implemented system on the cloud will certainly be interesting avenues to explore.

~

\noindent\textbf{Study of each proposed formal tool's properties in order to identify more precisely the class of workflows to which they apply (priority: long-term)}: another, more theoretical, line of research would be to formally analyse the proposed BPM approach to characterise the class of workflows to which it can actually be applied. We assumed that we were only interested in administrative workflows; however, the proposed models are quite general and could well be applied to other classes of workflows. The study carried out in this perspective should therefore determine these classes and the conditions under which the proposed model automates them.

\myCleanStarChapterEnd

	\myBibliography{apacite}{bibliography}
	
	\appendix
	
	\myChapter{Implementation of Some Important Algorithms Presented in this Thesis}{}
\label{appendice1:algorithms-implementations}
\mySaveMarks

We had thought to present in this appendix, a Haskell implementation of the projection algorithms proposed in chapter \ref{chap3:choreography-workflow-design-execution} of this thesis. However, these are far too voluminous and their presentation here will not be very readable. We have therefore decided to present only the main data types here. We have hosted the rest of the produced Haskell code on the public Git repository accessible via this link: \url{https://github.com/MegaMaxim10/my-thesis-projection-algorithms}.

\mySectionStar{Haskell Type for Tags}{}{false}
Let's start by defining the tags for the node types (sequential or parallel). More clearly, in a given artifact, a node $A$ is tagged with \Verb|Seq| (resp. \Verb|Par|) when its sub-artifacts are executed in sequence (resp. potentially in parallel), i.e. the production used for its extension is a sequential (resp. parallel) one. A node with at most one sub-artifact is always tagged with \Verb|Seq|.
\begin{Verbatim}[frame=lines, fontsize=\small, numbers=left, numbersep=8pt]
data ProductionTag x = Seq x | Par x deriving (Eq, Show)
\end{Verbatim}
The \Verb|untagProduction| function below clears a given symbol of its tag (\Verb|Seq| or \Verb|Par|):
\begin{Verbatim}[frame=lines, fontsize=\small, numbers=left, numbersep=8pt]
untagProduction:: ProductionTag x -> x
untagProduction (Seq x) = x
untagProduction (Par x) = x
\end{Verbatim}

\mySubSectionStar{Definition of tags (\textit{closed}, \textit{locked}, \textit{unlocked} or \textit{upstair}) for symbols}{}{false}
In an artifact: a closed node is tagged \Verb|Closed|, an unlocked bud is tagged \Verb|Unlocked|, a locked bud is tagged \Verb|Locked| and an upstair bud is tagged \Verb|Upstair| (only found after expansion).
\begin{Verbatim}[frame=lines, fontsize=\small, numbers=left, numbersep=8pt]
data NodeTag x = Closed x | Locked x | Unlocked x | Upstair x deriving (Eq, Show)
\end{Verbatim}
The \Verb|untagNode| function below clears a given symbol of its tag (\Verb|Closed|, \Verb|Unlocked|, \Verb|Locked| or \Verb|Upstair|):
\begin{Verbatim}[frame=lines, fontsize=\small, numbers=left, numbersep=8pt]
untagNode:: NodeTag x -> x
untagNode (Closed x) = x
untagNode (Locked x) = x
untagNode (Unlocked x) = x
untagNode (Upstair x) = x
\end{Verbatim}

\mySubSectionStar{Definition of tags for symbol types (structuring or standard)}{}{false}
The symbols of a given artifact $t$ are either those of the grammatical model $\mathbb{G}$ denoting $t$, or (re)structuring symbols introduced to preserve some important properties of our model (mainly, the form of productions used in GMWf): in this case, the symbols of $\mathbb{G}$ are said to be standard and are tagged with \Verb|Standard| while the (re)structuring symbols are tagged with \Verb|Structural|.
\begin{Verbatim}[frame=lines, fontsize=\small, numbers=left, numbersep=8pt]
data SymbolTag x = Structural x | Standard x deriving (Eq, Show)
\end{Verbatim}
As the previous "untag" functions, the \Verb|untagSymbol| function below clears a given symbol of its tag (\Verb|Structural| or \Verb|Standard|):
\begin{Verbatim}[frame=lines, fontsize=\small, numbers=left, numbersep=8pt]
untagSymbol:: SymbolTag x -> x
untagSymbol (Structural x) = x
untagSymbol (Standard x) = x
\end{Verbatim}

\mySectionStar{Haskell Type for Artifacts}{}{false}
Recursively, we consider that an artifact is given by its root node (\Verb|nodeLabel|) and the list of its sub-artifacts (\Verb|sonsList|) tagged either by \Verb|Seq| (to indicate that they are executed in sequence) or by \Verb|Par| (to indicate that they are potentially executed in parallel). We do not consider empty artifacts. The corresponding Haskell type is as follows:
\begin{Verbatim}[frame=lines, fontsize=\small, numbers=left, numbersep=8pt]
data Artifact a = Node {
                       nodeLabel:: a, 
                       sonsList:: ProductionTag [Artifact a]
                  } deriving Eq
\end{Verbatim}
Here is an example of artifact encoded in this type. It corresponds to the target artifact $art_1$ in the figure \ref{chap3:fig:global-artefacts}:
\begin{Verbatim}[frame=lines, fontsize=\small, numbers=left, numbersep=8pt]
art1 = Node (Closed "Ag") (
           Seq [
             Node (Closed "A") (
               Seq [
                 Node (Closed "B") (Seq []), 
                 Node (Closed "D") (Seq [])
               ])
           ])
\end{Verbatim}

\mySectionStar{Haskell Type for GMWf}{}{false}
Let's start by presenting a type for productions: a production is given by its left hand side (\Verb|lhs|) consisting of one symbol and by its right hand side (\Verb|rhs|) consisting of several symbols.
\begin{Verbatim}[frame=lines, fontsize=\small, numbers=left, numbersep=8pt]
data Production symb = Prod {lhs:: symb, rhs:: [symb]} deriving Eq
\end{Verbatim}

Finally, a GMWf is given by the set of symbols and the set of productions constituting it. The productions are tagged either by \Verb|Seq| or by \Verb|Par|:
\begin{Verbatim}[frame=lines, fontsize=\small, numbers=left, numbersep=8pt]
data GMWf a = GMWf {
                   symbols:: [a], 
                   productions:: [ProductionTag (Production a)]
              } deriving (Eq, Show)
\end{Verbatim}

These are the main data types that we have defined and which are manipulated by the different projection functions that are available in our Git repository\footnote{Our Git repository: \url{https://github.com/MegaMaxim10/my-thesis-projection-algorithms}}. They are included with some test cases that one will be able to immediately experiment.

\myRestoreMarks

	\myChapter{List of Publications Issued from the Work Presented in this Thesis}{}
\label{appendice2:article-appendice}
\mySaveMarks
\mySectionStar{Journal Papers}{}{false}
\mySubSectionStar{Published}{}{false}
\begin{enumerate}
	\item Milliam Maxime Zekeng Ndadji, Maurice Tchoup{\'e} Tchendji, Cl{\'e}mentin Tayou Djamegni and Didier Parigot. "\textit{A new Domain-Specific Language and Methodology based on Scenarios, Grammars and Views, for Administrative Processes Modelling.}" ParadigmPlus, Volume 1, Number 3, 2020, 1-22.
	\item Maurice Tchoup{\'e} Tchendji and Milliam Maxime Zekeng Ndadji. "\textit{Tree Automata for Extracting Consensus from Partial Replicas of a Structured Document.}" Journal of Software Engineering and Applications 10.05 (2017): 432-456.
\end{enumerate}

\mySubSectionStar{Under Review}{}{false}
\begin{enumerate}
	\item Maurice Tchoup{\'e} Tchendji, Milliam Maxime Zekeng Ndadji and Didier Parigot. "\textit{A Grammatical Approach for Administrative Workflow Design and their Distributed Execution using Structured and Cooperatively Edited Mobile Artifacts.}" Software and Systems Modeling, Springer (\textbf{submitted}).
	\item Milliam Maxime Zekeng Ndadji, Maurice Tchoup{\'e} Tchendji, Cl{\'e}mentin Tayou Djamegni and Didier Parigot. "\textit{A Projection-Stable Grammatical Model for the Distributed Execution of Administrative Processes with Emphasis on Actors' Views.}" Journal of King Saud University - Computer and Information Sciences, Elsevier (\textbf{submitted}).
\end{enumerate}

\mySectionStar{Communications in Conferences}{}{false}
\mySubSectionStar{Published}{}{false}
\begin{enumerate}
	\item Milliam Maxime Zekeng Ndadji, Maurice Tchoup{\'e} Tchendji, Cl{\'e}mentin Tayou Djamegni and Didier Parigot. "\textit{A Grammatical Model for the Specification of Administrative Workflow using Scenario as Modelling Unit.}" H. Florez and S. Misra (eds) Applied Informatics. ICAI 2020. Communications in Computer and Information Science, vol 1277, Springer, Cham, 2020. pages 131-145.
	\item Milliam Maxime Zekeng Ndadji, Maurice Tchoup{\'e} Tchendji, Cl{\'e}mentin Tayou Djamegni and Didier Parigot. "\textit{A Language for the Specification of Administrative Workflow Processes with Emphasis on Actors' Views.}" Gervasi O. et al. (eds) Computational Science and Its Applications - ICCSA 2020. ICCSA 2020. Lecture Notes in Computer Science, vol 12254, Springer, Cham, 2020. pages 231-245.
	\item Milliam Maxime Zekeng Ndadji, Maurice Tchoup{\'e} Tchendji and Didier Parigot. "\textit{A Projection-Stable Grammatical Model to Specify Workflows for their P2P and Artifact-Centric Execution.}" CRI'2019 - Conf{\'e}rence de Recherche en Informatique. Dec 2019, Yaound{\'e}, Cameroon. (hal-02375958).
	\item Milliam Maxime Zekeng Ndadji and Maurice Tchoup{\'e} Tchendji. "\textit{A Software Architecture for Centralized Management of Structured Documents in a Cooperative Editing Workflow.}" Innovation and Interdisciplinary Solutions for Underserved Areas. Lecture Notes of the Institute for Computer Sciences, Social Informatics and Telecommunications Engineering (LNICST), Springer, Cham, 2018. pages 279-291.
	\item Maurice Tchoup{\'e} Tchendji and Milliam Maxime Zekeng Ndadji. "\textit{R{\'e}conciliation par consensus des mises {\`a} jour des r{\'e}pliques partielles d'un document structur{\'e}.}" CARI 2016 Proceedings, volume 1, 2016. pages 84-96.
\end{enumerate}

\myRestoreMarks

}
\end{document}